\documentclass{article}

\usepackage{arxiv}

\usepackage[utf8]{inputenc} % allow utf-8 input
\usepackage[T1]{fontenc}    % use 8-bit T1 fonts
\usepackage{hyperref}       % hyperlinks
\usepackage{url}            % simple URL typesetting
\usepackage{booktabs}       % professional-quality tables
\usepackage{amsfonts}       % blackboard math symbols
\usepackage{nicefrac}       % compact symbols for 1/2, etc.
\usepackage{microtype}      % microtypography
\usepackage{lipsum}		% Can be removed after putting your text content
\usepackage{graphicx}
\usepackage{natbib}
\usepackage{doi}

\usepackage{graphicx}
\usepackage{amssymb}
\usepackage{mathtools}
\usepackage{quiver}
\usepackage{tikz}
\usepackage{float}
\usepackage{tasks}
\usepackage{stix}
\usepackage{booktabs}
\usepackage{subcaption}
\usepackage{csquotes}
\usepackage{comment}
\usepackage{amsthm}
\usepackage{amsmath}
\usepackage{subcaption}
\usepackage{proof}
\usetikzlibrary{positioning}
\usetikzlibrary{shapes.geometric}
\pgfmathtruncatemacro\distance{1}
\RequirePackage{tikz-qtree}

\newtheorem{therm}{Theorem}[section]
\newtheorem{lemma}[therm]{Lemma}
\newtheorem{corollary}[therm]{Corollary}
\newtheorem{proposition}[therm]{Proposition}
\newtheorem{remark}[therm]{Remark}
\newtheorem{definition}[therm]{Definition}

\usepackage{natbib}
\usepackage[shortlabels]{enumitem}

\newcommand{\sub}                         {{\mathsf{sub}}}
\newcommand{\up}                         {{\uparrow}}
\newcommand{\down}                         {{\downarrow}}

\newcommand{\tuple}[1]                         {{\langle #1\rangle}}

\newcommand{\dmneg}{{\sim}}

\newcommand{\todo}[1]{{\color{red}{#1}}}

\newcommand{\MoisilRule}[1]{\mathsf{r}^{\mathsf M}_{#1}}
\newcommand{\PPTopRule}[1]{\mathsf{r}^{\top}_{#1}}

%Umberto
\newcommand{\la}{\langle}
\newcommand{\ra}{\rangle}
                               
\newcommand{\A}[1][A]{\ensuremath{\mathbf{#1}} }
\newcommand{\De}{\Delta}
\newtheorem{example}{Example}
\newtheorem{problem}{Problem}
\DeclareSymbolFont{symbolsC}{U}{txsyc}{m}{n}
\DeclareMathSymbol{\strictif}{\mathrel}{symbolsC}{74}

\newcommand{\AlgA}{\mathbf{A}}
\newcommand{\AlgB}{\mathbf{B}}
\newcommand{\AlgC}{\mathbf{C}}
\newcommand{\AlgD}{\mathbf{D}}

\newcommand{\PPVar}{\mathbb{PP}}
\newcommand{\PPImpResVar}{\Variety(\PPSixResImp)}
\newcommand{\PPImpResTempVar}{\mathbb{PP}^{\ResImp}}
\newcommand{\PPSix}{\mathbf{PP_6}}
\newcommand{\PPSixDesImp}{\mathbf{PP_6^{\DesImp}}}
\newcommand{\PPSixResImp}{\mathbf{PP_6^{\ResImp}}}

\newcommand{\PPFourResImp}{\mathbf{PP_4^{\ResImp}}}
\newcommand{\PPThreeResImp}{\mathbf{PP_3^{\ResImp}}}
\newcommand{\PPTwoResImp}{\mathbf{PP_2^{\ResImp}}}

\newcommand{\Imp}{\GenImp}

\newcommand{\SDH}{\mathbb{SDH}_1}
\newcommand{\ISSix}{\mathbf{IS_6}}

\newcommand{\ISAlg}{\mathbf{IS}}
\newcommand{\PPlogic}{\mathcal{PP}_{\!\leq}}

\newcommand{\PartialMatValues}{\mathcal{V}_{10}}
\newcommand{\PartialMatDes}{D}
\newcommand{\PartialMat}{\mathfrak{M}}

\newcommand{\TotComp}[1]{\mathbb{T}(#1)}

\usepackage{transparent}
\newcommand{\semitransp}[2][0.35]{{\transparent{#1}#2}}
\newcommand{\HighlDes}[1]{{\color{black}{#1}}}
\newcommand{\NotHighlPic}[1]{{\semitransp{#1}}}

\newcommand{\PPResImplogicSingle}{{\mathcal{PP}^{\ResImp}_{\!\leq}}}
\newcommand{\PPResImplogicOrderMC}{{\mathcal{PP}^{\sequent,\ResImp}_\mathsf{up}}}
\newcommand{\PPResImplogicPrimeMC}{{\mathcal{PP}^{\sequent,\ResImp}_{\!\leq}}}

\newcommand{\PPResAsslogicSingle}{{\mathcal{PP}^{\ResImp}_\top}}

\newcommand{\InterpValSet}{\mathcal{U}}

\newcommand{\IdempEqName}{$\Delta$-idemp}

\newcommand{\PPResImplogic}{\mathcal{PP}^{\ResImp}_{\!\leq}}

\newcommand{\PPResAsslogic}{\mathcal{PP}^{\ResImp}_\top}

\newcommand{\PPlogicMC}{\mathcal{PP}_{\!\leq}^{\mathsf{\sequent}}}
\newcommand{\PPlogicPrimeMC}{\mathcal{PP}_{\!\leq}^{\mathsf{\sequent}}}
\newcommand{\PPOnelogic}{\mathcal{PP}_\top}
\newcommand{\PPOnelogicMC}{\mathcal{PP}_{\top}^{\mathsf{\sequent}}}
\newcommand{\FDELogic}{\mathcal{B}}
\newcommand{\powerset}{\wp}
\newcommand{\sequent}{\rhd}
\newcommand{\notsequent}{\,\smallblacktriangleright}
\newcommand{\SetSet}{\textsc{Set-Set}}
\newcommand{\CalcVar}{\mathsf{R}}
\newcommand{\RuleA}{\mathsf{r}}
\newcommand{\SetFmla}{\textsc{Set-Fmla}}
\newcommand{\Props}{\mathsf{props}}

\newcommand{\notsequentx}[1]{\smallblacktriangleright_{#1}}

\newcommand{\cons}{{\circ}}

\newcommand{\inferx}[3][]{\frac{#2}{#3}\,#1}
\newcommand{\Hom}{\mathsf{Hom}}
\newcommand{\DefMat}{\mathfrak{M}}
\newcommand{\End}{\mathsf{End}}

\newcommand{\Variety}{\mathbb{V}}

\newcommand{\LangAlg}[1]{\mathbf{L}_{#1}(P)}
\newcommand{\LangSetProp}[2]{L_{#1}(#2)}
\newcommand{\LangSet}[1]{L_{#1}(P)}
\newcommand{\DefProp}{p}
\newcommand{\DeffProp}{q}
\newcommand{\Fm}{\varphi}

\newcommand{\DMNeg}{{\sim}}
\newcommand{\PPComp}{\neg}
\newcommand{\DefCon}{\copyright}
\newcommand{\LSig}{\Sigma^{\mathsf{bL}}}
\newcommand{\DMSig}{\Sigma^{\mathsf{DM}}}
\newcommand{\DMoSig}{\Sigma^{\mathsf{PP}}}
\newcommand{\DMnSig}{\Sigma^{\mathsf{IS}}}

\newcommand{\DMSigImp}{\Sigma^{\mathsf{DM}}_{\Imp}}

\newcommand{\HOp}{\mathbb{H}}
\newcommand{\SOp}{\mathbb{S}}
\newcommand{\POp}{\mathbb{P}}

\newcommand{\OneAssert}{\vdash^{\top}}

\newcommand{\PPAlg}{\mathbf{PP}}

\newcommand{\Reduce}[1]{{#1}^{\ast}}

\newcommand{\CL}{\mathcal{CL}}

\DeclareMathOperator{\ConSet}{\mathsf{Cng}}

\newcommand{\neither}{\ensuremath{\mathbf{n}}}
\newcommand{\both}{\ensuremath{\mathbf{b}}}
\newcommand{\fvalue}{\ensuremath{\mathbf{f}}}
\newcommand{\tvalue}{\ensuremath{\mathbf{t}}}
\newcommand{\efvalue}{\ensuremath{\hat{\mathbf{f}}}}
\newcommand{\etvalue}{\ensuremath{\hat{\mathbf{t}}}}

\newcommand{\FourSet}{\mathcal{V}_4}
\newcommand{\SixSet}{\mathcal{V}_6}
\newcommand{\SymbDef}{\,{\coloneqq}\,}
\newcommand{\FDEAlg}{\mathbf{DM}_4}
\newcommand{\KleeneAlg}{\mathbf{DM}_3}
\newcommand{\BoolAlg}{\mathbf{DM}_2}
\newcommand{\Upset}[1]{{{\uparrow}#1}}
\newcommand{\SymLogEquiv}{{\lhd\rhd}}

\newcommand{\LeibCong}[1]{\Omega^{#1}}

\newcommand{\GenImp}{\Rightarrow}
\newcommand{\ClassImpSymb}{\Rightarrow}
\newcommand{\HeyImpSymb}{\Rightarrow}

\newcommand{\NodeSet}{\mathsf{nds}}
\newcommand{\TreeRel}{\mathsf{\leq}}
\newcommand{\DefTree}{t}
\newcommand{\DefNode}{n}
\newcommand{\Label}[1]{l^{#1}}
\newcommand{\Disc}{\ast}
\newcommand{\Root}[1]{\mathsf{rt}(#1)}
\newcommand{\Children}[1]{\mathsf{chdr}}

\newcommand{\DesImp}{\ClassImpSymb_{\mathsf{A}}}
\newcommand{\ResImp}{\HeyImpSymb_{\mathsf{H}}}
\newcommand{\FullImp}{\ClassImpSymb_{\mathsf{A1}}}

\newcommand{\ArbDisj}{\lor}

\newcommand{\CondAOne}{(\mathsf{A}1)}
\newcommand{\CondATwo}{(\mathsf{A}2)}
\newcommand{\CondIOne}{(\mathsf{I}1)}
\newcommand{\CondITwo}{(\mathsf{I}2)}

\newcommand{\PPSixMat}{\DefMat_6}

\newcommand{\DiscSet}[1]{\Omega_{#1}}
\newcommand{\DiscNSet}[1]{\mho_{#1}}

\newcommand{\FmSetA}{\Phi}
\newcommand{\FmSetB}{\Psi}
\newcommand{\FmSetC}{\Pi}
\newcommand{\FmSetD}{\Theta}
\newcommand{\FmSetAnalytic}{\Xi}

\newcommand{\FmA}{\varphi}
\newcommand{\FmB}{\psi}
\newcommand{\FmC}{\xi}
\newcommand{\FmD}{\theta}

\newcommand{\OrderPresMC}[1]{\sequent_{#1}^{\leq}}
\newcommand{\NOrderPresMC}[1]{\notsequent_{#1}^{\leq}}

\newcommand{\IneqPresMC}[1]{\sequent_{#1}^{\leq}}

%sergio

\newcommand{\ig}{\equiv}
\newcommand{\men}{\leq }

\title{Adding an Implication to Logics of Perfect Paradefinite Algebras}

%\date{September 9, 1985}	% Here you can change the date presented in the paper title
%\date{} 					% Or removing it

\author{ \href{https://orcid.org/0000-0003-3240-386X}{\includegraphics[scale=0.06]{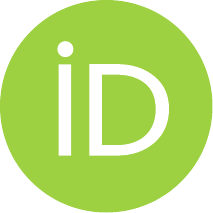}\hspace{1mm}Vitor Greati}\\
	Bernoulli Institute\\
	University of Groningen\\
	Groningen, The Netherlands\\
	\texttt{v.rodrigues.greati@rug.nl} \\
	%% examples of more authors
	\And
	\href{https://orcid.org/0000-0002-6941-7555}{\includegraphics[scale=0.06]{orcid.pdf}\hspace{1mm}Sérgio Marcelino} \\
	SQIG - Instituto de Telecomunicações\\
    Departamento de Matemática - Instituto Superior Técnico\\
    Universidade de Lisboa\\
	Lisboa, Portugal\\
	\texttt{smarcel@math.tecnico.ulisboa.pt} \\
    \And
	\href{https://orcid.org/0000-0003-2601-8164}{\includegraphics[scale=0.06]{orcid.pdf}\hspace{1mm}João Marcos} \\
	Department of Philosophy\\
	UFSC\\
	Santa Catarina, Brazil\\
	\texttt{botocudo@gmail.com} \\
    \And
	\href{https://orcid.org/0000-0003-1364-5003}{\includegraphics[scale=0.06]{orcid.pdf}\hspace{1mm}Umberto Rivieccio} \\
	Departamento de Lógica, Historia y Filosofía de la Ciencia\\
	Universidad Nacional de Educación a Distancia\\
	Madrid, Spain\\
	\texttt{umberto@fsof.uned.es} \\
	%% \AND
	%% Coauthor \\
	%% Affiliation \\
	%% Address \\
	%% \texttt{email} \\
	%% \And
	%% Coauthor \\
	%% Affiliation \\
	%% Address \\
	%% \texttt{email} \\
	%% \And
	%% Coauthor \\
	%% Affiliation \\
	%% Address \\
	%% \texttt{email} \\
}

% Uncomment to remove the date
%\date{}

% Uncomment to override  the `A preprint' in the header
%\renewcommand{\headeright}{Technical Report}
%\renewcommand{\undertitle}{Technical Report}

%%% Add PDF metadata to help others organize their library
%%% Once the PDF is generated, you can check the metadata with
%%% $ pdfinfo 
\hypersetup{
pdftitle={A template for the arxiv style},
pdfsubject={q-bio.NC, q-bio.QM},
pdfauthor={David S.~Hippocampus, Elias D.~Striatum},
pdfkeywords={First keyword, Second keyword, More},
}

\begin{document}
\maketitle

\begin{abstract}
	Perfect paradefinite algebras are
De Morgan algebras expanded with a
perfection (or classicality) operation.
They form a variety that is term-equivalent
to the variety of involutive Stone algebras.
Their associated multiple-conclusion (\SetSet{})
and single-conclusion (\SetFmla{}) order-preserving logics are
non-algebraizable self-extensional
logics of formal inconsistency and undeterminedness
determined by
a six-valued matrix, studied in depth 
by Gomes et al. (2022)
from both the algebraic and the proof-theoretical
perspectives.
We continue hereby that study
by investigating directions
for conservatively expanding
these logics
with an implication connective
(essentially, one that
admits the deduction-detachment
theorem).
We first consider logics given by very simple and manageable non-deterministic
semantics whose implication (in isolation) is classical.
These, nevertheless, fail to be self-extensional.
We then consider the
implication
realized by the relative pseudo-complement
over the
six-valued perfect paradefinite
algebra. Our strategy is to
expand such algebra with this connective
and study the (self-extensional) \SetSet{} and \SetFmla{} 
order-preserving and $\top$-assertional
logics
of the variety induced by the new algebra.
We provide axiomatizations
for such new variety and for such logics,
drawing parallels with
the class of symmetric Heyting algebras
and with Moisil's `symmetric modal logic'.
For the \SetSet{} logic, in particular,
the axiomatization we obtain is analytic.
We close by studying interpolation
properties for these logics
and concluding that the new variety
has the Maehara amalgamation property.
\end{abstract}

% keywords can be removed
\keywords{
paradefinite logics \and logics of formal inconsistency and undeterminedness \and implication \and Heyting algebras \and De Morgan algebras.
}

\section{Introduction}
\label{sec:introduction}
The algebraic structures we now call \emph{involutive Stone algebras}
appear to have been first considered by Roberto Cignoli and Marta Sagastume
%.~de Gallego
along their investigation of finite-valued \L ukasiewicz logics, in connection with 
%the class of algebras today known as
\L ukasiewicz-Moisil algebras (\cite{cignoli1981,cignoli1983}).
Formally, involutive Stone algebras are usually presented 
as expansions of De Morgan algebras (whose languages consists of a conjunction $\land$, a disjunction $\lor$, a negation $\DMNeg$ and the lattice bounds $\bot, \top$)
in one of the following two alternative term-equivalent ways
(we shall soon see that a third one has been recently proposed):
\begin{enumerate}
%    \item De Morgan algebras %(whose language consists of a conjunction $\land$, a disjunction $\lor$, the De Morgan negation $\DMNeg$ and the lattice bounds $\bot, \top$) 
%    enriched with a unary % modal-like 
%`possibility' %closure 
%operator (usually denoted by~$\nabla$ in the literature).

%\item De Morgan algebras enriched with an %new 
%`intuitionistic' (namely, a pseudo-complement) negation (denoted by~$\neg$)
%satisfying the well-known Stone equation: $\neg x \lor \neg \neg x \approx \top$. 
% (hence the name).

\item by adding a unary `possibility' operator (usually denoted by~$\nabla$ in the literature);

\item by adding an `intuitionistic' (namely, a pseudo-complement) negation (denoted by~$\neg$) satisfying the well-known Stone equation: $\neg x \lor \neg \neg x \approx \top$.

\end{enumerate}

\noindent 
Though obviously arising as ``algebras of logic'', the class of involutive Stone algebras was not employed as an algebraic semantics for logical systems  in the above-mentioned seminal  works; rather, this has been pursued in a series
of % more 
recent papers which focused in particular
on the logic that preserves the lattice order of involutive Stone algebras (\cite{cantu:2019,cantu:2020,marcelino2021,cantu2022}).

In the paper~\cite{Gomes2022}, which is the immediate predecessor of the present one, 
we built on the observation (made in~\cite{cantu:2020}) that the logic of involutive Stone algebras may be viewed as a \emph{Logic of Formal Inconsistency}
in the sense of~\cite{jm2005thesis,03-CCM-lfi}, and axiomatized it as such. This was possible due to the fact that involutive Stone algebras may be presented in a third
term-equivalent way, namely: 
\begin{enumerate}
\setcounter{enumi}{2}
%    \item De Morgan algebras enriched with a unary \emph{perfection} operator $\cons$
%    of the kind considered in~\cite{jmarcos2005}.
\item by adding a unary `perfection' operator (denoted by $\circ$) of the kind discussed in~\cite{jmarcos2005}.
\end{enumerate}
Reformulated in this language (i.e., $\{ \land, \lor, \DMNeg, \cons, \bot, \top \}$), involutive Stone algebras have been renamed   \emph{perfect paradefinite algebras} (PP-algebras) in~\cite{Gomes2022}.  In the latter paper
 we  axiomatized the $\SetFmla$ logic of order of PP-algebras, called $\PPlogic$,
and also showed that 
$\PPlogic$ is semantically determined by a finite matrix based
on the six-element algebra (there dubbed $\PPSix$) that generates the class of PP-algebras as a variety. Actually,
we developed most of our
results with respect to a $\SetSet{}$ 
version of this logic,
called $\PPlogicMC{}$.
The $\SetSet{}$ axiomatization we presented for it
is analytic, and thus
suitable
for automated reasoning.
We then used all those results
to aid in the study of $\PPlogic$.

In the present paper, 
further pursuing   this approach, we shall focus on the question of \emph{how to add an implication connective to 
$\SetSet{}$
and $\SetFmla{}$
logics associated to
PP-algebras}. In an effort to proceed in a systematic fashion, %in this endeavour, 
we shall be guided by the following main principles: 
\begin{itemize}
    \item the resulting system must be conservative over 
    the logics being extended;
    %$\PPlogic$,
    %the implicationless logic;
    \item the new connective must indeed qualify as an implication according to some
     general standard. 
\end{itemize}
In order to narrow down our search and to eventually converge upon a short list of candidates, we shall presently consider  a third guiding principle as well, namely:
\begin{itemize}
    \item 
    %like $\PPlogic$, %the implicationless logic, the new
    the new logic must also qualify as the \emph{logic of order}
    of a suitable class of algebras.
\end{itemize}

\noindent In the present setting, as we shall see, the above principle turns out to be equivalent to requiring
the logic to be \emph{self-extensional}
%or \emph{congruential} 
{(see Definition~\ref{def:selfextensional},
Definition~\ref{def:order-pres-set-set-logic} and general results on
self-extensional logics in \citet*[Sec. 3]{Jansana2006})}.

The approach outlined above 
for the study of implicational
extensions
is reminiscent of --- indeed, directly inspired by--- Avron's proposal  for extending
Belnap's logic (\cite{avron2020}), 
in which two main requirements are entertained: 
first, that the new connective should be an implication
relative to a given set of designated elements 
(condition
$\CondAOne$  in
{Subsection~\ref{ss:impcon}}); second, that the resulting logic should be self-extensional $\CondATwo$.
We shall
discuss  these requirements at length in {Section~\ref{sec:classical-implication}}, where we show that
in our case there is unfortunately \emph{no implication connective that meets both of them}
(Theorem~\ref{thm:noexist}). 
This is where we shall choose to retain the latter requirement (self-extensionality)
instead of the former, which we find too restrictive, as we  
shall also discuss. As this choice  still leaves plenty of room for a large collection of alternative 
implications, we shall follow two North Stars: Algebra and Tradition.

Algebra suggests that,  for a wide family of logics,
a well-behaved implication connective 
may be obtained by considering, whenever available, the residuum of the algebraic operation that realizes the logical conjunction (see e.g.~\cite{GalJipKowOno07}): this leads us to expand the logics of PP-algebras by 
a relative pseudo-complement
implication,
which we shall call a 
\emph{Heyting implication},
for our definition mirrors the 
%usual one 
standard
one for the implication on Heyting algebras.

Tradition, in the present setting, happens to point in the same direction as algebra. In fact, it turns out
that a substantial
part of the theory of 
involutive Stone algebras 
had already been developed, even prior to Cignoli and Sagastume's works, in~\cite{Monteiro1980}, itself a collection
of earlier material. The main subject of Monteiro's monograph is the class of
\emph{symmetric Heyting algebras},  providing an  algebraic counterpart 
to Moisil's \emph{symmetric modal logic}. % (see Section~\ref{sec:moi}):  
These algebras are
presented precisely in the traditional language of involutive Stone algebras 
(with the $\nabla$ operator which Monteiro dubs `possibility') enriched with a Heyting implication.

The formal relation between the class of PP-algebras enriched with an implication that we shall define  
and Monteiro's symmetric Heyting algebras is discussed in detail in Section~\ref{sec:moi}. For the time being,
let us conclude the present introductory remarks by mentioning an alternative proposal concerning our main question
  --- how to add an implication to logics of PP-algebras, one of them being $\PPlogic$ --- which can be retrieved from a recent paper by~\cite{coniglio2023sixvalued}.
The purpose of the latter is to add a %perfection 
`classicality' 
operator $\cons$ 
to the well-known Belnap-Dunn four-valued logic, but the authors actually start
from a logic which is itself a conservative expansion of the Belnap-Dunn logic with a 
\emph{classic-like implication} (in the sense of Definition~\ref{def:classicalimplication}). 
%Nelson-type implication\footnote{Formally, this logic is best described as an axiomatic extension of Nelson paraconsistent logic N4~(\cite{AlNe84}) by the axiom scheme $p \lor (p \to q)$.}. 
%
The resulting logic --  dubbed  $\mathrm{LET_K^+}$ -- turns out to be determined by a six-valued matrix
that (if we disregard the implication) 
coincides with the matrix
$\langle \PPSix, \Upset\both \rangle$ which  determines the logic $\PPlogic$ 
(see~\cite{Gomes2022}). 
%the logic of PP-algebras studied in~\cite{Gomes2022}. 
As noted 
in~\citet[Subsec.~5.1]{coniglio2023sixvalued}, it follows that  the implication-free fragment of 
$\mathrm{LET_K^+}$ coincides with  $\PPlogic$. % the logic of PP-algebras. 
The implication connective of $\mathrm{LET_K^+}$ consists thus in a candidate for expanding 
$\PPlogic$ with an implication.
An inspection of its truth-table
(see \citet[Def.~3.9]{coniglio2023sixvalued}) reveals (see also Section~\ref{sec:classical-implication}) that this implication satisfies  (with respect to the designated set $ \Upset\both$) the first
among Avron's requirements, $\CondAOne$,
and  therefore destroys self-extensionality 
(or, one may say, preserves the non-self-extensional character of paraconsistent Nelson logic).
This leads to an interesting observation, namely, that  $\mathrm{LET_K^+}$ is determined by a refinement of the non-deterministic matrix introduced in Section~\ref{sec:classical-implication}, and may be obtained, thus, from the corresponding logic by adding suitable rules.
In this sense, we may say our approach subsumes that of~\cite{coniglio2023sixvalued}, at least insofar as
the task of conservatively expanding $\PPlogic$ %the logic of PP-algebras 
is concerned.

Our paper is organized as follows.
 After the preliminary sections in which we fix the notation and basic definitions, 
 we delve into the problem of expanding 
 %$\PPlogic$ 
 the logics of order of PP-algebras
 with an implication. 
 To that effect, we first consider in Section~\ref{sec:classical-implication} adding an implication
 satisfying at once $\CondAOne$
 and $\CondATwo$.
 After proving the impossibility of this task,
 we focus first on $\CondAOne$, which makes
 the implication qualify as classic-like.
 We explore possibilities
 of doing so using the semantical framework of
 non-deterministic logical matrices, providing analytic axiomatizations
 for the obtained \SetSet{} logics, 
 from which we can readily obtain
 \SetFmla{} axiomatizations
 for the corresponding \SetFmla{} companions.
 
 Next, turning our focus to $\CondATwo$, and having observed that the matrix
 $\langle \PPSix, \Upset\both \rangle$, which determines $\PPlogicMC$ and $\PPlogic$, is based 
 on an algebra (i.e., $\PPSix$) that is a finite distributive lattice, 
 in Section~\ref{sec:intuitionistic-implication}
 we enrich  its language with an operation $\ResImp$
 that corresponds to the relative pseudo-complement implication determined by the lattice order. 
 %We call this new operation an intuitionistic implication, for our definition mirrors the usual one for the intuitionistic implication on Heyting algebras. 
 In fact, the six-element Heyting algebra thus obtained (denoted by $\PPSixResImp$) 
 %now  %endowed with an intuitionistic implication
 turns out to be a symmetric Heyting algebra in Monteiro's sense. 
 We then consider the family of all matrices $\langle \PPSixResImp, D \rangle$ such that $D$ is a non-empty lattice filter 
 of the lattice reduct of $\PPSixResImp$. The ($\SetSet$ and $\SetFmla$) logics determined by this family
 %(denoted $\PPResImplogic$)
 are 
 %the logic that preserves the lattice order of $\PPSixResImp$ or, equivalently,
 the order-preserving logics induced by the variety $\Variety(\PPSixResImp)$,
%Thus $\PPResImplogic$ is in particular 
guaranteed to be self-extensional.
 
 In Section~\ref{sec:int-hilbert-axiomat}, we proceed to axiomatize both the \SetSet{} and the $\SetFmla$ logics
 thus defined. First focusing on the \SetSet{} logic,
 we present an axiomatization after a detailed
 analysis of the algebraic structure
 of the matrices and of their designated sets.
 The calculus we present is not only sound and complete, but also analytic. 
 % and thus suitable for automated reasoning. 
 From such axiomatization, moreover, we show how to extract an axiomatization for
 the corresponding \SetFmla{} logic of order.

 In Section~\ref{sec:moi},
we look at the algebraic models that correspond to
the \SetFmla{} logics
$\PPResImplogic$ and $\PPResAsslogic$
(the $\top$-assertional
logic associated to
$\Variety(\PPSixResImp)$)
within 
the general theory of algebraization of logics.
We begin by observing that both logics
%$\PPResImplogic$ and $\PPResAsslogic$ 
are closely related to
Moisil's `symmetric modal logic', whose algebraic
counterpart is the class of {symmetric Heyting algebras}
(Definition~\ref{d:symhey}). Indeed,
$\PPSixResImp$ is (term-equivalent to) a symmetric Heyting algebra,
and
  the class of algebras
%(denoted $\PPImpResTempVar$),
 providing an algebraic semantics for both $\PPResImplogic$ and $\PPResAsslogic$
is precisely the subvariety $\Variety(\PPSixResImp)$ of symmetric Heyting algebras generated by $\PPSixResImp$ (Theorem~\ref{thm:algppass}); an equational presentation for $\Variety(\PPSixResImp)$ is introduced in Definition~\ref{d:PPImpResVar}, and
shown to be sound and complete in Theorem~\ref{thm:vargen}.
The algebraizability of Moisil's logic (Proposition~\ref{prop:moisalg})
entails that 
$\PPResAsslogic$ is also algebraizable; its  equivalent algebraic semantics 
%of $\PPResAsslogic$ 
is precisely
$\Variety(\PPSixResImp)$, and its reduced matrix models are the matrices of the form $\la \A, \{ \top \} \ra $ with $\A \in \Variety(\PPSixResImp)$ (Theorem~\ref{thm:algppass}). On the other hand,
$\PPResImplogic$ is not algebraizable (Proposition~\ref{prop:equi}) but
its algebraic counterpart is also $\Variety(\PPSixResImp)$; the shape of the reduced matrix models of $\PPResImplogic$
is described in Proposition~\ref{prop:redmodPPResImplogic}.
We conclude the section by looking at the subvarieties of $\Variety(\PPSixResImp)$.  
There are only three of them (corresponding precisely to the self-extensional 
extensions of $\PPResImplogic$) which can be axiomatized by adding the axioms that translate the equations shown in Corollary~\ref{cor:subvar}.

 In Section~\ref{sec:interp-amalg}, we draw some conclusions
 regarding interpolation for the implicative extensions
 of interest, as well
 as amalgamation for the
 class of algebraic models of
 $\PPResAsslogic$. 
 Further directions of investigation are
highlighted in Section~\ref{sec:conclusions}.
 %and amalgamation for their
 %classes of models.

%Section~\ref{sec:conclusions} the paper by providing some characterizations of
%the axiomatic extensions of the considered logics.
%As we shall see, the landscape of
%such extensions is vast and here we will
%just trigger its study with some noteworthy
%results.

\section{Preliminaries}
\label{sec:preliminaries}
In this section, we introduce
the main concepts related to
algebras, logics and axiomatizations, 
with particular attention to
perfect paradefinite algebras and their logics.
We also fix 
what we shall understand as an
\emph{implication} and the selection criteria we will employ
in investigating expansions of
a logic by the addition of an implication.

\subsection{Algebras, languages and logics}

A \emph{propositional signature}
is a family
$\Sigma \SymbDef \{\Sigma_k\}_{k \in \omega}$,
where each $\Sigma_k$ is
a collection of $k$-ary \emph{connectives}.
A \emph{$\Sigma$-multialgebra} is a structure
$\mathbf{A} \SymbDef \langle A, \cdot^\mathbf{A} \rangle$, where $A$ is a non-empty
set called the \emph{carrier} of $\mathbf{A}$ and, for each $\DefCon \in \Sigma_k$, the \emph{multioperation} $\DefCon^\mathbf{A} : A^k \to \powerset(A)$ is the \emph{interpretation} of $\DefCon$
in $\mathbf{A}$.
%Sometimes we will omit the superscript $\AlgA$ from the %connective interpretations whenever there is no
%chance of confusion.
We say that $\AlgA$ is a
\emph{$\Sigma$-algebra}
(or \emph{deterministic $\Sigma$-multialgebra})
when $\DefCon^\mathbf{A}(a_1,\ldots,a_k)$
is a singleton for every $a_1,\ldots,a_k \in A$, $\DefCon \in \Sigma_k$ and $k \in \omega$.
If $\DefCon^\mathbf{A}(a_1,\ldots,a_k) \neq \varnothing$ for every $a_1,\ldots,a_k \in A$, $\DefCon \in \Sigma_k$ and $k \in \omega$,
we say that $\AlgA$ is \emph{total}.
Note that the notion of $\Sigma$-algebra 
matches the usual notion of
algebra in Universal Algebra~(cf. \cite{Sankappanavar1987}).
%(which we will employ throughout the paper without notice).
Given $\Sigma^\prime \subseteq \Sigma$
(that is, $\Sigma^\prime_k \subseteq \Sigma_k$ for all $k \in \omega$),
the \emph{$\Sigma'$-reduct} of a $\Sigma$-multialgebra $\mathbf{A}$ is the $\Sigma'$-multialgebra over
the same carrier of $\mathbf{A}$ that agrees with $\mathbf{A}$ on the interpretation of the connectives in $\Sigma'$.

%\begin{remark}
%Note that the notion of $\Sigma$-algebra 
%matches the usual notion of
%algebra in Universal Algebra~(cf. \cite{Sankappanavar1987}) up to an isomorphism.
%As we shall see in a moment, $\Sigma$-multialgebras will
%be useful for the semantical and axiomatic study of the logics we are interested in here.
%The general theory of multialgebras is so far not
%as developed as the one of algebras,
%and it is not our intention here
%(and, in fact, it would be unnecessary for our purposes)
%to
%extend all the well-known concepts and results
%of the latter to the former.
%For that reason, the reader will notice that
%throughout the paper (specially in the present section) we define and employ many concepts and facts concerning only
%$\Sigma$-algebras instead of the more general class of $\Sigma$-multialgebras, doing generalizations only
%when strictly necessary.
%\end{remark}

A \emph{$\Sigma$-homomorphism}
between $\Sigma$-multialgebras
$\AlgA$ and $\AlgB$ is a mapping
$h : A \to B$ such that
$h(\DefCon^{\AlgA}(a_1,\ldots,a_k)) \in \DefCon^\AlgB(h(a_1),\ldots,h(a_k))$ for all
$a_1,\ldots,a_k \in A$, $\DefCon \in \Sigma_k$ and $k \in \omega$.
Note that for $\Sigma$-algebras this definition coincides with the usual notion of
homomorphism.
The collection of all $\Sigma$-homomorphisms between
two $\Sigma$-multialgebras $\mathbf{A}$
and $\mathbf{B}$ is denoted by $\Hom(\mathbf{A},\mathbf{B})$.
%and the collection of mappings that are
%structure-preserving %only
%over $\Sigma' \subseteq \Sigma$ is
%denoted by $\Hom_{\Sigma'}(\mathbf{A},\mathbf{B})$.

Given a denumerable set $P \supseteq \{p, q, r, s, x, y\}$
of \emph{propositional variables}, 
the absolutely free $\Sigma$-algebra freely generated by $P$,
or simply the
\emph{language} over $\Sigma$ (generated by~$P$),
is denoted by $\LangAlg{\Sigma}$,
and its members are called \emph{$\Sigma$-formulas}.
{
The collection of
all subformulas of a formula
$\FmA$ is denoted by
$\mathsf{sub}(\FmA)$
and we let $\mathsf{sub}(\FmSetA) \SymbDef \bigcup_{\Fm\in\FmSetA}\mathsf{sub}(\FmA)$, 
for all $\FmSetA\subseteq\LangSet{\Sigma}$.
Similarly, the collection of all propositional variables
occurring in $\Fm \in \LangSet{\Sigma}$
is denoted by $\Props(\Fm)$, and
we let $\Props(\FmSetA) \SymbDef \bigcup_{\Fm\in\FmSetA}\Props(\Fm)$, 
for all $\FmSetA\subseteq\LangSet{\Sigma}$.}
%Furthermore, the set
%of endomorphisms on 
%a $\Sigma$-multialgebra
%$\mathbf{A}$ is denoted by
%$\End(\mathbf{A})$
%and each one of the members $\sigma \in \End(\LangAlg{\Sigma})$
%is called a \emph{substitution}.
The elements of $\Hom(\LangAlg{\Sigma}, \mathbf{A})$ will sometimes be referred to as \emph{valuations on~$\mathbf{A}$}.
When $\AlgA$ is $\LangAlg{\Sigma}$, valuations are 
endomorphisms on $\LangAlg{\Sigma}$
and are usually dubbed
\emph{substitutions}.
The set of all substitutions is denoted
by $\End(\LangAlg{\Sigma})$.
{Given $h,h' \in \Hom(\LangAlg{\Sigma}, \mathbf{A})$,
we shall say that \emph{$h'$ agrees with~$h$ on $\FmSetA \subseteq \LangSet{\Sigma}$} provided that $h'(\FmA)=h(\FmA)$
for all $\FmA \in \FmSetA$.}
In case $p_1,\ldots,p_k$
are the only propositional variables ocurring in $\FmA \in \LangSet{\Sigma}$,
we say that $\FmA$ is \emph{$k$-ary} and denote by $\FmA^\mathbf{A}$
the $k$-ary multioperation on $A$
such that
$\FmA^\mathbf{A}(a_1,\ldots,a_k) \SymbDef \{ h(\FmA) : 
h \in \Hom(\LangAlg{\Sigma}, \mathbf{A}) \text{ with }
h(p_i) = a_i \text{ for each }
1 \leq i \leq k
\}$,
for all $a_1,\ldots,a_k \in A$.
Also, if $\FmB_1,\ldots,\FmB_k \in \LangSet{\Sigma}$,
{we let $\FmA(\FmB_1,\ldots,\FmB_k)$ denote the formula
resulting from substituting
each $p_i$
in $\FmA$ by $\FmB_i$,
for all $1 \leq i \leq k$.}
%$\FmA^{\LangAlg{\Sigma}}(\FmB_1,\ldots,\FmB_k)$.
For a set $\Theta$ of formulas $\FmA(p_1,\ldots,p_k)$,
we let $\Theta(\FmB_1,\ldots,\FmB_k) \SymbDef \{ \FmA(\FmB_1,\ldots,\FmB_k) : \FmA \in \Theta \}$.

A~\emph{$\Sigma$-equation} is
a pair $(\FmA,\FmB)$ of $\Sigma$-formulas that we denote by
$\FmA \approx \FmB$,
and a $\Sigma$-multialgebra $\mathbf{A}$ is said to \emph{satisfy}
$\FmA \approx \FmB$ if
$h(\FmA)=h(\FmB)$
for every $h \in \Hom(\LangAlg{\Sigma}, \mathbf{A})$. 
For any given collection of $\Sigma$-equations, the class of all $\Sigma$-algebras that satisfy such equations is called a \emph{$\Sigma$-variety}.
An equation is said to be \emph{valid} in a given variety if it is satisfied by each algebra in the variety. 
The variety generated by
a class $\mathsf{K}$ of $\Sigma$-algebras,
denoted by $\mathbb{V}(\mathsf{K})$,
is the closure of $\mathsf{K}$ under
homomorphic images, subalgebras 
and direct products.
We denote the latter operations, respectively,
by $\HOp$, $\SOp$
and $\POp$.
We write $\ConSet\mathbf{A}$ 
to refer to the collection of all congruence relations
on~$\mathbf{A}$, which is known to form
a complete lattice under inclusion.

{
In what follows, we assume the reader is familiar with basic notations and terminology of lattice theory~(\cite{priestley2002}). We denote by $\LSig$ the signature
containing but two binary connectives,
$\land$ and $\lor$, and
two nullary connectives $\top$ and $\bot$, and by $\DMSig$ the extension of the latter signature by the addition of a unary connective~$\DMNeg$.
Moreover, by adding the unary connective $\nabla$ to $\DMSig$ we obtain a signature we shall call $\DMnSig$, and by adding instead the unary connective $\cons$ we obtain a signature we shall call $\DMoSig$.
%we let $\DMnSig$ (resp.\ $\DMoSig$) be the signature obtained from $\DMSig$ by adding the unary connective $\nabla$ (resp.\ $\cons$).
We provide below
the definitions and some examples of
De Morgan algebras and of involutive
Stone algebras, in order to illustrate some of the 
notions introduced above.

\begin{definition}
Given a $\DMSig$-algebra whose
$\LSig$-reduct is a bounded distributive lattice, we say that it
constitutes a \emph{De Morgan algebra} if it
satisfies the equations:\\
%\begin{table}[H]
\begin{tabular}{@{}ll@{}}
    {\rm\textbf{(DM1)}} $\DMNeg \DMNeg x \approx x$
    \hspace{1cm}
    &
    {\rm\textbf{(DM2)}} $\DMNeg (x \land y) \approx \DMNeg x \lor \DMNeg y$\\
\end{tabular}
%\end{table}
\end{definition}

\begin{example}
Let $\FourSet \SymbDef \{\tvalue, \both, \neither, \fvalue\}$
and
let $\FDEAlg \SymbDef \langle \FourSet, \cdot^{\FDEAlg} \rangle$
be the $\DMSig$-algebra known as
the Dunn-Belnap lattice, whose interpretations
for the lattice connectives
are those induced by the Hasse diagram in
Figure {\rm \ref{fig:de_morgan}},
and the interpretation for 
$\DMNeg$ is such that
$\DMNeg^{\FDEAlg}\fvalue \SymbDef \tvalue$,
$\DMNeg^{\FDEAlg}\tvalue \SymbDef \fvalue$
and
$\DMNeg^{\FDEAlg} a \SymbDef a$,
for $a \in \{\neither,\both\}$;
as expected, for the nullary connectives, we have
$\top^{\FDEAlg} \;\SymbDef\; \tvalue$
and $\bot^{\FDEAlg} \;\SymbDef\; \fvalue$. In Figure~{\rm\ref{fig:de_morgan}},
besides depicting the lattice structure of $\FDEAlg$, we also show
its subalgebras $\KleeneAlg$ and $\BoolAlg$, which %clearly 
coincide with the three-element
Kleene algebra and the two-element Boolean algebra.
These three algebras are the only subdirectly irreducible De Morgan algebras~{\rm\cite[Sec. XI.2, Thm. 6]{dwinger1975}}.
\end{example}
%\vspace{-1.5em}
\begin{figure}[ht]
    \centering
    \begin{subfigure}[b]{.45\textwidth}
    \centering
    \begin{tikzpicture}[node distance=1.25cm]
    \node (0DM4)                      {$\fvalue$};
    \node (BDM4)  [above right of=0DM4]  {$\both$};
    \node (ADM4)  [above left of=0DM4]   {$\neither$};
    \node (1DM4)  [above right of=ADM4]  {$\tvalue$};
    \node (AK3)  [right of=BDM4] {$\neither$};
    \node (0K3)  [below=0.4\distance of AK3]  {$\fvalue$};
    \node (1K3)  [above=0.4\distance of AK3] {$\tvalue$};
    \node (0B2)  [right of=0K3]  {$\fvalue$};
    \node (1B2)  [right of=AK3] {$\tvalue$};
    \node (DM4)  [below=1.5\distance of 0DM4] {$\mathbf{DM}_4$};
    \node (K3)  [below=1.5\distance of 0K3] {$\mathbf{DM}_3$};
    \node (B2)  [below=1.5\distance of 0B2] {$\mathbf{DM}_2$};
    \draw (0DM4)   -- (ADM4);
    \draw (0DM4)   -- (BDM4);
    \draw (BDM4)   -- (1DM4);
    \draw (ADM4)  -- (1DM4);
    \draw (AK3)  -- (0K3);
    \draw (AK3)  -- (1K3);
    \draw (0B2)   -- (1B2);
    \end{tikzpicture}
    \caption{The subdirectly irreducible De Morgan algebras.}
    \label{fig:de_morgan}
    \end{subfigure}
    \hfill
    \begin{subfigure}[b]{.5\textwidth}
    \centering
    \begin{tikzpicture}[node distance=1.25cm]
    \node (0PP6)                      {$\fvalue$};
    \node (BPP6)  [above right of=0PP6]  {$\both$};
    \node (APP6)  [above left of=0PP6]   {$\neither$};
    \node (1PP6)  [above right of=APP6]  {$\tvalue$};
    \node (00PP6)  [below=0.4\distance of 0PP6] {$\efvalue$};
    \node (11PP6)  [above=0.4\distance of 1PP6] {$\etvalue$};
    \node (APP5)  [right of=BPP6] {$\neither$};
    \node (0PP5)  [below=0.4\distance of APP5]  {$\fvalue$};
    \node (1PP5)  [above=0.4\distance of APP5] {$\tvalue$};
    \node (00PP5)  [below=0.4\distance of 0PP5] {$\efvalue$};
    \node (11PP5)  [above=0.4\distance of 1PP5] {$\etvalue$};
    \node (0PP4)  [right of=0PP5]  {$\fvalue$};
    \node (1PP4)  [right of=APP5] {$\tvalue$};
    \node (00PP4)  [below=0.4\distance of 0PP4] {$\efvalue$};
    \node (11PP4)  [above=0.4\distance of 1PP4] {$\etvalue$};
    \node (00PP3)  [right of=00PP4] {$\efvalue$};
    \node (0PP3) [above=0.4\distance of 00PP3] {$\neither$};
    \node (1PP3) [above=0.4\distance of 0PP3] {$\etvalue$};
    \node (PP6)  [below=0.4\distance of 00PP6] {$\mathbf{IS}_6$};
    \node (PP5)  [below=0.4\distance of 00PP5] {$\mathbf{IS}_5$};
    \node (PP4)  [below=0.4\distance of 00PP4] {$\mathbf{IS}_4$};
    \node (PP3)  [below=0.4\distance of 00PP3] {$\mathbf{IS}_3$};
    \node (00PP2)  [right of=00PP3]  {$\efvalue$};
    \node (11PP2)  [above=0.4\distance of 00PP2] {$\etvalue$};
    \node (PP2)  [below=0.4\distance of 00PP2] {$\mathbf{IS}_2$};
    \draw (0PP6)   -- (APP6);
    \draw (0PP6)   -- (BPP6);
    \draw (BPP6)   -- (1PP6);
    \draw (APP6)  -- (1PP6);
    \draw (0PP6)   -- (00PP6);
    \draw (1PP6)   -- (11PP6);
    \draw (APP5)  -- (0PP5);
    \draw (APP5)  -- (1PP5);
    \draw (0PP5)   -- (00PP5);
    \draw (1PP5)   -- (11PP5);
    \draw (0PP4)   -- (1PP4);
    \draw (0PP4)   -- (00PP4);
    \draw (1PP4)   -- (11PP4);
    \draw (0PP3) -- (00PP3);
    \draw (1PP3) -- (0PP3);
    \draw (00PP2) -- (11PP2);
    \end{tikzpicture}
    \caption{The subdirectly irreducible IS-algebras.}
    \label{fig:pp_algebra}
    \end{subfigure}
    \vspace{-.75em}
    \caption{}
\end{figure}
\vspace{-.75em}

\begin{definition}
Given a $\DMnSig$-algebra whose
$\DMSig$-reduct is a De Morgan
algebra, we say that it
constitutes an \emph{involutive Stone algebra} (\emph{IS-algebra}) if it
satisfies the equations:
%\begin{table}[H]
%\vspace{-2mm}
\setlength{\tabcolsep}{3.5pt}
\begin{tabular}{@{}llll@{}}
    {\rm\textbf{(IS1)}} $\nabla \bot \approx \bot$
    \hspace{5mm}
    &
    {\rm\textbf{(IS2)}} $x \land \nabla x \approx x$
    \hspace{5mm}
    & 
    {\rm\textbf{(IS3)}} $\nabla(x \land y) \approx \nabla x \land \nabla y$
    \hspace{5mm}
    &
    {\rm\textbf{(IS4)}} $\DMNeg\nabla x \land \nabla x \approx \bot$
\end{tabular}
%\end{table}
\end{definition}

\begin{example}
\label{ex:issix}
Let $\SixSet \SymbDef \FourSet \cup \{\efvalue,\etvalue\}$
and let $\ISAlg_6 \SymbDef \langle \SixSet, \cdot^{\ISAlg_6} \rangle$
be the $\DMnSig$-algebra
whose lattice structure is depicted in Figure~{\rm \ref{fig:pp_algebra}}
and interprets $\DMNeg$ and $\nabla$ as per the following:
\[
\DMNeg^{\ISAlg_6} a \SymbDef
\begin{cases}
    \DMNeg^{\FDEAlg} a & a \in \FourSet\\
    \efvalue & a = \etvalue\\
    \etvalue & a = \efvalue\\
\end{cases}\qquad
\nabla^{\ISAlg_6} a \SymbDef
\begin{cases}
    \etvalue & a \in \SixSet\setminus\{\efvalue\}\\
    \efvalue & a = \efvalue\\
\end{cases}
\]
\noindent 
The subalgebras of $\ISAlg_6$ exhibited in Figure~{\rm\ref{fig:pp_algebra}} constitute the only subdirectly irreducible IS-algebras~({\rm\cite{cignoli1983}}).
\end{example}
}

The notion of \emph{residuation} in $\Sigma$-algebras
will be essential for us here,
as it is tightly connected to
the intuitionistic implication.
% Heyting implications.
Let $\mathsf{K}$ be a class of 
$\Sigma$-algebras, with $\Sigma \supseteq \LSig$,
whose $\{ \land, \top \}$-reducts are
meet-semilattices with a greatest
element assigned to~$\top$.
Given $\AlgA \in \mathsf{K}$,
we say that $\DefCon$ is the 
\emph{residuum of $\land$ in $\AlgA$}
provided that
$a \,\land^\AlgA\, b \leq c$
if, and only if,
$a \leq b \,\DefCon^\AlgA\, c$
for all $a,b,c \in A$.
It is well-known that
in classical logic and in intuitionistic logic
the implication plays the role of residuum of conjunction.
{We will also consider the notion
of \emph{pseudocomplement of $a \in A$}: this will be defined as
the greatest element $\neg a \in A$
such that $a \land^\AlgA \neg a = \bot^\AlgA$.
{When every element of~$\AlgA$ has
a pseudocomplement (unique by definition), we may think of $\neg^\AlgA$ as a unary operation, known as the \emph{pseudocomplement operation of~$\AlgA$}.}}

{
We move now to the logical preliminaries. 
A
\emph{\SetFmla{} logic (over $\Sigma$)}
is a consequence relation
$\vdash$ on $\LangSet{\Sigma}$
%$\vdash \; \subseteq \powerset{\LangSet{\Sigma}} \times \LangSet{\Sigma}$
and a \emph{\SetSet{} logic (over $\Sigma$)}
is a generalized consequence relation 
%$\sequent  \subseteq \powerset{\LangSet{\Sigma}} \times \powerset{\LangSet{\Sigma}}$.
$\sequent$ on $\LangSet{\Sigma}$ (\cite{humberstone:connectives}).
The \emph{\SetFmla{} companion}
of a given \SetSet{} logic~$\sequent$
is the \SetFmla{} logic $\vdash_{\sequent}$ 
such that $\FmSetA \vdash_{\sequent} \FmB$
if, and only if, $\FmSetA \sequent \{\FmB\}$.
%{\color{blue}Clearly, equal \SetSet{} logics induce the same \SetFmla{} companion.}
We will write
$\FmSetA \; \SymLogEquiv \; \FmSetB$
when $\FmSetA \; \sequent \; \FmSetB$
and $\FmSetB \; \sequent \; \FmSetA$;
that being the case, we say these
sets are \emph{logically equivalent}.
{We will adopt the convention of
omitting curly braces when writing sets of formulas
in statements
involving (generalized) consequence relations.}
The complement of a given \SetSet{} logic
$\sequent$,
{i.e. 
$(\powerset\LangSet{\Sigma} \times\powerset\LangSet{\Sigma}){\setminus}\sequent$},
will be denoted
by~$\notsequentx{}$.

Let $\sequent$ and $\sequent'$ be \SetSet{} logics
over signatures $\Sigma$ and $\Sigma'$, respectively,
with $\Sigma \subseteq \Sigma'$.
We say that
$\sequent'$ \emph{expands}
$\sequent$ when
$\sequent' \supseteq \sequent$.
When $\Sigma = \Sigma'$,
we say more simply that $\sequent'$
\emph{extends} $\sequent$.
It is worth recalling that the collection of all extensions of
a given logic forms a complete lattice
under inclusion.
We say that $\sequent'$ is a
\emph{conservative
expansion} of~$\sequent$
when~$\sequent'$ expands~$\sequent$ and, for all $\FmSetA,\FmSetB \subseteq \LangSet{\Sigma}$, we have
$\FmSetA \sequent' \FmSetB$ if, and only if, $\FmSetA \sequent \FmSetB$.
The \emph{finitary companion}
of a \SetSet{} logic $\sequent$
is the \SetSet{} logic $\sequent_{\mathsf{fin}}$ such that
$\FmSetA \sequent_{\mathsf{fin}} \FmSetB$
if, and only if, there are finite
$\FmSetA' \subseteq \FmSetA$ and
$\FmSetB' \subseteq \FmSetB$
such that $\FmSetA' \sequent \FmSetB'$.
The latter concepts may be straightforwardly adapted to $\SetFmla$ logics.
}

We call special attention to the notion of self-extensionality, as it will play an important role in the
implicative extensions we consider 
in this paper:

\begin{definition}
    \label{def:selfextensional}
{
    A logic
    $\sequent$ over the signature~$\Sigma$ is called \emph{self-extensional} (or \emph{congruential})
    if logical equivalence is compatible with each connective in the signature, that is,  
    for every $\DefCon \in \Sigma_k$, it is the case that 
    $\DefCon(\FmA_1,\ldots,\FmA_k) \, \SymLogEquiv \, \DefCon(\FmB_1,\ldots,\FmB_k)$ whenever $\FmA_i \, \SymLogEquiv \, \FmB_i$ for every $1\leq i \leq k$.
}
%    $\FmA_i \, \SymLogEquiv \, \FmB_i
%    \text{ implies }
%    \DefCon(\FmA_1,\ldots,\FmA_k) \, \SymLogEquiv \, \DefCon(\FmB_1,\ldots,\FmB_k)$,
%    for all $\DefCon \in \Sigma_k$
%    and $\FmA_1,\ldots,\FmA_k,\FmB_1,\ldots,\FmB_k \in \LangSet{\Sigma}$.
    For \SetFmla{} logics,
    just replace `$\sequent$'
    by `$\vdash$'.
\end{definition}

{
A \emph{partial non-deterministic $\Sigma$-matrix} (or, more simply, \emph{$\Sigma$-PNmatrix}) $\DefMat$
is a structure $\langle \mathbf{A}, D \rangle$
where $\mathbf{A}$
is a $\Sigma$-multialgebra
and the members of $D \subseteq A$ are called \emph{designated values}.
We will write $\overline{D}$ to refer to $A{\setminus{}}D$.
When $\AlgA$ is a $\Sigma$-algebra,
$\DefMat$ is called a \emph{$\Sigma$-matrix},
and coincides with the usual definition
of logical matrix in the literature.
A \emph{refinement of $\DefMat$}
is a $\Sigma$-PNmatrix obtained
 from $\AlgA$ by deleting values from some
entries of the interpretations of the
connectives in $\AlgA$
(the resulting interpretations are also
said to be \emph{refinements} of the ones in $\AlgA$).
The mappings in
$\Hom(\LangAlg{\Sigma}, \mathbf{A})$ are called \emph{$\DefMat$-valuations}. 
Every $\Sigma$-PNmatrix
determines a \SetSet{} logic
$\sequent_\DefMat$
such that $\FmSetA \sequent{}_{\DefMat}\; \FmSetB$
if{f} $h(\FmSetA) \cap \overline{D} \neq \varnothing$
or $h(\FmSetB) \cap D \neq \varnothing$
for all $\DefMat$-valuations~$h$,
as well as a \SetFmla{} logic
$\vdash_\DefMat$
with $\FmSetA \vdash_\DefMat \FmB$ iff
$h(\FmSetA) \cap \overline{D} \neq \varnothing$
or $h(\FmB) \in D$
for all $\DefMat$-valuations~$h$
%$\FmSetA \sequent_\DefMat \{\FmB\}$%
(notice that $\vdash_{\DefMat}$ 
is the \SetFmla{} companion of $\sequent_\DefMat$).
Given a \SetSet{} logic~$\sequent$ (resp.\ a \SetFmla{} logic~$\vdash$), 
if $\sequent\;\subseteq\;\sequent_{\DefMat}$ (resp.\ $\vdash \;\subseteq\; \vdash_{\DefMat}$), we shall say that~$\DefMat$ \emph{is a matrix model of}
$\sequent$ (resp.\ $\vdash$),
and if the converse also holds
%if $\sequent_{\DefMat} = \sequent$ (resp.\ $\vdash_{\DefMat} \;=\; \vdash$), 
we shall say that~$\DefMat$ \emph{determines}
$\sequent$ (resp.\ $\vdash$).
The $\SetSet{}$ (resp.\ $\SetFmla$) logic \emph{determined} by a class $\mathcal{M}$
of $\Sigma$-matrices is given
by $\bigcap \{\sequent_\DefMat : \DefMat \in \mathcal{M}\}$
(resp.\ $\bigcap \{\vdash_\DefMat : \DefMat \in \mathcal{M}\}$).

\begin{example}
\label{ex:fde}
The $\DMSig$-matrix $\langle \FDEAlg, \{ \both,\tvalue \} \rangle$ determines the logic known as
the 4-valued Dunn-Belnap logic ({\rm\cite{belnap1977}}), or First-Degree Entailment (FDE),
which we hereby denote by~$\FDELogic$.
Extensions of~$\FDELogic$ 
are known as \emph{super-Belnap logics}  ({\rm\cite{rivieccio2012}}).
\end{example}

\begin{example}
Classical Logic, henceforth denoted by $\CL$, is determined by the
$\DMSig$-matrix $\langle \BoolAlg, \{\tvalue\} \rangle$
(see Figure~\ref{fig:de_morgan}).
\end{example}

Given a $\Sigma$-matrix
$\DefMat = \langle \mathbf{A}, D \rangle$,
a congruence $\theta \in \ConSet \mathbf{A}$
is said to be \emph{compatible with} $\DefMat$
when $b \in D$ whenever both $a \in D$
and $a \theta b$, for all $a,b \in A$.
We denote by $\LeibCong{\DefMat}$
the \emph{Leibniz congruence}
associated to $\DefMat$,
namely
the greatest congruence of 
$\mathbf{A}$ compatible with $\DefMat$.
The matrix $\Reduce{\DefMat} = \langle \mathbf{A}/\LeibCong{\DefMat}, D/\LeibCong{\DefMat} \rangle$
is the \emph{reduced version} of $\DefMat$.
{It is well known that
$\sequent_\DefMat = \sequent_{\Reduce{\DefMat}}$
(and thus $\vdash_\DefMat \;=\; \vdash_{\Reduce{\DefMat}}$)
and, since every logic is determined by a class of matrix models, we have that every logic coincides
with the logic determined by
its reduced matrix models.
As a shortcut, we call a matrix \emph{reduced} when it coincides with its own reduced version
(or, equivalently, when its Leibniz
congruence is the identity relation
on~$A$).
%The class of all reduced
%matrix models for a logic $\vdash$
%is denoted by $\ReducedMatrices(\vdash)$.

When a $\Sigma$-algebra $\mathbf{A}$ has a lattice structure with underlying order~$\leq$, for any $a\in A$ we write $\Upset{a}$ to refer to the set $\{b \in A : a \leq b\}$.
{For instance, 
over $\ISSix{}$ we may consider the
set
$\Upset{\both} = \{\both, \tvalue, \etvalue\}$ (see~Figure~\ref{fig:pp_algebra}).}
We may employ the same notation to
write the designated set of
a $\Sigma$-matrix having a lattice
as underlying algebra.
A \emph{lattice filter}
of a meet-semilattice~$\mathbf{A}$ with a top element~$\top^{\AlgA}$ is a subset $D \subseteq A$ with $\top^\mathbf{A} \in D$
and closed under~$\land^\mathbf{A}$;
moreover,
$D$ is a \emph{proper}
lattice filter of $\mathbf{A}$ when
$D \neq A$.
A~\emph{principal filter} of $\AlgA$
is a lattice filter of the form
$\up a$, for some $a \in A$.
Note that, if $\AlgA$ is finite,
every lattice filter is principal.
If $\mathbf{A}$ also has a join-semilattice structure,
a \emph{prime filter} of $\mathbf{A}$
is a proper lattice filter $D$ of $\mathbf{A}$
such that $a \lor b \in D$ iff $a \in D$
or $b \in D$, for all $a,b \in A$.
}

\subsection{Logics associated to classes of (ordered) algebras}
\label{sec:ordered-algebras-logics}

Throughout this section,
consider a propositional signature $\Sigma \supseteq \LSig$.
In addition, for
a set of $\Sigma$-formulas
$\FmSetA \SymbDef \{ \FmA_1,\ldots,\FmA_n \}$,
let $\bigwedge \FmSetA \SymbDef \FmA_1 \land \ldots \land \FmA_n$
and
$\bigvee \FmSetA \SymbDef \FmA_1 \lor \ldots \lor \FmA_n$, while, by convention, let
$\bigwedge \varnothing \SymbDef \top$ 
and
$\bigvee \varnothing \SymbDef \bot$.
Moreover, let the \emph{inequality} $\FmA \leq \FmB$ abbreviate
the equation $\FmA \approx \FmA \land \FmB$.

\begin{definition}
\label{def:order-pres-set-set-logic}
Let $\mathsf{K}$ be a 
class of\;$\Sigma$-algebras
such that each $\mathbf{A} \in \mathsf{K}$ has a bounded distributive lattice reduct with greatest element $\top^\mathbf{A}$ 
and a least element $\bot^\mathbf{A}$. 
The \emph{order-preserving logic associated to $\mathsf{K}$}, denoted by $\IneqPresMC{\mathsf{K}}$,
is such that
$\FmSetA \, \IneqPresMC{\mathsf{K}} \, \FmSetB$
 if, and only if,
there are finite 
$\FmSetA'\subseteq\FmSetA$ and
$\FmSetB'\subseteq \FmSetB$
such that
$\bigwedge \FmSetA' \leq 
\bigvee \FmSetB'$
is valid in $\mathsf{K}$.
\end{definition}

%In addition
%to
%$\OrderPresMC{\mathsf{K}}$,

For any order-preserving logic $\IneqPresMC{\mathsf{K}}$ associated to an appropriate class $\mathsf{K}$ of algebras, the following
characterization in terms of
prime filters applies:

\begin{proposition}
    \label{prop:mat-char-prime}
    $\IneqPresMC{\mathsf{K}}$ is
    the finitary companion of the logic
    determined
    by
    $
    \mathcal{M^{\Upset\lor}} \SymbDef
    \{ \langle \mathbf{A}, D \rangle :  
        \AlgA \in \mathsf{K}\text{ and }
        D
        \text{ is a prime filter of $\AlgA$}\}$.
   % $
   % \mathcal{M^{\Upset\lor}} \SymbDef
   % \{ \langle \mathbf{A}, \Upset a\rangle \mid  
   %     \mathbf{A} \in \mathsf{K},
   %     a \in A,
   %     \text{ and }
   %     \Upset a
   %     \text{ is a prime filter}\}$.
\end{proposition}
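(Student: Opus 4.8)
The plan is to prove the identity of the two \SetSet{} logics by reducing both inclusions to a single equivalence about finite sets of formulas, and then verifying that equivalence by a routine argument with lattice filters together with one appeal to the prime filter separation theorem. Since $\IneqPresMC{\mathsf{K}}$ is finitary by its very definition, and the finitary companion $(\sequent_{\mathcal{M^{\Upset\lor}}})_{\mathsf{fin}}$ of $\sequent_{\mathcal{M^{\Upset\lor}}}$ depends only on the restriction of $\sequent_{\mathcal{M^{\Upset\lor}}}$ to pairs of finite sets of formulas, it is enough to show: for all finite $\FmSetA,\FmSetB\subseteq\LangSet{\Sigma}$, the inequality $\bigwedge\FmSetA\leq\bigvee\FmSetB$ is valid in $\mathsf{K}$ if and only if $\FmSetA\sequent_{\mathcal{M^{\Upset\lor}}}\FmSetB$. (One keeps in mind here the conventions $\bigwedge\varnothing\SymbDef\top$ and $\bigvee\varnothing\SymbDef\bot$, and that each lattice filter, being an up-set closed under meets, contains $\top^{\mathbf{A}}$.)

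For the left-to-right direction of this equivalence, I would fix an arbitrary $\mathbf{A}\in\mathsf{K}$, a prime filter $D$ of $\mathbf{A}$, and a $\langle\mathbf{A},D\rangle$-valuation $v$ with $v[\FmSetA]\subseteq D$, and derive $v(\FmB)\in D$ for some $\FmB\in\FmSetB$. Since $D$ contains $\top^{\mathbf{A}}$ and is closed under $\land^{\mathbf{A}}$, we get $v(\bigwedge\FmSetA)\in D$; since $\bigwedge\FmSetA\leq\bigvee\FmSetB$ is valid in $\mathsf{K}$ and $D$ is upward closed, the inequality $v(\bigwedge\FmSetA)\leq v(\bigvee\FmSetB)$ forces $v(\bigvee\FmSetB)\in D$; and since $D$ is prime (and proper, which is what rules out $\FmSetB=\varnothing$, as then $\bot^{\mathbf{A}}\notin D$), an immediate induction on $|\FmSetB|$ yields $v(\FmB)\in D$ for some $\FmB\in\FmSetB$. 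As $\mathbf{A}$ and $D$ were arbitrary, $\FmSetA\sequent_{\mathcal{M^{\Upset\lor}}}\FmSetB$.

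For the converse I would argue contrapositively. Assume $\bigwedge\FmSetA\leq\bigvee\FmSetB$ fails in $\mathsf{K}$, and pick $\mathbf{A}\in\mathsf{K}$ and a valuation $v$ with $a\SymbDef v(\bigwedge\FmSetA)\not\leq v(\bigvee\FmSetB)\SymbDef b$. Then the principal filter $\Upset a$ and the ideal $\{x\in A : x\leq b\}$ are disjoint (a common element would witness $a\leq b$ by transitivity), so, as the lattice reduct of $\mathbf{A}$ is distributive, the prime filter separation theorem for bounded distributive lattices \cite{priestley2002} yields a prime filter $D$ of $\mathbf{A}$ with $a\in D$ and $b\notin D$. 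Now $v(\FmA)\geq a$ for each $\FmA\in\FmSetA$ and $v(\FmB)\leq b$ for each $\FmB\in\FmSetB$, so upward closure of $D$ gives $v[\FmSetA]\subseteq D$ and $v[\FmSetB]\cap D=\varnothing$; hence $v$ witnesses $\FmSetA\notsequentx{\langle\mathbf{A},D\rangle}\FmSetB$, and therefore $\FmSetA\notsequentx{\mathcal{M^{\Upset\lor}}}\FmSetB$. I would also dispatch the degenerate case in which every member of $\mathsf{K}$ has a one-element lattice reduct: there $\mathcal{M^{\Upset\lor}}=\varnothing$ (a prime filter must be proper), so $\sequent_{\mathcal{M^{\Upset\lor}}}$ is the total relation, while every inequality is vacuously valid in $\mathsf{K}$, so both logics are total and the statement holds trivially.

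The only ingredient that is not pure bookkeeping is the appeal to the prime filter separation theorem (equivalently, to the Boolean prime ideal theorem): this is precisely what makes the \emph{prime}-filter characterization go through, whereas arbitrary lattice filters would not suffice, and it is where distributivity of the lattice reducts of members of $\mathsf{K}$ is used. I do not expect a conceptual obstacle anywhere; the step requiring most care is simply the handling of the empty-set cases for $\FmSetA$ and $\FmSetB$ (via the $\top/\bot$ conventions and properness of prime filters) and of the trivial-algebra corner case. The statement is, in effect, a standard Lindenbaum-style fact about order-preserving logics of distributive lattices.
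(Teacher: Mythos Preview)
Your proof is correct and follows essentially the same approach as the paper: both directions use closure of filters under meets and upward closure together with primality for one inclusion, and the Prime Filter Theorem for distributive lattices for the converse. Your presentation is slightly cleaner in that you reduce upfront to the equivalence on finite sets and handle the trivial-algebra corner case explicitly, but the underlying argument is the same.
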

\begin{proof}
    To start with, assume that 
    $\FmSetA \IneqPresMC{\mathsf{K}}\FmSetB$.
    So, (i) there are finite 
$\FmSetA'\subseteq\FmSetA$ and
$\FmSetB'\subseteq \FmSetB$ such that 
    $\FmSetA'\IneqPresMC{\mathsf{K}}\FmSetB'$.
    Let 
    $\langle \mathbf{A}, D \rangle
    \in \mathcal{M^{\Upset\lor}}$
    and $v \in \Hom(\LangAlg{\Sigma}, \mathbf{A})$ be a valuation.
    %If $\FmSetA' = \varnothing$,
    %then by (a) we have
    %$v(\bigvee \FmSetB') = v(\top)$.
    %Since we have
    %$v(\top) \in D$
    %and $D$ is prime,
    %we must have $v(\FmB) \in D$
    %for some $\FmB \in \FmSetB'$,
    %and we are done.
    Assume that
    %$\FmSetA' \neq \varnothing$
    %and
    $v[\FmSetA'] \subseteq D$.
    Thus
    $v(\bigwedge \FmSetA') \in D$,
    as $D$ is a lattice filter.
    By (i), we know that
    $v(\bigwedge \FmSetA') \leq v(\bigvee \FmSetB')$,
    so $v(\bigvee \FmSetB') \in D$.
    Since $D$ is prime, we must have
    $v(\FmB) \in D$
    for some $\FmB \in \FmSetB'$
    and we are done.
    %for each
    %$\FmA \in \FmSetA'$,
    %and then
    %$a \leq v(\bigwedge \FmSetA')$.
    %By (a)
    %and transitivity of $\leq$, we
    %obtain 
    %$a \leq v(\bigvee \FmSetB')$,
    %thus
    %$v(\FmB) \in \Upset a$
    %for some $\FmB \in \FmSetB'$,
    %since $\Upset a$ is prime,
    %so 
    %$\FmSetA' 
    %\sequent_{
    %\langle \mathbf A, \Upset a \rangle
    %} 
    %\FmSetB'$,
    %and then, by monotonicity
    %of the consequence relation, 
    %we obtain 
    %$\FmSetA 
    %\sequent_{
    %\langle \mathbf A, \Upset a \rangle
    %} 
    %\FmSetB$.
    %Since $\mathbf A$ and
    %$a \in A$ were arbitrary,
    %we obtain the desired result.
    %Note that considering $\Psi' = \varnothing$
    %in any of the cases
    %leads to a contraction, as
    %$\Upset a$ is assumed to be a
    %proper filter.

    Conversely,
    assume 
    that
    (ii)
    $\FmSetA \sequent_{\mathcal{M}^{\Upset\lor}}
    \FmSetB$.
    This 
    consequence relation is
    finitary, so 
    $\FmSetA' \sequent_{\mathcal{M^{\Upset\lor}}}
    \FmSetB'$
    for finite 
$\FmSetA'\subseteq\FmSetA$ and
$\FmSetB'\subseteq \FmSetB$.
Let $\mathbf A \in \mathsf{K}$
and $v \in \Hom(\LangAlg{\Sigma}, \mathbf{A})$.
Let $a \SymbDef v(\bigwedge \FmSetA')$
(note that $a = \top^\AlgA$ if $\FmSetA' = \varnothing$).
If $a = \bot^\AlgA$, the result
is straightforward.
Otherwise, $\Upset a$ is a proper filter.
Suppose, by reductio, that
$a \not\leq v(\bigvee \FmSetB')$,
that is,
$v(\bigvee \FmSetB') \not\in \Upset a$.
Since $\AlgA$ is distributive,
consider, by the Prime Filter Theorem, the extension of $\Upset a$
to a prime filter 
$D \supseteq \Upset a$
%~\cite[Thm. 1.21]{GalJipKowOno07}
such that $v(\bigvee \FmSetB') \not\in D$.
Since $a \leq v(\FmA)$
for each $\FmA \in \FmSetA'$,
we have $v(\FmA) \in \Upset a \subseteq D$,
for all $\FmA \in \FmSetA'$
and, by (ii),
we have 
$v(\FmB) \in D$
for some $\FmB \in \FmSetB'$.
Since $v(\FmB) \leq v(\bigvee \FmSetB')$,
we must have $v(\bigvee \FmSetB') \in D$.
As this leads to a contradiction, we conclude that 
$a \leq v(\bigvee \FmSetB')$,
as desired.\qedhere
%that is,
%$a \leq v(\FmB)$,
%and thus
%$a \leq v(\bigvee \FmSetB')$.
%Note that for $\FmSetA'=\varnothing$
%we have $\top^\AlgA = v(\bigvee \FmSetB')$.
%This is enough to conclude that 
%$\FmSetA \IneqPresMC{\mathsf{K}}
%    \FmSetB$, as desired.
\end{proof}

\begin{comment}
    We do not need to go 
    far to witness that these
    logics are indeed different:
    for instance,
    we have
    $
    \DefProp \lor \DeffProp 
    \IneqPresMC{\{ \FDEAlg \}}
    \DefProp,\DeffProp 
    $
    but
    $
    \DefProp \lor \DeffProp 
    \NOrderPresMC{\{ \FDEAlg \}}
    \DefProp,\DeffProp 
    $
    --- 
    consider a valuation
    $v$
    with $v(\DefProp) \SymbDef \both$
    and $v(\DeffProp) \SymbDef \neither$.
\end{comment}

%Assume further that  $\mathsf{K} = \Variety(\{ \AlgB\})$, for $\AlgB$ a finite algebra.
    %where $\mathsf{K}'$ is a finite set
    %of finite algebras.
For a variety generated by a single finite distributive lattice, we have this simpler
characterization in terms of a finite family of finite matrices:

\begin{proposition}
\label{prop:single-gen-variety-matrices}
Let $\mathsf{K} = \Variety(\{ \AlgB\})$, for $\AlgB$ a finite distributive lattice. Then
$\IneqPresMC{\mathsf{K}}$ is
    determined
    by
    $
    \mathcal{M}^{\Upset\lor}_{\mathsf{fin}} \SymbDef
    \{ \langle \AlgB, D \rangle :  
        D
        \text{ is a prime filter of $\AlgB$}\}$.
%\todo{Um: if the filter is prime better just call it $P$?}
%\begin{enumerate}
%    %\item     $\OrderPresMC{\mathsf{K}}$ is determined
%    %by
%    %$
%    %\mathcal{M}^{\Upset{}}_{\mathsf{f}} \SymbDef
%    %\{ \langle \AlgB, \Upset a\rangle \mid  
%    %    \Upset a 
%    %    \text{ is a principal filter of $\AlgB$}
%    %    %\text{ and }
%    %    %\Upset a \text{ is proper}
%    %\}$.
%    \item     $\IneqPresMC{\mathsf{K}}$ is
%    determined
%    by
%    $
%    \mathcal{M}^{\Upset\lor}_{\mathsf{f}} \SymbDef
%    \{ \langle \AlgB, \Upset a \rangle \mid  
%        \Upset a
%        \text{ is a prime filter of $\AlgB$}\}$.
%\end{enumerate}
\end{proposition}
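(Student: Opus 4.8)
The plan is to derive the statement from Proposition~\ref{prop:mat-char-prime}, which already presents $\IneqPresMC{\mathsf{K}}$ as the finitary companion of the logic $\sequent_{\mathcal{M^{\Upset\lor}}}$ determined by the class $\mathcal{M^{\Upset\lor}}$ of all matrices $\la\AlgA,D\ra$ with $\AlgA\in\mathsf{K}$ and $D$ a prime filter of $\AlgA$; note first that $\mathsf{K}=\Variety(\{\AlgB\})$ is an admissible class for that proposition, since each of its members is obtained from the bounded distributive lattice $\AlgB$ by applying $\HOp$, $\SOp$, $\POp$ and hence still has a bounded distributive lattice reduct. One of the two inclusions is then immediate: since $\AlgB\in\mathsf{K}$, every matrix of $\mathcal{M}^{\Upset\lor}_{\mathsf{fin}}$ is a matrix of $\mathcal{M^{\Upset\lor}}$, so $\sequent_{\mathcal{M^{\Upset\lor}}}\subseteq\sequent_{\mathcal{M}^{\Upset\lor}_{\mathsf{fin}}}$; applying the (monotone) finitary-companion operator, and using that the finitary companion of any logic is a sublogic of it, yields
\[
\IneqPresMC{\mathsf{K}}=(\sequent_{\mathcal{M^{\Upset\lor}}})_{\mathsf{fin}}\;\subseteq\;(\sequent_{\mathcal{M}^{\Upset\lor}_{\mathsf{fin}}})_{\mathsf{fin}}\;\subseteq\;\sequent_{\mathcal{M}^{\Upset\lor}_{\mathsf{fin}}}.
\]

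For the reverse inclusion $\sequent_{\mathcal{M}^{\Upset\lor}_{\mathsf{fin}}}\subseteq\IneqPresMC{\mathsf{K}}$, I would begin by recalling the standard fact that a logic determined by finitely many \emph{finite} matrices is finitary; hence it suffices to prove, for all finite $\FmSetA',\FmSetB'\subseteq\LangSet{\Sigma}$, that $\FmSetA'\sequent_{\mathcal{M}^{\Upset\lor}_{\mathsf{fin}}}\FmSetB'$ entails $\FmSetA'\IneqPresMC{\mathsf{K}}\FmSetB'$. By Definition~\ref{def:order-pres-set-set-logic} the latter is precisely the assertion that $\bigwedge\FmSetA'\leq\bigvee\FmSetB'$ is valid in $\mathsf{K}$; and since $\mathsf{K}$ is generated by $\AlgB$ and (in)equations are preserved by $\HOp$, $\SOp$, $\POp$, this is equivalent to validity in $\AlgB$ itself. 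I would then argue contrapositively: if the inequality fails in $\AlgB$, pick $v\in\Hom(\LangAlg{\Sigma},\AlgB)$ with $a\SymbDef v(\bigwedge\FmSetA')\not\leq b\SymbDef v(\bigvee\FmSetB')$. Since $a\not\leq b$, the filter $\Upset{a}$ and the ideal $\down b$ are disjoint and $\Upset{a}$ is proper, so the Prime Filter Theorem supplies a prime filter $D\supseteq\Upset{a}$ disjoint from $\down b$; in particular $b\notin D$. As $a\leq v(\FmA)$ for each $\FmA\in\FmSetA'$ we get $v[\FmSetA']\subseteq\Upset{a}\subseteq D$, whereas $v(\FmB)\leq b\notin D$ together with the upward closure of $D$ forces $v[\FmSetB']\cap D=\varnothing$; hence $v$ witnesses that $\FmSetA'\sequent_{\la\AlgB,D\ra}\FmSetB'$ fails, with $\la\AlgB,D\ra\in\mathcal{M}^{\Upset\lor}_{\mathsf{fin}}$, so $\FmSetA'\sequent_{\mathcal{M}^{\Upset\lor}_{\mathsf{fin}}}\FmSetB'$ fails as well.

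The bulk of this is bookkeeping; the genuinely substantive ingredients are the Prime Filter Theorem, used to separate $a$ from $b$ inside the finite distributive lattice $\AlgB$, and the reduction to finite premiss and conclusion sets. I expect that last reduction to be the (mild) obstacle: one either invokes, as above, the known finitariness of any logic determined by finitely many finite matrices, or argues directly by a compactness argument on the finite product space $\Hom(\LangAlg{\Sigma},\AlgB)$ --- for each fixed prime filter $D$ of $\AlgB$ the set of valuations $v$ with $v[\FmSetA]\subseteq D$ and $v[\FmSetB]\cap D=\varnothing$ is closed (indeed clopen), and the family of such sets obtained by restricting to finite subsets of $\FmSetA$ and $\FmSetB$ has the finite intersection property, so a common witness exists.
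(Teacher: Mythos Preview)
Your argument is correct and follows essentially the same route as the paper: one inclusion via $\mathcal{M}^{\Upset\lor}_{\mathsf{fin}}\subseteq\mathcal{M^{\Upset\lor}}$ and Proposition~\ref{prop:mat-char-prime}, the other by reducing to finite $\FmSetA',\FmSetB'$, passing from validity in $\mathsf{K}$ to validity in $\AlgB$, and applying the Prime Filter Theorem to produce a countermodel $\la\AlgB,D\ra$. You are in fact more explicit than the paper about two points it leaves tacit---the finitariness of $\sequent_{\mathcal{M}^{\Upset\lor}_{\mathsf{fin}}}$ and the reduction of validity in $\Variety(\{\AlgB\})$ to validity in $\AlgB$ via preservation under $\HOp$, $\SOp$, $\POp$.
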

\begin{proof}
%For (1), as 
%$\mathcal{M}^{\Upset{}}_{\mathsf{f}} \subseteq
%\mathcal{M}^{\Upset{}}$,
%we have $\OrderPresMC{\mathsf{K}}
%\subseteq \sequent_{\mathcal{M}^{\Upset{}}_{\mathsf{f}}}$.
%%Suppose that $\FmSetA %\sequent_{\mathcal{M}^{\Upset{}}_{\mathsf{f}}} \FmSetB$.
%Conversely,
%suppose that (a):
%$\FmSetA \; \notsequent_{\mathsf{K}}^{\leq} \; \FmSetB$.
%Consider finite $\FmSetA' \subseteq \FmSetA$
%and $\FmSetB' \subseteq \FmSetB$.
%We want to show that
%$\FmSetA' \notsequent_{\mathcal{M}^{\Upset{}}_{\mathsf{f}}} \FmSetB'$.
%By (a), for some $\AlgA \in \mathsf{K}$,
%some $a \in A$
%and some valuation $v \in \Hom(\LangAlg{\Sigma}, \mathbf{A})$,
%we have
%$a \leq v(\FmA)$
%for each $\FmA \in \FmSetA'$
%but
%$a \not\leq v(\FmB)$
%for each $\FmB \in \FmSetB'$.
%{\color{red}
%It means that
%the equations
%$\bigwedge \FmSetA' \leq \FmB$
%for each $\FmB \in \FmSetB'$
%fail in $\AlgA$,
%thus each fail
%in $\AlgB$.
%Let $b \SymbDef v(\bigwedge \FmSetA')$.
%Thus $b \leq v(\FmA)$ for each $\FmA \in \FmSetA'$
%and $b \not\leq v(\FmB)$ for each 
%$\FmB \in \FmSetB'$.
%This means that $\FmSetA' \notsequent_{\langle \AlgB, \Upset b\rangle} \FmSetB'$. (this is wrong...)}
%%That is, 
%%for each $\FmB \in \FmSetB'$,
%%$\FmSetA' \not\vdash_{\langle \AlgB, \Upset %a_\FmB \rangle} \FmB$,
%%where $a_\FmB \in B$.
%%Note that, being $\mathsf{K}$ a variety
%%of lattices, we have that
%%$\Upset a$ is a lattice filter.
%
%
%%For (2), note first
%%that for finite algebras every
%%lattice filter is principal.
%%Then just use the fact that
%%an equation is satisfied in
%%$\mathsf{K}$ if, and only if,
%%it is satisfied in $\mathsf{K}'$.

As
$\mathcal{M}^{\Upset{}\lor}_{\mathsf{fin}} \subseteq
\mathcal{M}^{\Upset{}\lor}$,
we have $\IneqPresMC{\mathsf{K}}
\subseteq \sequent_{\mathcal{M}^{\Upset{}\lor}_{\mathsf{fin}}}$.
Conversely,
suppose that
$\FmSetA \; \notsequent_{\mathsf{K}}^{\leq} \; \FmSetB$.
Consider finite $\FmSetA' \subseteq \FmSetA$
and $\FmSetB' \subseteq \FmSetB$.
We want to show that
$\FmSetA' \notsequent_{\mathcal{M}^{\Upset{}\lor}_{\mathsf{fin}}} \FmSetB'$.
By the hypothesis, we know that
$\bigwedge \FmSetA' \leq \bigvee \FmSetB'$
fails in $\mathsf{K}$,
thus it fails in $\AlgB$,
say under a valuation $v$.
Let $b \SymbDef v(\bigwedge \FmSetA')$.
So, $b \not\leq v(\bigvee \FmSetB')$.
Note that $\Upset b$ must be proper.
As $v(\bigvee \FmSetB') \not\in \Upset b$,
consider the extension of $\Upset b$
to a prime filter $D$,
by the Prime Filter Theorem, such that
$v(\bigvee \FmSetB') \not\in D$.
Since $\AlgB$ is finite, we have
that $D$ is principal.
Clearly, we must have $v(\FmB) \not\in D$
for every $\FmB \in \FmSetB'$, as
$v(\FmB) \leq v(\bigvee \FmSetB')$
and $D$ is upwards closed,
and 
{
$v(\FmA) \in D$ for every $\FmA\in\FmSetA'$.
}
Therefore, 
$\FmSetA' \notsequent_{\mathcal{M}^{\Upset{}\lor}_{\mathsf{fin}}} \FmSetB'$,
as desired.
\end{proof}

%Note that under the above assumptions,
%since
%    $\mathcal{M}^{\Upset\lor}_{\mathsf{f}}
%    \subseteq 
%    \mathcal{M}^{\Upset{}}_{\mathsf{f}}
%    $, we in fact have 
%    $
%    \OrderPresMC{\mathsf{K}}
%    \subseteq 
%    \IneqPresMC{\mathsf{K}}
%    $.
%there
%exist finite 
%%
%$\FmSetA'\subseteq\FmSetA$ and
%$\FmSetB'\subseteq \FmSetB$
%such that the equation
%%%
%$\bigwedge\FmSetA'\approx \bigwedge\FmSetA' \land \bigvee \FmSetB'$
%%%
%is valid in~$\mathsf{K}$ (as usual, we assume $\bigwedge \emptyset=\top^\mathbf{A}$ and $\bigvee \emptyset=\bot^\mathbf{A}$).
%}

\begin{definition}
\label{def:order-pres-set-fmla-logic}
Let $\mathsf{K}$ be a class of $\Sigma$-algebras as in Definition~\ref{def:order-pres-set-set-logic}.
The \SetFmla{} \emph{order-preserving logic induced by~$\mathsf{K}$}, which we denote by $\vdash_\mathsf{K}^{\leq}$,
is such that
$\FmSetA \vdash_\mathsf{K}^{\leq} \FmB$ if, and only if,
%(i)
%$\FmSetA=\varnothing$
%and $\FmB \approx \top$ is valid in~$\mathsf{K}$ 
%or
%(ii)
there
are $\FmA_1,\ldots,\FmA_n \subseteq \FmSetA$ ($n \geq 1$)
for which
$\bigwedge_i \FmA_i \leq \FmB$
is valid in~$\mathsf{K}$.
\end{definition}

\noindent
Notice that
$\vdash_\mathsf{K}^{\leq}$
is the \SetFmla{} companion of
$\IneqPresMC{\mathsf{K}}$.

The above logics are particularly interesting for
us in view of the following:

\begin{proposition}
\label{prop:order-are-selfext}
    %$\OrderPresMC{\mathsf{K}}$
    $\IneqPresMC{\mathsf{K}}$
    and
    $\vdash_\mathsf{K}^{\leq}$
    are self-extensional.
\end{proposition}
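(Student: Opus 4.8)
The plan is to reduce logical equivalence in these two order-preserving logics to equational validity in $\mathsf{K}$, and then to invoke the elementary fact that, since the members of $\mathsf{K}$ are \emph{deterministic} algebras, substituting equals for equals is sound — this is just the statement that equational consequence respects congruences.

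First I would record what $\SymLogEquiv$ amounts to on single formulas. For $\IneqPresMC{\mathsf{K}}$, inspecting Definition~\ref{def:order-pres-set-set-logic} shows that $\FmA \IneqPresMC{\mathsf{K}} \FmB$ holds precisely when the inequality $\FmA \leq \FmB$ is valid in $\mathsf{K}$: the only finite subsets available are subsets of $\{\FmA\}$ and $\{\FmB\}$, and the clauses arising from the empty subsets (namely $\top \leq \FmB$, $\FmA \leq \bot$ and $\top \leq \bot$) each both entail and are entailed by validity of $\FmA \leq \FmB$, using $\bot \leq x \leq \top$. Consequently $\FmA \SymLogEquiv \FmB$ iff both $\FmA \leq \FmB$ and $\FmB \leq \FmA$ are valid in $\mathsf{K}$, which — by antisymmetry of the lattice order of each $\AlgA \in \mathsf{K}$ — is equivalent to $\mathsf{K}$ satisfying the equation $\FmA \approx \FmB$. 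The same computation, taking $n = 1$ in Definition~\ref{def:order-pres-set-fmla-logic}, shows that $\FmA \SymLogEquiv \FmB$ relative to $\vdash_\mathsf{K}^{\leq}$ is again equivalent to $\mathsf{K} \models \FmA \approx \FmB$; so the two logics induce the same relation of logical equivalence on formulas, and it is enough to verify the congruence property once.

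Next, fix $\DefCon \in \Sigma_k$ and assume $\FmA_i \SymLogEquiv \FmB_i$ for all $1 \leq i \leq k$, i.e.\ $\mathsf{K} \models \FmA_i \approx \FmB_i$. For any $\AlgA \in \mathsf{K}$ and any $v \in \Hom(\LangAlg{\Sigma}, \AlgA)$ we have $v(\FmA_i) = v(\FmB_i)$ for each $i$, and since $\AlgA$ is an algebra, $\DefCon^{\AlgA}$ is a genuine $k$-ary operation, whence
\begin{align*}
v(\DefCon(\FmA_1,\ldots,\FmA_k)) &= \DefCon^{\AlgA}(v(\FmA_1),\ldots,v(\FmA_k))\\
&= \DefCon^{\AlgA}(v(\FmB_1),\ldots,v(\FmB_k)) = v(\DefCon(\FmB_1,\ldots,\FmB_k)).
\end{align*}
As $\AlgA$ and $v$ were arbitrary, $\mathsf{K} \models \DefCon(\FmA_1,\ldots,\FmA_k) \approx \DefCon(\FmB_1,\ldots,\FmB_k)$, hence both inequalities between these two formulas are valid in $\mathsf{K}$, and by the previous paragraph $\DefCon(\FmA_1,\ldots,\FmA_k) \SymLogEquiv \DefCon(\FmB_1,\ldots,\FmB_k)$ in both $\IneqPresMC{\mathsf{K}}$ and $\vdash_\mathsf{K}^{\leq}$ — exactly the self-extensionality requirement of Definition~\ref{def:selfextensional}.

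I do not expect a genuine obstacle here: the proof is short and essentially formal. The two points that call for a little care are (i) the bookkeeping in the first paragraph, checking that the finite- and empty-subset clauses in Definitions~\ref{def:order-pres-set-set-logic} and~\ref{def:order-pres-set-fmla-logic} collapse to plain validity of $\FmA \leq \FmB$; and (ii) making explicit that it is the determinism of the algebras in $\mathsf{K}$ that turns $\DefCon^{\AlgA}$ into a function, so that the central displayed equalities hold — the analogous claim fails for non-deterministic matrices, which is consistent with the non-self-extensional behaviour observed for the PNmatrix semantics elsewhere in the paper.
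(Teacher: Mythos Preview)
Your argument is correct and takes a more elementary route than the paper. The paper simply cites the general theory of semilattice-based logics in \cite[Sec.~3]{Jansana2006}, together with the observation that $\vdash_\mathsf{K}^{\leq}$ is the $\SetFmla$ companion of $\IneqPresMC{\mathsf{K}}$. You instead give a direct, self-contained verification: reduce $\SymLogEquiv$ on single formulas to equational validity in $\mathsf{K}$ (via antisymmetry of the lattice order), and then use that equational validity is a congruence because the members of $\mathsf{K}$ are deterministic algebras. The paper's approach places the result within an existing abstract framework; yours makes the underlying mechanism explicit and avoids external dependencies.

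One minor wording slip: you write that the degenerate clauses $\top \leq \FmB$, $\FmA \leq \bot$, $\top \leq \bot$ each ``both entail and are entailed by'' validity of $\FmA \leq \FmB$. The ``entailed by'' direction is false in general (validity of $\FmA \leq \FmB$ certainly does not force $\top \leq \bot$), but it is also unnecessary: for the equivalence you want, it suffices that each degenerate clause implies $\FmA \leq \FmB$ (which you justify via $\bot \leq x \leq \top$), together with the trivial fact that $\FmA \leq \FmB$ is itself one of the admissible choices of finite subsets. The conclusion stands.
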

\begin{proof}
    Directly from \citet*[Sec. 3]{Jansana2006}
    and
    from the fact that 
    $\vdash_\mathsf{K}^{\leq}$
    is the 
     $\SetFmla{}$
    companion
 %   of both
%$\OrderPresMC{\mathsf{K}}$
    of
    $\IneqPresMC{\mathsf{K}}$.
\end{proof}

We close by defining
the $\top$-assertional
logics (also known as 1-assertional logics) associated to a class
of bounded lattices.

\begin{definition}
    Let
$\mathsf{K}$ be a class of 
algebras with a bounded lattice reduct.
The \emph{$\top$-assertional logics}
$\sequent^\top_\mathsf{K}$ and $\OneAssert_{\mathsf{K}}$ correspond respectively to the \SetSet{} and \SetFmla{} logics determined by the class of matrices $\{\langle \mathbf{A}, \{\top^\mathbf{A}\} \rangle : \mathbf{A} \in \mathsf{K}\}$. 
\end{definition}

\noindent
%Finally, another notion
%of logic
%ssociated to a class
%$\mathsf{K}$ of algebras $\AlgA$
%with a top element $\top^\AlgA$
%are
%the \emph{$\top$-assertional logics}
%$\sequent^\top_\mathsf{K}$ and $\OneAssert_{\mathsf{K}}$ corresponding respectively to the \SetSet{} and \SetFmla{} logics determined by the class of $\Sigma$-matrices $\{\langle \mathbf{A}, \{\top^\mathbf{A}\} \rangle \mid \mathbf{A} \in \mathsf{K}\}$ 
Notice that $\OneAssert_{\mathsf{K}}$ is the \SetFmla{} companion of $\sequent^\top_\mathsf{K}$.

%\todo{UR: Here I think is the right place to cite another %result of Jansana (or Pynko), \cite[Thm.~3.7]%{Jansana2006}, that I need later.}

\subsection{Hilbert-style axiomatizations}

Based on \cite{shoesmith_smiley:1978} and \cite{marcelinowollic},
we define a \emph{symme\-trical} (\emph{Hilbert-style}) \emph{calculus} $\CalcVar$ (or \emph{\SetSet{} calculus}, for short)
as a collection
of pairs $(\FmSetA,\FmSetB) \in \powerset{\LangSet{\Sigma}}\times\powerset{\LangSet{\Sigma}}$, denoted by $\inferx[]{\FmSetA}{\FmSetB}$ and called (\emph{symmetrical} or \SetSet{}) \emph{inference rules}, where
$\FmSetA$ is the
\emph{antecedent} and $\FmSetB$
is the \emph{succedent} of the said rule.
We will adopt the convention of
omitting curly braces when writing sets of formulas
and leaving a blank space instead of writing $\varnothing$
when presenting inference rules and statements
involving (generalized) consequence relations.
We proceed to define what constitutes a proof
in such calculi.

A \emph{{finite} bounded rooted tree} $\DefTree$ is a finite poset
$\langle \NodeSet(\DefTree), \TreeRel^\DefTree \rangle$
with a single minimal element $\Root{\DefTree}$, 
the \emph{root} of~$\DefTree$,
such that, for each \emph{node} $\DefNode \in \NodeSet(\DefTree)$, the set $\{\DefNode' \in \NodeSet(\DefTree) : \DefNode' \TreeRel^\DefTree \DefNode \}$ of \emph{ancestors} of $\DefNode$ (or the \emph{branch} up to $\DefNode$) is
{linearly ordered} under $\TreeRel^\DefTree$,
and every branch of $\DefTree$ has
a maximal element (a \emph{leaf} of $\DefTree$).
We may assign a label $\Label{\DefTree}(\DefNode) \in \powerset{\LangSet{\Sigma}} \cup \{\Disc\}$ to each node $\DefNode$ of $\DefTree$, in which case $\DefTree$ is said to
be \emph{labelled}.
Given $\FmSetB \subseteq \LangSet{\Sigma}$, a leaf $\DefNode$ is \emph{$\FmSetB$-closed} in $\DefTree$ when
$\Label{\DefTree}(\DefNode) \;=\; \Disc$ or $\Label{\DefTree}(\DefNode) \cap \FmSetB \neq \varnothing$. The tree $\DefTree$ itself is
\emph{$\FmSetB$-closed} when all of its leaves are
$\FmSetB$-closed. The immediate successors of
a node $\DefNode$ with respect to $\TreeRel^\DefTree$ are called the \emph{children}
of $\DefNode$ in $\DefTree$.

Let $\mathsf{R}$ be a symmetrical calculus. 
An \emph{$\mathsf{R}$-derivation} is a 
{finite}
labelled bounded rooted tree 
such that for every non-leaf node $\DefNode$ of $\DefTree$
there exists a rule of inference $\mathsf{r} = \inferx[]{\FmSetC}{\FmSetD} \in \mathsf{R}$ and a substitution $\sigma$
such that $\sigma(\FmSetC) \subseteq \Label{\DefTree}(\DefNode)$, and the set of children of $\DefNode$ is 
either (i) 
$\{\DefNode^\Fm : \Fm \in \sigma(\FmSetD)\}$, in case $\FmSetD \neq \varnothing$, where $\DefNode^\Fm$ is a node labelled with $\Label{\DefTree}(\DefNode) \cup \{\Fm\}$, 
or
{
    (ii) a singleton $\{\DefNode^\Disc\}$
    with $\Label{\DefTree}(\DefNode^\Disc)\;=\;\Disc$,
    in case $\FmSetD\;=\;\varnothing$.
}
{Such node $\DefNode$ is said to be
\emph{expanded} by an application of $\mathsf r$.
Figure~\ref{fig:derivationscheme} illustrates
the general shape of $\CalcVar$-derivations.}
We say that
$\FmSetA\; \sequent_{\CalcVar}\; \FmSetB$
whenever there is a $\FmSetB$-closed derivation $\DefTree$ such that $\FmSetA\;\supseteq\;\Root{\DefTree}$; 
such a tree consists in a \emph{proof}
that $\FmSetB$ follows from~$\FmSetA$ in~$\mathsf{R}$.
As a matter of simplification when drawing such trees, we usually avoid copying the formulas inherited from the parent nodes (see Example~\ref{ex:fdeax} below, where we display some proof trees).
The relation $\sequent_{\CalcVar}$ so defined is
a \SetSet{} logic
and, when $\sequent_{\CalcVar} = \sequent$,
we say that $\CalcVar$
\emph{axiomatizes} $\sequent$.
{It is worth emphasizing that we restrict ourselves to finite derivations
because we are interested in finitary
\SetSet{} logics, but in general
they could be infinite and cover
also non-finitary logics (cf.~\citet[Fig. 1]{marcelino:2019} for an example)}.
A rule $\inferx[]{\FmSetA}{\FmSetB}$ is \emph{sound} in a PNmatrix $\DefMat$ when $\FmSetA \;\sequent_\DefMat\;\FmSetB$.
It should be pointed out
that this proof formalism
generalises the conventional ($\SetFmla{}$) Hilbert-style calculi:
the latter corresponds to symmetrical calculi
whose rules have, each, a finite antecedent
and a singleton as succedent.

{
\begin{remark}\label{rem:no-loops}
    Observe that any application
    of a rule of inference
    that creates a node labelled with the
    same set of formulas as the label of the node being expanded
    is superfluous (cf.~\cite[Prop. 35]{greati2022}); in this way,
    loops in these derivations
    may be readily averted.
\end{remark}
}

{The process of building an $\CalcVar$-derivation, in principle, does not necessarily terminate, since there may be
infinitely-many rule instances that
are applicable.
We now introduce a notion of
analyticity that allows a limit to be set on
the search space in such a way that this process is guaranteed to finish.}
Given $\Lambda \subseteq \LangSet{\Sigma}$,
we write $\FmSetA \sequent_{\mathsf{R}}^\Lambda \FmSetB$
whenever there is a proof of $\FmSetB$ from $\FmSetA$
using only formulas in $\Lambda$.
Given $\FmSetD,\FmSetAnalytic\subseteq \LangSet{\Sigma}$, we define the set $\Upsilon^\FmSetAnalytic$ of \emph{$\FmSetAnalytic$-generalized subformulas of~$\FmSetD$} as the set $\mathsf{sub}(\FmSetD) \cup \{ \sigma(\FmA) : \FmA \in \FmSetAnalytic\mbox{ and } \sigma : P \to \mathsf{sub}(\FmSetD)\}$.
We say that
$\mathsf{R}$ is \emph{$\FmSetAnalytic$-analytic}
when, for all $\FmSetA,\FmSetB \subseteq \LangSet{\Sigma}$,
we always have
$\FmSetA \sequent_\mathsf{R}^{\Upsilon^\FmSetAnalytic} \FmSetB$ whenever we have $\FmSetA\;\sequent_\mathsf{R}\;\FmSetB$; 
%where $\Upsilon^\FmSetAnalytic := \mathsf{sub}(\FmSetA \cup \FmSetB) \cup \{ \sigma(\FmA) \mid \FmA \in \FmSetAnalytic\mbox{ and } \sigma : P \to \mathsf{sub}(\FmSetA \cup \FmSetB)\}$, where the latter is the set of so-called \emph{generalized subformulas} of $(\FmSetA, \FmSetB)$.
intuitively, it means that
proofs in $\mathsf{R}$
that $\FmSetB$ follows from $\FmSetA$, whenever they exist,
do not need to use more material than subformulas of $\FmSetA\cup\FmSetB$
or substitution instances of the
formulas in~$\FmSetAnalytic$ built from the subformulas of $\FmSetA\cup\FmSetB$.
{Searching for proofs in a finite \emph{$\FmSetAnalytic$-analytic} calculus
is then guaranteed to terminate because
(i) due to analyticity, there are finitely-many
rule instances to be considered
when expanding a node
and (ii) in view of Remark~\ref{rem:no-loops}, loops may always be averted.}

A general method is introduced in \cite{marcelinowollic, marcelino:2019} for obtaining analytic calculi (in the sense of analyticity introduced in the previous paragraph)
for logics given by 
%(partial non-deterministic)
a $\Sigma$-PNmatrix $\langle \mathbf{A}, D\rangle$ whenever a certain expressiveness requirement (called `monadicity' in \cite{shoesmith_smiley:1978}) is met: for every $a,b \in A$, there is a single-variable formula
$\mathrm{S}$ such that $\mathrm{S}^\mathbf{A}(a) \in D$ and $\mathrm{S}^\mathbf{A}(b) \not\in D$
or vice-versa. 
We call such formula a \emph{separator (for $a$ and $b$)}.

The following example illustrates a symmetrical
calculus for $\FDELogic$ generated by this method, as well
as some proofs in this calculus.
% proofs of some valid statements of $\FDELogic$. 
%From now on, let us fix the set of separators $\mathcal{S} := \{p, \dmneg p\}$. 
%Applying the mentioned
%method, we obtain the following axiomatization for $\FDELogic$.

\begin{figure}[t]
    \centering
    \begin{tikzpicture}[every tree node/.style={},
       level distance=1.2cm,sibling distance=1cm,
       edge from parent path={(\tikzparentnode) -- (\tikzchildnode)}, baseline]
        \Tree[.\node[style={draw,circle}] {};
            \edge[dashed];
            [.\node[style={}] {$\FmSetA$};
            \edge node[auto=right] {$\frac{\FmSetC}{\varnothing}{}$};
            [.{$\ast$}
            ]
            ]
        ]
    \end{tikzpicture}
    \qquad
    \begin{tikzpicture}[every tree node/.style={},
       level distance=1cm,sibling distance=.5cm,
       edge from parent path={(\tikzparentnode) -- (\tikzchildnode)}, baseline]
        \Tree[.\node[style={draw,circle}] {};
            \edge[dashed];
            [.\node[style={}] {$\FmSetA$};
            \edge node[auto=right] {$\frac{\FmSetC}{\FmD_1,\FmD_2,\ldots,\FmD_n}{}$};
            [.${\color{gray}\FmSetA},\FmD_1$
                \edge[dashed];
                [.\node[style={draw,circle}]{};
                ]
                \edge[dashed];
                [.\node[style={}]{$\cdots$};
                ]
                \edge[dashed];
                [.\node[style={draw,circle}]{};
                ]
            ]
            [.${\color{gray}\FmSetA},\FmD_2$
                \edge[dashed];
                [.\node[style={draw,circle}]{};
                ]
                \edge[dashed];
                [.\node[style={}]{$\cdots$};
                ]
                \edge[dashed];
                [.\node[style={draw,circle}]{};
                ]
            ]
            [.$\ldots$
            ]
            [.${\color{gray}\FmSetA},\FmD_n$
                \edge[dashed];
                [.\node[style={draw,circle}]{};
                ]
                \edge[dashed];
                [.\node[style={}]{$\cdots$};
                ]
                \edge[dashed];
                [.\node[style={draw,circle}]{};
                ]
            ]
            ]
        ]
    \end{tikzpicture}
    \caption{Graphical representation of $\CalcVar$-derivations,
    where $\CalcVar$ is a
    $\SetSet{}$ system.
    The dashed edges and blank circles represent other branches that may exist in the derivation.
    %in the case of expanded
    %nodes,
    We usually omit the formulas inherited from the parent
    node, 
    exhibiting only the ones introduced by the
    applied rule of inference.
    Recall that, in both cases,
    we must have $\FmSetC \subseteq \FmSetA$.}
    \label{fig:derivationscheme}
\end{figure}

\begin{example}
\label{ex:fdeax}
The matrix $\langle \mathbf{DM}_4, {\uparrow}\both \rangle$ fulfills the above mentiond expressiveness requirement, with the following set of separators: $\mathcal{S} \SymbDef \{p, \dmneg p\}$. 
We may therefore apply the method introduced in {\rm\cite{marcelino:2019}} to obtain for $\mathcal{B}$ the following $\mathcal{S}$-analytic
axiomatization we call~$\mathsf{R}_\mathcal{B}$:
\vspace{-.5em}
$$
\inferx[\mathsf{r}_1]
{}
{\top}
\quad
\inferx[\mathsf{r}_2]
{\dmneg \top}
{}
\quad
\inferx[\mathsf{r}_3]
{}
{\dmneg \bot}
\quad
\inferx[\mathsf{r}_4]
{\bot}
{}
\quad
\inferx[\mathsf{r}_5]
{p}
{\dmneg \dmneg p}
\quad
\inferx[\mathsf{r}_6]
{\dmneg \dmneg p}
{p}
$$
$$
\inferx[\mathsf{r}_7]
{p \land q}
{p}
\quad
\inferx[\mathsf{r}_8]
{p \land q}
{q}
\quad
\inferx[\mathsf{r}_9]
{p, q}
{p \land q}
\quad
\inferx[\mathsf{r}_{10}]
{\dmneg p}
{\dmneg(p \land q)}
\quad
\inferx[\mathsf{r}_{11}]
{\dmneg q}
{\dmneg(p \land q)}
\quad
\inferx[\mathsf{r}_{12}]
{\dmneg (p \land q)}
{\dmneg p, \dmneg q}
$$
$$
\inferx[\mathsf{r}_{13}]
{p}
{p \lor q}
\quad
\inferx[\mathsf{r}_{14}]
{q}
{p \lor q}
\quad
\inferx[\mathsf{r}_{15}]
{p \lor q}
{p, q}
\quad
\inferx[\mathsf{r}_{16}]
{\dmneg p, \dmneg q}
{\dmneg(p \lor q)}
\quad
\inferx[\mathsf{r}_{17}]
{\dmneg(p \lor q)}
{\dmneg p}
\quad
\inferx[\mathsf{r}_{18}]
{\dmneg (p \lor q)}
{\dmneg q}
$$

\noindent Figure {\rm \ref{fig:derivations}} illustrates proofs in $\mathsf{R}_\mathcal{B}$.

\vspace{-.5em}
\begin{figure}[H]
    \centering
    \scalebox{0.9}{
    \begin{tikzpicture}[node distance=1.5cm]
    \node (0)                     {$\dmneg (p \land q)$};
    \node (11) [below left of=0]  {$\dmneg p$};
    \node (12) [below right of=0] {$\dmneg q$};
    \node (21) [below of=11]      {$\dmneg p \lor \dmneg q$};
    \node (22) [below of=12]      {$\dmneg p \lor \dmneg q$};
    \draw (0)  -- (11) node[midway,above left] {$\mathsf{r}_{12}$};
    \draw (0)  -- (12);
    \draw (11) -- (21) node[midway,left] {$\mathsf{r}_{13}$};
    \draw (12) -- (22) node[midway,right] {$\mathsf{r}_{14}$};
    \end{tikzpicture}
    \begin{tikzpicture}[node distance=1.5cm]
    \node (0)                     {$\dmneg p \lor \dmneg q$};
    \node (11) [below left of=0]  {$\dmneg p$};
    \node (12) [below right of=0] {$\dmneg q$};
    \node (21) [below of=11]      {$\dmneg (p \land q)$};
    \node (22) [below of=12]      {$\dmneg (p \land q)$};
    \draw (0)  -- (11) node[midway,above left] {$\mathsf{r}_{15}$};
    \draw (0)  -- (12);
    \draw (11) -- (21) node[midway,left] {$\mathsf{r}_{10}$};
    \draw (12) -- (22) node[midway,right] {$\mathsf{r}_{11}$};
    \end{tikzpicture}
    \qquad \qquad
    \begin{tikzpicture}[node distance=1.5cm]
    \node (0)                     {$p \lor \bot$};
    \node (11) [below left of=0]  {$p$};
    \node (12) [below right of=0] {$\bot$};
    \node (22) [below of=12]      {*};
    \draw (0)  -- (11) node[midway,above left] {$\mathsf{r}_{15}$};
    \draw (0)  -- (12);
    \draw (12) -- (22) node[midway,right] {$\mathsf{r}_{4}$};
    \node (11) [right of=12]               {$p \lor \bot$};
    \node (00) [above=0.5\distance of 11] {$p, r$};
    \node (22) [below of=11]               {};
    \draw (00) -- (11) node[midway,left] {$\mathsf{r}_{13}$};
    \end{tikzpicture}}
    \caption{Proofs in $\mathsf{R}_\FDELogic$ witnessing that $\dmneg(p \land q) \;\SymLogEquiv_{\FDELogic}\; \dmneg p \lor \dmneg q$ and $p \lor \bot \; \SymLogEquiv_{\FDELogic} \; p, r$.}
    \label{fig:derivations}
\end{figure}
\end{example}

\begin{remark}
\label{rem:why-not-sequent-calculi}
\cite{Avron2007}
employed essentially the same 
sufficient expressiveness condition 
{
---further discussed in \cite{cal:mar:JMVLSC}---
}
as the one described
above to provide cut-free sequent calculi
for logics determined by finite non-deterministic
matrices.
In this work we focus, instead, on Hilbert-style systems
for their proximity with the 
logics being axiomatized (rules are simply consequence statements involving an antecedent and a succedent), what facilitates 
the algebraic study thereof. 
On what concerns the underlying decision procedures, we emphasize
that analytic \SetSet{} calculi
admit straightforward (in general exponential-time) proof-search procedures.
\end{remark}

\subsection{Implication connectives
and criteria for implicative expansions}
\label{ss:impcon}

In the present study, we will follow some
very specific research
paths with the goal of adding
an implication to
given logics; we will refer to the 
resulting logics as
\emph{implicative expansions}.
First of all, it is important to define precisely what
we mean by an implication connective on \SetSet{}
and \SetFmla{}
logics.
Throughout this subsection, let $\Sigma$
be an arbitrary signature
with a binary connective $\lor$
and denote by $\Sigma_{\GenImp}$ 
the signature resulting from adding to
$\Sigma$ the binary connective $\GenImp$.
In what follows, let
$\bigvee \{ \FmB_1,\ldots,\FmB_m \} \SymbDef \FmB_1 \ArbDisj (\FmB_2 \ArbDisj \ldots (\ldots \ArbDisj \FmB_n)\ldots)$.

%\begin{definition}
%\label{def:implication}
%The connective $\GenImp$ is
%an \emph{implication in a \SetSet{} logic $\sequent{}$ over
%$\Sigma_\GenImp$
%} 
%%over a signature with $\lor$ 
%provided, for all
%$\FmSetA, \FmSetB, \{ \FmA \} \subseteq \LangSet{\Sigma}$ with $\FmSetB \neq \varnothing$,
%%\[
%%\FmSetA, \FmA \sequent 
%%\{\FmB\}
%%\text{ if, and only if, }
%%\FmSetA \sequent 
%%\{\FmA \to \FmB\}.
%%\]
%\[
%\FmSetA, \FmA \sequent 
%\FmSetB
%\text{ if, and only if, }
%\FmSetA \sequent 
%\{ \FmA \GenImp \FmB \mid \FmB \in \FmSetB \}.
%\]
%%\[
%%\FmSetA, \FmA \sequent 
%%\FmSetB
%%\text{ if, and only if,}
%%\FmSetA \sequent 
%%\{\FmA \to (\bigvee \FmSetB)\}.
%%\]
%\end{definition}

\begin{definition}
\label{def:implication}
The connective $\GenImp$ is
an \emph{implication in a \SetSet{} logic $\sequent{}$ over
$\Sigma_\GenImp$
} 
provided, 
%for all
%$\FmSetA, \FmSetB, \{ \FmA,\FmB \} \subseteq \LangSet{\Sigma}$, with $\FmSetB$ finite,
for all
$\FmSetA, \{ \FmB_1,\ldots,\FmB_m \}, \{ \FmA,\FmB \} \subseteq \LangSet{\Sigma}$,
\[
\FmSetA, \FmA \sequent 
\FmB, \FmB_1,\ldots,\FmB_m
\text{ if, and only if, }
%\FmSetA \sequent \FmA \GenImp \left(\FmB \lor \bigvee \FmSetB \right).
\FmSetA \sequent \FmA \GenImp \left(\bigvee \{\FmB, \FmB_1,\ldots,\FmB_m \}\right).
\]
\end{definition}

\noindent 
Note that the above definition reduces in \SetFmla{} to
the \emph{deduction-detachment theorem} ({DDT}) that holds,
for example, in intuitionistic and classical logics,
and this is indeed what we will take to be an implication
in \SetFmla{}, in this paper. 
{That is, $\vdash$
has the 
DDT with respect to $\Imp$
in case
$\FmSetA, \FmA \vdash \FmB$
if, and only if,
$\FmSetA \vdash \FmA \GenImp \FmB$
for all $\FmSetA, \{ \FmA,\FmB \} \subseteq \LangSet{\Sigma}$.}
What we did above was a 
convenient 
generalization of the DDT 
for \SetSet{} logics, using it to
abstractly characterize what we expect of an implication
connective (namely, an internalization of the
consequence relation 
{that allows formulas to be discharged from the antecedent}%
). 

Based on the behaviour of
implication in classical logic,
we also consider the following
stronger notion of implication:

\begin{definition}
\label{def:classicalimplication}
The connective $\GenImp$ is
a \emph{classic-like implication in a \SetSet{} logic $\sequent{}$ over
$\Sigma_\GenImp$} 
%over a signature with $\lor$ 
provided, for all
$\FmSetA, \FmSetB, \{ \FmA,\FmB \} \subseteq \LangSet{\Sigma}$,
%\[
%\FmSetA, \FmA \sequent 
%\{\FmB\}
%\text{ if, and only if, }
%\FmSetA \sequent 
%\{\FmA \to \FmB\}.
%\]
\[
\FmSetA, \FmA \sequent \FmB,
\FmSetB
\text{ if, and only if, }
\FmSetA \sequent 
 \FmA \GenImp \FmB, \FmSetB.
\]
%\[
%\FmSetA, \FmA \sequent 
%\FmSetB
%\text{ if, and only if,}
%\FmSetA \sequent 
%\{\FmA \to (\bigvee \FmSetB)\}.
%\]
\end{definition}
\noindent 
{
It should be clear enough that, for logics $\sequent$ with $\lor$
satisfying $p \sequent p \lor q$; $q \sequent p \lor q$;
and $p \lor q \sequent p,q$, classic-like implications are implications in the sense of Definition~\ref{def:implication}.
The converse, however, might not
hold (see Remark~\ref{rem:just-imp-def}).
}
%It is not hard to see that, in a logic where $\lor$ is a disjunction, a classic-like implication is an implication in the sense of Definition~\ref{def:implication}.
%%but that {\color{red}the converse may fail}.

In this paper, we shall require two minimal criteria 
on implicative expansions:
% to legitimate an extension of a logic with an implication connective:

\begin{description}%\itemsep0pt
    \item[$\CondIOne$] The connective being added
must qualify as an implication (Definition~\ref{def:implication}); 
    \item[$\CondITwo$] The expansion must be conservative.
\end{description}

We will prove below some facts regarding the
connections among the
notions of implication,
residuation and characterizability via
a single PNmatrix.

\begin{proposition}
\label{fact:residuated-is-imp}
    Let $\GenImp^{\mathbf{A}}$ be 
    the residuum of $\land^\mathbf{A}$
    in each $\mathbf{A} \in \mathsf{K}$,
    where $\mathsf{K}$ is a class of $\Sigma_\GenImp$-algebras, with $\Sigma_\GenImp \supseteq \LSig$,
whose $\{ \land, \top \}$-reducts are
$\land$-semilattices with a greatest
element assigned to $\top$.
    Then $\GenImp$ is an
    implication in
    %$\OrderPresMC{\mathsf{K}}$
    $\IneqPresMC{\mathsf{K}}$
    (recall Definition~\ref{def:order-pres-set-set-logic}).
    %and
    %$\vdash_\mathsf{K}^{\leq}$.
\end{proposition}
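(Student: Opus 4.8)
The plan is to prove both directions of the biconditional in Definition~\ref{def:implication} directly from the residuation property, using the characterization of $\IneqPresMC{\mathsf{K}}$ given in Definition~\ref{def:order-pres-set-set-logic} (i.e. via validity of inequalities $\bigwedge \FmSetA' \leq \bigvee \FmSetB'$ in $\mathsf{K}$). Since $\IneqPresMC{\mathsf{K}}$ is finitary by construction, it suffices throughout to reason with finite subsets of the antecedent, so fix finite $\FmSetA \subseteq \LangSet{\Sigma}$ and formulas $\FmA, \FmB, \FmB_1,\dots,\FmB_m$, and write $\FmC \SymbDef \bigvee\{\FmB,\FmB_1,\dots,\FmB_m\}$.

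First I would observe the algebraic reformulation: $\FmSetA, \FmA \sequent \FmB,\FmB_1,\dots,\FmB_m$ holds (in $\IneqPresMC{\mathsf{K}}$) iff for every $\AlgA \in \mathsf{K}$ and every $v \in \Hom(\LangAlg{\Sigma_\GenImp},\AlgA)$ we have $v(\bigwedge\FmSetA) \wedge^{\AlgA} v(\FmA) \leq v(\FmC)$, since $\bigvee\{\FmB,\FmB_1,\dots,\FmB_m\}$ denotes $\FmC$ and conjunction/disjunction in the lattice realize $\bigwedge$ and $\bigvee$ (here one must be a little careful that $\sequent$ compares the meet of the whole antecedent with the join of the whole succedent, which is exactly what the inequality $\bigwedge(\FmSetA\cup\{\FmA\}) \leq \bigvee\{\FmB,\FmB_1,\dots,\FmB_m\}$ expresses; the empty-antecedent convention $\bigwedge\varnothing = \top$ handles the degenerate case). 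Likewise, $\FmSetA \sequent \FmA \GenImp \FmC$ holds iff for every such $\AlgA$ and $v$ we have $v(\bigwedge\FmSetA) \leq v(\FmA) \GenImp^{\AlgA} v(\FmC)$.

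Then the proof is just an application of the residuation law pointwise: for fixed $\AlgA \in \mathsf{K}$ and $v$, setting $a = v(\bigwedge\FmSetA)$, $b = v(\FmA)$, $c = v(\FmC)$, the hypothesis that $\GenImp^{\AlgA}$ is the residuum of $\wedge^{\AlgA}$ gives $a \wedge^{\AlgA} b \leq c \iff a \leq b \GenImp^{\AlgA} c$. Quantifying this equivalence over all $\AlgA \in \mathsf{K}$ and all valuations $v$ yields exactly the equivalence between $\FmSetA,\FmA \sequent \FmB,\FmB_1,\dots,\FmB_m$ and $\FmSetA \sequent \FmA\GenImp\FmC$, which is the defining condition for $\GenImp$ to be an implication in $\IneqPresMC{\mathsf{K}}$. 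One should also note that $\mathsf{K}$ indeed satisfies the hypotheses of Definition~\ref{def:order-pres-set-set-logic} (bounded distributive lattice reducts with $\top^{\AlgA}$, $\bot^{\AlgA}$) so that $\IneqPresMC{\mathsf{K}}$ is well-defined in the first place — this is part of the standing assumptions in that definition and can simply be invoked.

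The argument is essentially routine; the only genuine care needed — and the one place a sloppy write-up could go wrong — is matching the \emph{set-set} consequence statement to a single inequality. In one direction (validity of the inequality implies the consequence) this is immediate from the definition; in the other direction (consequence implies validity of that particular inequality) one uses that $\bigwedge\FmSetA' \leq \bigvee\FmSetB'$ must hold for \emph{some} finite $\FmSetA' \subseteq \FmSetA\cup\{\FmA\}$ and $\FmSetB' \subseteq \{\FmB,\FmB_1,\dots,\FmB_m\}$, and then weakening (monotonicity of $\wedge$ and $\vee$) pushes this up to the full sets, so that the specific inequality $\bigwedge(\FmSetA\cup\{\FmA\}) \leq \FmC$ is valid. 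With that bookkeeping in place, the residuation step closes the proof with no further obstacle.
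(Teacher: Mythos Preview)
Your proof is correct and follows essentially the same approach as the paper: reduce to validity of the inequality $\bigwedge(\FmSetA\cup\{\FmA\}) \leq \bigvee\FmSetB$ in $\mathsf{K}$ and apply residuation pointwise. The only cosmetic difference is that you reduce to finite $\FmSetA$ up front via finitarity, whereas the paper keeps $\FmSetA$ arbitrary and carries the bookkeeping of extracting finite $\FmSetA'' \subseteq \FmSetA$ through the argument; both are routine and equivalent.
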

\begin{proof}
    Let
$\FmSetA, \FmSetB, \{ \FmA \} \subseteq \LangSet{\Sigma}$ with $\FmSetB \neq \varnothing$
finite.
From the left to the right,
    suppose that
    $\FmSetA, \FmA \IneqPresMC{\mathsf{K}}
\FmSetB$.
Then $\FmSetA' \IneqPresMC{\mathsf{K}}
\FmSetB'$, for finite 
$\FmSetA' \subseteq \FmSetA \cup \{ \FmA \}$ and
$\FmSetB' \subseteq \FmSetB$,
and thus ($\star$)
$\FmSetA'',\FmA \IneqPresMC{\mathsf{K}}
\FmSetB'$, for $\FmSetA' \cup \{ \FmA \} = \FmSetA'' \cup \{ \FmA \}$ and $\FmA \not\in \FmSetA''$.
Let $\AlgA \in \mathsf{K}$.
By ($\star$), we have that $\bigwedge \FmSetA'' \land \FmA \leq \bigvee \FmSetB'$ is valid in
$\AlgA$, thus
$\bigwedge \FmSetA'' \land \FmA \leq \bigvee \FmSetB$ is valid in $\AlgA$; hence, by residuation,
$\bigwedge \FmSetA'' \leq \FmA \GenImp \bigvee \FmSetB$ is valid in $\AlgA$,
thus $\FmSetA \IneqPresMC{\mathsf{K}} \FmA \GenImp \bigvee \FmSetB$.
From the right to the left,
suppose that
$\FmSetA \IneqPresMC{\mathsf{K}} \FmA \GenImp \bigvee \FmSetB$.
Then, by definition, we have that
$\FmSetA' \IneqPresMC{\mathsf{K}} \FmA \GenImp \bigvee \FmSetB$,
for some finite $\FmSetA' \subseteq \FmSetA$.
Let $\AlgA \in \mathsf{K}$.
We have that
$\bigwedge \FmSetA' \leq \FmA \GenImp \bigvee \FmSetB$
is valid in $\AlgA$. By residuation,
we have $\bigwedge \FmSetA' \land \FmA \leq \bigvee \FmSetB$ also valid in $\AlgA$,
from which we easily obtain that
$\FmSetA, \FmA \IneqPresMC{\mathsf{K}} \FmSetB$.
\end{proof}

\noindent 
Observe that the above result extends to $\vdash_\mathsf{K}^{\leq}$ (recall Definition~\ref{def:order-pres-set-fmla-logic}) since it is the
\SetFmla{} companion of
    $\IneqPresMC{\mathsf{K}}$.
This also looks like the right moment to introduce intuitionistic-like implications:

{
\begin{definition}
    A \emph{Heyting implication}
    in an algebra $\AlgA \in \mathsf{K}$,
    with $\mathsf{K}$ as described
    in Proposition~{\ref{fact:residuated-is-imp}},
    is an implication that corresponds to the
    residuum of $\land^\AlgA$.
\end{definition}
}

Inspired by~\cite{avron2020},
we also consider the following additional criteria to guide our investigations: 

\begin{description}%\itemsep0pt
    \item[$\CondAOne$] 
    \label{condA1}
    The expanded logic
    is determined by
    the single PNmatrix
    $\langle \mathbf A, D \rangle$,
    and,
    for all $a,b \in A$,
    we have
    $a \GenImp^{\mathbf A} b \subseteq D$ if, and only if,
    either
    $a \not\in D$ or
    $b \in D$.
    \item[$\CondATwo$] 
    \label{condA2}
    The
    expanded logic
    is self-extensional (Definition~\ref{def:selfextensional}).
\end{description}

Note that, for \SetSet{} logics, the criterion $\CondAOne$ is strong
to the point of forcing the referred implication
to be classic-like:

\begin{proposition}
\label{A1-is-classiclike}
    Let $\DefMat \SymbDef \langle \mathbf A, D \rangle$ be a $\Sigma_{\GenImp}$-PNmatrix.
    Then
    $\GenImp$ 
    is a classic-like implication in $\sequent_{\DefMat}$
    if, and only if,
    $\DefMat$ satisfies $\CondAOne$.
\end{proposition}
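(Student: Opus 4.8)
The plan is to prove both implications directly from the semantic definitions, unwinding $\sequent_{\DefMat}$ in terms of $\DefMat$-valuations. Recall that $\FmSetA \sequent_{\DefMat} \FmSetB$ holds iff every $\DefMat$-valuation $h$ with $h(\FmSetA) \subseteq D$ satisfies $h(\FmSetB) \cap D \neq \varnothing$. For the direction from $\CondAOne$ to classic-likeness, assume $\DefMat$ satisfies $\CondAOne$ and fix $\FmSetA, \FmSetB, \{\FmA, \FmB\} \subseteq \LangSet{\Sigma}$; I would show $\FmSetA, \FmA \sequent_{\DefMat} \FmB, \FmSetB$ iff $\FmSetA \sequent_{\DefMat} \FmA \GenImp \FmB, \FmSetB$ by contraposing both sides and exhibiting, from a valuation refuting one, a valuation refuting the other. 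A valuation $h$ refutes $\FmSetA, \FmA \sequent_{\DefMat} \FmB, \FmSetB$ precisely when $h(\FmSetA) \subseteq D$, $h(\FmA) \in D$, $h(\FmB) \notin D$, and $h(\FmSetB) \cap D = \varnothing$; a valuation $h'$ refutes $\FmSetA \sequent_{\DefMat} \FmA \GenImp \FmB, \FmSetB$ when $h'(\FmSetA) \subseteq D$, $h'(\FmA \GenImp \FmB) \notin D$, and $h'(\FmSetB) \cap D = \varnothing$. By $\CondAOne$, $h(\FmA) \GenImp^{\mathbf A} h(\FmB) \subseteq D$ iff $h(\FmA) \notin D$ or $h(\FmB) \in D$; equivalently, there exists a value in $h(\FmA) \GenImp^{\mathbf A} h(\FmB)$ outside $D$ iff $h(\FmA) \in D$ and $h(\FmB) \notin D$ (here I use that $\CondAOne$ as stated makes $a \GenImp^{\mathbf A} b$ entirely inside $D$ or entirely outside, so ``$\subseteq D$'' fails exactly when the multioperation is disjoint from $D$). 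This is what lets me convert between the two kinds of refuting valuation: given $h$ refuting the left, since $h(\FmA)\in D$ and $h(\FmB)\notin D$ we get $h(\FmA)\GenImp^{\mathbf A} h(\FmB)\not\subseteq D$, and I extend $h$ along the parse of $\FmA\GenImp\FmB$ to pick a value outside $D$, obtaining $h'$; conversely, given $h'$ refuting the right, $h'(\FmA\GenImp\FmB)\notin D$ forces $h'(\FmA)\GenImp^{\mathbf A} h'(\FmB)\not\subseteq D$, hence $h'(\FmA)\in D$ and $h'(\FmB)\notin D$, so $h'$ itself refutes the left.

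For the converse direction, assume $\GenImp$ is classic-like in $\sequent_{\DefMat}$ and fix arbitrary $a, b \in A$. The goal is to show $a \GenImp^{\mathbf A} b \subseteq D$ iff ($a \notin D$ or $b \in D$). The standard trick is to instantiate the classic-likeness equivalence at single propositional variables $p, q$ with $\FmSetA = \FmSetB = \varnothing$, so that $p \sequent_{\DefMat} q$ iff $\sequent_{\DefMat} p \GenImp q$, and then to feed in valuations sending $p \mapsto a$ and $q \mapsto b$. More carefully: if $a \in D$ and $b \notin D$, then the valuation $h$ with $h(p) = a$, $h(q) = b$ witnesses $p \not\sequent_{\DefMat} q$, hence $\not\sequent_{\DefMat} p \GenImp q$, so there is a valuation $h'$ with $h'(p \GenImp q) \notin D$; since $h'$ can be chosen to agree with $h$ on $p, q$ but we need control over the values, I instead argue that \emph{for this particular} $h$, since classic-likeness fails to give $\sequent_{\DefMat} p \GenImp q$, the extension of $h$ to $p \GenImp q$ can land outside $D$, i.e.\ $a \GenImp^{\mathbf A} b \not\subseteq D$. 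Conversely, if $a \notin D$ or $b \in D$, then $p \sequent_{\DefMat} q$ holds at least on valuations sending $p\mapsto a$, $q\mapsto b$ — more precisely, any valuation $h$ with $h(p)\notin D$ or $h(q)\in D$ trivially satisfies the $p \sequent q$ condition for that single valuation — and I would use classic-likeness together with the fact that $p\GenImp q$ must then be designated under $h$, for every choice extending $h(p)=a, h(q)=b$, yielding $a \GenImp^{\mathbf A} b \subseteq D$.

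The main obstacle, and the point needing the most care, is that $\DefMat$ is a \emph{partial non-deterministic} matrix: $a \GenImp^{\mathbf A} b$ is a subset of $A$, possibly empty or with several elements, and a valuation only picks \emph{one} member of it at each node. So the bridge between ``$a \GenImp^{\mathbf A} b \subseteq D$'' and statements about valuations requires quantifying correctly over which value is selected — ``$\subseteq D$'' corresponds to ``every extension lands in $D$'', while ``$\not\subseteq D$'' corresponds to ``some extension lands outside $D$''. One must also handle the degenerate possibility $a \GenImp^{\mathbf A} b = \varnothing$: vacuously $\varnothing \subseteq D$, and correspondingly no valuation can take $p \mapsto a$, $q \mapsto b$ and assign any value to $p \GenImp q$, so such a pair $(a,b)$ simply never arises as $(h(p), h(q))$ for a total valuation, which is consistent with the biconditional. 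I would state a small lemma at the outset — for any valuation $h$ on $\mathbf A$ and any formulas $\FmC, \FmD$, the possible values of $h$ at $\FmC \GenImp \FmD$ range exactly over $h(\FmC) \GenImp^{\mathbf A} h(\FmD)$ — and then the two directions become routine case analyses on membership in $D$. I expect the write-up to be short once that lemma and the ``$\subseteq D$ vs.\ disjoint-from-$D$'' dichotomy (which $\CondAOne$ builds in) are made explicit.
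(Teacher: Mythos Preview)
Your converse direction (classic-like $\Rightarrow \CondAOne$) has a real gap. Instantiating at $\FmSetA = \FmSetB = \varnothing$ only yields the global biconditional ``$p \sequent_{\DefMat} q$ iff $\sequent_{\DefMat} p \GenImp q$'', quantified over \emph{all} valuations; it says nothing about the particular pair $(a,b)$ you fixed. When you write ``for this particular $h$ \ldots\ the extension of $h$ to $p \GenImp q$ can land outside $D$'', that does not follow: $\not\sequent_{\DefMat} p \GenImp q$ only guarantees \emph{some} refuting valuation $h'$, and $h'(p), h'(q)$ need not be $a,b$. The same defect affects the other half (``$p \GenImp q$ must then be designated under $h$''): classic-likeness gives no per-valuation information from that instantiation. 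The fix is to choose instantiations that force the issue pointwise. Taking $\FmSetA = \{p \GenImp q\}$, $\FmSetB = \varnothing$ gives (by reflexivity on the right) $p, p \GenImp q \sequent_{\DefMat} q$, i.e.\ every $v$ with $v(p)\in D$ and $v(q)\notin D$ has $v(p \GenImp q)\notin D$; taking $\FmSetB = \{p\}$, respectively $\FmSetA = \{q\}$, yields $\sequent_{\DefMat} p \GenImp q, p$ and $q \sequent_{\DefMat} p \GenImp q$, i.e.\ $v(p)\notin D$ or $v(q)\in D$ forces $v(p \GenImp q)\in D$. Together these recover $\CondAOne$ (granting, as you note, that every triple $(a,b,c)$ with $c\in a \GenImp^{\mathbf A} b$ is realized by some valuation).

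A secondary point about the forward direction: your parenthetical that ``$\CondAOne$ as stated makes $a \GenImp^{\mathbf A} b$ entirely inside $D$ or entirely outside'' is not what $\CondAOne$ literally says --- it only asserts $a \GenImp^{\mathbf A} b \subseteq D$ in one case and $\not\subseteq D$ (merely \emph{some} element outside $D$) in the other. With $D=\{1\}$ and $1 \GenImp^{\mathbf A} 0 = \{0,1\}$ the literal $\CondAOne$ holds, yet classic-likeness fails at $\{p\GenImp q\}, p \sequent q$ versus $\{p\GenImp q\}\sequent p\GenImp q$. The paper's own one-line proof keeps the \emph{same} valuation $v$ throughout its chain of biconditionals, which tacitly uses the stronger dichotomy (consistent with the paper's later claim that the multioperations satisfying $\CondAOne$ are exactly the total refinements of $\FullImp$). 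If you adopt that reading, your ``extend $h$ to $h'$'' manoeuvre is unnecessary: $h(\FmA \GenImp \FmB)$ already lies in $\overline D$, so $h$ itself refutes the right-hand side, exactly as in the paper.
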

\begin{proof}
We have that
    $\FmSetA,\FmA \, \notsequent_{\DefMat} \, \FmB, \FmSetB$
    if, and only if,
    $v[\FmSetA\cup\{\FmA\}\}] \subseteq D$
    and
    $v[\{\FmB\}\cup\FmSetB\}] \subseteq \overline{D}$
    for some valuation $v$
    if, and only if,
    $v[\FmSetA] \subseteq D$
    and
    $v(\FmB) \in \overline{D}$
    and
    $v(\FmA \GenImp \FmB) \in \overline{D}$
    for some valuation $v$
    if, and only if,
    $\FmSetA \,\notsequent_{\DefMat}\, \FmA \GenImp \FmB, \FmSetB$.
\end{proof}

\section{Perfect Paradefinite Algebras and their logics}
In this section, we present the main definitions
concerning perfect paradefinite algebras
and the logics associated to them.
Most of the material come from~\cite{Gomes2022},
where these objects were first introduced and investigated.

\begin{definition}\label{def:ppalgebra}
Given a $\DMoSig$-algebra whose
$\DMSig$-reduct is a De Morgan
algebra, we say that it
constitutes a \emph{perfect paradefinite algebra} (\emph{PP-algebra}) if it
satisfies the equations:
\begin{table}[H]
\begin{tabular}{@{}llll@{}}
    \textbf{(PP1)} $\cons\cons x \approx \top$
    \hspace{5mm}
    &
    \textbf{(PP2)} $\circ x \approx \circ\DMNeg x$
    \hspace{5mm}
    & 
    \textbf{(PP3)} $\cons \top \approx \top$
    \hspace{5mm}
    &
    \textbf{(PP4)} $ x \land \DMNeg x \land \cons x \approx \bot$ 
\end{tabular}
\begin{tabular}{@{}llll@{}}
    \textbf{(PP5)} $\circ(x \land y) \approx (\circ x \lor \circ y) \land (\circ x \lor \DMNeg y) \land (\circ y \lor \DMNeg x)$
\end{tabular}
\end{table}
\end{definition}

\begin{example}
\label{ex:ppidefs}
An example of PP-algebra is $\PPAlg_6 \SymbDef \langle \SixSet, \cdot^{\PPAlg_6} \rangle$,
the $\DMoSig$-algebra
defined as $\ISAlg_6$
in Example {\rm\ref{ex:issix}},
differing only in that,
instead of containing an interpretation for
$\nabla$, it contains the following interpretation for~$\cons$:
\[
\cons^{\PPAlg_6} a \SymbDef
\begin{cases}
    \efvalue & a \in \SixSet{\setminus}\{\efvalue,\etvalue\}\\
    \etvalue & a \in \{\efvalue,\etvalue\}\\
\end{cases}
\]
\noindent 
Other examples are the algebras $\PPAlg_i$, the 
subalgebras
of $\PPAlg_6$
having, respectively, the same lattice structures of the algebras $\ISAlg_i$, for $2 \leq i \leq 5$, exhibited in Figure~{\rm\ref{fig:pp_algebra}}.
%The chains $\PPAlg_i$, for $2 \leq i \leq 5$, whose lattice structures are respectively the same
%of the algebras $\ISAlg_i$ exhibited in Figure~{\rm\ref{fig:pp_algebra}},
%may all be seen as subalgebras
%of $\PPAlg_6$.
\end{example}

We denote by $\PPVar$
the variety of
PP-algebras.
This variety is term-equivalent
to the variety of involutive Stone algebras
\cite[Thm. 3.6]{Gomes2022} --- in particular, $\nabla x \SymbDef x \lor \DMNeg\cons x $.
Also, it holds that
$\PPVar = \Variety(\{\PPSix\})$
\cite[Prop. 3.8]{Gomes2022}.
As it occurs with IS-algebras, 
we may define, 
in the language of PP-algebras, 
a pseudo-complement satisfying the Stone equation; to that effect, it suffices to set $\PPComp x \;\SymbDef\; \DMNeg x \land \cons x$ (alternatively, one might simply set $\PPComp x \;\SymbDef\; \DMNeg\nabla x$, as usual).

We shall denote by 
%$\PPlogicMC{}$,
$\PPlogicPrimeMC$ 
and $\PPlogic$, respectively, the \SetSet{} and \SetFmla{} order-preserving
logics induced by $\PPVar$
(cf. Subsection~\ref{sec:ordered-algebras-logics}).
In addition, we shall denote by $\PPOnelogicMC{}$ and $\PPOnelogic{}$, respectively,
the \SetSet{} and \SetFmla{} $\top$-assertional logics induced by $\PPVar$.
We know that
$\PPlogicPrimeMC{} \;=\;\sequent_{\langle \PPSix, \Upset{\both} \rangle}$
%\footnote{\cite{Gomes2022} employed the notation $\PPlogicMC$ for $\PPlogicPrimeMC$, as the distinction between these
%two logics had not been introduced at that time.
%Interestingly, if it was known that
%the prime filters determine
%a \SetSet{} logic whose \SetFmla{} companion
%is the same as the one of the \SetSet{}
%logic determined by principal filters,
%the proof of 
%$\PPlogic{} \;=\;\vdash_{\langle \PPSix, \Upset{\both} \rangle}$
%would have been simpler, as 
%the filter $\Upset\tvalue$ would not have 
%to be considered.
%},
and thus $\PPlogic{} \;=\;\vdash_{\langle \PPSix, \Upset{\both} \rangle}$~\cite[Theorem 3.11]{Gomes2022}.

Taking a proof-theoretical perspective,
from~\cite[Cor. 4.3]{Gomes2022}
we know that
$\PPlogicPrimeMC{}$ is axiomatized by
a 
$\{p, \DMNeg p, \cons p\}$-analytic
\SetSet{} calculus, which we now recall:

\begin{definition}
\label{def:ppsixmccalc}
Let $\CalcVar_{\PPlogicPrimeMC}$
be the \SetSet{} calculus
containing the rules
in Example~\ref{ex:fdeax}
together with the following rules:
%\begin{align*}
%\inferx[\RuleA_1]
%{}
%{\top}
%&&
%\inferx[\RuleA_2]
%{\dmneg \top}
%{}
%&&
%\inferx[\RuleA_3]
%{}
%{\dmneg \bot}
%&&
%\inferx[\RuleA_4]
%{\bot}
%{}
%&&
%\inferx[\RuleA_5]
%{p}
%{\dmneg \dmneg p}
%&&
%\inferx[\RuleA_6]
%{\dmneg \dmneg p}
%{p}
%\end{align*}
%\vspace{-.5em}
%\begin{align*}
%\inferx[\RuleA_7]
%{p \land q}
%{p}
%&&
%\inferx[\RuleA_8]
%{p \land q}
%{q}
%&&
%\inferx[\RuleA_9]
%{p, q}
%{p \land q}
%&&
%\inferx[\RuleA_{10}]
%{\dmneg p}
%{\dmneg(p \land q)}
%&&
%\inferx[\RuleA_{11}]
%{\dmneg q}
%{\dmneg(p \land q)}
%&&
%\inferx[\RuleA_{12}]
%{\dmneg (p \land q)}
%{\dmneg p, \dmneg q}
%\end{align*}
%\vspace{-.5em}
%\begin{align*}
%\inferx[\RuleA_{13}]
%{p}
%{p \lor q}
%&&
%\inferx[\RuleA_{14}]
%{q}
%{p \lor q}
%&&
%\inferx[\RuleA_{15}]
%{p \lor q}
%{p, q}
%&&
%\inferx[\RuleA_{16}]
%{\dmneg p, \dmneg q}
%{\dmneg(p \lor q)}
%&&
%\inferx[\RuleA_{17}]
%{\dmneg(p \lor q)}
%{\dmneg p}
%&&
%\inferx[\RuleA_{18}]
%{\dmneg (p \lor q)}
%{\dmneg q}
%\end{align*}
\begin{align*}
\inferx[\RuleA_{19}]
{}
{\cons \bot}
&&
\inferx[\RuleA_{20}]
{}
{\cons \top}
&&
\inferx[\RuleA_{21}]
{}
{\cons \cons p}
&&
\inferx[\RuleA_{22}]
{\cons p }
{\cons \DMNeg p}
&&
\inferx[\RuleA_{23}]
{\cons \DMNeg p }
{\cons p}
&&
\inferx[\RuleA_{24}]
{\cons p}
{p, \DMNeg p}
&&
\inferx[\RuleA_{25}]
{\cons p, p, \DMNeg p}
{}
\end{align*}
\begin{align*}
\inferx[\RuleA_{26}]
{\cons p}
{\cons(p \land q), p}
&&
\inferx[\RuleA_{27}]
{\cons q}
{\cons(p \land q), q}
&&
\inferx[\RuleA_{28}]
{\cons(p \land q), q}
{\cons p}
&&
\inferx[\RuleA_{29}]
{\cons(p \land q), p}
{\cons q}
&&
\inferx[\RuleA_{30}]
{\cons p, \cons q}
{\cons(p \land q)}
&&
\inferx[\RuleA_{31}]
{\cons(p \land q)}
{\cons p, \cons q}
\end{align*}
\begin{align*}
\inferx[\RuleA_{32}]
{\cons p, \cons q}
{\cons(p \lor q)}
&&
\inferx[\RuleA_{33}]
{\cons(p \lor q)}
{\cons p, \cons q}
&&
\inferx[\RuleA_{34}]
{\cons p, p}
{\cons(p \lor q)}
&&
\inferx[\RuleA_{35}]
{\cons q, q}
{\cons(p \lor q)}
&&
\inferx[\RuleA_{36}]
{\cons(p \lor q)}
{\cons p, q}
&&
\inferx[\RuleA_{37}]
{\cons(p \lor q)}
{\cons q, p}
\end{align*}
\end{definition}

\noindent 
In the mentioned paper,
the above calculus was  transformed
into a $\SetFmla{}$ axiomatization for
$\PPlogic$,
using a technique that we shall detail and
employ in Section~\ref{sec:int-hilbert-axiomat}.

It is perhaps worth  calling attention to the contribution played by rules $\RuleA_{24}$ and $\RuleA_{25}$ in making the perfection operator, $\cons$, restore `classicality', as described in the so-called Derivability Adjustment Theorems (for a semantical perspective, see \citet[Sec. 2]{jmarcos2005}, and more specifically \citet[Thm.~3.29]{Gomes2022}).

Finally, we have that the $\top$-assertional logics $\PPOnelogicMC{}$ and $\PPOnelogic{}$
are determined by a single three-valued matrix.
In fact, it can be shown that such logics
are term-equivalent to the three-valued \L ukasiewicz logic (in \SetSet{} and \SetFmla{}, respectively) ---
{
in a language containing a `possibility' operator definable by setting $\nabla x \SymbDef x \lor \DMNeg\cons x $, as above.
}

\begin{proposition}[{\citet*[Prop. 3.12]{Gomes2022}}]
$\PPOnelogicMC{} \;=\; \sequent^{\top}_{\Variety(\PPAlg_3)} \;=\; \sequent_{\langle \PPAlg_3, \{\etvalue\} \rangle}$,
and thus $\PPOnelogic{} \;=\; \OneAssert_{\Variety(\PPAlg_3)} \;=\; \vdash_{\langle \PPAlg_3, \{\etvalue\} \rangle}$
(recall the definition of $\PPAlg_3$ in Example~\ref{ex:ppidefs}).
\end{proposition}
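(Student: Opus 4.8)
The plan is to identify the reduced matrix models of all three logics and to read the claimed equalities off from them, reducing everything to the (finitely many) subdirectly irreducible algebras involved. First I would fix the algebras in play. By the term-equivalence of $\PPVar=\Variety(\{\PPSix\})$ with the variety of IS-algebras and the classification of subdirectly irreducible IS-algebras recalled in Example~\ref{ex:issix}, the subdirectly irreducible PP-algebras are, up to isomorphism, exactly $\PPAlg_2,\dots,\PPAlg_6$; in each of them the lattice top is $\etvalue$, the algebra $\PPAlg_2$ is (isomorphic to) a subalgebra of $\PPAlg_3$, and a quick inspection of their congruence lattices shows that $\PPAlg_2$ and $\PPAlg_3$ are simple. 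Since $\PPVar$ is congruence-distributive (it has a lattice reduct), Jónsson's lemma then yields that the subdirectly irreducible members of $\Variety(\PPAlg_3)$ are just $\PPAlg_2$ and $\PPAlg_3$.

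Next I would compute the reduced assertional matrices over these algebras. For each $3\le i\le 6$, the Leibniz congruence of $\langle\PPAlg_i,\{\etvalue\}\rangle$ is the congruence that collapses the whole open interval $(\efvalue,\etvalue)$ into a single class while keeping $\efvalue$ and $\etvalue$ as singletons --- that this partition is a congruence uses only that $\DMNeg$ maps that interval onto itself, that $\cons$ maps it constantly onto $\efvalue$, and the behaviour of $\land,\lor$ --- and the corresponding quotient is isomorphic to $\PPAlg_3$; similarly $\Reduce{\langle\PPSix,\{\etvalue\}\rangle}\cong\langle\PPAlg_3,\{\etvalue\}\rangle$, whereas $\langle\PPAlg_3,\{\etvalue\}\rangle$ and $\langle\PPAlg_2,\{\etvalue\}\rangle$ are themselves already reduced. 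Consequently $\sequent_{\langle\PPAlg_i,\{\etvalue\}\rangle}=\sequent_{\langle\PPAlg_3,\{\etvalue\}\rangle}$ for $3\le i\le 6$ (and likewise for $\vdash$), while $\langle\PPAlg_2,\{\etvalue\}\rangle$, being a submatrix of $\langle\PPAlg_3,\{\etvalue\}\rangle$, determines an extension of $\sequent_{\langle\PPAlg_3,\{\etvalue\}\rangle}$ (namely, classical logic).

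For the $\SetFmla$ statement I would then assemble as follows. Every member $\A$ of a variety $\mathsf K$ is a subdirect product of its subdirectly irreducible quotients, so --- since $\top$ is a constant of the signature --- $\langle\A,\{\top^\A\}\rangle$ is a submatrix of the (possibly infinite) product of the assertional matrices over those quotients; because the $\SetFmla$ logic of such a product is exactly the intersection of the $\SetFmla$ logics of its factors, and passing to a submatrix only enlarges the logic, we get $\OneAssert_{\mathsf K}=\bigcap\{\vdash_{\langle\B,\{\top^\B\}\rangle}:\B\text{ subdirectly irreducible in }\mathsf K\}$. Applying this to $\mathsf K=\PPVar$ and using the two previous steps collapses the right-hand side to $\vdash_{\langle\PPAlg_3,\{\etvalue\}\rangle}$, i.e.\ $\PPOnelogic=\vdash_{\langle\PPAlg_3,\{\etvalue\}\rangle}$; applying it to $\mathsf K=\Variety(\PPAlg_3)$ gives $\OneAssert_{\Variety(\PPAlg_3)}=\vdash_{\langle\PPAlg_2,\{\etvalue\}\rangle}\cap\vdash_{\langle\PPAlg_3,\{\etvalue\}\rangle}=\vdash_{\langle\PPAlg_3,\{\etvalue\}\rangle}$.

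The $\SetSet$ identities $\PPOnelogicMC=\sequent^\top_{\Variety(\PPAlg_3)}=\sequent_{\langle\PPAlg_3,\{\etvalue\}\rangle}$ are where the real work lies, and I expect this to be the main obstacle. The inclusions $\sequent^\top_{\PPVar}\subseteq\sequent^\top_{\Variety(\PPAlg_3)}\subseteq\sequent_{\langle\PPAlg_3,\{\etvalue\}\rangle}$ are immediate (enlarging the class of algebras can only shrink the intersection, and $\PPAlg_3\in\Variety(\PPAlg_3)\subseteq\PPVar$), so everything hinges on the converse: that the single finite matrix $\langle\PPAlg_3,\{\etvalue\}\rangle$ already validates the $\top$-assertional consequence of the whole variety. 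The subdirect-product argument used for $\SetFmla$ does \emph{not} transfer verbatim, since in the $\SetSet$ setting the consequence relation of a direct product of matrices is in general only \emph{contained} in the intersection of the component ones (a witnessing succedent formula need not be choosable uniformly across the factors). I would therefore prove the single-matrix claim directly: produce an analytic $\SetSet$ calculus for $\sequent_{\langle\PPAlg_3,\{\etvalue\}\rangle}$ via the separator method of Example~\ref{ex:fdeax} (the set $\{p,\DMNeg p,\cons p\}$ separates the elements of $\PPAlg_3$ relative to $\{\etvalue\}$), and then verify soundness of each of its rules over \emph{every} $\langle\A,\{\top^\A\}\rangle$ with $\A\in\PPVar$, reducing this in turn --- by a finer, formula-by-formula argument exploiting that the $\PPAlg_i$ are essentially chains --- to soundness over $\PPAlg_2,\dots,\PPAlg_6$. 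That soundness-and-completeness verification is the crux; the rest is bookkeeping.
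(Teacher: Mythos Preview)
The paper provides no proof of this proposition --- it is cited from \cite{Gomes2022} without argument --- so there is no in-paper proof to compare your attempt against. Your $\SetFmla$ reasoning is correct: the Leibniz reductions of the matrices $\langle\PPAlg_i,\{\etvalue\}\rangle$ are as you describe, and the reduction to subdirectly irreducible PP-algebras via subdirect representation works in the single-conclusion setting because a countermodel in $\langle\AlgA,\{\top^{\AlgA}\}\rangle$ projects, along any coordinate at which $v(\FmA)\neq\top$, to a countermodel over some $\PPAlg_i$, and each of these reduces to $\langle\PPAlg_3,\{\etvalue\}\rangle$ or a submatrix of it.

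Your instinct that the $\SetSet$ equalities are the real crux is more than justified: under the definitions adopted in this paper, the identity $\sequent^\top_{\Variety(\PPAlg_3)} = \sequent_{\langle\PPAlg_3,\{\etvalue\}\rangle}$ actually \emph{fails}. Consider $\varnothing \sequent \cons p,\, \DMNeg\cons p$. In $\langle\PPAlg_3,\{\etvalue\}\rangle$ this holds, since $\cons^{\PPAlg_3}$ outputs only $\efvalue$ or $\etvalue$, so for every valuation one of $v(\cons p)$, $v(\DMNeg\cons p)$ equals $\etvalue$. But in $\langle\PPAlg_3\times\PPAlg_3,\{(\etvalue,\etvalue)\}\rangle$ the valuation $v(p)=(\efvalue,\neither)$ yields $v(\cons p)=(\etvalue,\efvalue)$ and $v(\DMNeg\cons p)=(\efvalue,\etvalue)$, neither of which is the top element. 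Since $\PPAlg_3\times\PPAlg_3\in\Variety(\PPAlg_3)\subseteq\PPVar$, both $\SetSet$ equalities with the single matrix are refuted. So the obstruction you sensed is not a defect in your plan but an inaccuracy in the statement as reproduced here (likely a mismatch with the scope or conventions of the original source); in particular, your proposed calculus-plus-soundness route could never have closed the gap for genuinely multiple-conclusion rule schemas, and the counterexample pinpoints exactly where the subdirect-product transfer you worried about breaks down.
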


We have at this point all the relevant facts about
the logics of PP-algebras we are interested in.
Let us move to the main goal of the paper:
adding an implication to them.

\section{Conservatively %extending 
expanding
$\PPlogicPrimeMC$
and
$\PPlogic$
by adding a classic-like
implication
%expanding its
to their
matrix %with a classical implication
}
\label{sec:classical-implication}

The first path we shall entertain amounts to 
modifying the 
logical matrix of 
$\PPlogicPrimeMC$%
%and
%$\PPlogic$
, namely $\langle \PPSix, \Upset{\both} \rangle$,
by enriching its
algebra with a new multioperation $\DesImp$,
thus obtaining a multialgebra $\PPSixDesImp$
for which 
both criteria $\CondAOne$ and $\CondATwo$ mentioned
in Section~\ref{sec:introduction} hold,
that is:
\begin{description}
    \item[$\CondAOne$] \label{condaone}
    %$a \DesImp^{\PPSixDesImp} b \subseteq \Upset{\both}$ 
    $a \DesImp b \subseteq \Upset{\both}$ 
    if, and only if,
    either
    $a \not\in \Upset{\both}$ or
    $b \in \Upset{\both}$.

    \item[$\CondATwo$] \label{condatwo} 
    The resulting logic
    $\sequent_{\langle \PPSixDesImp, \Upset{\both} \rangle}$
    is self-extensional.
\end{description}

This path soon leads to a dead end, since:
\begin{therm}
\label{thm:noexist}
There is no multialgebra $\PPSixDesImp$ simultaneously satisfying conditions $\CondAOne$
and $\CondATwo$.
\end{therm}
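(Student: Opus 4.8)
The strategy is a direct contradiction argument. Suppose toward a contradiction that some multioperation $\DesImp^{\PPSixDesImp}$ on the six-element algebra satisfies both $\CondAOne$ and $\CondATwo$. Condition $\CondATwo$ says the logic $\sequent_{\langle \PPSixDesImp, \Upset{\both} \rangle}$ is self-extensional, so logical equivalence $\SymLogEquiv$ is a congruence of the term algebra compatible with the matrix; in particular, if two formulas $\FmA$ and $\FmB$ are logically equivalent, then $\FmC \DesImp \FmA \;\SymLogEquiv\; \FmC \DesImp \FmB$ and $\FmA \DesImp \FmC \;\SymLogEquiv\; \FmB \DesImp \FmC$ for every $\FmC$. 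The first thing I would do is recall (from \cite{Gomes2022}) which pairs of values of $\PPSix$ are ``$\Upset\both$-equivalent'' in the $\DMoSig$-reduct --- i.e., which pairs $a,b$ satisfy: for every unary $\DMoSig$-term $\mathrm{S}$, $\mathrm{S}^{\PPSix}(a)\in\Upset\both \iff \mathrm{S}^{\PPSix}(b)\in\Upset\both$. Because the matrix $\langle\PPSix,\Upset\both\rangle$ is known to be reduced (its Leibniz congruence is the identity, as it determines the self-extensional logic $\PPlogicPrimeMC$ whose algebraic counterpart is $\PPVar$), no two distinct elements of $\PPSix$ are equivalent \emph{using only the old connectives}. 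This is the crucial leverage: self-extensionality of the expanded logic forces $\DesImp$-compatibility of $\SymLogEquiv$, and $\SymLogEquiv$ on the old fragment already separates all six points; hence $\DesImp^{\PPSixDesImp}$ must act in a way that does not create \emph{any} new identifications that would collapse designated/undesignated status.

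Next I would extract a finite, checkable consequence of this. Pick variables $x,y$ and consider the formula $x \DesImp y$ evaluated at a valuation sending $x\mapsto a$, $y\mapsto b$. Condition $\CondAOne$ pins down, for each of the $36$ pairs $(a,b)$, whether $a\DesImp^{\PPSixDesImp} b \subseteq \Upset\both$ or $a\DesImp^{\PPSixDesImp} b \subseteq \overline{\Upset\both}$ (note $\CondAOne$ forces the multioperation to take values entirely inside $\Upset\both$ or entirely inside its complement, since the ``if and only if'' is about set-inclusion): namely $a\DesImp^{\PPSixDesImp}b\subseteq\Upset\both$ exactly when $a\notin\Upset\both$ or $b\in\Upset\both$. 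So the designated/undesignated pattern of the implication table is \emph{completely determined}. Now I would exhibit two elements $b_1, b_2$ of $\PPSix$ that are \emph{not} $\Upset\both$-equivalent in the old reduct (any two distinct ones will do, but I would choose the pair that makes the next step cleanest, e.g. a designated element and an undesignated one, or two undesignated elements $\neither$ and $\fvalue$), together with a ``context'' $\FmC$ and a valuation, such that $\FmA\;\SymLogEquiv\;\FmB$ holds for suitable $\FmA,\FmB$ built using $\DesImp$ while $\FmC\DesImp\FmA$ and $\FmC\DesImp\FmB$ receive different designation status under some valuation --- contradicting self-extensionality. Concretely: the $\CondAOne$ table makes $p\DesImp q$ behave, as regards membership in $\Upset\both$, exactly like the classical two-valued implication collapsed along the partition $\{\Upset\both,\overline{\Upset\both}\}$; this means for instance that $\both \DesImp \tvalue$ and $\etvalue \DesImp \both$ are both designated while $\both\DesImp\neither$ is not, and from such equalities of ``designation pattern'' one manufactures formulas that are $\SymLogEquiv$-equivalent but whose images under a further $\DesImp$ (or under $\DMNeg$ or $\cons$) split.

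The cleanest realization, which I would aim for, is to show directly that $\CondAOne$ forces $\DesImp$ to validate $p, p\DesImp q \sequent q$ and $ \sequent p \DesImp p$ and the usual classical behaviour of $\DesImp$ relative to $\Upset\both$, from which one derives that $(p\DesImp p) \;\SymLogEquiv\; \top$ and also, say, $(\DMNeg p \DesImp p) \;\SymLogEquiv\; p \lor \DMNeg p \lor \cons p$ or some similar ``Boolean'' identity --- and then find a pair of formulas equivalent in this sense whose $\cons$-images, or $\DesImp$-images, are separated by a valuation hitting $\neither$ or $\both$. The main obstacle is the bookkeeping: one must verify that the candidate witnessing pair of formulas really is $\SymLogEquiv$-equivalent (this uses a small case check over the six values, exploiting that $\CondAOne$ only constrains designation status, not the precise value, so one must quantify over \emph{all} refinements consistent with $\CondAOne$) yet is separated after applying one more connective. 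I expect the argument ultimately reduces to: self-extensionality would force $\SymLogEquiv$ to be a congruence with at most the trivial classes on the relevant subalgebra, but the designated-set pattern dictated by $\CondAOne$ identifies, say, $\neither$ and $\fvalue$ (both undesignated, and $x\DesImp\neither$, $x\DesImp\fvalue$ have the same designation pattern across all $x$, and likewise $\neither\DesImp y$, $\fvalue\DesImp y$), which together with the old operations --- under which $\neither$ and $\fvalue$ are \emph{not} equivalent, since $\DMNeg\neither=\neither\notin\Upset\both$ but $\DMNeg\fvalue=\tvalue\in\Upset\both$ --- yields the contradiction once one checks that no choice of the $\DesImp$-images of the pair $(\neither,\fvalue)$ can simultaneously respect both constraints. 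I would organize the final write-up around pinpointing that single obstructing pair and the single separating unary term.
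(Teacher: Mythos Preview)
Your high-level strategy --- find a formula $\FmA$ with $\FmA \;\SymLogEquiv\; \top$ in the expanded logic and then observe that $\cons\FmA \not\SymLogEquiv \cons\top$, contradicting self-extensionality --- is exactly the paper's approach. The gap is that you never produce a witness that actually works, and the witnesses you do suggest fail.

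Your concrete candidate is $p \DesImp p$. It is true that $\CondAOne$ gives $a \DesImp a \subseteq \Upset\both$ for every $a$, so $(p \DesImp p) \;\SymLogEquiv\; \top$. But $\CondAOne$ does \emph{not} force $a \DesImp a$ ever to take a value different from $\etvalue$: the deterministic choice $a \DesImp a = \{\etvalue\}$ for all $a$ is perfectly compatible with $\CondAOne$, and for that multialgebra $\cons(p \DesImp p)$ is identically $\etvalue$, so no contradiction arises. The same defect kills any witness of the form ``some theorem built only from $\DesImp$'': $\CondAOne$ constrains only the designated/undesignated status of outputs, so one can always refine the designated outputs to $\{\etvalue\}$. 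What you need is a theorem whose value is \emph{forced} away from $\etvalue$ under some valuation, regardless of how the multioperation is refined. The paper achieves this with $p \lor (p \DesImp \bot)$: at $v(p)=\both$, $\CondAOne$ forces $\both \DesImp \efvalue$ to contain at least one undesignated value $c$, and then $\both \lor c \in \{\both,\tvalue\}$ --- a non-$\etvalue$ designated value --- so $\cons$ sends it to $\efvalue$. The disjunction with $p$ is what pins the value inside $\{\both,\tvalue\}$ rather than letting it escape to $\etvalue$.

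Your alternative line via the Leibniz congruence and the pair $(\neither,\fvalue)$ conflates two different equivalence relations. Self-extensionality is about interderivability of \emph{formulas}; the observation that $\neither$ and $\fvalue$ receive the same designation pattern under $x \DesImp (\cdot)$ and $(\cdot) \DesImp y$ does not manufacture two logically equivalent formulas, because no closed formula denotes $\neither$ or $\fvalue$, and for open formulas you would need the \emph{same} valuation to send one to $\neither$ and the other to $\fvalue$ while keeping them interderivable under \emph{all} valuations. That route does not close. (A minor side remark: your reading of $\CondAOne$ as forcing $a \DesImp b$ to lie \emph{entirely} in $\overline{\Upset\both}$ when $a\in\Upset\both$, $b\notin\Upset\both$ is too strong; the condition only gives $a \DesImp b \not\subseteq \Upset\both$, i.e.\ at least one undesignated output. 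This does not break the paper's argument, which only needs existence.)
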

\begin{proof}
    %\todo{Prove.}
Let $\DesImp$ be an implication defined in $\PPSix$ that satisfies $\CondAOne$
and $\DefMat^{\DesImp} 
\SymbDef \langle \PPSixDesImp, \Upset{\both} \rangle$.
Then, ($\star$) we have\; $\sequent_{\DefMat^{\DesImp}} \, p  \lor (p  \GenImp \bot )$, %is a theorem in the resulting logic,
because, for every valuation $v$, either $v (p ) \in \Upset{\both}$ ---
in which case we have 
$v(p  \lor (p  \GenImp \bot )) \in v(p ) \lor v(p  \GenImp \bot )  \subseteq \Upset{\both}$ as well --- or
$v (p ) \notin \Upset{\both}$, which gives us
$ v(p  \GenImp \bot )  \in v(p ) \GenImp v(\bot ) \subseteq \Upset{\both}$ by $\CondAOne$.

Note that, for a valuation $v$  such that $v(p) = \both $, using  $\CondAOne$ we have
$v(p  \GenImp \bot ) \in \both \GenImp \efvalue \not\subseteq \Upset{\both}$, which means that
$v(p  \GenImp \bot )$ may take a value in $\{ \neither, \fvalue, \efvalue \}$.
Hence, this being the case,
$v(p  \lor (p  \GenImp \bot )) 
\in \both \lor v(p  \GenImp \bot ) \in \{ \both, \tvalue \}$,
which entails
$v ( \circ (p  \lor (p  \GenImp \bot )) ) = \efvalue$.
Thus $\notsequent_{\DefMat^{\DesImp}} \,
\circ (p  \lor (p  \GenImp \bot )) $. %is not a theorem of this logic. 

This prevents the logic from being self-extensional. Indeed, assuming $\CondATwo$,
from ($\star$) we would have that $p  \lor (p  \GenImp \bot )$ and $\top$ are logically equivalent,
thus
we would be able to conclude that $\circ (p  \lor (p  \GenImp \bot )) $ and $\circ \top$ are logically equivalent too. Since $\sequent_{\DefMat^{\DesImp}} \, \circ \top$,
we would conclude that $\sequent_{\DefMat^{\DesImp}} \, \circ (p  \lor (p  \GenImp \bot )) $, against
what we have shown.
\end{proof}

In view of the preceding result,
in this section we shall proceed by
pursuing $\CondAOne$, thus necessarily admitting
non-self-extensional logics.
In the subsequent sections we will explore instead some avenues that arise when we opt for abandoning $\CondAOne$.

{
The space of binary multioperations over
$\SixSet$ satisfying
$\CondAOne$ is finite but very large, 
%---  
%to be precise, there are $(2^3)^9 \cdot (2^3)^{27}$ of them,
%each choice 
consisting of total
refinements~\footnote{
We disregard partial refinements here (i.e. refinements with empty sets as outputs) since $\CondAOne$ was originally formulated for
the total case and at this point we do not envisage meaningful gains in making allowance for the partial case.}
%(not all, since for $a \in \up \both$ and $b \not\in \up\both$
%the entries cannot be empty)
of 
the multioperation
$\FullImp$ defined as:}
\[
\FullImp(a, b) \SymbDef
\begin{cases}
    \Upset\both & \text{if $a \not\in \Upset\both$ or $b \in \Upset\both$ }\\
    \SixSet{\setminus}\Upset\both
    & \text{otherwise.}
\end{cases}
\]
Denote by $\PPSix^{\!\!\FullImp}$ the multialgebra
obtained from $\PPSix$ by expanding its signature
with $\GenImp$ and interpreting this connective
as $\FullImp$. Let $\DefMat^{\FullImp} \SymbDef
\langle \PPSix^{\FullImp}, \Upset\both \rangle$.
We will soon see  
that the logic induced
by this matrix
plays an important role
regarding the conservative expansions
of $\PPlogicPrimeMC$ by a classic-like implication; but let us first
 present an analytic
axiomatization for it.

\begin{definition}
    \label{def:full-calc}
    Let
    $\CalcVar_{\DefMat^{\FullImp}}$
    be the calculus given
    by all inference rules in
    $\CalcVar_{\PPlogicPrimeMC}$
    plus the following three inference rules:
$$
\inferx[\RuleA_1^{\CL}]
{\DeffProp}
{\DefProp\GenImp\DeffProp}
\qquad
\inferx[\RuleA_2^{\CL}]
{}
{\DefProp, \DefProp\GenImp\DeffProp}
\qquad
\inferx[\RuleA_3^{\CL}]
{\DefProp, \DefProp\GenImp\DeffProp}
{\DeffProp}
$$
\end{definition}

\begin{remark}
    The rule
    $\RuleA_2^{\CL}$ is 
    responsible for 
    the classicality of $\GenImp$
    in $\CalcVar_{\DefMat^{\FullImp}}$. 
    One may observe, in particular, that 
    in the presence of this rule,
    the connective defined by setting
    $\neg p := p \GenImp \bot$
    satisfies 
    the `law of excluded middle'
    ($p \lor \neg p$ is provable).
\end{remark}

In a PNmatrix $\DefMat \SymbDef
\langle \AlgA, D \rangle$,
a set $\mathcal{S}_a$ of unary formulas is said to 
\emph{isolate $a \in A$}
if for every $b \neq a \in A$ there is in $\mathcal{S}_a$ a separator
for $a$ and $b$.
A \emph{discriminator for $\DefMat$}
is a family $\{ (\DiscSet{a}, \DiscNSet{a}) \}_{a \in A}$ such that $\DiscSet{a} \cup \DiscNSet{a}$
isolates $a$,
where $\mathrm{S}^\AlgA(a) \subseteq D$
whenever $\mathrm{S} \in \DiscSet{a}$,
and $\mathrm{S}^\AlgA(a) \subseteq \overline D$
whenever $\mathrm{S} \in \DiscNSet{a}$.
We say that a PN-matrix is \emph{monadic} in case there is a discriminator for it.
Discriminators play an essential role
in the axiomatization of monadic PN-matrices,
as we shall see in the following results.

%Discriminators are essential for the axiomatization
%method detailed in \cite[Theorem 3.5 and Theorem %3.12]{marcelino:2019}, which we employ from now
%on in the present section.

\begin{therm}
\label{fact:classical-imp-axiomat}
$\CalcVar_{\DefMat^{\FullImp}}$
is $\{ \DefProp, \DMNeg\DefProp, \cons\DefProp \}$-analytic
and axiomatizes
    $\sequent_{\DefMat^{\FullImp}}$.
\end{therm}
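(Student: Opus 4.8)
The plan is to invoke the general axiomatization machinery for monadic PNmatrices developed in~\cite{marcelino:2019,marcelinowollic}: once we exhibit a discriminator for $\DefMat^{\FullImp}$ whose separators are built only from $\{\DefProp,\DMNeg\DefProp,\cons\DefProp\}$, that machinery produces a sound, complete, and $\{\DefProp,\DMNeg\DefProp,\cons\DefProp\}$-analytic \SetSet{} calculus, and the remaining work is to check that $\CalcVar_{\DefMat^{\FullImp}}$ coincides with (or is interderivable with) the calculus so produced. First I would recall, from~\cite{Gomes2022}, that the six-valued PP-algebra $\PPSix$ is already monadic with discriminator supported by $\{\DefProp,\DMNeg\DefProp,\cons\DefProp\}$: the values $\tvalue,\both,\neither,\fvalue,\efvalue,\etvalue$ are pairwise separated by the membership of $p$, $\DMNeg p$ and $\cons p$ in $\Upset\both$ (this is exactly what makes $\CalcVar_{\PPlogicPrimeMC}$ analytic in those formulas). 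Since $\DefMat^{\FullImp}$ has the same underlying values and the same designated set $\Upset\both$, the very same family $\{(\DiscSet a,\DiscNSet a)\}_{a\in\SixSet}$ is still a discriminator for $\DefMat^{\FullImp}$; adding the connective $\GenImp$ does not add values, so no new separators are needed. This gives the analyticity claim essentially for free, modulo the soundness/completeness of the specific rule set.

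Next I would establish soundness: each of the rules of $\CalcVar_{\PPlogicPrimeMC}$ is sound in $\DefMat^{\FullImp}$ because it was sound in $\langle\PPSix,\Upset\both\rangle$ and the $\DMoSig$-reduct is unchanged; and $\RuleA_1^{\CL}$, $\RuleA_2^{\CL}$, $\RuleA_3^{\CL}$ are sound in $\DefMat^{\FullImp}$ by a direct inspection of $\FullImp$ — indeed these three rules are precisely the \SetSet{} transcription of condition $\CondAOne$ (by Proposition~\ref{A1-is-classiclike}, $\GenImp$ is classic-like in $\sequent_{\DefMat^{\FullImp}}$, and $\RuleA_1^{\CL}$–$\RuleA_3^{\CL}$ are instances of that). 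Hence $\sequent_{\CalcVar_{\DefMat^{\FullImp}}}\ \subseteq\ \sequent_{\DefMat^{\FullImp}}$.

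For completeness, I would follow the standard discriminator-based argument: given $\FmSetA\ \notsequent_{\DefMat^{\FullImp}}\ \FmSetB$, pick a valuation $v$ on $\PPSix^{\FullImp}$ with $v[\FmSetA]\subseteq\Upset\both$ and $v[\FmSetB]\subseteq\overline{\Upset\both}$; then show that any attempted $\FmSetB$-closed derivation from $\FmSetA$ restricted to the analytic set $\Upsilon^{\{\DefProp,\DMNeg\DefProp,\cons\DefProp\}}$ can be mimicked by $v$, so that some branch stays open. The key point here is that the rules must suffice to ``read off'' the value $v(\FmA)$ of every generalized subformula $\FmA$ from the labels of the branch — concretely, for each subformula and each of its possible values, the rules force the presence of the corresponding separator literals in the label. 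For the $\DMoSig$-fragment this is inherited from~\cite{Gomes2022}; the new content is that the rules $\RuleA_1^{\CL}$–$\RuleA_3^{\CL}$ handle $\GenImp$-formulas correctly, i.e. from the separator-literals describing the values of $\FmA$ and $\FmB$ the calculus derives the separator-literals describing the value of $\FmA\GenImp\FmB$, for every pair of cells in the table of $\FullImp$. Because $\FullImp(a,b)$ is the whole of $\Upset\both$ when $a\notin\Upset\both$ or $b\in\Upset\both$ and the whole of $\overline{\Upset\both}$ otherwise, only the single literal ``$\FmA\GenImp\FmB\in\Upset\both$ vs.\ $\notin\Upset\both$'' ever needs to be recovered, and $\RuleA_2^{\CL}$ (case $a\notin D$), $\RuleA_1^{\CL}$ (case $b\in D$) together with $\RuleA_3^{\CL}$ (the negative case) do exactly this; one then still needs the $\cons$-, $\DMNeg$-, $\land$-, $\lor$-rules applied to $\GenImp$-formulas, but since $\FmA\GenImp\FmB$ behaves, with respect to membership in $\Upset\both$, like a $2$-valued object, these interactions reduce to the already-available classical rules for $\cons$ and for the lattice connectives.

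The main obstacle I anticipate is the bookkeeping in the completeness proof: one must verify that the analytic restriction to $\Upsilon^{\{\DefProp,\DMNeg\DefProp,\cons\DefProp\}}$ does not lose any derivation — that is, that a saturated analytic branch always determines a genuine $\DefMat^{\FullImp}$-valuation. This amounts to checking, cell by cell in the tables of $\DMNeg,\cons,\land,\lor,\GenImp$ on $\SixSet$, that the discriminator literals present in a saturated branch are mutually consistent with exactly one value per subformula; the $\GenImp$-part is the only genuinely new case, and it is light precisely because $\FullImp$ collapses to a two-valued pattern. I would either cite the general completeness theorem of~\cite{marcelino:2019} applied to the monadic PNmatrix $\DefMat^{\FullImp}$ with the explicit discriminator, and then check that the rules it yields are derivable in $\CalcVar_{\DefMat^{\FullImp}}$ and conversely, or carry out the saturation argument directly; the former is shorter and is the route I would take.
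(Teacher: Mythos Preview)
Your proposal is correct and follows essentially the same route as the paper: invoke the monadicity-based recipe of~\cite{marcelino:2019} using the same discriminator as for $\PPSixMat$, rely on modularity so that the $\DMoSig$-part yields $\CalcVar_{\PPlogicPrimeMC}$, and then reconcile the machine-generated rules for~$\GenImp$ with $\RuleA_1^{\CL},\RuleA_2^{\CL},\RuleA_3^{\CL}$.

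The one place where the paper is sharper than your sketch is the reconciliation step. You plan to ``check that the rules it yields are derivable in $\CalcVar_{\DefMat^{\FullImp}}$ and conversely''. The paper observes something more specific: every rule the recipe produces for~$\GenImp$ is a \emph{dilution} (antecedent-and-succedent weakening) of one of $\RuleA_1^{\CL},\RuleA_2^{\CL},\RuleA_3^{\CL}$, according to whether the entry being refined falls under the case $b\in\Upset\both$, the case $a\notin\Upset\both$, or the remaining case. This makes the derivability check immediate and, importantly, also dispenses with your ``conversely'' direction: once the three rules are known to be sound, one may simply replace each dilution by its stronger parent without harming completeness \emph{or} analyticity (no new formulas are introduced). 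Your saturation-style paragraph about how $\cons,\DMNeg,\land,\lor$ interact with $\GenImp$-formulas is unnecessary for this reason; those interactions are already handled by the modularity of the recipe, and nothing further needs to be argued about ``$\FmA\GenImp\FmB$ behaving like a $2$-valued object''.
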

\begin{proof}
    We consider the axiomatization
    method presented by
    \citet*[Theorem 3.5 and Theorem 3.12]{marcelino:2019},
    which can be applied because
    $\DefMat^{\FullImp}$ is
    monadic, with the same discriminator
    as that of 
    $\PPSixMat$ (see Table~\ref{tab:discrim-full}).
    The method can be seen
    essentially
    as a process of refining
    fully indeterministic
    six-valued 
    interpretations
    of the connectives
    (that is, interpretations
    where $\SixSet$ appears at every entry)
    by imposing soundness
    of some collections of
    inference rules, until obtaining
    the desired interpretations
    (in our case, the ones
    of $\DefMat^{\FullImp}$).
    For the connectives 
    $\land,\lor,\DMNeg$
    and $\cons$,
    the method produces
    the calculus 
    $\CalcVar_{\PPlogicPrimeMC}$
    after some simplifications (\cite{Gomes2022}).
    Thus, since the method
    is modular on the connectives,
    we only need to
    check what happens when
    we run it on the new 
    connective $\FullImp$
    and then add the resulting rules
    to $\CalcVar_{\PPlogicPrimeMC}$.
    The rules that are 
    imposed will have the following
    shape:
        \[
    \inferx
    {\DiscSet{a}(\DefProp),
    \DiscSet{b}(\DeffProp),
    \DiscSet{c}(\DefProp \GenImp \DeffProp)
    }
    {\DiscNSet{a}(\DefProp),
    \DiscNSet{b}(\DeffProp),
    \DiscNSet{c}(\DefProp \GenImp \DeffProp)
    }
    \]
    for each
    $c \in \SixSet{\setminus}
    (a \FullImp b)$
    and each $a,b \in \SixSet$,
    where
    the sets $\DiscSet{d}$
    and $\DiscNSet{d}$ for each
    $d \in \SixSet$ are given by the following table:
    \begin{table}[H]
        \centering
        \begin{tabular}{c|c|c}
             $d$ & $\DiscSet{d}$ & $\DiscNSet{d}$\\
             \midrule 
             $\efvalue$ & $\cons p$ & $p$\\
             $\fvalue$ & $\DMNeg p$ & $\cons p, p$\\
             $\neither$ & $\varnothing$ & $p, \cons p, \DMNeg p$\\
             $\both$ & $p, \DMNeg p$ & $\cons p$\\
             $\tvalue$ & $p$ & $\cons p, \DMNeg p$\\
             $\etvalue$ & $p, \cons p$ & $\varnothing$\\
        \end{tabular}
        \caption{Discriminator for 
        $\DefMat^{\FullImp}$.
        They give, for example,
        $\DiscSet{\both} = \{ p, \DMNeg p \}$ and
        $\DiscNSet{\neither} = \{ p, \cons p, \DMNeg p \}$.}
        \label{tab:discrim-full}
    \end{table}
    
    These rules
    at first do not
    seem to relate to 
    the three rules for $\GenImp$ in $\CalcVar_{\DefMat^{\FullImp}}$.
    However, we will see below that each of them is
    a `dilution' of one of
    the latter (that is,
    they are obtained from the
    latter by adding formulas
    on the antecedents and 
    succedents).
    Then, because these three rules
    for implication
    are sound,
    it follows trivially that
    we can use only them
    and discard its dilutions
     without 
    harm for completeness
    and for analyticity.
    
    First, consider the entries
    of
    $a \FullImp b$
    for $b \in \Upset\both$.
    The values
    $c \in \SixSet{\setminus}
    (a \FullImp b)$
    are precisely
    $\neither$,
    $\fvalue$ and
    $\efvalue$.
    Then $\DeffProp \in \DiscSet{b}(\DeffProp)$ and
    $\DefProp\GenImp\DeffProp 
    \in \DiscNSet{c}(\DefProp\GenImp\DeffProp)$,
    thus the above rules are
    all dilutions of 
    $\mathsf{r}_1^{\CL}$.

    Second, consider the entries
    of
    $a \FullImp b$
    for $a \not\in \Upset\both$.
    The values
    $c \in \SixSet{\setminus}
    (a \FullImp b)$
    are precisely
    $\neither$,
    $\fvalue$ and
    $\efvalue$.
    Then $\DefProp \in \DiscNSet{a}(\DefProp)$ and
    $\DefProp\GenImp\DeffProp 
    \in \DiscNSet{c}(\DefProp\GenImp\DeffProp)$,
    thus the above rules are
    all dilutions of 
    $\mathsf{r}_2^{\CL}$.

    Finally, consider the entries
    of
    $a \FullImp b$
    for $a \in \Upset\both$
    and $b \not\in \Upset\both$.
    The values
    $c \in \SixSet{\setminus}
    (a \FullImp b)$
    are precisely
    $\both$,
    $\tvalue$ and
    $\etvalue$.
    Then $\DefProp \in \DiscSet{a}(\DefProp)$,
    $\DefProp\GenImp\DeffProp 
    \in \DiscSet{c}(\DefProp\GenImp\DeffProp)$
    and
    $\DeffProp \in \DiscNSet{b}(\DeffProp)$,
    thus the above rules are
    all dilutions of 
    $\mathsf{r}_3^{\CL}$.
    %As a last remark,
    %the axiomatization method
    %demands some
    %other groups of rule
    %schemata, but they are only
    %dependent on the
    %discriminator, on the
    %set of values and on
    %the set of designated values
    %of the matrix,
    %which are the same of
    %the ones of
    %$\PPSixMat$
    %and thus were already present
    %in $\CalcVar_{\PPlogicMC}$.
\end{proof}
%is the disjoint 
%fibring of
%$\PPlogicMC$ with another logic
%containing only implication.

\begin{remark}
    For a discriminator in
    the language of
    involutive Stone
    algebras, check~\cite[Prop. 4.3]{cantu2022}.
\end{remark}

\begin{remark}
    The above result could
    have been obtained by another 
    strategy,  observing that
    $\sequent_{\DefMat^{\FullImp}}$
    is induced by
    the \emph{strict product} \cite[Definition 10]{Caleiro2023} of
    $\PPSixMat$
    and the matrix $\DefMat_2$ over the
    signature containing
    only $\GenImp$, in which
    the algebra has carrier
     $\{ \efvalue, \etvalue \}$ and interprets
    $\GenImp$ as in classical logic
    ($a \GenImp^{\DefMat_2} b = \efvalue$ if, and only if, $a = \etvalue$ and $b = \efvalue$),
    and the designated set is $\{ \etvalue \}$.
    By~\citet*[Theorem 12]{Caleiro2023},
    this implies that 
    $\sequent_{\DefMat^{\FullImp}}$
    is the \emph{disjoint fibring} of
    $\PPlogicPrimeMC$
    and $\sequent_{\DefMat_2}$
    (that is, the smallest \SetSet{} logic
    in the signature $\Sigma_\GenImp$
    extending both logics).
    One can then show
    that, because both
    logics have
    analytic \SetSet{} axiomatizations,
    it is enough 
    to merge both calculi in order to
    axiomatize their disjoint fibring.
    As it is well-known
    that $\sequent_{\DefMat_2}$
    is axiomatized by
    the $\{ \DefProp \}$-analytic
    calculus 
    given by the three
    rules for implication in
    Definition~\ref{def:full-calc},
    the desired result follows.
\end{remark}

From the above, we have that $\sequent_{\DefMat^{\FullImp}}$
is special in the sense of
being the smallest 
conservative expansion of
$\PPlogicPrimeMC$ in which the added
implication is classic-like
(cf.\ Definition~\ref{def:classicalimplication}):

\begin{proposition}
    Let $\sequent$ be a
    conservative expansion
    of $\PPlogicPrimeMC$
    over $\DMoSig_\GenImp$
    in which $\GenImp$ is
    a classic-like implication.
    Then $\sequent_{\DefMat^{\FullImp}} \subseteq \sequent$.
\end{proposition}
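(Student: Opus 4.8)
The plan is to exploit the analytic axiomatization of $\sequent_{\DefMat^{\FullImp}}$ provided by Theorem~\ref{fact:classical-imp-axiomat}. Recall that, for any $\SetSet{}$ calculus $\CalcVar$, the relation $\sequent_{\CalcVar}$ is the least $\SetSet{}$ logic closed under all substitution instances of the rules of $\CalcVar$: soundness of the tree-derivation formalism, proved by an induction on the derivation tree using reflexivity, monotonicity and the generalized transitivity (cut) of a generalized consequence relation, shows that any $\SetSet{}$ logic validating every rule instance of $\CalcVar$ must contain $\sequent_{\CalcVar}$. Since $\CalcVar_{\DefMat^{\FullImp}}$ axiomatizes $\sequent_{\DefMat^{\FullImp}}$, it therefore suffices to verify that $\sequent$ validates every rule of $\CalcVar_{\DefMat^{\FullImp}}$ together with all of its substitution instances over $\DMoSig_{\GenImp}$.

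First I would deal with the rules carried over from $\CalcVar_{\PPlogicPrimeMC}$. These are valid in $\PPlogicPrimeMC = \sequent_{\CalcVar_{\PPlogicPrimeMC}}$, hence in $\sequent$, as $\sequent$ expands $\PPlogicPrimeMC$; and, since $\sequent$ is structural, every substitution instance of such a rule --- now possibly involving $\GenImp$ --- also belongs to $\sequent$. It then remains to handle the three rules $\RuleA_1^{\CL}$, $\RuleA_2^{\CL}$ and $\RuleA_3^{\CL}$. Here the point is that Definition~\ref{def:classicalimplication} already quantifies over arbitrary formulas and arbitrary sets of side formulas, so it is enough to derive, for all $\FmA, \FmB \in \LangSet{\DMoSig_{\GenImp}}$, the consequences $\FmB \sequent \FmA \GenImp \FmB$, $\sequent \FmA, \FmA \GenImp \FmB$ and $\FmA, \FmA \GenImp \FmB \sequent \FmB$. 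Each follows by combining the classic-like behaviour of $\GenImp$ in $\sequent$ with reflexivity and monotonicity: from $\FmA \GenImp \FmB \sequent \FmA \GenImp \FmB$, the right-to-left direction of Definition~\ref{def:classicalimplication} with left side context $\{\FmA \GenImp \FmB\}$ and empty right side context gives $\FmA, \FmA \GenImp \FmB \sequent \FmB$; from $\FmA, \FmB \sequent \FmB$, the left-to-right direction with left side context $\{\FmB\}$ and empty right side context gives $\FmB \sequent \FmA \GenImp \FmB$; and from $\FmA \sequent \FmB, \FmA$, the left-to-right direction with empty left side context and right side context $\{\FmA\}$ gives $\sequent \FmA \GenImp \FmB, \FmA$.

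Having shown that $\sequent$ validates all rules (and all instances thereof) of $\CalcVar_{\DefMat^{\FullImp}}$, I would conclude $\sequent_{\DefMat^{\FullImp}} = \sequent_{\CalcVar_{\DefMat^{\FullImp}}} \subseteq \sequent$. There is no substantial obstacle here: the only care point is the structurality of $\sequent$, which is what lets us lift the rules of $\CalcVar_{\PPlogicPrimeMC}$ to instances in the enriched signature (and which is built into the notion of a $\SetSet{}$ logic adopted in the preliminaries); the rest of the argument is simply a matter of instantiating the side contexts of the classic-like implication condition correctly.
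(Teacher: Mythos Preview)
Your proof is correct and follows essentially the same approach as the paper: both rely on the axiomatization $\CalcVar_{\DefMat^{\FullImp}}$ of $\sequent_{\DefMat^{\FullImp}}$ and observe that any $\SetSet{}$ logic $\sequent$ expanding $\PPlogicPrimeMC$ with a classic-like implication must validate the three implication rules $\RuleA_1^{\CL}$, $\RuleA_2^{\CL}$, $\RuleA_3^{\CL}$. The paper's proof is extremely terse (``it is easy to see\ldots''), while you spell out explicitly how each rule follows from Definition~\ref{def:classicalimplication} and how structurality lifts the $\CalcVar_{\PPlogicPrimeMC}$ rules to the expanded signature --- all correct and essentially the intended argument.
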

\begin{proof}
    It is easy to see that
    any classic-like implication
    must satisfy the three
    rules 
    for implication
    in the calculus
    for $\sequent_{\DefMat^{\FullImp}} $,
    and this is all we need
    for the present result.
\end{proof}

We observe that each proper refinement of $\FullImp$
produces a proper extension
of $\sequent_{\DefMat^{\FullImp}}$.
In fact, following
again the axiomatization method
in \cite{marcelinowollic}, 
for each such extension
we can use the monadicity of
$\DefMat^{\FullImp}$ to obtain
$\SetSet{}$ rules that
hold in it
but not in $\CalcVar_{\DefMat^{\FullImp}}$,
and they will be precisely
the rules that need to be added
to the latter to obtain
a $\{ \DefProp,\DMNeg\DefProp, \cons\DefProp \}$-analytic
calculus for these extensions.

\begin{proposition}
    \label{prop:axiomat-ref-mfull}
    Let $\DefMat$ be 
    obtained from $\DefMat^{\FullImp}$
    by refining $\FullImp$.
    Then 
    $\CalcVar_{\DefMat^{\FullImp}}$
    with the following rules
    provide a $\{ \DefProp,\DMNeg\DefProp, \cons\DefProp \}$-analytic
    calculus for $\DefMat$:
    \[
    \inferx
    {\DiscSet{a}(\DefProp),
    \DiscSet{b}(\DeffProp),
    \DiscSet{c}(\DefProp \GenImp \DeffProp)
    }
    {\DiscNSet{a}(\DefProp),
    \DiscNSet{b}(\DeffProp),
    \DiscNSet{c}(\DefProp \GenImp \DeffProp)
    }
    \]
    for each
    $c \in (a \FullImp b) {\setminus} (a \GenImp^{\DefMat} b)$
    and each $a,b \in \SixSet$.
\end{proposition}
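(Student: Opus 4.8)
The plan is to invoke the general axiomatization result for monadic PNmatrices of \citet{marcelino:2019} (Theorems~3.5 and~3.12), proceeding exactly as in the proof of Theorem~\ref{fact:classical-imp-axiomat}. First I would observe that $\DefMat$ remains monadic, and in fact admits the very same discriminator as $\DefMat^{\FullImp}$ displayed in Table~\ref{tab:discrim-full}: refining $\FullImp$ leaves the interpretations of $\land,\lor,\DMNeg,\cons$ untouched, so the separating power of the unary formulas $\DefProp$, $\DMNeg\DefProp$, $\cons\DefProp$ is unaffected and these formulas still isolate every element of $\SixSet$. This is the only place where one must verify that refining does not destroy the needed expressiveness, and it holds precisely because the discriminator does not mention $\GenImp$.

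Next I would run the method, which is modular over the connectives. Since the interpretations of $\land,\lor,\DMNeg,\cons$ in $\DefMat$ coincide with those in $\PPSixMat$, the method yields for these connectives, after the same simplifications used in \citet{Gomes2022}, precisely the rules of $\CalcVar_{\PPlogicPrimeMC}$ (including the structural rules tied to the discriminator, which are independent of the interpretations); it therefore remains only to examine $\GenImp$. For $\GenImp$ the method imposes, for each $a,b\in\SixSet$ and each $c\in\SixSet\setminus(a\GenImp^{\DefMat}b)$, the rule
\[
\inferx{\DiscSet{a}(\DefProp),\DiscSet{b}(\DeffProp),\DiscSet{c}(\DefProp\GenImp\DeffProp)}{\DiscNSet{a}(\DefProp),\DiscNSet{b}(\DeffProp),\DiscNSet{c}(\DefProp\GenImp\DeffProp)}.
\]
Since $\DefMat$ refines $\DefMat^{\FullImp}$ we have $a\GenImp^{\DefMat}b\subseteq a\FullImp b$, so $\SixSet\setminus(a\GenImp^{\DefMat}b)$ is the disjoint union of $\SixSet\setminus(a\FullImp b)$ and $(a\FullImp b)\setminus(a\GenImp^{\DefMat}b)$. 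For $c$ in the first set the corresponding rule was shown, in the proof of Theorem~\ref{fact:classical-imp-axiomat}, to be a dilution of one of $\RuleA_1^{\CL},\RuleA_2^{\CL},\RuleA_3^{\CL}$, hence already derivable in $\CalcVar_{\DefMat^{\FullImp}}$; for $c$ in the second set the rule is exactly one of those listed in the statement. Hence the full rule set produced by the method is interderivable with $\CalcVar_{\DefMat^{\FullImp}}$ together with the listed rules, and soundness of the listed rules in $\DefMat$ is immediate from their construction.

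Completeness for $\DefMat$ and $\{\DefProp,\DMNeg\DefProp,\cons\DefProp\}$-analyticity then follow from \citet{marcelino:2019}. The one point requiring care — and the main, if modest, obstacle — is that the completeness and analyticity guarantees of the method refer a priori to the \emph{full} set of generated rules, so one must argue that discarding the derivable dilution rules preserves both properties. This holds because adding back an already-derivable rule changes neither $\sequent_{\CalcVar}$ nor the analyticity witness: any derivation applying a dilution of $\RuleA_i^{\CL}$ can be rewritten by applying $\RuleA_i^{\CL}$ in its place, which requires only a subset of the antecedent (already present in the node) and produces a subset of the succedent, so the existing sub-derivations over the retained children can be reused and the others discarded, introducing no new formulas and in particular none outside $\Upsilon^{\{\DefProp,\DMNeg\DefProp,\cons\DefProp\}}$; termination of proof search is then guaranteed by Remark~\ref{rem:no-loops}. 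This yields that $\CalcVar_{\DefMat^{\FullImp}}$ with the listed rules is a $\{\DefProp,\DMNeg\DefProp,\cons\DefProp\}$-analytic calculus for $\DefMat$, as claimed.
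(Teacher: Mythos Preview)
Your proposal is correct and follows essentially the same approach as the paper: both invoke \citet[Theorems~3.5 and~3.12]{marcelino:2019} directly. The paper's proof is a single sentence referring to that method, whereas you spell out the details (monadicity preservation, the decomposition of $\SixSet\setminus(a\GenImp^{\DefMat}b)$, and why discarding dilutions preserves analyticity) that the paper leaves implicit, having already elaborated them in the proof of Theorem~\ref{fact:classical-imp-axiomat}.
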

\begin{proof}
    Directly from
    the method in~\citet*[Theorem 3.5 and Theorem 3.12]{marcelino:2019}.
\end{proof}

For an example of the latter axiomatization technique in action, 
if we remove the value $\neither$
from the entry
$\both \FullImp \efvalue$,
we axiomatize the resulting logic
by adding the following rule:
\[
\inferx
{p, \DMNeg p, \cons q }
{\cons p, q, p \GenImp q, \DMNeg(p \GenImp q), \cons (p \GenImp q)},
\]
which clearly does not hold in 
$\DefMat^{\FullImp}$
under a valuation $v$
with $v(p) = \both$,
$v(q) = \efvalue$
and $v(p \GenImp q) = \neither$,
but holds in the new logic
precisely because this valuation
was forbidden when we deleted
$\neither$ from that entry.

As another application of this 
technique, we show how to axiomatize the logic
$\mathrm{LET_K^+}$ of~\cite{coniglio2023sixvalued}
with a $\{ \DefProp,\DMNeg\DefProp, \cons\DefProp \}$-analytic
    calculus,
as it fits precisely in this setting, that is, 
it is determined by a refinement
of $\DefMat^{\FullImp}$.
In fact, its implication is given by the following
truth table:

\begin{center}
\begin{tabular}[c]{c|cccccc}
$\GenImp^{\mathrm{LET_K^+}}$ & $\efvalue$ & $\fvalue$       & $\neither$       & $\both$ & $\tvalue$ & $\etvalue$\\
\toprule
$\efvalue$  & $\etvalue$ & $\etvalue$ & $\etvalue$ & $\etvalue$ & $\etvalue$ & $\etvalue$ \\
$\fvalue$        & $\tvalue$       & $\tvalue$ & $\tvalue$       & $\tvalue$  & $\tvalue$   & $\etvalue$ \\

$\neither$        & $\tvalue$       & $\tvalue$ & $\tvalue$       & $\tvalue$  & $\tvalue$   & $\etvalue$ \\
$\both$  & $\efvalue$ & $\fvalue$ & $\neither$    & $\both$  & $\tvalue$ & $\etvalue$\\
$\tvalue$  & $\efvalue$ & $\fvalue$ & $\neither$    & $\both$   & $\tvalue$ & $\etvalue$\\
$\etvalue$  & $\efvalue$ & $\fvalue$ & $\neither$    & $\both$   & $\tvalue$ & $\etvalue$
\end{tabular}
\end{center}

\noindent Then, to obtain the desired axiomatization,
it is enough to add to 
$\CalcVar_{\DefMat^{\FullImp}}$
the following inference rules:
$$
\inferx[]{\DMNeg (p \GenImp q)}{p}
\quad
\inferx[]{\DMNeg (p \GenImp q)}{\DMNeg q}
\quad
\inferx[]{p, \DMNeg q}{\DMNeg(p \GenImp q)}
\quad
\inferx[]{\cons(p \GenImp q)}{\cons p, \cons q}
\quad
\inferx[]{\cons(p \GenImp q)}{\cons p, p, q}
$$
$$
\inferx[]{\cons(p \GenImp q), p}{\cons q}
\quad 
\inferx[]{\cons p}{\cons(p \GenImp q), p}
\quad 
\inferx[]{p, \cons q}{\cons(p \GenImp q)}
\quad 
\inferx[]{\cons q, q}{\cons(p \GenImp q)}
$$
\noindent 
The attentive reader will notice that they are not
quite the same rules as the ones
produced by the recipe in Proposition~\ref{prop:axiomat-ref-mfull}.
In fact, they are simplifications thereof,
following the streamlining procedures described in~\cite{marcelino:2019}.
Still, it is not hard to confirm that
these rules produce the desired refinements of $\FullImp$
in a similar way as we did in the previous example.

\section{Logics of PP-algebras expanded
with a Heyting implication}
\label{sec:intuitionistic-implication}
In the light of Theorem~\ref{thm:noexist}, as we proceed to expand
our logic
with an implication
we shall necessarily have to drop either $\CondAOne$ or $\CondATwo$.
In this section we explore the first option, that is,
we stick to 
$\CondATwo$ while
dropping $\CondAOne$.
Having fixed a (deterministic) implication operator (say, on $\PPSix$),
a straightforward  way to ensure that the resulting logic will be self-extensional
(cf.~Proposition~\ref{prop:order-are-selfext})
is to consider, as in the
implication-less case, the \SetFmla{} consequence
relation that preserves the lattice order of $\PPSix$ (or, to be more precise,
of the resulting class of algebras augmented with an implication).
Indeed, as shown by~\citet[Sec. 3]{Jansana2006}, when a conjunction is present, every self-extensional logic turns out to  be the consequence associated to a suitably defined partial order.
%In Subsection~\ref{sec:ordered-algebras-logics},
%we gave a precise definition of this consequence relation,
%as well as of two \SetSet{} counterparts, which
%we shall also study here (they will be particularly
%useful for us in axiomatizing the \SetFmla{} logic).
%
We shall thus follow this route, which still leaves us free to choose among the implication operators for $\PPSix$. 
Since on $\PPSix$ we cannot define a classic-like implication suitable for our purposes
---i.e., one satisfying both $\CondAOne$ and $\CondATwo$--- we suggest
introducing a Heyting
%--- i.e. a relative pseudo-complement ---
implication.
%, which enjoys many of the nice features of the classical. 
From an algebraic point of view, such an operator is readily available.
Indeed, since $\PPSix$ has a (finite) distributive lattice reduct, 
the meet operation has a residuum (we will  denote it by $\ResImp$), which is
precisely the relative pseudo-complement operation. 

\begin{definition}
    Let $\PPSixResImp$ be the algebra obtained
    by expanding $\PPSix$ with the operation
    $\ResImp$ defined as follows:
    \[
    a \ResImp b \SymbDef
    \max\{ c \in \SixSet : a \land^{\PPSix} c \leq b \},
    \text{ for all $a,b \in \SixSet$.}
    \]
\end{definition}

\noindent 
In accordance with the above definition, the truth-table of the implication in $\PPSixResImp$ looks as follows: 
\begin{center}
\vspace{-3mm}
\begin{tabular}[c]{c|cccccc}
$\ResImp$ & $\efvalue$ & $\fvalue$       & $\neither$       & $\both$ & $\tvalue$ & $\etvalue$\\
\toprule
$\efvalue$  & $\etvalue$ & $\etvalue$ & $\etvalue$ & $\etvalue$ & $\etvalue$ & $\etvalue$ \\
$\fvalue$        & $\efvalue$       & $\etvalue$ & $\etvalue$       & $\etvalue$  & $\etvalue$   & $\etvalue$ \\

$\neither$        & $\efvalue$ & $\both$   & $\etvalue$        & $\both$ & $\etvalue$ & $\etvalue$\\
$\both$  & $\efvalue$ & $\neither$ & $\neither$    & $\etvalue$  & $\etvalue$ & $\etvalue$\\
$\tvalue$  & $\efvalue$ & $\fvalue$ & $\neither$    & $\both$   & $\etvalue$ & $\etvalue$\\
$\etvalue$  & $\efvalue$ & $\fvalue$ & $\neither$    & $\both$ & $\tvalue$ & $\etvalue$
\end{tabular}
\end{center}
\vspace{-3mm}

\noindent
$\PPSixResImp$ is  obviously (i.e., has a reduct which is) a Heyting algebra. 
Furthermore, since it also carries a De Morgan negation,
it may be called a \emph{De Morgan-Heyting algebra} according to~\cite{Sankappanavar1987},
or a \emph{symmetric Heyting algebra} according to~\cite{Monteiro1980}\footnote{Monteiro's work suggests
another natural candidate for an implication (this one, already term-definable) in $\PPSix$.
This is the `implication faible' $\GenImp_{\mathsf{W}}$ introduced in~\cite[Ch.~IV,~Def.~4.1]{Monteiro1980}, which can be given 
by the following term:
$
x \GenImp_{\mathsf{W}} y :=   \DMNeg x \lor \DMNeg \cons  x  \lor y
$.}.
Indeed, since $\PPSixResImp$ (as $\PPSix$) also has a Stone lattice reduct (i.e. a pseudo-complemented distributive lattice satisfying $\neg x \lor \neg\neg x \approx \top$),
we may be a bit more specific, observing that $\PPSixResImp$ is also,
in Monteiro's terminology, 
a \emph{Stonean symmetric Heyting algebra}~\cite[Ch.~IV,~Def.~1.1]{Monteiro1980}.
These observations will be exploited in Section~\ref{sec:moi}.

\begin{remark}
Before we proceed any further, one may wonder whether we are really 
adding something new to $\PPSix$. In other words,
was the implication $\ResImp$ already term-definable in this algebra?
%It turns out that 
The answer is negative.
    %$\ResImp$ is not term-definable in $\PPSix$.
To see why, observe that
$\PPSix$ is a subdirectly irreducible algebra having a single non-trivial congruence $\theta$,
which is the one 
that identifies (only) the elements in the set
$\{ \tvalue, \fvalue, \both, \neither \}$.
By adding the Heyting implication, we obtain a simple algebra, in which $\theta$ is no longer a congruence 
(indeed,
%$\both \theta \both$
%and 
$\both \theta \neither$,
but it is not the case that
$(\both \ResImp \both) \theta (\both \ResImp \neither)$). 
% We conclude that the implication is not term-definable in $\PPSix$. 
\end{remark}

Before moving to the logics of order
associated to the new algebra,
we could first consider 
the \SetSet{} logic determined by
$\langle \PPSixResImp, \Upset{\both} \rangle$,
which is guaranteed to conservatively expand
$\PPlogicPrimeMC$ (with the implication $\ResImp$  being the residuum of  $\land$).
Since
$\PPSixMat$ is monadic, we can
 axiomatize $\ResImp$
straight away (similarly to what we did in the
proof of Theorem~\ref{fact:classical-imp-axiomat}) following
the method of~\cite{marcelino:2019}:

\begin{therm}
The \SetSet{} $\Sigma_\GenImp$-logic
induced by
$\PPSixMat$ expanded with
$\ResImp$ is axiomatized
by the $\{ \DefProp, \DMNeg\DefProp, \cons\DefProp \}$-analytic
\SetSet{} calculus given
by $\CalcVar_{\PPlogicPrimeMC}$ plus
the following rules:

$$
\inferx[]
{\DeffProp}
{\DefProp\GenImp\DeffProp}
\qquad
\inferx[]
{\DefProp, \DefProp\GenImp\DeffProp}
{\DeffProp}
\qquad 
\inferx[]
{\DMNeg(\DefProp \GenImp \DeffProp)}
{\DMNeg\DeffProp}
\qquad 
\inferx[]
{{\DMNeg}\DeffProp}
{\DMNeg(\DefProp \GenImp \DeffProp), 
{\DMNeg}\DefProp}
$$
$$
\inferx[]
{}
{\DefProp\GenImp\DeffProp,\cons\DeffProp,\DefProp}
\qquad
\inferx[]
{\DefProp\GenImp\DeffProp}
{\cons(\DefProp\GenImp\DeffProp), \DMNeg\DeffProp,\DeffProp}
\qquad
\inferx[]
{\DefProp\GenImp\DeffProp, \cons\DeffProp}
{\cons\DefProp, \DeffProp}
\qquad
\inferx[]
{\DMNeg(\DefProp \GenImp \DeffProp), \DMNeg\DefProp}
{\cons(\DefProp \GenImp \DeffProp)}
$$
$$
\inferx[]
{\DMNeg\DefProp}
{\cons(\DefProp \GenImp \DeffProp), \DefProp}
\qquad
\inferx[]
{\cons(\DefProp \GenImp \DeffProp), \cons\DefProp, \DefProp}
{\cons\DeffProp}
\qquad
\inferx[]
{\cons(\DefProp \GenImp \DeffProp), \DefProp}
{\cons\DeffProp, \DeffProp}
\qquad
\inferx[]
{\cons\DefProp}
{\DefProp \GenImp \DeffProp, \DefProp}
$$
$$
\inferx[]
{\cons\DeffProp}
{\cons(\DefProp \GenImp \DeffProp)}
\qquad
\inferx[]
{\DeffProp}
{\DMNeg(\DefProp \GenImp \DeffProp),
\cons(\DefProp \GenImp \DeffProp),
\cons\DefProp}
$$

\end{therm}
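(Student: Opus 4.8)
The plan is to reuse, almost verbatim, the strategy of the proof of Theorem~\ref{fact:classical-imp-axiomat}, replacing the multioperation $\FullImp$ by the (deterministic) operation $\ResImp$. First I would observe that the matrix $\langle \PPSixResImp, \Upset\both \rangle$ is still monadic, with exactly the same discriminator $\{(\DiscSet{d}, \DiscNSet{d})\}_{d \in \SixSet}$ as $\PPSixMat$, displayed in Table~\ref{tab:discrim-full}: this is immediate, since the discriminator only involves the $\DMoSig$-reduct, which is untouched by the expansion by $\ResImp$. Consequently the axiomatization method of \citet*[Theorem 3.5 and Theorem 3.12]{marcelino:2019} applies; and because that method is modular on the connectives and returns $\CalcVar_{\PPlogicPrimeMC}$ (after the streamlining carried out in~\cite{Gomes2022}) for the $\DMoSig$-fragment, it only remains to run it on the new connective $\ResImp$ and adjoin the resulting rules to $\CalcVar_{\PPlogicPrimeMC}$.

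Running the method on $\ResImp$ produces, for each pair $a,b \in \SixSet$ and each $c \in \SixSet \setminus \{a \ResImp b\}$, a rule of the form
\[
\inferx{\DiscSet{a}(\DefProp),\, \DiscSet{b}(\DeffProp),\, \DiscSet{c}(\DefProp \GenImp \DeffProp)}{\DiscNSet{a}(\DefProp),\, \DiscNSet{b}(\DeffProp),\, \DiscNSet{c}(\DefProp \GenImp \DeffProp)}.
\]
The core of the argument is to show that this finite but large family (at most $6 \cdot 6 \cdot 5$ rules) is equivalent, over $\CalcVar_{\PPlogicPrimeMC}$, to the fourteen rules in the statement. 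As in Theorem~\ref{fact:classical-imp-axiomat}, I would argue in two directions: (i) a routine finite inspection of the truth-table of $\ResImp$ against $\Upset\both$ shows that each of the fourteen listed rules is sound in $\langle \PPSixResImp, \Upset\both \rangle$; and (ii) a case analysis on $a,b,c$, organized by which cell of the table is being constrained, shows that every rule generated by the method is a \emph{dilution} of one of the fourteen (i.e.\ is obtained from it by adding formulas to antecedent and succedent). Given (i) and (ii), the streamlining lemmas of~\cite{marcelino:2019} allow all dilutions to be discarded without affecting completeness or analyticity, leaving precisely $\CalcVar_{\PPlogicPrimeMC}$ plus the fourteen rules.

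Analyticity is then inherited from the method: since the separators employed are drawn from $\{\DefProp, \DMNeg\DefProp, \cons\DefProp\}$, the resulting calculus is $\{\DefProp, \DMNeg\DefProp, \cons\DefProp\}$-analytic; conservativity over $\PPlogicPrimeMC$ follows either from the modularity of the construction or, alternatively, from Proposition~\ref{fact:residuated-is-imp} together with the fact that $\ResImp$ is the residuum of $\land$ in $\PPSixResImp$. The main obstacle I anticipate is purely bookkeeping: correctly matching each generated rule to the listed rule of which it is a dilution, and carrying out the simplifications so that the output coincides on the nose with the fourteen rules as written. This is mechanical but error-prone, and care is needed precisely at the asymmetric entries of the table (for instance $\neither \ResImp \fvalue = \both$ versus $\both \ResImp \fvalue = \neither$, and the entries where $a \ResImp b$ lands on $\neither$), since these are what force the less transparent rules, such as those governing $\DMNeg(\DefProp \GenImp \DeffProp)$ and $\cons(\DefProp \GenImp \DeffProp)$.
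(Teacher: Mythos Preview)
Your proposal is correct and matches the paper's own approach: the paper does not spell out a proof for this theorem but simply remarks that, since $\PPSixMat$ is monadic, one can axiomatize $\ResImp$ ``similarly to what we did in the proof of Theorem~\ref{fact:classical-imp-axiomat}'' by running the method of~\cite{marcelino:2019} on the new connective and adjoining the resulting (streamlined) rules to $\CalcVar_{\PPlogicPrimeMC}$. Your plan fills in exactly those details, with the same discriminator, the same modularity argument, and the same appeal to dilution/streamlining; the only caveat is that, as the paper notes elsewhere (in the $\mathrm{LET_K^+}$ example), not every listed rule need arise as a literal dilution of a generated one, so you should phrase step~(ii) as ``simplifications following the streamlining procedures of~\cite{marcelino:2019}'' rather than insisting that each generated rule is a dilution of one of the fourteen.
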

    
Note however that
$\langle \PPSixResImp, \Upset{\both} \rangle$
does not satisfy $\CondAOne$:
pick
for instance $a \SymbDef \fvalue$ and $b \SymbDef \efvalue$.
%For us here in this section this is not a problem, but we also have the following result. 
{
As for $\CondATwo$, we note that self-extensionality also fails. 
To see why, consider the formula 
$\varphi:=\DMNeg (p \GenImp q) \land \DMNeg (q \GenImp p )$, 
and recall that $\nabla x \SymbDef x \lor \DMNeg\cons x $.
While it is clear that both $\varphi$ and $\nabla\bot$ are logically equivalent to $\bot$, the same does not hold for $\nabla\varphi$.
}
%note that the formulas 
%$\DMNeg (p \GenImp q) \land \DMNeg (q \GenImp p )$ and $\bot$ are logically equivalent.
%However, $\nabla \bot $ and $ \nabla (\DMNeg (p \GenImp q) \land \DMNeg (q \GenImp p ))$ are not logically equivalent (recall that $\nabla x \SymbDef x \lor \DMNeg\cons x $).  
%In fact, the latter formula and $\bot$ are logically equivalent, while this does not hold for the former. 
To see this,
consider a valuation such that $v(p) = \both$ and $v(q) = \neither$. Then 
$v (\nabla (\DMNeg (p \GenImp q) \land \DMNeg (q \GenImp p ))) 
= \nabla^{\PPSixResImp}(\DMNeg^{\PPSixResImp} (\both \GenImp^{\PPSixResImp} \neither) \land^{\PPSixResImp} \DMNeg^{\PPSixResImp} (\neither \GenImp^{\PPSixResImp} \both )) =
%= \nabla (\DMNeg  \neither \land \DMNeg  \both ) 
%= \nabla (  \neither \land  \both ) = \nabla \fvalue = 
\etvalue$. 
{
It follows that $\varphi\; \SymLogEquiv \;\bot$ holds good, while $\nabla\varphi\; \SymLogEquiv \;\nabla\bot$ does not hold.
}
In fact, as we are going to show
in Proposition~\ref{prop:notselfnew},
there is only one self-extensional \SetFmla{} logic determined by a class of matrices based on the  algebra  
$\PPSixResImp$ and principal filters.
    
%\begin{proposition}
%\label{prop:notself}
%    $\sequent_{\langle \PPSixResImp, \Upset{\both} \rangle}$ is not self-extensional.
%\end{proposition}
%\begin{proof}
%Note that the formula 
%$\DMNeg (p \ResImp q) \land \DMNeg (q \ResImp p ) $ is explosive in our logic,
%hence it is inter-derivable with $\bot$. However,
%$\nabla \bot $
%and $ \nabla (\DMNeg (p \ResImp q) \land \DMNeg (q \ResImp p ))$ are not inter-derivable.  
%The latter formula is not explosive, while $\nabla \bot $ is. To see this,
%consider a valuation such that $v(p) = \both$ and $v(q) = \neither$. Then 
%$v (\nabla (\DMNeg (p \ResImp q) \land \DMNeg (q \ResImp p ))) 
%= \nabla (\DMNeg (\both \ResImp \neither) \land \DMNeg (\neither \ResImp \both )) 
%= \nabla (\DMNeg  \neither \land \DMNeg  \both ) 
%= \nabla (  \neither \land  \both ) = \nabla \fvalue = \etvalue$.    
%\end{proof}

Let us now resume our discussion about the logics of order.
Having obtained a new
algebra $\PPSixResImp$, we can consider
the variety it generates, denoted  $\PPImpResVar$,
and 
the order-preserving logics associated to it.
We denote by 
$\PPResImplogicPrimeMC$
and
$\PPResImplogicSingle$,
respectively,
%$\PPResImplogicOrderMC$ 
the
\SetSet{} and
\SetFmla{}
order-preserving
logics associated to $\PPImpResVar$
(cf. Subsection~\ref{sec:ordered-algebras-logics}
for the precise definitions).
By the residuation property, we clearly have that
$\ResImp$ is an implication in
    these logics
    (cf. Definition~\ref{def:implication}
    and Proposition~\ref{fact:residuated-is-imp}).
    %

%Then
%
%%\todo{recall what it means to be ``an implication''--I don't know it! (Um)}
%
%\begin{proposition}
%    $\ResImp$ is an implication
%    (cf. Definition~\ref{def:implication})
%    in
%    $\PPResImplogic$.
%\end{proposition}
%\begin{proof}
%    It follows directly from the fact
%    that $\PPImpResVar$ is a variety of
%    residuated lattices
%    (since $\ResImp$ is a relative pseudo-complement). \todo{Um: and maybe omit this proof?}
%\end{proof}

\begin{proposition}
\label{prop:consexp}
$\PPResImplogicPrimeMC$
and
    $\PPResImplogicSingle$
%$\PPResImplogicOrderMC$ 
    %= \Log\{ \langle \PPSixResImp, \Upset{a} \rangle \mid a \in \SixSet\}$
    are conservative expansions of
    $\PPlogicPrimeMC$
    and
    $\PPlogic$,
    %$\PPlogicMC$ 
    respectively.
    Moreover,
    they are self-extensional.
\end{proposition}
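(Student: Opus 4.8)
The plan is to treat the self-extensionality claim first, since it is essentially free, and then to reduce conservativity to an observation about reducts. Both $\PPResImplogicPrimeMC$ and $\PPResImplogicSingle$ are, by construction, the order-preserving logics $\IneqPresMC{\PPImpResVar}$ and $\vdash_{\PPImpResVar}^{\leq}$ over the variety $\PPImpResVar = \Variety(\PPSixResImp)$. Every member of $\PPImpResVar$ has a bounded distributive lattice reduct -- this property is equationally definable, hence preserved by $\HOp$, $\SOp$, $\POp$, and $\PPSixResImp$ has a Heyting (in particular, bounded distributive lattice) reduct -- so Definition~\ref{def:order-pres-set-set-logic} applies and Proposition~\ref{prop:order-are-selfext} immediately yields that both logics are self-extensional.

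For conservativity it suffices, by the symmetry of the argument, to show that $\PPResImplogicPrimeMC$ is a conservative expansion of $\PPlogicPrimeMC$; the $\SetFmla$ case will follow because $\PPResImplogicSingle$ and $\PPlogic$ are the respective $\SetFmla$ companions. So the core claim I would prove is: for all sets $\FmSetA,\FmSetB$ of $\DMoSig$-formulas, $\FmSetA \PPResImplogicPrimeMC \FmSetB$ iff $\FmSetA \PPlogicPrimeMC \FmSetB$. By Definition~\ref{def:order-pres-set-set-logic}, the former holds iff some $\DMoSig$-inequality $\bigwedge \FmSetA' \leq \bigvee \FmSetB'$ (with $\FmSetA'\subseteq\FmSetA$, $\FmSetB'\subseteq\FmSetB$ finite) is valid in $\PPImpResVar$, and the latter holds iff such an inequality is valid in $\PPVar$. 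Hence everything reduces to: \emph{a $\DMoSig$-inequality is valid in $\PPImpResVar$ iff it is valid in $\PPVar$}. Since an inequality abbreviates an equation and $\HOp,\SOp,\POp$ preserve satisfaction of equations, validity in $\PPImpResVar = \Variety(\PPSixResImp)$ is equivalent to satisfaction by $\PPSixResImp$; likewise, using $\PPVar = \Variety(\PPSix)$ from \cite[Prop. 3.8]{Gomes2022}, validity in $\PPVar$ is equivalent to satisfaction by $\PPSix$. Finally, whether a $\DMoSig_\GenImp$-algebra satisfies a $\DMoSig$-inequality depends only on the interpretations of the connectives of $\DMoSig$, i.e.\ only on its $\DMoSig$-reduct; and $\PPSixResImp$ was defined precisely as an expansion of $\PPSix$, so its $\DMoSig$-reduct is $\PPSix$. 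Thus $\PPSixResImp$ and $\PPSix$ satisfy the same $\DMoSig$-inequalities, and chaining the equivalences delivers both the expansion and its conservativity.

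The $\SetFmla$ statement then follows at once: for $\DMoSig$-formulas, $\FmSetA \PPResImplogicSingle \FmB$ iff $\FmSetA \PPResImplogicPrimeMC \{\FmB\}$ iff $\FmSetA \PPlogicPrimeMC \{\FmB\}$ iff $\FmSetA \PPlogic \FmB$. I do not expect a genuine obstacle here; the only point demanding care is bookkeeping with the signatures. One must notice that the order-preserving consequence relation is defined entirely through inequalities between conjunctions and disjunctions of the given formulas, so that when every formula in play lies in $\DMoSig$ the new connective $\GenImp$ contributes nothing -- at which point the whole proposition collapses to the triviality that $\PPSixResImp$ and $\PPSix$ have the same $\DMoSig$-reduct, together with the already-recorded self-extensionality of order-preserving logics.
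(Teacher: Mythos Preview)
Your proof is correct. For self-extensionality you invoke Proposition~\ref{prop:order-are-selfext}, exactly as the paper does. For conservativity the paper instead appeals to the matrix characterization of Proposition~\ref{prop:single-gen-variety-matrices} (both logics are determined by the matrices $\langle \PPSixResImp, D\rangle$, resp.\ $\langle \PPSix, D\rangle$, with $D$ a prime filter, and these classes differ only by passing to the $\DMoSig$-reduct), whereas you bypass matrices entirely and work directly from Definition~\ref{def:order-pres-set-set-logic}, reducing validity of a $\DMoSig$-inequality in $\PPImpResVar$ to its satisfaction in the generator $\PPSixResImp$, and likewise for $\PPVar$ and $\PPSix$. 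Both routes rest on the same elementary observation---that $\PPSix$ is the $\DMoSig$-reduct of $\PPSixResImp$---so the difference is purely presentational: your argument is slightly more self-contained (it avoids the Prime Filter Theorem hidden in Proposition~\ref{prop:single-gen-variety-matrices}), while the paper's one-line appeal to the matrix characterization is terser.
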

\begin{proof}
    That both logics are conservative expansions of
    $\PPlogic$ follows
    directly from their matrix characterizations
    (Proposition \ref{prop:single-gen-variety-matrices}), while
    self-extensionality 
    follows from Proposition~\ref{prop:order-are-selfext}.
\end{proof}

{
\begin{remark}
\label{rem:just-imp-def}
Observe that 
$p \lor q \sequent_{\PPResImplogicPrimeMC} p, q$ (by the above proposition), while
$\notsequent_{\PPResImplogicPrimeMC} (p \lor q) \GenImp p, q$
(take a valuation $v$ such that $v(p) := \both$ and $v(q) := \neither$),
showing that $\ResImp$ is not
a classic-like implication in 
$\PPResImplogicPrimeMC$
(recall Definition~\ref{def:classicalimplication}).
This is one of the main reasons
for which we chose
Definition~\ref{def:implication}
as a more general notion of
implication in \SetSet{}
instead of the one in
Definition~\ref{def:classicalimplication}.
\end{remark}
}

{
We may actually improve Proposition \ref{prop:consexp} by removing some redundancies from the classes of matrices that characterize the logics thereby considered:
}
%(see Proposition~ \ref{prop:single-gen-variety-matrices}) 

\begin{proposition}
\label{prop:redmatrdeflog}
The logics
$\PPResImplogicPrimeMC$ and $\PPResImplogic$
are determined by the class 
of matrices
$\{ 
            \langle \PPSixResImp, \Upset{\fvalue} \rangle,
            \langle \PPSixResImp, \Upset{\both} \rangle,
            \langle \PPSixResImp, \Upset{\etvalue} \rangle
        \}$.
%The following characterizations hold:
%\vspace{-2mm}
%    \begin{enumerate}
%        %\item $\PPResImplogicOrderMC = \sequent_{\{ 
%        %    \langle \PPSixResImp, \Upset{\fvalue} \rangle,
%        %    \langle \PPSixResImp, \Upset{\both} \rangle,
%        %    \langle \PPSixResImp, \Upset{\tvalue} \rangle,
%        %    \langle \PPSixResImp, \Upset{\etvalue} \rangle
%        %\}}$
%        \item $\PPResImplogicPrimeMC = \sequent_{\{ 
%            \langle \PPSixResImp, \Upset{\fvalue} \rangle,
%            \langle \PPSixResImp, \Upset{\both} \rangle,
%            \langle \PPSixResImp, \Upset{\etvalue} \rangle
%        \}}$.
%        \item
%        $\PPResImplogic \; = \; \vdash_{\{ 
%            \langle \PPSixResImp, \Upset{\fvalue} \rangle,
%            \langle \PPSixResImp, \Upset{\both} \rangle,
%            \langle \PPSixResImp, \Upset{\etvalue} \rangle
%        \}}$.
%        %classes of matrices
%        %of items (1) and (2).
%    \end{enumerate}
    Moreover, all the latter matrices are
    reduced.
\end{proposition}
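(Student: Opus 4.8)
The plan is to take the matrix characterization already available and prune it by exploiting a symmetry of $\PPSixResImp$. First I would invoke Proposition~\ref{prop:single-gen-variety-matrices} with $\mathsf{K} = \PPImpResVar = \Variety(\{\PPSixResImp\})$: since the lattice reduct of $\PPSixResImp$ (which is that of $\PPSix$; see Figure~\ref{fig:pp_algebra}) is finite and distributive, $\PPResImplogicPrimeMC$ is determined by the class of all $\langle \PPSixResImp, D\rangle$ with $D$ a prime filter of $\PPSixResImp$. I would then enumerate those prime filters. As the lattice is finite, every filter is principal, and the proper ones are $\Upset{\fvalue}$, $\Upset{\neither}$, $\Upset{\both}$, $\Upset{\tvalue}$ and $\Upset{\etvalue}$; a direct check shows that among these exactly $\Upset{\tvalue}$ fails to be prime, because $\both \lor \neither = \tvalue \in \Upset{\tvalue}$ while $\both, \neither \notin \Upset{\tvalue}$. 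So at this point $\PPResImplogicPrimeMC$ is determined by $\{ \langle \PPSixResImp, \Upset{\fvalue} \rangle, \langle \PPSixResImp, \Upset{\neither} \rangle, \langle \PPSixResImp, \Upset{\both} \rangle, \langle \PPSixResImp, \Upset{\etvalue} \rangle \}$.

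The crucial step is to discard $\langle \PPSixResImp, \Upset{\neither} \rangle$ as redundant. For this I would consider the permutation $\sigma$ of $\SixSet$ that transposes $\both$ with $\neither$ and fixes $\efvalue, \fvalue, \tvalue, \etvalue$, and check that it is an automorphism of $\PPSixResImp$: it is the unique nonidentity automorphism of the lattice reduct, it fixes $\top$ and $\bot$, it commutes with $\DMNeg$ and with $\cons$ (both of which already fix $\both$ and $\neither$, while $\sigma$ fixes every other value these operations take), and it commutes with $\ResImp$ because any lattice automorphism preserves the residuum of meet, $a \ResImp b = \max\{ c : a \land c \leq b \}$. Since $\sigma(\Upset{\both}) = \Upset{\sigma(\both)} = \Upset{\neither}$, the matrices $\langle \PPSixResImp, \Upset{\both} \rangle$ and $\langle \PPSixResImp, \Upset{\neither} \rangle$ are isomorphic and hence determine the same \SetSet{} logic, so the latter may be removed from the defining class. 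This yields that $\PPResImplogicPrimeMC$ is determined by $\{ \langle \PPSixResImp, \Upset{\fvalue} \rangle, \langle \PPSixResImp, \Upset{\both} \rangle, \langle \PPSixResImp, \Upset{\etvalue} \rangle \}$. Since $\PPResImplogic$ is the \SetFmla{} companion of $\PPResImplogicPrimeMC$, and the \SetFmla{} logic determined by a class of matrices is the \SetFmla{} companion of the \SetSet{} logic it determines, the same three matrices determine $\PPResImplogic$.

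For the reducedness claim I would use that $\PPSixResImp$ is simple, as observed right after the truth table of $\ResImp$: every congruence of $\PPSixResImp$ is a congruence of its $\DMoSig$-reduct $\PPSix$, whose only nontrivial proper congruence $\theta$ (identifying exactly $\tvalue, \fvalue, \both, \neither$) is incompatible with $\ResImp$, since $\both \theta \neither$ whereas $\both \ResImp \both = \etvalue$ and $\both \ResImp \neither = \neither$ are not $\theta$-related; hence the congruences of $\PPSixResImp$ are just the identity and $\SixSet \times \SixSet$. For each $D \in \{ \Upset{\fvalue}, \Upset{\both}, \Upset{\etvalue} \}$ we have $\varnothing \neq D \subsetneq \SixSet$, so $\SixSet \times \SixSet$ is not compatible with $\langle \PPSixResImp, D \rangle$; therefore the Leibniz congruence $\LeibCong{\langle \PPSixResImp, D \rangle}$, being the largest compatible congruence, is the identity, that is, each such matrix coincides with its own reduced version. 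The only steps requiring a little care are the prime-filter enumeration and the verification that $\sigma$ preserves $\ResImp$; the genuinely load-bearing ingredient is the simplicity of $\PPSixResImp$, and I foresee no real obstacle.
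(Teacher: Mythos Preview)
Your proposal is correct and follows essentially the same route as the paper: invoke Proposition~\ref{prop:single-gen-variety-matrices}, discard $\Upset{\tvalue}$ as non-prime, discard $\Upset{\neither}$ via the matrix isomorphism with $\Upset{\both}$, and obtain reducedness from the simplicity of $\PPSixResImp$. You simply spell out in more detail the automorphism witnessing the isomorphism and the argument for simplicity, both of which the paper states without elaboration.
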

\begin{proof}
Note that
any matrix $\langle \PPSixResImp, D \rangle$ with $D \neq \SixSet$ is reduced, because $\PPSixResImp$ is a simple algebra. 
Regarding $\PPResImplogicPrimeMC$,
we observe that two of the matrices appearing in
Proposition~\ref{prop:single-gen-variety-matrices} may be safely omitted.
Obviously this holds for
$\langle \PPSixResImp, \Upset{\efvalue} \rangle$, which
defines a trivial logic.
Note further that $\langle \PPSixResImp, \Upset{\both} \rangle$
and $\langle \PPSixResImp, \Upset{\neither} \rangle$
are isomorphic, and thus determine the same
logic; so one of them may also be omitted.
Finally, note that the matrix with set of designated values 
$\Upset \tvalue$ is not considered as
we only need to take prime filters into account.
Thus only the matrices listed in the statement  remain.
Regarding $\PPResImplogic$, the result follows from the observation that $\PPResImplogic$ 
is the \SetFmla{} companion of the \SetSet{}
order-preserving logics.
    %$\langle \PPSixResImp, \Upset{\tvalue} \rangle$, 
    %$\langle \PPSixResImp, \Upset{\both} \rangle$
    %and
    %$\langle \PPSixResImp, \Upset{\neither} \rangle$
    %are reduced, because
    %their $\PPSix$-reducts are reduced.
    %The latter two are isomorphic.
    %Claim 1: 
    %$\langle \PPSixResImp, \Upset{\etvalue} \rangle$
    %is reduced.
\end{proof}

There is another \SetSet{} logic that has
$\PPResImplogic$ as \SetFmla{} companion, namely the one determined by
the class of matrices
based on the  algebra $\PPSixResImp$
having as designated sets
the principal filters of $\PPSixResImp$. 
We shall denote this logic by
$\PPResImplogicOrderMC$.
By a similar argument to the one above,
we see that
$\PPResImplogicOrderMC$ %= \sequent_{\{ 
            %\langle \PPSixResImp, \Upset{\fvalue} \rangle,
            %\langle \PPSixResImp, \Upset{\both} \rangle,
            %\langle \PPSixResImp, \Upset{\tvalue} \rangle,
            %\langle \PPSixResImp, \Upset{\etvalue} \rangle
        %\}}$
        is determined by the 
        class of matrices
        $\{ \langle \PPSixResImp, \Upset{\fvalue} \rangle,
            \langle \PPSixResImp, \Upset{\both} \rangle,
            \langle \PPSixResImp, \Upset{\tvalue} \rangle,
            \langle \PPSixResImp, \Upset{\etvalue} \rangle \}$
            --- that is, the matrices from
            Proposition~\ref{prop:redmatrdeflog}
            together with the matrix
            $\langle \PPSixResImp, \Upset{\tvalue} \rangle$.
In this logic, however, $\GenImp$
is not an implication in our sense,
since $\sequent_\PPResImplogicOrderMC (p \lor q) \GenImp (p \lor q)$, but
$p \lor q \, \notsequent_{\PPResImplogicOrderMC}\, p,q$.
Still, we will include $\PPResImplogicOrderMC$ in our next considerations in view of its close
relationship with~$\PPResImplogic$
and because our techniques will also apply very
naturally to it, as we shall see.

The above results do not clarify whether
one could find a single matrix to characterize
those logics (as it happens with $\PPlogic$):
the next result rules out this possibility.

\begin{therm}
\label{fact:non-single-matrix-pp6imp}
    The logic $\PPResImplogic$
    is not determined by a single logical matrix.
    The same holds also for the logics 
    $\PPResImplogicPrimeMC$ and
    $\PPResImplogicOrderMC$.
%    None among the logics $\PPResImplogicPrimeMC$,
%    $\PPResImplogicOrderMC$
%    and
%    $\PPResImplogic$
%    %and
%    is
%    determined by a
%    single logical matrix.
\end{therm}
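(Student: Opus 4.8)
The plan is to invoke the classical criterion of Shoesmith and Smiley~\cite{shoesmith_smiley:1978}: any (generalized) consequence relation determined by a single matrix satisfies \emph{cancellation}, i.e.\ whenever $\FmSetA_1,\FmSetA_2\sequent\FmSetB_1,\FmSetB_2$ and no propositional variable occurs both in $\FmSetA_1\cup\FmSetB_1$ and in $\FmSetA_2\cup\FmSetB_2$, then already $\FmSetA_1\sequent\FmSetB_1$ or $\FmSetA_2\sequent\FmSetB_2$ --- the point being that two valuations in one and the same matrix refuting the two halves can be glued along disjoint sets of variables into a single valuation refuting the whole statement. Exhibiting one failure of cancellation for $\PPResImplogicPrimeMC$ (and for $\PPResImplogicOrderMC$) therefore settles the \SetSet{} cases; for the \SetFmla{} logic $\PPResImplogic$ one uses the obvious counterpart of the same gluing remark: if $\vdash$ is determined by a single matrix $\DefMat$, $\FmSetA_1$ is satisfiable in $\DefMat$, and $\Props(\FmSetA_1)$ is disjoint from $\Props(\FmSetA_2\cup\{\FmC\})$, then $\FmSetA_1,\FmSetA_2\vdash\FmC$ already yields $\FmSetA_2\vdash\FmC$.

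For the witnesses, fix distinct variables $p,q$ and put
\[
\varphi := p\land\DMNeg p,\qquad \psi := \nabla q\ (=q\lor\DMNeg\cons q),\qquad \chi := (\DMNeg q\GenImp q)\lor\DMNeg q .
\]
A routine computation with the truth-tables of $\PPSixResImp$ yields three facts. First, $\varphi$ is not constantly $\bot$ (its value at $p=\both$ is $\both$); hence $\varphi\notsequent\varnothing$ and $\varphi\not\vdash\bot$, so $\{\varphi\}$ is satisfiable in every matrix determining these logics. Second, $\psi\leq\chi$ is \emph{not} valid (at $q=\fvalue$ one gets $\psi=\etvalue$ and $\chi=\tvalue$); hence $\psi\notsequent\chi$ and $\psi\not\vdash\chi$. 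Third, $\varphi\land\psi\leq\chi$ \emph{is} valid: $\varphi$ only takes values in $\{\efvalue,\fvalue,\both,\neither\}$ (all $\leq\tvalue$) and $\psi$ only takes values in $\{\efvalue,\etvalue\}$, so $\varphi\land\psi=\efvalue$ when $q=\efvalue$, while for $q\neq\efvalue$ we have $\psi=\etvalue$ and hence $\varphi\land\psi=\varphi\leq\tvalue\leq\chi$ (one checks $\chi\geq\tvalue$ for $q\neq\efvalue$); therefore $\varphi,\psi\sequent\chi$ and $\varphi,\psi\vdash\chi$. These translations use Proposition~\ref{prop:redmatrdeflog} together with Definitions~\ref{def:order-pres-set-set-logic} and \ref{def:order-pres-set-fmla-logic}; for $\PPResImplogicOrderMC$, whose determining matrices are the $\langle\PPSixResImp,\Upset{a}\rangle$ with $\Upset{a}$ a principal filter, validity of $\varphi\land\psi\leq\chi$ still gives $\varphi,\psi\sequent_{\PPResImplogicOrderMC}\chi$ (each $\Upset{a}$ is $\land$-closed and upward closed), while $\langle\PPSixResImp,\Upset{\both}\rangle$ and $\langle\PPSixResImp,\Upset{\etvalue}\rangle$ are among those matrices and witness $\varphi\notsequent_{\PPResImplogicOrderMC}\varnothing$ and $\psi\notsequent_{\PPResImplogicOrderMC}\chi$ respectively.

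With these in hand the contradictions are immediate. Suppose $\PPResImplogicPrimeMC=\sequent_\DefMat$ (or $\PPResImplogicOrderMC=\sequent_\DefMat$) for a single matrix $\DefMat$. By the first fact some $\DefMat$-valuation designates $\varphi$; by the second some $\DefMat$-valuation designates $\psi$ but not $\chi$; gluing the two along the disjoint variable sets $\{p\}$ and $\{q\}$ produces a $\DefMat$-valuation designating $\varphi$ and $\psi$ but not $\chi$, i.e.\ $\varphi,\psi\notsequent_\DefMat\chi$, contradicting the third fact. Likewise, if $\PPResImplogic=\vdash_\DefMat$, the first fact supplies a $\DefMat$-valuation designating $\varphi$, the second a $\DefMat$-valuation designating $\psi$ but not $\chi$, and the same gluing yields $\varphi,\psi\not\vdash_\DefMat\chi$, against the third fact. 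Hence none of the three logics is determined by a single logical matrix.

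The only delicate ingredient is the choice of the witnessing formulas: the constraints are rather rigid because $\PPSixResImp$ has $\{\efvalue,\etvalue\}$ and $\{\efvalue,\both,\etvalue\}$ (and the mirror image of the latter) as subalgebras, which sharply limits the value-patterns realizable by terms --- in particular a single-variable $\chi$ cannot do the job, forcing the slightly roundabout definitions above. Everything else is routine bookkeeping with the matrix semantics.
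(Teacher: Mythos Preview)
Your proof is correct and follows essentially the same strategy as the paper's: both establish the failure of the gluing property (you invoke cancellation after Shoesmith--Smiley; the paper invokes uniformity after W\'ojcicki) that any single-matrix logic must satisfy, and both reduce the $\SetSet$ cases to the $\SetFmla$ companion. The concrete witnesses differ --- the paper takes the satisfiable side to be $p \land \DMNeg p \land q \land \DMNeg q \land \DMNeg\cons(p \GenImp q)$ (two variables) against the conclusion $r \lor \DMNeg r$, whereas your satisfiable premise $p \land \DMNeg p$ needs only one variable, with $\nabla q$ and $(\DMNeg q \GenImp q)\lor \DMNeg q$ carrying the failing entailment on the other side; both choices check out. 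One quibble: your closing remark that ``a single-variable $\chi$ cannot do the job'' is puzzling, since your $\chi$ \emph{is} a one-variable formula in $q$; presumably you meant that no obvious simpler term works, but as written the sentence is inaccurate.
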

\begin{proof}
    Since the two \SetSet{} logics share the
    same $\SetFmla$ companion
    $\PPResImplogic$,
    it is enough to prove that the latter
    is not determined by a single logical matrix.
{
    To that effect, we use the fact that any %finitary
    \SetFmla{} logic~$\vdash$ that is determined by a single logical matrix respects the so-called \emph{uniformity} property (cf.\ \cite{Woj:69}), according to which $\FmSetA, \FmSetB
    \vdash \FmA$ implies
    $\FmSetA \vdash \FmA$
    for all $\FmSetA\cup\FmSetB\cup\{ \FmA \} \subseteq 
    \LangSet{\Sigma}$
    such that (1) $\Props(\FmSetA \cup \{ \FmA \})
        \cap \Props(\FmSetB) = \varnothing$ 
        and (2) $\FmSetB \not\vdash \FmB$
        for some $\FmB \in \LangSet{\Sigma}$.
    
%    For that, we use the fact that any \SetFmla{} logic that is determined by a single logical matrix also respects the so-called \emph{cancellation} property (\cite{wojcicki1988,shoesmith_smiley:1978}). A $\Sigma$-logic $\vdash$ respects cancellation
%    when
%    $\FmSetA, \bigcup_{i \in I} \FmSetB_i
%    \vdash \FmA$ implies $\FmSetA \vdash \FmA$
%    for all $\FmSetA\cup\{ \FmA \}\cup\bigcup_{i\in I}\FmSetB_i \subseteq 
%    \LangSet{\Sigma}$    such that
%    \begin{enumerate}
%        \item $\Props(\FmSetA \cup \{ \FmA \})
%        \cap \Props(\bigcup_{i \in I} \FmSetB_i) = \varnothing$
%        \item $\Props(\FmSetB_i) \cap \Props(\FmSetB_j) = \varnothing$, whenever $i \neq j$
%        \item 
%           for all $i\in I$, we have
%        $\FmSetB_i \not\vdash \FmB$
%        for some $\FmB \in \LangSet{\Sigma}$.
%    \end{enumerate}
Consider now $\FmSetA := \varnothing$, $\FmSetB := \{ p \land \DMNeg p \land q \land \DMNeg q \land \DMNeg\cons(p \GenImp q) \}$ and $\FmA := r \lor \DMNeg r$.  Condition (1) is obviously satisfied.  To see that condition (2) is also satisfied, one may consider the matrix $\langle \PPSixResImp, \Upset \fvalue \rangle$, 
%    Consider now
%    $I = \{1\}$,
%    $\FmSetA = \varnothing$,
%    $\FmSetB_1 = \{ p \land \DMNeg p \land q \land \DMNeg q \land \DMNeg\cons(p \GenImp q) \}$
%    and $\FmA = r \lor \DMNeg r$.
%    Conditions (1) and (2) are obviously satisfied.
%    Condition (3) is satisfied too,
%    just consider the matrix
%    $\langle \PPSixResImp, \Upset \fvalue \rangle$
%    to see that
%    $\FmSetB_1 \not\vdash_{\PPResImplogicSingle} \bot$.
    and to see that
    $\not\vdash_\PPResImplogicSingle r \lor \DMNeg r$
    it suffices to consider the matrix $\langle \PPSixResImp, \Upset \both \rangle$.
    Yet, one may now use the semantics of the order-preserving logic associated to $\PPImpResVar$ to confirm that
    $p \land \DMNeg p \land q \land \DMNeg q \land \DMNeg\cons(p \GenImp q) \vdash_{\PPResImplogicSingle} r \lor \DMNeg r$.
}
\end{proof}

As earlier anticipated, we now proceed to show
that $\PPResImplogicSingle$ is actually the only 
self-extensional
\SetFmla{} logic 
determined by
a class of matrices
based on the algebra $\PPSixResImp$
and principal filters.

\begin{proposition}
\label{prop:notselfnew}

   For $ \mathcal{V} \subseteq  \SixSet $,
let %$\sequent_{\langle \PPSixResImp, \Upset{c} \rangle}$
%
%be the logic determined by the family of matrices.
%
  $\vdash_{\mathcal{V}}$ be the \SetFmla{} logic determined by $\left\{ \langle \PPSixResImp, \up a \rangle : a \in  \mathcal{V} \right\}$. %with $X\subseteq  \SixSet $.
If the logic 
   %$\sequent_{\langle \PPSixResImp, \Upset{c} \rangle}$
   $\vdash_{\mathcal{V}}$
   is  self-extensional,
    then
    $\vdash_{\mathcal{V}} \, = \, \PPResImplogicSingle$.
   %then either  %$\sequent_X = \sequent_\leq$ (and thus 
  %  $X\in \{\SixSet,\SixSet\setminus\{\neither\},\SixSet\setminus\{\both\}$ or
  % % or $\sequent_X = \sequent_{\land\lor}$
  %  %(and thus
  %  $X\in \{\SixSet\setminus\{\tvalue\},\SixSet\setminus\{\tvalue,\neither\},\SixSet\setminus\{\tvalue,\both\}$.
%
%$\{\fvalue,\both,\etvalue\}\subseteq \mathcal{V}$ or 
%$\{\fvalue,\neither,\etvalue\}\subseteq \mathcal{V}$.
   
   %).
\end{proposition}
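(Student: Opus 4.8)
The plan is to reduce the statement to a fact about the interderivability relation of $\vdash_{\mathcal{V}}$. Write $M_a := \langle \PPSixResImp, \Upset a \rangle$, so that $\vdash_{\mathcal{V}} = \bigcap_{a\in\mathcal{V}} \vdash_{M_a}$. Both $\vdash_{\mathcal{V}}$ and $\PPResImplogicSingle$ are finitary (each is given by finitely many finite matrices, using Proposition~\ref{prop:redmatrdeflog}) and possess a conjunction $\land$, namely the lattice meet: since each $\Upset a$ is a lattice filter one has $\varphi\land\psi\vdash\varphi$, $\varphi\land\psi\vdash\psi$ and $\varphi,\psi\vdash\varphi\land\psi$. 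For any finitary logic with such a conjunction, $\FmSetA\vdash\varphi$ holds iff $\bigwedge\FmSetA_0 \, \SymLogEquiv \, \bigwedge\FmSetA_0\land\varphi$ for some finite $\FmSetA_0\subseteq\FmSetA$, so the logic is completely determined by its relation $\SymLogEquiv$. Hence it suffices to prove $\SymLogEquiv_{\vdash_{\mathcal{V}}} \, = \, \SymLogEquiv_{\PPResImplogicSingle}$. Since $\PPResImplogicSingle$ is the order-preserving logic of $\PPImpResVar = \Variety(\PPSixResImp)$ (Definition~\ref{def:order-pres-set-fmla-logic}), a pair $(\varphi,\psi)$ lies in $\SymLogEquiv_{\PPResImplogicSingle}$ exactly when $\varphi$ and $\psi$ induce the same term operation on $\PPSixResImp$; and this already forces $\varphi\SymLogEquiv_{M_a}\psi$ for every $a$, hence $\varphi\SymLogEquiv_{\vdash_{\mathcal{V}}}\psi$. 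So $\SymLogEquiv_{\vdash_{\mathcal{V}}}\supseteq\SymLogEquiv_{\PPResImplogicSingle}$ always, and the work is to establish the converse inclusion under the hypothesis of self-extensionality (the degenerate case in which $\Upset a=\SixSet$ for all $a\in\mathcal{V}$, i.e.\ $\vdash_{\mathcal{V}}$ is the inconsistent logic, being tacitly set aside).

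For the converse I would argue by contradiction. Fix some $a\in\mathcal{V}$ with $\Upset a$ a proper filter; then $M_a$ is a matrix model of $\vdash_{\mathcal{V}}$. Assuming $\vdash_{\mathcal{V}}$ self-extensional, $\theta := \, \SymLogEquiv_{\vdash_{\mathcal{V}}}$ is a congruence of the formula algebra. Suppose $\varphi\mathrel\theta\psi$ while $\varphi$ and $\psi$ differ as operations on $\PPSixResImp$, and fix a $\PPSixResImp$-valuation $v$ with $c := v(\varphi) \neq v(\psi) =: c'$. The crucial fact to establish is that every proper lattice filter $D$ of $\PPSixResImp$ makes $\langle\PPSixResImp, D\rangle$ \emph{parametrically monadic}: for any two distinct $c, c' \in \SixSet$ there are a binary term $t(x,z)$ and an element $d \in \SixSet$ with $t^{\PPSixResImp}(c, d) \in D$ and $t^{\PPSixResImp}(c', d) \notin D$. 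Granting this, pick such $t, d$ for our $c, c'$ and $D = \Upset a$, choose a variable $z$ not occurring in $\varphi$ or $\psi$, and extend $v$ by $v(z) := d$; then $v(t(\varphi, z)) = t^{\PPSixResImp}(c, d) \in \Upset a$ whereas $v(t(\psi, z)) = t^{\PPSixResImp}(c', d) \notin \Upset a$, so $t(\varphi, z) \not\vdash_{M_a} t(\psi, z)$. But $\theta \subseteq \SymLogEquiv_{M_a}$ (because $\vdash_{\mathcal{V}} \subseteq \vdash_{M_a}$) and $\theta$ is a congruence, so $t(\varphi, z) \mathrel\theta t(\psi, z)$, hence $t(\varphi, z) \SymLogEquiv_{M_a} t(\psi, z)$ — a contradiction. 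Therefore $\theta$ is contained in the relation ``induce the same operation on $\PPSixResImp$'', and with the previous paragraph $\theta = \SymLogEquiv_{\PPResImplogicSingle}$, whence $\vdash_{\mathcal{V}} = \PPResImplogicSingle$.

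The main obstacle is the parametric-monadicity claim. For most pairs of elements, one of $x$, $\DMNeg x$, $\cons x$ already lands on different sides of the given filter, so the check is routine; the delicate point is the pair $\{\neither, \both\}$, which is interchanged by the only non-trivial automorphism of $\PPSixResImp$ — fixing $\efvalue,\etvalue$ and fixing each of $\Upset\fvalue$ and $\Upset\tvalue$ setwise — so that no unary term separates $\neither$ from $\both$ relative to those two filters, and the Heyting implication together with a parameter is genuinely needed (e.g.\ $t(x,z)=\cons(x\ResImp z)$ with $d=\neither$ for $D=\Upset\fvalue$, and $t(x,z)=x\ResImp z$ with $d=\neither$ for $D=\Upset\tvalue$). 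An alternative, less computational route to the converse inclusion would invoke \citet*[Sec. 3]{Jansana2006}: a finitary \SetFmla{} logic with conjunction is self-extensional iff it coincides with the order-preserving logic of the class of the semilattice-ordered algebra reducts of its reduced matrix models; since $\vdash_{\mathcal{V}}$ is determined by matrices over $\PPSixResImp$, that class is included in $\PPImpResVar$, while it contains $\PPSixResImp$ itself (each $M_a$ with $\Upset a$ proper being reduced, as $\PPSixResImp$ is simple), so it generates $\PPImpResVar$; hence $\vdash_{\mathcal{V}} = \vdash^{\leq}_{\PPImpResVar} = \PPResImplogicSingle$. Either way the statement follows.
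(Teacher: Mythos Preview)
Your proof is correct. The paper's own proof is essentially your alternative route: it invokes Jansana's one-to-one correspondence (Thm.~3.7 in \cite{Jansana2006}) between finitary self-extensional extensions of $\PPResImplogicSingle$ and subvarieties of $\PPImpResVar$, observes that each $M_a$ with $a\neq\efvalue$ is a reduced model of $\vdash_{\mathcal{V}}$ (since $\PPSixResImp$ is simple), and concludes that the subvariety associated with $\vdash_{\mathcal{V}}$ contains $\PPSixResImp$, hence equals $\PPImpResVar$.

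Your main route is genuinely different and more elementary: you bypass the abstract correspondence and argue directly that, under self-extensionality, $\SymLogEquiv_{\vdash_{\mathcal{V}}}$ must coincide with equational equivalence in $\PPSixResImp$, which (together with the fact that a finitary logic with conjunction is determined by its interderivability relation) yields the result. The key step --- what you call parametric monadicity of $\langle\PPSixResImp,D\rangle$ --- is precisely the statement that the Leibniz congruence of each such matrix is the identity, i.e.\ that the matrix is reduced; this follows at once from simplicity of $\PPSixResImp$, and invoking that would spare you the case analysis. (Your remark that ``one of $x,\DMNeg x,\cons x$'' already separates most pairs is a bit loose: for $D=\Upset\fvalue$ and the pair $\{\tvalue,\both\}$ none of those three works, though e.g.\ $\cons(x\ResImp\DMNeg x)$ does. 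This does not affect the overall argument.) What your direct route buys is independence from the Jansana machinery; what the paper's route buys is brevity and a transparent link to the subvariety lattice.
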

\begin{proof}
    By~\citet[Thm.~3.7]{Jansana2006}, the finitary self-extensional extensions of 
    $\PPResImplogicSingle$ are in a one-to-one correspondence with the subvarieties of $\PPImpResVar$.
    The matrices in $\{ \langle \PPSixResImp, \up a \rangle : a \in  \mathcal{V} \}$
    are reduced models
    of 
    $\vdash_{\mathcal{V}}$,
    %$\sequent_{\langle \PPSixResImp, \Upset{c} \rangle}$, 
    implying that
    $\PPSixResImp$ belongs to 
    the subvariety $\mathsf{K} \subseteq \Variety (\PPSixResImp)$ associated with 
    $\vdash_{\mathcal{V}}$.
    %$\sequent_{\langle \PPSixResImp, \Upset{c} \rangle}$.
    Then $\Variety (\PPSixResImp) \subseteq \mathsf{K}$, so 
    $\Variety (\PPSixResImp) = \mathsf{K}$,
    and thus
$\PPResImplogicSingle \,=\, \vdash_{\mathcal{V}} %\Log\{ \langle \PPSixResImp, \Upset{c} \rangle\}
$.
%, contradicting
%    Theorem~\ref{fact:non-single-matrix-pp6imp}.
\end{proof}

Adopting the terminology of~\cite{Jansana2006},
%R. Jansana. Self-extensional logics with a conjunction. Studia Logica, 84, 63–104, 2006.
we may say that $\PPResImplogicSingle$ is \emph{semilattice-based} relative to $\land$ and to $\PPImpResVar$. This observation allows us
to obtain further information:  for instance,
we know by~\citet[Thm.~3.12]{Jansana2006}
that the class of algebra reducts of reduced
matrices for $\PPResImplogicSingle$ is precisely $\PPImpResVar$. %the variety generated by
%$\PPSixResImp$
Semilattice-based logics are often non-algebraizable but have an algebraizable
companion, i.e.~an extension that shares the same algebraic models. We establish this
for $\PPResImplogicSingle$ below.

\begin{proposition}
\label{prop:equi}
    $\PPResImplogicSingle$ is equivalential but not algebraizable. The following is a set of equivalence formulas for it: 
    $$\Xi(x,y) \SymbDef \{ x \GenImp y, y \GenImp x, 
    \circ (x \GenImp y), \circ (y \GenImp x) \}$$
    or, equivalently (setting $\De x : = x \land \cons x $%$\De x : = \neg \DMNeg x$
    ),
    $$\Xi(x,y) \SymbDef \{ %x \GenImp y, y \GenImp x, 
    \De (x \GenImp y), \De (y \GenImp x) \}. $$
\end{proposition}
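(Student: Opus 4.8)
The plan is to treat the two assertions separately, in each case reducing everything to a short computation on the single finite algebra $\PPSixResImp$, which is simple. Recall from Proposition~\ref{prop:redmatrdeflog} that $\PPResImplogicSingle$ is determined by the matrices $\langle\PPSixResImp,\Upset{\fvalue}\rangle$, $\langle\PPSixResImp,\Upset{\both}\rangle$, $\langle\PPSixResImp,\Upset{\etvalue}\rangle$, all of them reduced; so it suffices to check every claimed rule on these three matrices, and an algebraic obstruction appearing among them transfers to the logic. The crux is the observation that $\Xi$ \emph{detects equality}: for any lattice filter $F\subsetneq\SixSet$ of $\PPSixResImp$ and all $a,b\in\SixSet$, one has $\Xi^{\PPSixResImp}(a,b)\subseteq F$ if and only if $a=b$.

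To prove this, note first that since $\ResImp$ is the residuum of $\land^{\PPSixResImp}$ we have $c\ResImp^{\PPSixResImp}c=\etvalue$ for every $c$, and, since $\cons^{\PPSixResImp}$ takes only the values $\efvalue$ and $\etvalue$, $\cons^{\PPSixResImp}\etvalue=\etvalue$; as $\etvalue$ lies in every filter, this yields $\Xi^{\PPSixResImp}(a,a)\subseteq F$. Conversely, assume $\Xi^{\PPSixResImp}(a,b)\subseteq F$ with $F$ proper. As $\cons^{\PPSixResImp}(a\ResImp^{\PPSixResImp}b)\in\{\efvalue,\etvalue\}$ and $\efvalue=\bot^{\PPSixResImp}\notin F$, membership forces $\cons^{\PPSixResImp}(a\ResImp^{\PPSixResImp}b)=\etvalue$, hence $a\ResImp^{\PPSixResImp}b\in\{\efvalue,\etvalue\}$; and again $\efvalue\notin F$ rules out the value $\efvalue$, so $a\ResImp^{\PPSixResImp}b=\etvalue=\top^{\PPSixResImp}$, i.e.\ $a\leq b$; symmetrically $b\leq a$, whence $a=b$. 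The identical computation (using $\De^{\PPSixResImp}c=c\land^{\PPSixResImp}\cons^{\PPSixResImp}c$, which equals $\etvalue$ precisely when $c=\etvalue$ and equals $\efvalue$ otherwise) shows $\{\De(x\GenImp y),\De(y\GenImp x)\}$ has the same property; since $\{x\GenImp y,\cons(x\GenImp y)\}$ and $\De(x\GenImp y)$ are interderivable in $\PPResImplogicSingle$ (it is order-preserving and has $\land$), the two candidate sets are interderivable and it is enough to argue for one of them. Granting the observation, the standard characterisation of (finitely) equivalential logics applies: reflexivity $\vdash\Xi(x,x)$ is the easy direction; modus ponens $x,\Xi(x,y)\vdash y$ holds on each of the three matrices because $a\in F$ together with $\Xi^{\PPSixResImp}(a,b)\subseteq F$ forces $a=b\in F$; symmetry and transitivity are immediate (indeed $\Xi(x,y)$ and $\Xi(y,x)$ are literally the same set, and $\Xi(a,b),\Xi(b,c)\subseteq F$ gives $a=b=c$); and the congruence rule $\bigcup_i\Xi(x_i,y_i)\vdash\Xi(\lambda(\bar x),\lambda(\bar y))$ for every connective $\lambda$ of the signature follows since $\Xi^{\PPSixResImp}(a_i,b_i)\subseteq F$ for all $i$ yields $a_i=b_i$, hence $\lambda^{\PPSixResImp}(\bar a)=\lambda^{\PPSixResImp}(\bar b)$, hence $\Xi^{\PPSixResImp}(\lambda(\bar a),\lambda(\bar b))\subseteq F$. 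Thus $\PPResImplogicSingle$ is equivalential with $\Xi$ (equivalently, $\{\De(x\GenImp y),\De(y\GenImp x)\}$) as set of equivalence formulas.

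For the failure of algebraizability, recall that for an algebraizable logic the Leibniz operator $\LeibCong{\AlgA}$ is, for every algebra $\AlgA$, an order-isomorphism between the lattice of filters of the logic on $\AlgA$ and a lattice of congruences of $\AlgA$; in particular it is injective on filters. But $\langle\PPSixResImp,\Upset{\fvalue}\rangle$ and $\langle\PPSixResImp,\Upset{\both}\rangle$ are both reduced matrix models of $\PPResImplogicSingle$ by Proposition~\ref{prop:redmatrdeflog}, so $\LeibCong{\langle\PPSixResImp,\Upset{\fvalue}\rangle}$ and $\LeibCong{\langle\PPSixResImp,\Upset{\both}\rangle}$ both equal $\mathrm{id}_{\SixSet}$, while $\Upset{\fvalue}\neq\Upset{\both}$; hence $\LeibCong{\PPSixResImp}$ is not injective on $\PPResImplogicSingle$-filters of $\PPSixResImp$, which is incompatible with algebraizability. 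The main obstacle in writing this up is not any single computation — each is routine once $\PPSixResImp$ is known to be simple — but rather organising the verification of the equivalence-set conditions cleanly and carrying it out uniformly over the three determining matrices rather than over one; one should also recall explicitly that interderivable sets of formulas induce the same notion of equivalential logic, so that the two displayed sets $\Xi$ may be handled together.
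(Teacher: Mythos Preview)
Your proof is correct and follows essentially the same approach as the paper: verify the equivalential conditions on the finite determining matrices, then refute algebraizability by exhibiting distinct reduced matrices over the same algebra $\PPSixResImp$. Your organizing lemma that $\Xi$ ``detects equality'' on every proper lattice filter is a clean way to package what the paper leaves as a ``straightforward verification'', and for non-algebraizability you invoke injectivity of the Leibniz operator on logic-filters whereas the paper phrases the same obstruction via equational definability of the designated set on reduced models (\citet[Def.~6.90]{font:2016}); these are equivalent manifestations of the same fact, so the difference is purely presentational.
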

\begin{proof}
    To prove that $\PPResImplogicSingle$ is equivalential, it suffices to verify that the following conditions are met~(see \citet[Thm.~6.60]{font:2016}):
    % Font:  ABSTRACT ALGEBRAIC LOGIC AN INTRODUCTORY TEXTBOOK
    \begin{enumerate}
        \item $\vdash_{\PPResImplogicSingle} \Xi(x,x)$
        \item $x, \Xi(x,y) \vdash_{\PPResImplogicSingle} y$
        \item $\Xi(x,y) \vdash_{\PPResImplogicSingle} \Xi( \circ x, \circ y) $ and $\Xi(x,y) \vdash_{\PPResImplogicSingle} \Xi( \DMNeg x, \DMNeg y) $
        \item $\Xi(x_1,y_1), \Xi(x_2,y_2) \vdash_{\PPResImplogicSingle} \Xi( x_1 \DefCon x_2,  y_1 \DefCon y_2) $ for every binary connective $\DefCon$ in the signature.
    \end{enumerate}
    The verification of the above conditions
    is straightforward from the matrix characterization
    of $\PPResImplogicSingle$.
    Now, if $\PPResImplogicSingle$ were algebraizable, then each set of designated elements would be equationally definable
    on the corresponding reduced matrix~(see \citet[Def.~6.90]{font:2016}).
    But this is not the case, because the matrices 
     $\{ \langle \PPSixResImp, \Upset{a} \rangle : a \in \SixSet \setminus \{ \efvalue \}\}$, all based on the same algebra, are all reduced, and have
     distinct sets of designated elements. 
\end{proof}

The algebraizable companion of $\PPResImplogicSingle$ is the $\top$-assertional logic %(\cite{Jansana2006}) 
associated to the
variety $\PPImpResVar$, %generated by $\PPSixResImp$.
 which we denote by $\PPResAsslogic$.
The reduced models of this logic are all matrices
of the form 
$\langle \mathbf A, \{ \top^{\mathbf A} \} \rangle$, where
$\mathbf A$ is an algebra in  $\PPImpResVar$. %the variety generated by $\PPSixResImp$. 
%and 
%$\{ \etvalue \}$ is the singleton filter.
One easily verifies that $\PPResAsslogic$ satisfies the following rules --- any of them being in fact sufficient
to distinguish $\PPResAsslogic$ from $\PPResImplogicSingle$
(given that they are sound in the former but not in the latter):
%\todo{Decide if prove or not}
%
$$
\frac{p}
{\De p} 
\qquad
\frac{p}
{\circ p} 
\qquad
\frac{p \GenImp q  }
{\DMNeg q \GenImp \DMNeg p}
\qquad
\frac{p, \ p \GenImp_{\mathsf{W}} q  }
{q}
$$
where
$p \GenImp_{\mathsf{W}} q \SymbDef  \DMNeg  p \lor \DMNeg \cons  p  \lor q$
and
$\De p \SymbDef p \land \cons p $.

\bigskip

\subsection{Characterizability by single finite PNmatrices  }

We saw in Theorem~\ref{fact:non-single-matrix-pp6imp}
that the logics we introduced in this section
are not characterized by any single logical matrix.
In this subsection, we will demonstrate the
power of partial non-deterministic matrices
by showing that 
both $\PPResImplogicOrderMC$ and
 $\PPResImplogicPrimeMC$
are characterized by single finite PNmatrices.
In consequence, $\PPResImplogicSingle$ will
 be characterized by either of these matrices.
The essential idea is that the collections of matrices
that characterize these logics can be packaged
into a single structure using partiality.

The construction we will present makes use of
the notion of total components of a $\Sigma$-PNmatrix,
which we now  proceed to introduce. %in detail.
Let $\PartialMat \SymbDef \langle \AlgA, D \rangle$
be a $\Sigma$-PNmatrix.
For $X \subseteq A$, denote by 
$\PartialMat_X$ the $\Sigma$-PNmatrix $\langle \AlgA_X, D \cap X \rangle$, where $\AlgA_X \SymbDef \langle A \cap X, \cdot^{\AlgA_X} \rangle$ is a $\Sigma$-multialgebra such that 
$\DefCon^{\AlgA_X}(a_1,\ldots,a_k) \SymbDef \DefCon^{\AlgA}(a_1,\ldots,a_k) \cap X$ for all $a_1,\ldots,a_k \in X$, $k \in \omega$
and $\DefCon \in \Sigma_k$.
This PNmatrix is called \emph{the restriction of $\DefMat$ to $X$}.
We say that $X \neq \varnothing$ is a \emph{total component}
of $\PartialMat$ whenever $\PartialMat_X$ is total.
A total component $X$ is \emph{maximal} if adding any other
value to $X$ leads to a component that is not total.
Denote by $\TotComp{\PartialMat}$ the collection of
maximal total components of $\PartialMat$.
Then we have that
$\sequent_{\PartialMat} = {\sequent_{\{ \PartialMat_X : X \in \TotComp{\PartialMat}\}}}$~(\cite{Caleiro2023}).
The latter observation is key to us: the matrices induced by the maximal total components of the PNmatrices we will construct are precisely
the ones in the classes that determine the logics 
$\PPResImplogicOrderMC$ and
 $\PPResImplogicPrimeMC$.

We display below diagrams of the four matrices (see discussion after Proposition~\ref{prop:redmatrdeflog}) that determine
the logic $\PPResImplogicOrderMC$ (in each case, the 
sets of designated elements are highlighted
with an ellipse).
\begin{figure}[H]
    \centering
    \begin{tikzpicture}[node distance=1.25cm]
    \node (0PP6)                      {\HighlDes{$\fvalue$}};
    \node (BPP6)  [above right of=0PP6]  {\HighlDes{$\both$}};
    \node (APP6)  [above left of=0PP6]   {\HighlDes{$\neither$}};
    \node (1PP6)  [above right of=APP6]  {\HighlDes{$\tvalue$}};
    \node (00PP6)  [below=0.4\distance of 0PP6] {$\efvalue$};
    \node (11PP6)  [above=0.4\distance of 1PP6] {\HighlDes{$\etvalue$}};
    \draw (0PP6)   -- (APP6);
    \draw (0PP6)   -- (BPP6);
    \draw (BPP6)   -- (1PP6);
    \draw (APP6)  -- (1PP6);
    \draw (0PP6)   -- (00PP6);
    \draw (1PP6)   -- (11PP6);

    \node[ellipse,
    draw = blue,
    text = orange,
    %fill = cyan!20,
    minimum width = 3.3cm, 
    minimum height = 2.2cm,
    rotate = 90] (e) at (0,1.3) {};
    \end{tikzpicture}
    \qquad 
    \begin{tikzpicture}[node distance=1.25cm]
    \node (0PP6)                      {$\fvalue$};
    \node (BPP6)  [above right of=0PP6]  {\HighlDes{$\both$}};
    \node (APP6)  [above left of=0PP6]   {$\neither$};
    \node (1PP6)  [above right of=APP6]  {\HighlDes{$\tvalue$}};
    \node (00PP6)  [below=0.4\distance of 0PP6] {$\efvalue$};
    \node (11PP6)  [above=0.4\distance of 1PP6] {\HighlDes{$\etvalue$}};
    \draw (0PP6)   -- (APP6);
    \draw (0PP6)   -- (BPP6);
    \draw (BPP6)   -- (1PP6);
    \draw (APP6)  -- (1PP6);
    \draw (0PP6)   -- (00PP6);
    \draw (1PP6)   -- (11PP6);
    \node[ellipse,
    draw = blue,
    text = orange,
    %fill = cyan!20,
    minimum width = 2.7cm, 
    minimum height = 1.3cm,
    rotate = -60] (e) at (0.4,1.8) {};
    \end{tikzpicture}
    \qquad 
    \begin{tikzpicture}[node distance=1.25cm]
    \node (0PP6)                      {$\fvalue$};
    \node (BPP6)  [above right of=0PP6]  {$\both$};
    \node (APP6)  [above left of=0PP6]   {$\neither$};
    \node (1PP6)  [above right of=APP6]  {\HighlDes{$\tvalue$}};
    \node (00PP6)  [below=0.4\distance of 0PP6] {$\efvalue$};
    \node (11PP6)  [above=0.4\distance of 1PP6] {\HighlDes{$\etvalue$}};
    \draw (0PP6)   -- (APP6);
    \draw (0PP6)   -- (BPP6);
    \draw (BPP6)   -- (1PP6);
    \draw (APP6)  -- (1PP6);
    \draw (0PP6)   -- (00PP6);
    \draw (1PP6)   -- (11PP6);
    \node[ellipse,
    draw = blue,
    text = orange,
    %fill = cyan!20,
    minimum width = 1.5cm, 
    minimum height = 1cm,
    rotate = 90] (e) at (0,2.2) {};
    \end{tikzpicture}
    \qquad 
    \begin{tikzpicture}[node distance=1.25cm]
    \node (0PP6)                      {$\fvalue$};
    \node (BPP6)  [above right of=0PP6]  {$\both$};
    \node (APP6)  [above left of=0PP6]   {$\neither$};
    \node (1PP6)  [above right of=APP6]  {$\tvalue$};
    \node (00PP6)  [below=0.4\distance of 0PP6] {$\efvalue$};
    \node (11PP6)  [above=0.4\distance of 1PP6] {\HighlDes{$\etvalue$}};
    \draw (0PP6)   -- (APP6);
    \draw (0PP6)   -- (BPP6);
    \draw (BPP6)   -- (1PP6);
    \draw (APP6)  -- (1PP6);
    \draw (0PP6)   -- (00PP6);
    \draw (1PP6)   -- (11PP6);
    \node[ellipse,
    draw = blue,
    text = orange,
    %fill = cyan!20,
    minimum width = .6cm, 
    minimum height = .6cm,
    rotate = 90] (e) at (0,2.65) {};
    \end{tikzpicture}
\end{figure}

Below we depict the structure of the PNmatrix
$\PartialMat_{\mathsf{up}}$ we propose for this logic, whose principle of construction
is the combination of the above matrices, such that
each of them consists of %substructure 
%(in fact, a total component) 
a total component 
of the PNmatrix.
Note: There is nothing special about the dashed edge, it is pictured as dashed only because it crosses
other edges.

\begin{figure}[H]
\[\begin{tikzcd}[ampersand replacement=\&,
execute at end picture={
     \node[ellipse,
    draw = blue,
    text = orange,
    %fill = cyan!20,
    minimum width = 7cm, 
    minimum height = 3cm,
    rotate = -30] (e) at (2.2,.55) {};
}
]
	\&\PartialMat_{\mathsf{up}}\&\& {\HighlDes{\etvalue}} \\
	\& {\tvalue^-} \&\&\&\& {\HighlDes{\tvalue^+}} \\
	{\both^-} \&\& {\neither^-} \&\& {\HighlDes{\both^+}} \&\& {\HighlDes{\neither^+}} \\
	\& {\fvalue^-} \&\&\&\& {\HighlDes{\fvalue^+}} \\
	\&\&\& \efvalue
	\arrow[no head, from=1-4, to=2-2]
	\arrow[no head, from=2-2, to=3-1]
	\arrow[no head, from=2-2, to=3-3]
	\arrow[no head, from=1-4, to=2-6]
	\arrow[no head, from=2-6, to=3-5]
	\arrow[no head, from=2-6, to=3-7]
	\arrow[no head, from=3-1, to=4-2]
	\arrow[no head, from=3-3, to=4-2]
	\arrow[no head, from=3-5, to=4-6]
	\arrow[no head, from=3-7, to=4-6]
	\arrow[no head, from=4-2, to=5-4]
	\arrow[no head, from=4-6, to=5-4]
	\arrow[no head, from=3-3, to=2-6]
	\arrow[no head, from=4-2, to=3-5]
	\arrow[dashed, no head, from=3-1, to=2-6];
\end{tikzcd}\]
\end{figure}

We try to make the above idea clearer in the following pictures,
which show how to identify each of the four matrices inside
$\PartialMat_{\mathsf{up}}$.

\begin{figure}[H]
\[
\resizebox{.5\textwidth}{!}{%
\begin{tikzcd}[ampersand replacement=\&,
execute at end picture={
     \node[ellipse,
    draw = blue,
    text = orange,
    %fill = cyan!20,
    minimum width = 7cm, 
    minimum height = 3cm,
    rotate = -30] (e) at (2.2,.5) {};
}]
	\&\&\& {\HighlDes{\etvalue}} \\
	\& {\NotHighlPic{\tvalue^-}} \&\&\&\& {\HighlDes{\tvalue^+}} \\
	{\NotHighlPic{\both^-}} \&\& {\NotHighlPic{\neither^-}} \&\& {\HighlDes{\both^+}} \&\& {\HighlDes{\neither^+}} \\
	\& {\NotHighlPic{\fvalue^-}} \&\&\&\& {\HighlDes{\fvalue^+}} \\
	\&\&\& \efvalue
	\arrow[no head, from=1-4, to=2-2, lightgray]
	\arrow[no head, from=2-2, to=3-1, lightgray]
	\arrow[no head, from=2-2, to=3-3, lightgray]
	\arrow[no head, from=1-4, to=2-6]
	\arrow[no head, from=2-6, to=3-5]
	\arrow[no head, from=2-6, to=3-7]
	\arrow[no head, from=3-1, to=4-2, lightgray]
	\arrow[no head, from=3-3, to=4-2, lightgray]
	\arrow[no head, from=3-5, to=4-6]
	\arrow[no head, from=3-7, to=4-6]
	\arrow[no head, from=4-2, to=5-4, lightgray]
	\arrow[no head, from=4-6, to=5-4]
	\arrow[no head, from=3-3, to=2-6, lightgray]
	\arrow[no head, from=4-2, to=3-5, lightgray]
	\arrow[dashed, no head, from=3-1, to=2-6, lightgray]
\end{tikzcd}
}
\resizebox{.5\textwidth}{!}{
\begin{tikzcd}[ampersand replacement=\&,
execute at end picture={
     \node[ellipse,
            draw = blue,
            text = orange,
            %fill = cyan!20,
            minimum width = 5cm, 
            minimum height = 2.5cm,
            rotate = -30] (e) at (1.5,1.1) {};
}]
	\&\&\& {\HighlDes{\etvalue}} \\
	\& {\NotHighlPic{\tvalue^-}} \&\&\&\& {\HighlDes{\tvalue^+}} \\
	{\NotHighlPic{\both^-}} \&\& {\neither^-} \&\& {\HighlDes{\both^+}} \&\& {\NotHighlPic{\HighlDes{\neither^+}}} \\
	\& {\fvalue^-} \&\&\&\& {\NotHighlPic{\HighlDes{\fvalue^+}}} \\
	\&\&\& \efvalue
	\arrow[no head, from=1-4, to=2-2, lightgray]
	\arrow[no head, from=2-2, to=3-1, lightgray]
	\arrow[no head, from=2-2, to=3-3, lightgray]
	\arrow[no head, from=1-4, to=2-6]
	\arrow[no head, from=2-6, to=3-5]
	\arrow[no head, from=2-6, to=3-7, lightgray]
	\arrow[no head, from=3-1, to=4-2, lightgray]
	\arrow[no head, from=3-3, to=4-2]
	\arrow[no head, from=3-5, to=4-6, lightgray]
	\arrow[no head, from=3-7, to=4-6, lightgray]
	\arrow[no head, from=4-2, to=5-4]
	\arrow[no head, from=4-6, to=5-4, lightgray]
	\arrow[no head, from=3-3, to=2-6]
	\arrow[no head, from=4-2, to=3-5]
	\arrow[dashed, no head, from=3-1, to=2-6, lightgray]
\end{tikzcd}
}
\]

\vspace{1em}

\[
\resizebox{.5\textwidth}{!}{
\begin{tikzcd}[ampersand replacement=\&,
execute at end picture={
     \node[ellipse,
            draw = blue,
            text = orange,
            %fill = cyan!20,
            minimum width = 4cm, 
            minimum height = 1.5cm,
            rotate = -20] (e) at (1.5,1.8) {};
}]
	\&\&\& {\HighlDes{\etvalue}} \\
	\& {\NotHighlPic{\tvalue^-}} \&\&\&\& {\HighlDes{\tvalue^+}} \\
	{\both^-} \&\& {\neither^-} \&\& {\NotHighlPic{\HighlDes{\both^+}}} \&\& {\NotHighlPic{\HighlDes{\neither^+}}} \\
	\& {\fvalue^-} \&\&\&\& {\NotHighlPic{\HighlDes{\fvalue^+}}} \\
	\&\&\& \efvalue
	\arrow[no head, from=1-4, to=2-2, lightgray]
	\arrow[no head, from=2-2, to=3-1, lightgray]
	\arrow[no head, from=2-2, to=3-3, lightgray]
	\arrow[no head, from=1-4, to=2-6]
	\arrow[no head, from=2-6, to=3-5,lightgray]
	\arrow[no head, from=2-6, to=3-7,lightgray]
	\arrow[no head, from=3-1, to=4-2]
	\arrow[no head, from=3-3, to=4-2]
	\arrow[no head, from=3-5, to=4-6,lightgray]
	\arrow[no head, from=3-7, to=4-6,lightgray]
	\arrow[no head, from=4-2, to=5-4]
	\arrow[no head, from=4-6, to=5-4,lightgray]
	\arrow[no head, from=3-3, to=2-6]
	\arrow[no head, from=4-2, to=3-5,lightgray]
	\arrow[dashed, no head, from=3-1, to=2-6]
\end{tikzcd}
}
\resizebox{.5\textwidth}{!}{
\begin{tikzcd}[ampersand replacement=\&,
execute at end picture={
     \node[ellipse,
            draw = blue,
            text = orange,
            %fill = cyan!20,
            minimum width = 1cm, 
            minimum height = 1cm,
            rotate = -20] (e) at (0,2.4) {};
}]
	\&\&\& {\HighlDes{\etvalue}} \\
	\& {\tvalue^-} \&\&\&\& {\NotHighlPic{\HighlDes{\tvalue^+}}} \\
	{\both^-} \&\& {\neither^-} \&\& {\NotHighlPic{\HighlDes{\both^+}}} \&\& {\NotHighlPic{\HighlDes{\neither^+}}} \\
	\& {\fvalue^-} \&\&\&\& {\NotHighlPic{\HighlDes{\fvalue^+}}} \\
	\&\&\& \efvalue
	\arrow[no head, from=1-4, to=2-2]
	\arrow[no head, from=2-2, to=3-1]
	\arrow[no head, from=2-2, to=3-3]
	\arrow[no head, from=1-4, to=2-6, lightgray]
	\arrow[no head, from=2-6, to=3-5, lightgray]
	\arrow[no head, from=2-6, to=3-7, lightgray]
	\arrow[no head, from=3-1, to=4-2]
	\arrow[no head, from=3-3, to=4-2]
	\arrow[no head, from=3-5, to=4-6, lightgray]
	\arrow[no head, from=3-7, to=4-6, lightgray]
	\arrow[no head, from=4-2, to=5-4]
	\arrow[no head, from=4-6, to=5-4, lightgray]
	\arrow[no head, from=3-3, to=2-6, lightgray]
	\arrow[no head, from=4-2, to=3-5, lightgray]
	\arrow[dashed, no head, from=3-1, to=2-6, lightgray]
\end{tikzcd}
}
\]
\end{figure}

The PNmatrix for $\PPResImplogicPrimeMC$,
which we dub $\PartialMat_{\leq}$, is 
essentially the same, the only difference being the absence of the
dashed line.

After this informal presentation of the general approach,
we proceed to a precise definition of the PNmatrices
$\PartialMat_{\mathsf{up}}$ and $\PartialMat_{\leq}$,
and prove
that they indeed determine, respectively, the logics 
$\PPResImplogicOrderMC$ and
 $\PPResImplogicPrimeMC$.
%\newpage
% {\color{purple} A PNmatrix for $\PPResImplogic$ - well! it is actually stronger...but has the same single...}

\begin{definition}
    Let
    $\PartialMatValues \SymbDef \{\efvalue,\fvalue^-,\neither^-,\both^-,\tvalue^-,\fvalue^+,\neither^+,\both^+,\tvalue^+,\etvalue\}$
    and
    $\PartialMatDes \SymbDef \{\fvalue^+,\neither^+,\both^+,\tvalue^+,\etvalue\} \subseteq \PartialMatValues$.
{
Consider the predicates $\mathsf{inc}_{\mathsf{up}}$
and $\mathsf{inc}_{\leq}$
over $\wp(\PartialMatValues)$, where
$\mathsf{inc}_{\mathsf{up}}(X)$ (resp. $\mathsf{inc}_{\leq}(X)$)
means that $X$ is not contained in the image of valuations over
$\PartialMat_{\mathsf{up}}$
(resp. $\PartialMat_{\leq}$),
and can be defined as follows:

}

    %Consider the ternary relations
    %$\mathsf{inc}_{\mathsf{up}}(a,b,c)$
    %and
    %$\mathsf{inc}_{\leq}(a,b,c)$
    %on $\PartialMatValues$
    %defined as follows:
    \begin{align*}
   \mathsf{inc}_{\mathsf{up}}(X) & \text{ iff }\\
    &X\supseteq Y\text{, for some }Y\in \{\{d^-,d^+\}:d\in \{\fvalue,\neither,\both,\tvalue\}\}\cup\\
   &\qquad\{\{\both^-,\fvalue^+\},
   \{\neither^-,\fvalue^+\},\{\both^+,\tvalue^-\},
   \{\neither^+,\tvalue^-\},
   \{\neither^+,\both^+,\fvalue^-\}\}\\
   % \mathsf{inc}_{\mathsf{up}}(a,b,c) & \text{ iff }\\
   % &X\subseteq\{a,b,c\}\text{, for some }X\in \{\{d^-,d^+\}:d\in \{\fvalue,\neither,\both,\tvalue\}\}\cup\\
   %&\qquad\{\{\both^-,\fvalue^+\},
   %\{\neither^-,\fvalue^+\},\{\both^+,\tvalue^-\},
   %\{\neither^+,\tvalue^-\},
   %\{\neither^+,\both^+,\fvalue^-\}\}\\
    % \{\{\fvalue^-,\neither^+\},\{\both^-,\neither^+\},\{\both^-,\fvalue^+\}\}\\
    %\text{ or }\\
    %&i\neq j,x_1=y^i,x_2=y^j\in \{0^-,a^-,b^-,1^-,0^+,a^+,b^+,1^+\}\\
   %   \mathsf{inc}_{\leq}(a,b,c)& \text{ iff }\\
   %&X\subseteq\{a,b,c\}\text{, for some }X\in \{\{d^-,d^+\}:d\in \{\fvalue,\neither,\both,\tvalue\}\}\cup\\
   %&\qquad\{
   %\{\both^-,\fvalue^+\},
   %\{\neither^-,\fvalue^+\},\{\both^+,\tvalue^-\},
   %\{\neither^+,\tvalue^-\},
   %\{\neither^+,\both^+,\fvalue^-\},
   %\{\neither^-,\both^-,\tvalue^+\}\}\\
         {\mathsf{inc}_{\leq}(X)}& \text{ iff }\\
   &X\supseteq Y\text{, for some }Y\in \{\{d^-,d^+\}:d\in \{\fvalue,\neither,\both,\tvalue\}\}\cup\\
   &\qquad\{
   \{\both^-,\fvalue^+\},
   \{\neither^-,\fvalue^+\},\{\both^+,\tvalue^-\},
   \{\neither^+,\tvalue^-\},
   \{\neither^+,\both^+,\fvalue^-\},
   \{\neither^-,\both^-,\tvalue^+\}\}
\end{align*}
Consider also the function $g:\PartialMatValues \to \{\efvalue,\fvalue,\neither,\both,\tvalue,\etvalue\}$
given by $g(\efvalue) \SymbDef \efvalue$, $g(\etvalue) \SymbDef \etvalue$,
$g(a^i) \SymbDef a$ for $a\in \{\fvalue,\neither,\both,\tvalue\}$ and $i \in \{ +,- \}$.
We define the PNmatrices 
    $\PartialMat_{\mathsf{up}} \SymbDef \tuple{\mathbf{A}_{\mathsf{up}},\PartialMatDes}$ and
    $\PartialMat_{\leq} \SymbDef \tuple{\mathbf{A}_{\leq},\PartialMatDes}$,
    with $\mathbf{A}_{\mathsf{up}} \SymbDef \langle \PartialMatValues,\cdot_{\mathsf{up}} \rangle$
    and
    $\mathbf{A}_{\leq} \SymbDef \langle \PartialMatValues,\cdot_{\leq} \rangle$
    such that
\begin{itemize}
    \item 
    for $\DefCon \in \{\DMNeg,\cons\}$,
\begin{align*}
    \DefCon_{\mathsf{up}}(a)& \SymbDef \{b\in \PartialMatValues:\copyright_{\PPSixResImp}(g(a))=g(b) \text{ and not } 
    {\mathsf{inc}_{\mathsf{up}}(\{ a,b \})}\}\\
      \DefCon_\leq(a)&\SymbDef\{b\in \PartialMatValues: \DefCon_{\PPSixResImp}(g(a))=g(b) \text{ and not } 
      {\mathsf{inc}_{\leq}(\{a,b\})}
      %\mathsf{inc}_{\leq}(a,a,b)
      \}
\end{align*}
    \item
    for 
 $\DefCon\in \{\land,\lor,\GenImp\}$,
\begin{align*}
    \copyright_{\mathsf{up}}(a,b)& \SymbDef \{c\in \PartialMatValues: \copyright_{\PPSixResImp}(g(a),g(b))=g(c) \text{ and not } 
    {\mathsf{inc}_{\mathsf{up}}(\{a,b,c\})}\}\\
      \copyright_{\leq}(a,b)&\SymbDef\{c\in \PartialMatValues: \copyright_{\PPSixResImp}(g(a),g(b))=g(c)\text{ and not }{\mathsf{inc}_{\leq}(\{a,b,c\})}
      \}
\end{align*}
\end{itemize}
\end{definition}

\begin{proposition}
    $\PPResImplogicOrderMC$ is determined by ${\PartialMat_{\mathsf{up}}}$ and $\PPResImplogicPrimeMC$ 
    is determined by ${\PartialMat_{\leq}}$.

\end{proposition}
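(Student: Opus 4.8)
The plan is to exploit the decomposition of a partial non-deterministic matrix into the deterministic matrices induced by its maximal total components. By~\cite{Caleiro2023} we have $\sequent_{\PartialMat_{\mathsf{up}}} = \sequent_{\{(\PartialMat_{\mathsf{up}})_X \,:\, X \in \TotComp{\PartialMat_{\mathsf{up}}}\}}$ and $\sequent_{\PartialMat_{\leq}} = \sequent_{\{(\PartialMat_{\leq})_X \,:\, X \in \TotComp{\PartialMat_{\leq}}\}}$. Since $\PPResImplogicOrderMC$ is determined by $\{\langle\PPSixResImp,\Upset{\fvalue}\rangle, \langle\PPSixResImp,\Upset{\both}\rangle, \langle\PPSixResImp,\Upset{\tvalue}\rangle, \langle\PPSixResImp,\Upset{\etvalue}\rangle\}$ and, by Proposition~\ref{prop:redmatrdeflog}, $\PPResImplogicPrimeMC$ is determined by $\{\langle\PPSixResImp,\Upset{\fvalue}\rangle, \langle\PPSixResImp,\Upset{\both}\rangle, \langle\PPSixResImp,\Upset{\etvalue}\rangle\}$, it suffices to show that these two finite families coincide, up to isomorphism, with the families induced by the maximal total components of $\PartialMat_{\mathsf{up}}$ and of $\PartialMat_{\leq}$.

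I would first record three structural facts about the projection~$g$. (i)~By construction, any value occurring in $\DefCon_{\mathsf{up}}(a_1,\dots,a_k)$ is mapped by $g$ to $\DefCon_{\PPSixResImp}(g(a_1),\dots,g(a_k))$ for every connective~$\DefCon$; so $g$ is a homomorphism from the underlying multialgebra of $\PartialMat_{\mathsf{up}}$ (and of $\PartialMat_{\leq}$) onto $\PPSixResImp$. (ii)~If $X$ is a total component then $g|_X$ is injective: otherwise $d^-,d^+\in X$ for some $d\in\{\fvalue,\neither,\both,\tvalue\}$, and then $\land_{\mathsf{up}}(d^+,d^-)=\varnothing$, because every candidate output $c$ has $g(c)=d$ while $\{d^-,d^+\}$ is a forbidden configuration of both $\mathsf{inc}_{\mathsf{up}}$ and $\mathsf{inc}_{\leq}$, contradicting totality. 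Consequently $g$ carries $X$ bijectively onto a subalgebra $g(X)$ of $\PPSixResImp$; as $\efvalue,\etvalue\in g(X)$ and these are the only values fixed by $g$, also $\efvalue,\etvalue\in X$. (iii)~If, in addition, $X$ is $\mathsf{inc}_{\mathsf{up}}$-free (resp.\ $\mathsf{inc}_{\leq}$-free), the restricted multioperations never discard their unique candidate value, so $g|_X$ is an isomorphism from $(\PartialMat_{\mathsf{up}})_X$ (resp.\ $(\PartialMat_{\leq})_X$) onto the matrix whose algebra is the subalgebra of $\PPSixResImp$ on $g(X)$ and whose designated set is $g(\PartialMatDes\cap X)$; when $g(X)=\SixSet$ this is $\langle\PPSixResImp,\Upset{c}\rangle$ with $c$ determined by $d^+\in X \iff d\in\Upset{c}$.

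Granting these, one inclusion follows by exhibiting, for each $c\in\{\fvalue,\neither,\both,\tvalue,\etvalue\}$, the unique ``section'' $X_c\subseteq\PartialMatValues$ with $g(X_c)=\SixSet$ and $d^+\in X_c\iff d\in\Upset{c}$, and checking it to be $\mathsf{inc}_{\mathsf{up}}$-free and a maximal total component of $\PartialMat_{\mathsf{up}}$; then $(\PartialMat_{\mathsf{up}})_{X_c}\cong\langle\PPSixResImp,\Upset{c}\rangle$, and since $\langle\PPSixResImp,\Upset{\both}\rangle\cong\langle\PPSixResImp,\Upset{\neither}\rangle$ the five sections realise exactly the four determining matrices of $\PPResImplogicOrderMC$, giving $\sequent_{\PartialMat_{\mathsf{up}}}\subseteq\PPResImplogicOrderMC$. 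For $\PartialMat_{\leq}$ the sole difference is the extra forbidden triple $\{\neither^-,\both^-,\tvalue^+\}$ of $\mathsf{inc}_{\leq}$, which is exactly what destroys the totality of $X_{\tvalue}$ (it forces $\lor_{\leq}(\neither^-,\both^-)=\varnothing$ within $X_{\tvalue}$), leaving the sections realising the prime filters $\Upset{\fvalue},\Upset{\both},\Upset{\neither},\Upset{\etvalue}$ and hence $\sequent_{\PartialMat_{\leq}}\subseteq\PPResImplogicPrimeMC$. The converse inclusions reduce to showing that there is no further maximal total component: by~(ii) any such component corresponds, via $g$, to a subalgebra $\mathbf{B}$ of $\PPSixResImp$ together with a choice, for each non-extremal element of $\mathbf{B}$, of a $\pm$-tagged copy compatible with the relevant inconsistency predicate, and a direct inspection of the multioperation tables shows every such choice to extend to one of the $X_c$ (resp.\ to one of those surviving in $\PartialMat_{\leq}$); so every $(\PartialMat_{\mathsf{up}})_X$ (resp.\ $(\PartialMat_{\leq})_X$) is a submatrix of one of the determining matrices, hence a model of $\PPResImplogicOrderMC$ (resp.\ $\PPResImplogicPrimeMC$). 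With both inclusions and the decomposition of~\cite{Caleiro2023} the proof is complete, and $\PPResImplogicSingle$, being the $\SetFmla$ companion of both \SetSet{} logics, is then determined by either PNmatrix as well.

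The main obstacle is this last verification: the finite but intricate case analysis over the ten values of $\PartialMatValues$, the multioperation tables (including those for $\DMNeg$ and $\cons$), and the two inconsistency predicates, needed to rule out any spurious maximal total component --- in particular one whose set of designated values is not the restriction of a principal, resp.\ prime, filter of $\PPSixResImp$. This is precisely the bookkeeping for which the lists of forbidden configurations in the definitions of $\mathsf{inc}_{\mathsf{up}}$ and $\mathsf{inc}_{\leq}$ were tailored.
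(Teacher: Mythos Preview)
Your proposal is correct and follows the same route as the paper: both invoke the decomposition $\sequent_{\PartialMat} = \sequent_{\{\PartialMat_X : X \in \TotComp{\PartialMat}\}}$ from~\cite{Caleiro2023} and identify the maximal total components with the determining matrices via the projection~$g$, the paper doing so by an explicit listing of the components and a check that each forbidden configuration forces a partiality, which is exactly the residual case analysis you flag as the main bookkeeping. Your count of five sections $X_c$ (with $X_{\both}$ and $X_{\neither}$ yielding isomorphic matrices) is in fact more accurate than the paper's list of four maximal total components of $\PartialMat_{\mathsf{up}}$, which omits $X_{\neither}=\{\efvalue,\fvalue^-,\both^-,\neither^+,\tvalue^+,\etvalue\}$---harmlessly, since the induced matrix is isomorphic to $\langle\PPSixResImp,\Upset{\both}\rangle$.
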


\begin{proof}
We have that if $\{a,b,c\} \subseteq X$ and ${\mathsf{inc}_{\mathsf{up}}(\{a,b,c\})}$ then 
$X$ is not contained in a total component of $\PartialMat_{\mathsf{up}}$.
In fact,
 \begin{itemize}
     \item if $b_1,b_2\in X$ 
with $$\{b_1,b_2\}\in \{\{a^-,a^+\}:a\in \{\fvalue,\neither,\both,\tvalue\}\}\cup\{\{\both^-,\fvalue^+\},
   \{\neither^-,\fvalue^+\},\{\both^+,\tvalue^-\},
   \{\neither^+,\tvalue^-\}\}$$
then 
$b_1\land_{\mathsf{up}} b_2=\varnothing$.
     %If
% $\{a^-,a^+\}\subseteq X$
% for
% $a\in \{\fvalue,\neither,\both,\tvalue\}\}$
% then $a^-\land_\leq a^+=\emptyset$

% \item If    $\{\{\both^-,\fvalue^+\}\subseteq X$
% then $\both^-\land_\leq \fvalue^+=\emptyset$

% \item  If   $ \{\neither^-,\fvalue^+\}\subseteq X$
% then $\neither^-\fvalue_\leq \fvalue^+=\emptyset$

% \item  If $\{\both^+,\tvalue^-\}\subseteq X$
% then $\both^-\land_\leq \fvalue^+=\emptyset$

% \item  If
%    $\{\neither^+,\tvalue^-\}\subseteq X$
% then $\both^-\land_\leq \fvalue^+=\emptyset$
    
 \item  if $\{\neither^+,\both^+,\fvalue^-\}\subseteq X$, then %$X$
%is not in a total component 
%we have that 
$\neither^+\land_{\mathsf{up}} \both^+=\{\fvalue^+\}$ and $\fvalue^-\land_{\mathsf{up}} \fvalue^+=\varnothing$.
 \end{itemize}

    The maximal total components of $\PartialMat_{\mathsf{up}}$ are  
    $$\TotComp{{\PartialMat_{\mathsf{up}}}}=\{
    \{\efvalue,\fvalue^-,\neither^-,\both^-,\tvalue^-,\etvalue\},
     \{\efvalue,\fvalue^-,\neither^-,\both^-,\tvalue^+,\etvalue\},
    \{\efvalue,\fvalue^-,\neither^-,\both^+,\tvalue^+,\etvalue\},
    \{\efvalue,\fvalue^+,\neither^+,\both^+,\tvalue^+,\etvalue\}\}$$
since for
every $X\in \TotComp{{\PartialMat_{\mathsf{up}}}}$ and
every $a,b,c\in X$ we have that 
${\mathsf{inc}_{\mathsf{up}}(\{a,b,c\})}$ is not the case.
Thus the restriction of $\PartialMat_{\mathsf{up}}$ to 
$X$ is isomorphic to (that is, it is the same up to renaming of truth values) some matrix with set of designated values 
$D_X = \Upset a$ for $a\in \SixSet$.
%with $b<a\leq c$ if $b^-\in X$ and $c^+\in X$. 
The isomorphism is given by the restriction of $g$ to $X$.
%...the restriction of $f$ is the iso...

Similarly, the maximal total components of $\PartialMat_{\leq}$ are  
    $$\TotComp{{\PartialMat_{\leq}}}=\{
    \{\efvalue,\fvalue^-,\neither^-,\both^-,\tvalue^-,\etvalue\},
    \{\efvalue,\fvalue^-,\neither^-,\both^+,\tvalue^+,\etvalue\},
    \{\efvalue,\fvalue^+,\neither^+,\both^+,\tvalue^+,\etvalue\}\}$$
This is so because 
${\mathsf{inc}_{\leq}(\{a,b,c\})}$ iff 
${\mathsf{inc}_{\mathsf{up}}(\{a,b,c\})}$ or $\{\neither^-,\both^-,\tvalue^+\}\subseteq X$,
and
    the fact that if $\{\neither^-,\both^-,\tvalue^+\}\subseteq X$ then $X$
is not in a total component of $\PartialMat_{\leq}$, since 
 $\neither^-\lor_{\mathsf{up}} \both^-=\{\tvalue^-\}$ and $\tvalue^-\land_{\mathsf{up}} \tvalue^+=\varnothing$.\qedhere
\end{proof}

\section{Hilbert-style axiomatizations %for intuitionistic implicative extensions of PP-logics
}
\label{sec:int-hilbert-axiomat}

Our goal in this section is to present 
analytic \SetSet{}
Hilbert-style axiomatizations
for the implicative expansions
$\PPResImplogicOrderMC$
    and
    $\PPResImplogicPrimeMC$,
    as well
    as a \SetFmla{}
    axiomatization for
    $\PPResImplogicSingle$.
Unfortunately, the PNmatrices from the previous section do not allow
us to extract automatically an analytic \SetSet{}
calculus for the corresponding logics
using the technology of~\cite{marcelinowollic},
employed in~\cite{Gomes2022} and in the previous sections of the present paper,
in view of the following:

\begin{proposition}
Neither $\PartialMat_\leq$ nor $\PartialMat_{\mathsf{up}}$ is monadic.
\end{proposition}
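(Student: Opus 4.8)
The plan is to exhibit a non-trivial automorphism of each of the two PNmatrices and to argue that no discriminator can tell a value apart from its (distinct) image under it, so that no discriminator can exist at all. Recall from the discussion preceding Theorem~\ref{fact:classical-imp-axiomat} that a discriminator for $\langle \AlgA,D\rangle$ requires, for every pair of distinct values $a\neq b$, a unary formula $\mathrm S$ that separates $a$ and $b$ with respect to $D$, i.e.\ one of $\mathrm S^{\AlgA}(a),\mathrm S^{\AlgA}(b)$ is contained in $D$ while the other is not. The first step I would take is the routine observation that if $\sigma$ is an automorphism of the PNmatrix $\langle\AlgA,D\rangle$ --- a bijection of the carrier preserving every multioperation and with $\sigma[D]=D$ --- then, being an isomorphism, $\sigma$ permutes the set of $\AlgA$-valuations and hence $\mathrm S^{\AlgA}(\sigma(a))=\sigma[\mathrm S^{\AlgA}(a)]$ for every unary formula $\mathrm S$ and every value $a$. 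Consequently $\mathrm S^{\AlgA}(a)\subseteq D$ iff $\mathrm S^{\AlgA}(\sigma(a))\subseteq D$, so whenever $\sigma(a)\neq a$ no unary formula separates $a$ from $\sigma(a)$; therefore no set of unary formulas isolates $a$, and the matrix admits no discriminator, i.e.\ it is not monadic.

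It then remains to produce, for each of $\PartialMat_{\mathsf{up}}$ and $\PartialMat_{\leq}$, an automorphism with a moved point. The natural candidate is the involution $\sigma\colon\PartialMatValues\to\PartialMatValues$ that swaps $\both^{-}$ with $\neither^{-}$ and $\both^{+}$ with $\neither^{+}$, fixing the remaining six values; since $\both^{+},\neither^{+}\in\PartialMatDes$ and $\both^{-},\neither^{-}\notin\PartialMatDes$, it preserves $\PartialMatDes$. To verify that $\sigma$ commutes with the multioperations I would use two finitely-checkable facts. First, $\PPSixResImp$ itself admits the automorphism $\tau$ that interchanges $\both$ and $\neither$ and fixes everything else --- a verification on the bounded-lattice operations, on $\DMNeg$, on $\cons$, and on the displayed truth-table of $\ResImp$ --- and moreover $g\circ\sigma=\tau\circ g$ for the collapsing map $g$. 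Second, $\sigma$ maps the family of ``excluded'' subsets defining $\mathsf{inc}_{\mathsf{up}}$ (resp.\ $\mathsf{inc}_{\leq}$) bijectively onto itself, so that $\mathsf{inc}_{\mathsf{up}}(\sigma[X])$ iff $\mathsf{inc}_{\mathsf{up}}(X)$ for all $X$ (and similarly for $\mathsf{inc}_{\leq}$): indeed $\sigma$ fixes $\{\fvalue^{-},\fvalue^{+}\}$, $\{\tvalue^{-},\tvalue^{+}\}$, $\{\neither^{+},\both^{+},\fvalue^{-}\}$ and $\{\neither^{-},\both^{-},\tvalue^{+}\}$, while it exchanges $\{\neither^{-},\neither^{+}\}$ with $\{\both^{-},\both^{+}\}$, $\{\both^{-},\fvalue^{+}\}$ with $\{\neither^{-},\fvalue^{+}\}$, and $\{\both^{+},\tvalue^{-}\}$ with $\{\neither^{+},\tvalue^{-}\}$.

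Granting these, a direct expansion of the defining clauses shows $\sigma[\DefCon_{\mathsf{up}}(\vec{a})]=\DefCon_{\mathsf{up}}(\sigma(\vec{a}))$ for each connective $\DefCon$ (and likewise for $\DefCon_{\leq}$): the side condition involving $g$ is transported via $g\circ\sigma=\tau\circ g$ together with $\tau\in\mathrm{Aut}(\PPSixResImp)$, and the side condition involving $\mathsf{inc}$ is transported by the symmetry just noted. Hence $\sigma$ is an automorphism of $\PartialMat_{\mathsf{up}}$ and of $\PartialMat_{\leq}$ with $\sigma(\both^{+})=\neither^{+}\neq\both^{+}$, and the first paragraph yields that neither PNmatrix is monadic. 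I expect the bookkeeping behind the two finite checks --- the $\tau$-equivariance of the $\ResImp$ truth-table and the $\sigma$-invariance of the $\mathsf{inc}$-families --- to be the only laborious part; conceptually, the point is simply that the duplicated values $a^{+}$ and $a^{-}$ inherit the $\both/\neither$ symmetry of $\PPSixResImp$, a symmetry that a single unary formula cannot break.
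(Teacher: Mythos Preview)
Your proof is correct and takes a somewhat different route from the paper's. The paper proceeds by induction on the structure of a unary formula $\FmA(p)$: for valuations $v_a,v_b$ with $v_a(p)=a^s$ and $v_b(p)=b^s$ (implicitly $\{a,b\}=\{\both,\neither\}$, $s\in\{+,-\}$), one shows that $v_x(\FmA)\in\{\efvalue,x^s,\etvalue\}$ and that the two values are either equal or are $a^s,b^s$ respectively --- in either case both designated or both undesignated. Your argument instead packages the underlying $\both/\neither$ symmetry as an explicit PNmatrix automorphism, lifted from the lattice automorphism $\tau$ of $\PPSixResImp$ via the intertwining $g\circ\sigma=\tau\circ g$ together with the $\sigma$-invariance of the forbidden-set families defining $\mathsf{inc}_{\mathsf{up}}$ and $\mathsf{inc}_{\leq}$, and then appeals to the general principle that an automorphism fixing the designated set setwise makes any moved point inseparable from its image. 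Your approach is more conceptual and immediately reusable for other PNmatrices; the paper's induction is more hands-on and records the sharper (though here unnecessary) fact that unary-formula values at $x^s$ always lie in $\{\efvalue,x^s,\etvalue\}$.
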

\begin{proof}
    Note that no unary formula can separate $\neither^+$ from $\both^+$, nor $\neither^-$ from $\both^-$.
  Indeed, 
  one can show inductively on the structure of a unary formula $\FmA(p)$
  that, given valuations $v_a$, $v_b$ such that $v_a(p)=a^s$ and
  $v_b(p)=b^s$ for $s\in \{+,-\}$,
  we have that
  $v_x(\FmA)\in \{\efvalue,x^s,\etvalue\}$,
  and 
either  $v_a(\FmA)=v_b(\FmA)$ or
$v_a(\FmA)=a^s$
and
$v_b(\FmA)=b^s$.
\end{proof}
    
    We need, therefore, to delve into specific
    details of the 
    logical matrices introduced above, and extract what
    is important to characterize
    their algebraic and logical structures in terms
    of formulas and \SetSet{} rules of inference.
    We will do that and obtain analytic %\SetSet{} 
    axiomatizations
    for the \SetSet{} logics and then, taking advantage of
    the fact that in $\PPResImplogicPrimeMC$
    we have a disjunction (a notion we will soon make precise),
    we will convert the calculus for $\PPResImplogicPrimeMC$ into a \SetFmla{}
    axiomatization for $\PPResImplogicSingle$.%

\subsection{Analytic axiomatizations for $\PPResImplogicOrderMC$
    and
    $\PPResImplogicPrimeMC$}

%{\color{purple} Axiomatizing the multiple-conclusion logic $\PPResImplogic$}

    We begin by the analytic \SetSet{} axiomatizations.
    What follows is a succession
    of definitions introducing
    groups of rules of inference
    that capture particular aspects
    of the collections of logical matrices
    determining 
    $\PPResImplogicOrderMC$
    and
    $\PPResImplogicPrimeMC$.
    We check the soundness of each
    of them and ultimately arrive to the desired completeness results.
    Throughout the proofs, we will
    make use of the following abbreviations:
    
\begin{definition} 
Set $\up p \SymbDef \cons(\DMNeg p\GenImp p)$ and $\down p \SymbDef \cons(p\GenImp \DMNeg p)$.
\end{definition}

By way of an example, the truth tables of the above derived connectives
interpreted in $\PPSixResImp$
are given in Table~\ref{tab:up-down-tables}.
A key feature of these connectives
is that they characterize the values $\fvalue$
and $\tvalue$ in the following sense:
for each valuation $v$ over $\langle\PPSixResImp, D\rangle$,
$v(\up p) \not\in D$ iff
$v(p) = \fvalue$
and
$v(\down p) \not\in D$
iff $v(p) = \tvalue$,
for $D$ containing $\etvalue$
and not containing $\efvalue$.

\begin{table}%[H]
    \centering
    \begin{tabular}{c|c}
         & $\up p$  \\
         \hline
         $\efvalue$& $\etvalue$\\
         $\fvalue$& $\efvalue$\\
         $\neither$& $\etvalue$\\
         $\both$& $\etvalue$\\
         $\tvalue$& $\etvalue$\\
         $\etvalue$&$\etvalue$ 
    \end{tabular}
    \qquad
        \begin{tabular}{c|c}
         & $\down p$  \\
         \hline
         $\efvalue$& $\etvalue$\\
         $\fvalue$& $\etvalue$\\
         $\neither$& $\etvalue$\\
         $\both$& $\etvalue$\\
         $\tvalue$& $\efvalue$\\
         $\etvalue$& $\etvalue$ 
    \end{tabular}
    \caption{Truth tables of the connectives $\up$ and $\down$ interpreted in $\PPSixResImp$.}
    \label{tab:up-down-tables}
\end{table}

Note that the above definitions
introduce new connectives
by means of abbreviations,
so we have
$\sub(\up p) = \{p,\DMNeg p,\DMNeg p\GenImp p, \cons(\DMNeg p\GenImp p)\}$
(the case of $\sub(\down p)$ is similar).

% \begin{lemma}
%     stuff about only $\hat{1}$ e $\hat{0}$
% \end{lemma}

\begin{definition}
Let $\CalcVar_{\Diamond}$
be the \SetSet{}
calculus given by
the following inference rules:
%    identity
$$
\inferx[\RuleA^\Diamond_{\up\mathsf{or}\down}]{}{\up p,\down p}
\qquad
\inferx[{\RuleA_\mathsf{id}^{\Diamond}}]{}{\cons(p\GenImp p)}%$$
\qquad
%transitivity
%$$
\inferx[\RuleA_\mathsf{trans}^\Diamond]{\cons(p\GenImp q),\cons(q\GenImp r)}{\cons q,\cons(p\GenImp r)}
$$ 

$$\inferx[\RuleA^\Diamond_{\leq\tvalue}]{}{\down p,\cons q,\cons( q\GenImp p)} \qquad 
%\inferx[\RuleA^\Diamond_{\geq\fvalue}]{}{\up p,\cons q,\cons( p\GenImp q)} 
\inferx[\RuleA^\Diamond_{\geq\fvalue}]{}{\up p,\cons( p\GenImp q)} 
$$

% $$\frac{\up p,\cons(p\GenImp q),\cons(q\GenImp p)}{\up q} r^\Diamond_{incclass1}
% \qquad
% \frac{\down p,\cons(p\GenImp q),\cons(q\GenImp p)}{\down q} r^\Diamond_{incclass2}
% $$

%\color{gray}
$$\inferx[\RuleA^\Diamond_{\mathsf{incclass}_1}]{\up p,\cons(p\GenImp q)}{\cons p,\up q}
\qquad 
\inferx[\RuleA^\Diamond_{\mathsf{incclass}_2}]{\down q,\cons(p\GenImp q)}{\cons q,\down p}
\qquad
%\inferx[\RuleA^\Diamond_{\mathsf{incclass}_3}]{\up p,\down p,\up q,\down q,\cons (p\GenImp q)}{\cons p,\cons q,\cons (q\GenImp p)}
\inferx[\RuleA^\Diamond_{\mathsf{incclass}_3}]{\up p,\down q,\cons (p\GenImp q)}{\cons q,\cons (q\GenImp p)}
$$

$$
\inferx[\RuleA^\Diamond_{\mathsf{just}_2}]{\down p,\up r}{\cons p,\cons(p\GenImp  q),\cons( p\GenImp r),\cons(q\GenImp  r)}
$$
%$$\inferx[\RuleA^\Diamond_{\mathsf{just}_2}]{\up p,\down p,\up q,\down q,\up r,\down r}{\cons p,\cons q,\cons r,\cons(p\GenImp  q),\cons( p\GenImp r),\cons(q\GenImp  r)}
%$$

\begin{comment}
{\color{gray}
just to confirm derivable

$$\inferx[\RuleA^\Diamond_{\mathsf{just}_2'}]{\up p,\down p,\up q,\down q,\up r,\down r}{\cons p,\cons q,\cons r,\cons(p\GenImp  q),\cons( p\GenImp r),q\GenImp r}
$$

$$\inferx[\RuleA^\Diamond_{\mathsf{just}_2'}]{\up p,\down p,\up q,\down q,\up r,\down r}{\cons p,\cons q,\cons r,p=q,p=r,q= r}
$$}
\end{comment}

\end{definition}

\begin{proposition}
    The rules of $\CalcVar_\Diamond$ are sound for any matrix $\langle \PPSixResImp, D \rangle$ with $D$
    containing $\etvalue$ and not containing $\efvalue$.
\end{proposition}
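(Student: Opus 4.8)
The plan is to verify soundness rule by rule, each check reducing to a small finite computation over the six-element algebra $\PPSixResImp$; since matrix consequence is structural, it suffices to treat each rule scheme for its displayed variables. I would first record the semantic facts that make all the computations uniform. Because $\cons^{\PPSixResImp}$ takes only the value $\etvalue$ on inputs in $\{\efvalue,\etvalue\}$ and the value $\efvalue$ on the four remaining elements, and because by hypothesis $\etvalue \in D$ while $\efvalue \notin D$, for every $\DefMat$-valuation $v$ and every formula $\chi$ we have $v(\cons\chi) \in D$ iff $v(\chi) \in \{\efvalue,\etvalue\}$. Next, as $\GenImp^{\PPSixResImp}$ is the residuum of $\land$ on the lattice $\PPSixResImp$, whose order is $\efvalue < \fvalue < \{\both,\neither\} < \tvalue < \etvalue$ with $\both$ and $\neither$ incomparable, we have $a \GenImp^{\PPSixResImp} b = \etvalue$ iff $a \leq b$, while $a \GenImp^{\PPSixResImp} \efvalue = \neg a$ is the pseudocomplement, equal to $\etvalue$ if $a = \efvalue$ and to $\efvalue$ otherwise. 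Finally, from Table~\ref{tab:up-down-tables} together with $\etvalue \in D$ and $\efvalue \notin D$, we get $v(\up p) \in D$ iff $v(p) \neq \fvalue$, and $v(\down p) \in D$ iff $v(p) \neq \tvalue$. Only these two assumptions on $D$ are used; no closure property of $D$ is required, which is why the statement quantifies over all such $D$.

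With these observations, each rule of $\CalcVar_{\Diamond}$ becomes a finite case analysis, and I would dispatch the easy ones first. $\RuleA^\Diamond_{\up\mathsf{or}\down}$ is sound because the only values of $v(p)$ for which $v(\up p) \notin D$, resp.\ $v(\down p) \notin D$, are $\fvalue$ and $\tvalue$, and $\fvalue \neq \tvalue$. $\RuleA_\mathsf{id}^{\Diamond}$ is sound since $a \leq a$ forces $a \GenImp^{\PPSixResImp} a = \etvalue$. $\RuleA^\Diamond_{\geq\fvalue}$ is sound because $\fvalue$ is the least non-bottom element, so $\fvalue \GenImp^{\PPSixResImp} b \in \{\efvalue,\etvalue\}$ for all $b$ (it is $\etvalue$ when $b \neq \efvalue$ and $\neg\fvalue = \efvalue$ when $b = \efvalue$); dually $\RuleA^\Diamond_{\leq\tvalue}$ is sound because $\tvalue$ is the greatest non-top element. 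For $\RuleA_\mathsf{trans}^\Diamond$, $\RuleA^\Diamond_{\mathsf{incclass}_1}$, $\RuleA^\Diamond_{\mathsf{incclass}_2}$ and $\RuleA^\Diamond_{\mathsf{incclass}_3}$ I would assume every antecedent formula designated; by the facts above each premise of the form $\cons(\varphi \GenImp \psi)$ then asserts that $v(\varphi) \leq v(\psi)$ or $v(\psi) = \efvalue$, and each premise of the form $\up\varphi$ or $\down\varphi$ excludes one value of $v(\varphi)$, after which the intended succedent formula is forced by transitivity of $\leq$ and a direct look at the few surviving tuples of values in $\SixSet$ (repeatedly using that $\fvalue,\both,\neither,\tvalue$ are the only elements strictly between the bounds, with $\both,\neither$ the sole incomparable pair).

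The rule demanding the most care is $\RuleA^\Diamond_{\mathsf{just}_2}$. I would assume $v(\down p) \in D$ and $v(\up r) \in D$, i.e.\ $v(p) \neq \tvalue$ and $v(r) \neq \fvalue$, and show that one of $\cons p$, $\cons(p \GenImp q)$, $\cons(p \GenImp r)$, $\cons(q \GenImp r)$ is designated. If $v(p) \in \{\efvalue,\etvalue\}$ then $\cons p$ is designated, so the work lies in the split $v(p) \in \{\fvalue,\neither,\both\}$: for $v(p) = \fvalue$ the value $\fvalue \GenImp^{\PPSixResImp} v(q)$ is already in $\{\efvalue,\etvalue\}$; for $v(p) = \neither$ (resp.\ $v(p) = \both$) the value $v(p) \GenImp^{\PPSixResImp} v(q)$ fails to lie in $\{\efvalue,\etvalue\}$ only when $v(q) \in \{\fvalue,\both\}$ (resp.\ $v(q) \in \{\fvalue,\neither\}$), and in those residual sub-cases one checks, using $v(r) \neq \fvalue$ to discard the single problematic value of $v(r)$, that either $v(p) \GenImp^{\PPSixResImp} v(r)$ or $v(q) \GenImp^{\PPSixResImp} v(r)$ lies in $\{\efvalue,\etvalue\}$. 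A shorter instance of the same pattern settles $\RuleA^\Diamond_{\mathsf{incclass}_3}$. Every one of these verifications is a bounded computation, so collecting the cases proves the proposition; the only mildly delicate point is keeping the case tree for $\RuleA^\Diamond_{\mathsf{just}_2}$ organized.
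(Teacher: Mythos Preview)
Your proposal is correct and follows essentially the same rule-by-rule verification as the paper, with the same preliminary semantic observations about $\cons$, $\up$, $\down$ and the residuum. The only minor stylistic difference is in $\RuleA^\Diamond_{\mathsf{just}_2}$: the paper first assumes all four succedent formulas are undesignated, infers $v(p),v(q),v(r)\in\{\both,\neither\}$, and finishes by pigeonhole, whereas you do a direct case split on $v(p)$ and then $v(q),v(r)$; both arguments are equally valid.
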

\begin{proof}
We proceed rule by rule.
    \begin{description}
        \item[$\RuleA^\Diamond_{\up\mathsf{or}\down}$:]
 If $v(\up p)\notin D$, then $v(\up p)=\efvalue$, so $v(p)=\fvalue$ and thus $v(\down p)=\etvalue\in D$.
        
        \item[${\RuleA^\Diamond_\mathsf{id}}$:] Clearly,
        $v(\cons(p\GenImp p))=\etvalue\in D$.  
 
        \item[$\RuleA^\Diamond_\mathsf{trans}$:]
If $v(\cons q)\notin D$, we have $v(q)\neq \efvalue$,
hence $v(\cons(p\GenImp q))=\etvalue$  and $v(\cons(q\GenImp r))=\etvalue$,
thus $v(p)\leq v(q)\leq v(r)$,
and therefore $v(\cons(p \GenImp r))=\etvalue$.

\item[$\RuleA^\Diamond_{\leq\tvalue}$:]
If $v(\down p)\notin D$, then $v(p)=\tvalue$.
If $v(\cons q)\notin D$, then $\fvalue\leq v(q)\leq \tvalue$
and therefore $v(\cons(q\GenImp p))=\etvalue$.

\item[$\RuleA^\Diamond_{\geq\fvalue}$:]
If $v(\up p)\notin D$ then $v(p)=\fvalue$
and therefore $v(\cons(p \GenImp q))=\etvalue$.

\item[$\RuleA^\Diamond_{\mathsf{incclass}_1}$:]
If $v(\up p)\in D$ and $v(\cons p)\notin D$ then
$v(p)\in \{\both,\neither,\tvalue\}$
and if 
$v(\cons(p\GenImp q))\in D$ then 
$v(q)\in \{\efvalue,\both,\neither,\tvalue,\etvalue
\}$
and therefore 
$v(\up q)=\etvalue\in D$.

\item[$\RuleA^\Diamond_{\mathsf{incclass}_2}$:]

If $v(\down q)\in D$ and $v(\cons q)\notin D$ then
$v(q)\in \{\fvalue,\both,\neither\}$ and if 
$v(\cons(p \GenImp q))\in D$ then 
$v(p)\in \{\efvalue,\fvalue,\both,\neither\}$
and therefore 
$v(\down p)=\etvalue\in D$.

\item[$\RuleA^\Diamond_{\mathsf{incclass}_3}$:]
If $v(\up p),v(\down q)\in D$ and 
$v(\cons q)\notin D$ then 
$v(p) \neq \fvalue$
and
$v(q)\in \{\fvalue,\both,\neither\}$. Further, if 
$v(\cons(p \GenImp q))\in D$
then $v(p) \leq v(q)$ and so
$v(q \GenImp p)\in D$.

%\item[$\RuleA^\Diamond_{\mathsf{incclass}_3}$:]
%If $v(\up p),v(\down p),
%v(\up q),v(\down q)\in D$ and 
%$v(\cons p),v(\cons q)\notin D$ then 
%$v(p),v(q)\in \{\both,\neither\}$. Further, if 
%$v(\cons(p \GenImp q))\in D$
%then $v(p)=v(q)$ and so
%$v(q \GenImp p)\in D$.

\item[$\RuleA^\Diamond_{\mathsf{just}_2}$:]
If $v(\down p),
v(\up r)\in D$ and 
$v(\cons p)\notin D$
then $v(p)\in \{\fvalue,\both,\neither\}$
and $v(r) \neq \fvalue$.
If $v(\cons(p \GenImp q)),v(\cons(p \GenImp q)) \not\in D$,
then $v(p \GenImp q), v(q \GenImp r) 
\in \{ \fvalue, \both, \neither, \tvalue\}$.
Then $v(p), v(q),v(r) \in \{\both, \neither \}$.
Hence, either $v(p)=v(q)$, in which case
$v(p\GenImp q)=\etvalue\in D$,
or 
$v(q)=v(r)$, in which case
$v(q \GenImp r)=\etvalue\in D$
(both cases contradicting the assumptions)
or 
$v(p)=v(r)$, in which case
$v(p \GenImp r)=\etvalue\in D$, as desired.
\qedhere

%\item[$\RuleA^\Diamond_{\mathsf{just}_2}$:]
%If $v(\up p), v(\down p),
%v(\up q),v(\down q),
%v(\up r),v(\down r)\in D$ and 
%$v(\cons p),v(\cons q),v(\cons r)\notin D$
%then $v(p),v(q),v(r)\in \{\both,\neither\}$.
%Hence, either $v(p)=v(q)$, in which case
%$v(p\GenImp q)=\etvalue\in D$,
%or 
% $v(p)=v(r)$, in which case
%$v(p \GenImp r)=\etvalue\in D$,
%or 
%$v(q)=v(r)$, in which case
%$v(q \to r)=\etvalue\in D$.
 
    \end{description}
\end{proof}

\begin{definition}
    Let $\CalcVar_{\GenImp}$
be the \SetSet{}
calculus given by
the following inference rules:
$$
    \inferx[\RuleA^\GenImp_1]{\cons q}{\cons(p \GenImp q)}
    \qquad
    \inferx[\RuleA^\GenImp_2]{q}{p \GenImp q}
    \qquad
    \inferx[\RuleA^\GenImp_3]{p, p \GenImp q}{q}
    \qquad
    \inferx[\RuleA^\GenImp_4]{\cons p, p,\cons(p \GenImp q)}{\cons q}
    \qquad
    \inferx[\RuleA^\GenImp_5]{\cons p, p,\down(p \GenImp q)}{\down q}
$$
$$
 \inferx[\RuleA^\GenImp_6]{\cons p, p,\up(p \GenImp q)}{\up q}
 \qquad
 \inferx[\RuleA^\GenImp_7]{\up q}{\up(p \GenImp q)}
 \qquad
 \inferx[\RuleA^\GenImp_8]{\down q}{\down(p \GenImp q)}
 \qquad
 \inferx[\RuleA^\GenImp_9]{}{\cons(q \GenImp (p \GenImp q))}
 \qquad
 \inferx[\RuleA^\GenImp_{10}]{\cons p}{p,\cons(p \GenImp q)}
$$
$$
\inferx[\RuleA^\GenImp_{11}]{\cons p}{p, p \GenImp q}
\qquad
\inferx[\RuleA^\GenImp_{12}]{\cons q,p \GenImp q}{q,\cons p}
\qquad
\inferx[\RuleA^\GenImp_{13}]{\cons(p \GenImp q)}{\cons q, p\GenImp q}
\qquad
\inferx[\RuleA^\GenImp_{14}]{}{\down p,
                \cons(p \GenImp q),
                \cons((p \GenImp q) \GenImp q)}
$$
$$
\inferx[\RuleA^\GenImp_{15}]{\up p,\cons(p\GenImp q)}{\cons p,\up q}
\qquad
\inferx[\RuleA^\GenImp_{16}]{\down p}{\cons p,\up(p\GenImp q)}
\qquad 
\inferx[\RuleA^\GenImp_{17}]{}{\cons p,\down(p\GenImp q)}
$$
$$
\inferx[\RuleA^\GenImp_{18}]{\up p,
                \cons(p \GenImp (p \GenImp q))}{\cons p,\up q}
\qquad
\inferx[\RuleA^\GenImp_{19}]{\up q}{\cons(p\GenImp q),\cons((p \GenImp q) \GenImp q)}
$$
\end{definition}

\begin{proposition}
  The rules of $\CalcVar_\GenImp$ are sound for any matrix $\langle \PPSixResImp, D \rangle$ with $D = \Upset a$ for some
  $a > \efvalue$.
  %(proper filter? - thus 
  %  containing $\hat{1}$ and not containing $\hat{0}$.    )
\end{proposition}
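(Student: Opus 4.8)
The plan is to verify soundness rule by rule. I would fix a matrix $\DefMat \SymbDef \langle \PPSixResImp, D \rangle$ with $D = \Upset a$, $a > \efvalue$, and a $\DefMat$-valuation $v$, and begin by collecting the handful of facts that do all the work. Since $\efvalue$ and $\etvalue$ are the bottom and top of the lattice reduct of $\PPSixResImp$, we have $\etvalue \in D$, $\efvalue \notin D$, and $D$ is upward closed and closed under $\land^{\PPSixResImp}$. From the interpretation of $\cons$ in $\PPSixResImp$ (Example~\ref{ex:ppidefs}) one gets $v(\cons\FmA) \in D$ iff $v(\FmA) \in \{\efvalue,\etvalue\}$; consequently $v(\FmA), v(\cons\FmA) \in D$ force $v(\FmA) = \etvalue$, while $v(\FmA) \notin D$ together with $v(\cons\FmA) \in D$ force $v(\FmA) = \efvalue$. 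From Table~\ref{tab:up-down-tables}, $v(\up\FmA) \notin D$ iff $v(\FmA) = \fvalue$ and $v(\down\FmA) \notin D$ iff $v(\FmA) = \tvalue$. Finally, as $\GenImp^{\PPSixResImp}$ is the residuum of $\land^{\PPSixResImp}$, I would use the standard Heyting identities ($b \leq c \GenImp^{\PPSixResImp} b$; $c \land^{\PPSixResImp}(c \GenImp^{\PPSixResImp} b) \leq b$; $c \GenImp^{\PPSixResImp} c = \etvalue$; and $b \leq c$ iff $b \GenImp^{\PPSixResImp} c = \etvalue$), together with what one reads directly off the truth-table of $\ResImp$: $\etvalue \GenImp^{\PPSixResImp} b = b$, $\efvalue \GenImp^{\PPSixResImp} b = \etvalue$, $b \GenImp^{\PPSixResImp} \etvalue = \etvalue$, while $b \GenImp^{\PPSixResImp} c$ can equal $\efvalue$, $\fvalue$, or $\tvalue$ only when $c$ does (respectively $c=\efvalue$ with $b\neq\efvalue$, $c=\fvalue$, $c=\tvalue$).

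With these in hand the rules split into three groups. The rules $\RuleA^\GenImp_1$, $\RuleA^\GenImp_2$, $\RuleA^\GenImp_3$, $\RuleA^\GenImp_7$, $\RuleA^\GenImp_8$, $\RuleA^\GenImp_9$ are immediate: $\RuleA^\GenImp_2$ and $\RuleA^\GenImp_3$ are $b \leq c \GenImp^{\PPSixResImp} b$ and $c \land^{\PPSixResImp}(c \GenImp^{\PPSixResImp} b) \leq b$ combined with upward closure and $\land$-closure of $D$; $\RuleA^\GenImp_9$ is $c \leq d \GenImp^{\PPSixResImp} c$; $\RuleA^\GenImp_1$ uses $b \GenImp^{\PPSixResImp}\etvalue = \etvalue$ and $b \GenImp^{\PPSixResImp}\efvalue \in \{\efvalue,\etvalue\}$; and $\RuleA^\GenImp_7$, $\RuleA^\GenImp_8$ use that $b \GenImp^{\PPSixResImp} c = \fvalue$ (resp.\ $\tvalue$) only when $c$ is. The rules $\RuleA^\GenImp_4$, $\RuleA^\GenImp_5$, $\RuleA^\GenImp_6$, $\RuleA^\GenImp_{10}$--$\RuleA^\GenImp_{13}$ are brought back to the first group by using the $\cons$-facts to pin one of $v(p)$, $v(q)$, $v(p \GenImp q)$ down to a classical value; once that happens the identities $\etvalue \GenImp^{\PPSixResImp} b = b$, $\efvalue \GenImp^{\PPSixResImp} b = \etvalue$, $b \GenImp^{\PPSixResImp} \etvalue = \etvalue$, and ``$b \GenImp^{\PPSixResImp} c = \efvalue$ only if $c = \efvalue$'' determine or rule out the others and the conclusion drops out.

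The remaining rules $\RuleA^\GenImp_{14}$--$\RuleA^\GenImp_{19}$ are the ones that genuinely use the shape of $\PPSixResImp$, and I would treat each by reductio. Assuming every succedent formula undesignated, the $\up$/$\down$/$\cons$-facts force the relevant subformulas --- among $p$, $q$, $p \GenImp q$, and (for $\RuleA^\GenImp_{18}$, $\RuleA^\GenImp_{19}$) also $p \GenImp (p \GenImp q)$ or $(p \GenImp q) \GenImp q$ --- to take values among the two ``non-classical'' elements $\{\both, \neither\}$, or else into a configuration the implication table forbids outright (this last case handles $\RuleA^\GenImp_{16}$, $\RuleA^\GenImp_{17}$, where $p \GenImp q$ would have to equal $\fvalue$ or $\tvalue$ while $v(p) \in \{\both, \neither\}$). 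When everything lands in $\{\both, \neither\}$, a short computation with $\both \GenImp^{\PPSixResImp} \neither = \neither$, $\neither \GenImp^{\PPSixResImp} \both = \both$, $\both \GenImp^{\PPSixResImp} \fvalue = \neither$, $\neither \GenImp^{\PPSixResImp} \fvalue = \both$ and $c \GenImp^{\PPSixResImp} c = \etvalue$ shows one of the iterated implications is actually $\etvalue$, contradicting that the associated $\cons$-formula was taken undesignated. The main obstacle I anticipate is $\RuleA^\GenImp_{18}$ and $\RuleA^\GenImp_{19}$: there one must track a doubly-iterated Heyting implication, and what makes them go through is that the premises ($\up p$, $\up q$, the various $\cons(\cdots)$) leave no freedom in $v(p)$, $v(q)$ beyond $\{\both, \neither\}$, on which two-element set iterated Heyting implication collapses to $\etvalue$ after a single step.
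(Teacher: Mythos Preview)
Your approach is correct and essentially the same as the paper's: both verify soundness rule by rule against the truth tables of $\PPSixResImp$, using the characterizations of $\cons$, $\up$, $\down$ in terms of membership in $\{\efvalue,\etvalue\}$, $\{\fvalue\}$, $\{\tvalue\}$. Your grouping and the appeal to general Heyting identities even streamline a few cases (e.g.\ $\RuleA^\GenImp_9$ via $q \leq p \GenImp q$, hence $q \GenImp (p \GenImp q) = \etvalue$, versus the paper's explicit case split).

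One small caution on your third group: the summary ``everything lands in $\{\both,\neither\}$'' is slightly oversold. In $\RuleA^\GenImp_{14}$ the reductio forces $v(p) = \tvalue$, and the contradiction comes from $\tvalue \GenImp b = b$ for $b \in \{\fvalue,\both,\neither\}$, giving $(p \GenImp q) \GenImp q = v(q) \GenImp v(q) = \etvalue$; in $\RuleA^\GenImp_{15}$ the contradiction is a direct table violation with $v(p) \in \{\both,\neither,\tvalue\}$ and $v(q) = \fvalue$; and in $\RuleA^\GenImp_{18}$ the case $v(p) = \tvalue$ (with $v(q) = \fvalue$) must also be checked. These are still one-line table lookups, not obstacles, but the full write-up will need them rather than the uniform $\{\both,\neither\}$ picture you sketch.
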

\begin{proof}
We proceed rule by rule.

    \begin{description}
        \item[{$\RuleA^{\GenImp}_{1}$}:]
            If $v(\cons q)\in D$ then $v(q)\in \{\etvalue,\efvalue\}$,
        hence $v(\cons (p\GenImp q))\in D$.
        \item[{$\RuleA^{\GenImp}_{2}$}:]
                    If $v(q)\in D$, then $v(q)\geq a$
            and by analysing the table of $\ResImp$
            we conclude that $v(p\GenImp q)\geq v(q)\geq a$ and
            thus $v(p \GenImp q)\in D$.
        \item[{$\RuleA^{\GenImp}_{3}$}:]
            If $v(p),v(p\GenImp q)\in D$
                then $v(p)\geq a$ and $v(p\GenImp q)\geq a$,
                and by analysing the table of $\ResImp$
                we conclude that $v(q)\geq a$ and
                thus $v(q)\in D$.
        \item[{$\RuleA^{\GenImp}_{4}$}:]
        If $v(\cons p),v(p) \in D$
        and $v(\cons q) \not\in D$,
        then $v(p) = \etvalue$
        and $v(q) \in \{ \fvalue, \both, \neither, \tvalue \}$
        and $v(p \GenImp q) \in \{ \fvalue, \both, \neither, \tvalue \}$, so
        $v(\cons(p \GenImp q)) = 
        \efvalue \not\in D$.
        \item[{$\RuleA^{\GenImp}_{5}$}:]
        If $v(\cons p),v(p) \in D$
        and $v(\down q) \not\in D$,
        then $v(p) = \etvalue$
        and $v(q) = \tvalue$,
        thus $v(p \GenImp q) = \tvalue$ and
        we are done.
        \item[{$\RuleA^{\GenImp}_{6}$}:]
        If $v(\cons p),v(p) \in D$
        and $v(\up q) \not\in D$,
        then $v(p) = \etvalue$
        and $v(q) = \fvalue$,
        thus $v(p \GenImp q) = \fvalue$ and
        we are done.
        \item[{$\RuleA^{\GenImp}_{7}$}:]
        If $v(\up q) \in D$,
        then $v(q) \neq \fvalue$.
        But then $v(p \GenImp q) \neq \fvalue$,
        and $v(\up(p \GenImp q)) = \etvalue$.
        \item[{$\RuleA^{\GenImp}_{8}$}:]
        Similar to the
        proof for {$\RuleA^{\GenImp}_{7}$}.
        \item[{$\RuleA^{\GenImp}_{9}$}:]
        If $v(\cons(p \GenImp (q \GenImp p))) 
        \not\in D$,
        then $v(p \GenImp (q \GenImp p)) \in \{ \fvalue,\both,\neither,\tvalue \}$.
        Then $v(p) \in \{\both,\neither,\tvalue,\etvalue \}$
        and $v(q \GenImp p) \in \{ \fvalue,\both,\neither,\tvalue \}$.
        So, $v(q) \in \{\both,\neither,\tvalue,\etvalue \}$
        and 
        $v(p) \in \{\both,\neither,\tvalue \}$.
        If $v(p) = \both$ and $v(q \GenImp p) = \both$, then $v(p \GenImp (q \GenImp p)) = \etvalue$.
        If $v(p) = \neither$,
        the proof is similar.
        If $v(p) = \tvalue$,
        then $v(q \GenImp p) = \etvalue$
        or $v(q) = \etvalue$.
        In all cases we reach a contradiction.
        \item[{$\RuleA^{\GenImp}_{10}$, 
        $\RuleA^{\GenImp}_{11}$}:]
        If $v(p),v(\cons p)\notin D$,
            then $v(p)=\efvalue$,
            thus $v(p\GenImp q)=v(\cons(p\GenImp q))=\etvalue \in D$.
        \item[{$\RuleA^{\GenImp}_{12}$}:]
        If $v(\cons q) \in D$
        and $v(q) \not\in D$,
        then $v(q) = \efvalue$.
        If $v(\cons p) \not\in D$,
        then
        $v(p) \in \{ \fvalue,\both,\neither,\tvalue \}$.
        But then $v(p \GenImp q) = \fvalue \not\in D$.
        \item[{$\RuleA^{\GenImp}_{13}$}:]
        If $v(\cons(p \GenImp q)) \in D$,
        then $v(p \GenImp q) \in \{ \efvalue,\etvalue \}$.
        Further, if $v(\cons q) \not\in D$,
        then $v(q) \in \{ \fvalue,\both,\neither,\tvalue \}$,
        thus
         $v(p \GenImp q) = \etvalue$.
        \item[{$\RuleA^{\GenImp}_{14}$}:]
        If $v(\cons(p \GenImp q)) \not\in D$,
        we have $v(p \GenImp q) \in \{ \fvalue,\both,\neither,\tvalue \}$.
        If $v(\down p) \not\in D$,
        then $v(p) = \tvalue$,
        and $v(p \GenImp q) = v(q)$,
        and we are done.
        \item[{$\RuleA^{\GenImp}_{15}$}:]
        If $v(\cons p) \not\in D$,
        then $v(p) \in \{\both, \neither, \tvalue\}$.
        If $v(\up q) \not\in D$,
        then $v(q) = \fvalue$.
        Thus $v(p \GenImp q) \in \{ \both,\neither,\fvalue \}$,
        so $v(\cons(p \GenImp q)) = \efvalue$.
        \item[{$\RuleA^{\GenImp}_{16}$}:]
        If $v(\cons p) \not\in D$
        and $v(\down p) \in D$,
        then $v(p) \in \{ \fvalue,\both,
        \neither \}$
        and $v(p \GenImp q) \neq \fvalue$,
        and thus $v(\up(p \GenImp q)) = \etvalue$.
        \item[{$\RuleA^{\GenImp}_{17}$}:]
        If $v(\cons p) \not\in D$, then $ v(p) \in \{ \fvalue,\both,\neither,\tvalue \}$.
        Thus $v(p \GenImp q) \neq \tvalue$,
        and so $v(\down(p \GenImp q)) = \etvalue$.
        \item[{$\RuleA^{\GenImp}_{18}$}:] 
        If $v(\up p) \in D$
        and $v(\cons p) \not\in D$,
        then $v(p) \in \{ \both,\neither,\tvalue \}$. If $v(\up q) \not\in D$,
        then $v(q) = \fvalue$.
        If $v(p) = \both$,
        then $v(p \GenImp q) = \neither$,
        and then $v(p \GenImp (p \GenImp q)) = \neither$.
        Similarly, we have
        $v(p \GenImp (p \GenImp q)) = \both$
        if $v(p) = \neither$.
        If $v(p) = \tvalue$,
        then $v(p \GenImp q) = \fvalue$,
        and $v(p \GenImp (p \GenImp q)) = \fvalue$.
        In any case, $v(\cons(p \GenImp (p \GenImp q))) = \efvalue$.
        \item[{$\RuleA^{\GenImp}_{19}$}:]
        If $v(\cons(p \GenImp q)) \not\in D$,
        we have $v(p \GenImp q) \in \{ \fvalue,\both,\neither,\tvalue \}$.
        If $v(\up q) \in D$,
        we have $v(q) \neq \fvalue$.
        The case $v(p \GenImp q) = \fvalue$
        is impossible.
        If $v(p \GenImp q) = \both$,
        we have $v(q) = \both$,
        and clearly $v(\cons((p \GenImp q) \GenImp q)) = \etvalue$.
        The case $v(p \GenImp q) = \neither$ 
        is
        similar to the previous case.
        If $v(p \GenImp q) = \tvalue$,
        then $v(p) = \etvalue$, thus $v(\cons((p \GenImp q) \GenImp q)) = \etvalue$.
        \qedhere
    \end{description}
\end{proof}

\begin{definition}
        Let $\CalcVar_{\DMNeg}$
be the \SetSet{}
calculus given by
the following inference rules:
{
$$\inferx[\RuleA^\DMNeg_1]{\cons p}{p, \DMNeg p}\qquad
\inferx[\RuleA^\DMNeg_2]{\cons p,p, \DMNeg p}{}\qquad
\inferx[\RuleA^\DMNeg_3]{\cons p}{\cons \DMNeg p}\qquad
\inferx[\RuleA^\DMNeg_4]{\cons \DMNeg p}{\cons p}
$$
 
$$\inferx[{\RuleA^\DMNeg_5}]{\up \DMNeg p}{\down p}\qquad
\inferx[\RuleA^\DMNeg_6]{\down \DMNeg p}{\up p}\qquad
\inferx[\RuleA^\DMNeg_7]{\down p}{\up \DMNeg p}\qquad
\inferx[\RuleA^\DMNeg_8]{\up p}{\down \DMNeg p}
%\qquad  \frac{\up p,\down p}{\cons p, \cons(p\to \DMNeg p)}_{r^\DMNeg_8}
$$}

\end{definition}

\begin{proposition}
  The rules of $\CalcVar_\DMNeg$ are sound for any matrix $\langle \PPSixResImp, D \rangle$ with $D = \Upset a$ for some
  $a>\efvalue$.
\end{proposition}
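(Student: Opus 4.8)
The plan is to verify soundness rule by rule, by a straightforward inspection of the truth tables of $\PPSixResImp$. First I would record the only features of $D$ that matter: since $D = \Upset a$ with $a > \efvalue$, we have $\etvalue \in D$ and $\efvalue \notin D$; and since $\cons^{\PPSixResImp}$, and hence also the derived connectives $\up$ and $\down$, take values only in $\{ \efvalue, \etvalue \}$, for any formula $\FmA$ and valuation $v$ over $\langle \PPSixResImp, D \rangle$ we have $v(\cons \FmA) \in D$ iff $v(\cons \FmA) = \etvalue$, and likewise $v(\up \FmA) \in D$ iff $v(\up \FmA) = \etvalue$ and $v(\down \FmA) \in D$ iff $v(\down \FmA) = \etvalue$, independently of which admissible filter $D$ we picked. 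I would also recall from the definition of $\PPSixResImp$ that $\DMNeg^{\PPSixResImp}$ swaps $\efvalue \leftrightarrow \etvalue$ and $\fvalue \leftrightarrow \tvalue$ while fixing $\both$ and $\neither$, that $\cons^{\PPSixResImp} b = \etvalue$ exactly when $b \in \{ \efvalue, \etvalue \}$, and the characterisation given by Table~\ref{tab:up-down-tables}: $v(\up p) = \efvalue$ iff $v(p) = \fvalue$, and $v(\down p) = \efvalue$ iff $v(p) = \tvalue$.

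For the four ``$\cons$-rules'' $\RuleA^\DMNeg_1$--$\RuleA^\DMNeg_4$ I would argue directly. If $v(\cons p) \in D$ then $v(\cons p) = \etvalue$, so $v(p) \in \{ \efvalue, \etvalue \}$; in the first case $v(\DMNeg p) = \etvalue \in D$ and in the second $v(p) = \etvalue \in D$, which settles $\RuleA^\DMNeg_1$, and the same dichotomy shows the antecedent of $\RuleA^\DMNeg_2$ can never be wholly designated, so that rule holds vacuously. For $\RuleA^\DMNeg_3$ and $\RuleA^\DMNeg_4$ it suffices to note that $\DMNeg^{\PPSixResImp}$ maps $\{ \efvalue, \etvalue \}$ onto itself and that $\DMNeg\DMNeg p \approx p$ holds in $\PPSixResImp$, so $v(\cons p) = \etvalue$ iff $v(p) \in \{ \efvalue, \etvalue \}$ iff $v(\DMNeg p) \in \{ \efvalue, \etvalue \}$ iff $v(\cons \DMNeg p) = \etvalue$.

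For the four ``$\up/\down$-rules'' $\RuleA^\DMNeg_5$--$\RuleA^\DMNeg_8$ I would use contraposition. Take $\RuleA^\DMNeg_5$, say: if $v(\down p) \notin D$ then $v(\down p) = \efvalue$, hence $v(p) = \tvalue$ by Table~\ref{tab:up-down-tables}, hence $v(\DMNeg p) = \fvalue$, hence $v(\up \DMNeg p) = \efvalue \notin D$; the remaining three are obtained by the same computation, using only that $\DMNeg^{\PPSixResImp}$ exchanges $\fvalue$ and $\tvalue$ and the dual roles of $\up$ and $\down$, so in fact a single verification suffices and the others follow by symmetry.

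I do not anticipate a genuine obstacle here: the statement is a finite inspection of the six-valued tables. The only points deserving a line of care are confirming the $\DMNeg$-induced duality between $\up$ and $\down$ that underlies $\RuleA^\DMNeg_5$--$\RuleA^\DMNeg_8$, and observing — as in the companion soundness lemmas for $\CalcVar_\Diamond$ and $\CalcVar_\GenImp$ — that the argument uses nothing about $D$ beyond $\etvalue \in D$ and $\efvalue \notin D$, so it is uniform over all filters $\Upset a$ with $a > \efvalue$.
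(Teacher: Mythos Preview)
Your proposal is correct and takes essentially the same approach as the paper: a direct rule-by-rule truth-table inspection using only that $\etvalue\in D$, $\efvalue\notin D$, and the behaviour of $\DMNeg$, $\cons$, $\up$, $\down$ on $\PPSixResImp$. The paper spells out each of $\RuleA^\DMNeg_5$--$\RuleA^\DMNeg_8$ individually rather than appealing to symmetry, but the computations are identical.
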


\begin{proof}
    We proceed rule by rule.
    
    \begin{description}
        \item[${\RuleA^\DMNeg_1}$:]
 If $v(\cons p)\in D$ and $v(p)\notin D$
 then $v(p)=\efvalue$ and
 $v(\DMNeg p)=\etvalue\in D$.
 
 \item 
 [${\RuleA^\DMNeg_2}$:]
If $v(p),v(\cons p)\in D$
then $v(p)=\etvalue$ and $v(\DMNeg p)=\efvalue\notin D$.

\item 
[${\RuleA^\DMNeg_3}$:]
If $v(\cons p)\in D$ then $v(p),v(\DMNeg p)\in \{\efvalue,\etvalue\}$ and so 
$v(\cons\DMNeg p) = \etvalue\in D$.

\item[$\RuleA^\DMNeg_4$:]
If $v(\cons \DMNeg p)\in D$ then $v(\DMNeg p),v( p)\in \{\efvalue,\etvalue\}$ and so 
$v(\cons p) = \etvalue\in D$.
 
\item 
[${{\RuleA^\DMNeg_5}}$:]
If $v(\down p)\notin D$ then $v(p)=\tvalue$,
so $v(\DMNeg p)=\fvalue$ and hence $v(\up \DMNeg p)=\efvalue\notin D$.

\item 
[${\RuleA^\DMNeg_6}$:]
If $v(\up p)\notin D$ then $v(p)=\fvalue$,
so $v(\DMNeg p)=\tvalue$ and hence $v(\down \DMNeg p)=\efvalue\notin D$.

\item 
[${\RuleA^\DMNeg_7}$:]
If $v(\down p)\in D$ then $v(p)\neq \tvalue$,
hence $v(\DMNeg p)\neq \fvalue$ and so
$v(\up \DMNeg p)=\etvalue\in D$.

\item 
[${\RuleA^\DMNeg_8}$:]
If $v(\up p)\in D$ then $v(p)\neq \fvalue$,
hence $v(\DMNeg p)\neq \tvalue$ and so
$v(\down \DMNeg p)=\etvalue\in D$.\qedhere

    \end{description}
\end{proof}

\begin{definition} 
Let $\CalcVar_{\cons}$
be the \SetSet{}
calculus given by
the following inference rule:
$$\inferx[\RuleA_{\cons}]{}{\cons \cons p}
%\qquad
%\frac{}{\cons(q\to p)}
$$
\end{definition}

\begin{proposition}
          The rules of $\CalcVar_\cons$ are sound for any matrix $\langle \PPSixResImp, D \rangle$ 
          with $\etvalue \in D$. % and not %containing $\hat{0}$.    )
\end{proposition}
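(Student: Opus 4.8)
The plan is simply to unfold the relevant definitions and reduce the claim to the defining equation \textbf{(PP1)}. Since $\CalcVar_\cons$ consists of the single rule $\RuleA_{\cons}$ with empty antecedent and succedent $\{\cons\cons p\}$, its soundness in a matrix $\DefMat \SymbDef \langle \PPSixResImp, D \rangle$ means, by the definition of $\sequent_\DefMat$, that $\varnothing \sequent_\DefMat \cons\cons p$, i.e.\ that $v(\cons\cons p) \in D$ for every valuation $v$ on $\PPSixResImp$. As $\PPSixResImp$ is deterministic, $v(\cons\cons p)$ is just $\cons^{\PPSixResImp}(\cons^{\PPSixResImp}(v(p)))$, so it suffices to show that $\cons^{\PPSixResImp}(\cons^{\PPSixResImp}(a)) \in D$ for every $a \in \SixSet$.

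First I would recall that the interpretation of $\cons$ in $\PPSixResImp$ is inherited verbatim from $\PPSix$ --- the passage to $\PPSixResImp$ only adds the residual operation $\ResImp$ --- and that, by the definition of $\cons^{\PPSix}$, one has $\cons^{\PPSix}(a) = \etvalue$ whenever $a \in \{\efvalue,\etvalue\}$ and $\cons^{\PPSix}(a) = \efvalue$ otherwise. In particular $\cons^{\PPSix}(a) \in \{\efvalue,\etvalue\}$ for every $a \in \SixSet$, so applying $\cons^{\PPSix}$ a second time always lands on $\etvalue$; hence $\cons^{\PPSixResImp}(\cons^{\PPSixResImp}(a)) = \etvalue$ for all $a \in \SixSet$. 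Equivalently, this is nothing but the validity of the equation \textbf{(PP1)}, $\cons\cons x \approx \top$, in the PP-algebra reduct of $\PPSixResImp$, together with $\top^{\PPSixResImp} = \etvalue$.

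Finally, combining this with the hypothesis $\etvalue \in D$ gives $v(\cons\cons p) = \etvalue \in D$, which is exactly what is needed. I do not anticipate any genuine obstacle here: the only point requiring a modicum of care is that the statement is made for an \emph{arbitrary} designated set $D$ with $\etvalue \in D$, so the proof must not exploit any more specific shape for $D$ (such as $D$ being a principal filter $\Upset a$); but the argument above uses nothing beyond the single fact that $\etvalue$ is designated, so this is automatic.
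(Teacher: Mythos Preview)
Your proof is correct and takes essentially the same approach as the paper, which simply records that $v(\cons\cons p)=\etvalue\in D$. You have merely spelled out in more detail why $\cons^{\PPSixResImp}(\cons^{\PPSixResImp}(a))=\etvalue$ for every $a$, which is exactly the content of the one-line argument in the paper.
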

\begin{proof}
Easily,
    $v(\circ \circ p)=\etvalue\in D$.
\end{proof}

\begin{definition}
Let $\CalcVar_{\land}$
be the \SetSet{}
calculus given by
the following inference rules:

$$
\inferx[\RuleA^{\land}_1]{\cons p,\cons q}{\cons(p \land q)}
\qquad
\inferx[\RuleA^{\land}_2]{p,q}{p\land q}
\qquad
\inferx[\RuleA^{\land}_3]{p \land q}{q}
\qquad
\inferx[\RuleA^{\land}_4]{p, \cons(p \land q)}{\cons q}
\qquad
\inferx[\RuleA^{\land}_5]{\cons p}{\cons(q \GenImp (p \land q))}
$$
$$
\inferx[\RuleA^{\land}_6]{}{\cons((p \land q) \GenImp q)}
\qquad
\inferx[\RuleA^{\land}_7]{p \land q}{p}
\qquad
\inferx[\RuleA^{\land}_8]{q, \cons(p \land q)}{\cons p}
\qquad
\inferx[\RuleA^{\land}_9]{\cons q}{\cons(p \GenImp (p \land q))}
$$
$$
\inferx[\RuleA^{\land}_{10}]{}{\cons((p \land q) \GenImp p)}
\qquad
\inferx[\RuleA^{\land}_{11}]{\cons p}{p,\cons(p \land q)}
\qquad
\inferx[\RuleA^{\land}_{12}]{\cons q}{q,\cons(p \land q)}
\qquad
\inferx[\RuleA^{\land}_{13}]{\cons(p \land q)}{\cons p, \cons q}
$$
$$
\inferx[\RuleA^{\land}_{14}]{\cons(p \GenImp q)}{\cons(p \GenImp (p \land q))}
\qquad
 \inferx[\RuleA^{\land}_{15}]{\cons(q \GenImp p)}{\cons(q \GenImp (p \land q))}
 \qquad
 \inferx[\RuleA^{\land}_{16}]
        {\down p,\up(p \land q)}
        {\cons p,\cons(p \GenImp q)}
$$

\end{definition}

\begin{proposition}
  The rules of $\CalcVar_\land$ are sound for any matrix $\langle \PPSixResImp, D \rangle$ with $D = \Upset a$ for some
  $a>\efvalue$.
\end{proposition}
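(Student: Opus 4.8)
The proof will be a direct rule‑by‑rule soundness check, exploiting only three facts about the matrix $\langle\PPSixResImp,D\rangle$ when $D=\Upset a$ with $a>\efvalue$: that $\etvalue\in D$, that $\efvalue\notin D$, and that $D$ is a lattice filter of the distributive lattice reduct of $\PPSixResImp$. From the interpretation of $\cons$ one reads off that, for every valuation $v$ and formula $r$, $v(\cons r)\in D$ iff $v(r)\in\{\efvalue,\etvalue\}$; from Table~\ref{tab:up-down-tables} (as already noted in the text), $v(\up r)\notin D$ iff $v(r)=\fvalue$ and $v(\down r)\notin D$ iff $v(r)=\tvalue$; and, since $\efvalue$ is the least element of $\PPSixResImp$ while all other elements lie above $\fvalue$, one has $v(p\land q)=\efvalue$ iff $v(p)=\efvalue$ or $v(q)=\efvalue$. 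I will also use the residuation facts that $x\leq y$ implies $x\GenImp^{\PPSixResImp}y=\etvalue$, that $x\GenImp^{\PPSixResImp}\etvalue=\etvalue$, and the Heyting identities $x\GenImp^{\PPSixResImp}(x\land^{\PPSixResImp}y)=x\GenImp^{\PPSixResImp}y$ and $y\GenImp^{\PPSixResImp}(x\land^{\PPSixResImp}y)=y\GenImp^{\PPSixResImp}x$ (valid since the $\GenImp$‑reduct of $\PPSixResImp$ is a Heyting algebra).

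With these preliminaries the sixteen rules fall into a few elementary families. The rules $\RuleA^{\land}_{2}$, $\RuleA^{\land}_{3}$, $\RuleA^{\land}_{7}$ are immediate from $D$ being an up‑set closed under $\land$. The rules $\RuleA^{\land}_{6}$ and $\RuleA^{\land}_{10}$ hold because $p\land q\leq p$ and $p\land q\leq q$ force the displayed implications to take the value $\etvalue$. The six ``$\land$/$\cons$'' rules $\RuleA^{\land}_{1}$, $\RuleA^{\land}_{4}$, $\RuleA^{\land}_{8}$, $\RuleA^{\land}_{11}$, $\RuleA^{\land}_{12}$, $\RuleA^{\land}_{13}$ all follow by combining the characterisation of $v(\cons r)\in D$ with the description of when a meet equals $\efvalue$; for instance, for $\RuleA^{\land}_{13}$, $v(\cons(p\land q))\in D$ gives $v(p\land q)\in\{\efvalue,\etvalue\}$, and in the first case one of $v(p),v(q)$ equals $\efvalue$ while in the second both equal $\etvalue$, so $v(\cons p)\in D$ or $v(\cons q)\in D$; the remaining five are handled the same way, using $v(p)\in D$ (resp.\ $v(q)\in D$) to rule out $v(p)=\efvalue$ (resp.\ $v(q)=\efvalue$) in $\RuleA^{\land}_{4}$ and $\RuleA^{\land}_{8}$. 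Finally, rewriting via the Heyting identities turns $\RuleA^{\land}_{14}$ and $\RuleA^{\land}_{15}$ into trivialities (in each, the two implications denote literally the same operation on $\PPSixResImp$), and turns $\RuleA^{\land}_{5}$ and $\RuleA^{\land}_{9}$ into instances of the rule $\RuleA^{\GenImp}_{1}$ already shown sound for $\CalcVar_{\GenImp}$ on exactly these matrices.

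The only rule that needs a genuine (but short) case distinction is $\RuleA^{\land}_{16}$. Suppose, toward a contradiction, that $v(\down p),v(\up(p\land q))\in D$ while $v(\cons p),v(\cons(p\GenImp q))\notin D$. From $v(\cons p)\notin D$ together with $v(\down p)\in D$ we get $v(p)\in\{\fvalue,\both,\neither\}$, and from $v(\up(p\land q))\in D$ we get $v(p\land q)\neq\fvalue$. Inspecting the column of $\land^{\PPSixResImp}$ determined by $v(p)$ in each of the three cases $v(p)=\fvalue$, $v(p)=\both$, $v(p)=\neither$, the constraint $v(p\land q)\neq\fvalue$ pins $v(q)$ down to exactly the set of values for which the corresponding row of the $\GenImp$‑table gives $v(p)\GenImp^{\PPSixResImp}v(q)\in\{\efvalue,\etvalue\}$ --- e.g.\ for $v(p)=\both$ it forces $v(q)\in\{\efvalue,\both,\tvalue,\etvalue\}$, and $\both\GenImp^{\PPSixResImp}c\in\{\efvalue,\etvalue\}$ for every such $c$. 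Hence $v(\cons(p\GenImp q))=\etvalue\in D$, contradicting the assumption, so $\RuleA^{\land}_{16}$ is sound.

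In short, no step is hard; the whole argument is a finite, mechanical verification against the truth tables. The only point requiring a little care is to phrase every case so that it is uniform over all admissible designated sets $\Upset a$ with $a>\efvalue$, which is why I invoke throughout only $\etvalue\in D$, $\efvalue\notin D$ and the lattice‑filter property of $D$, never the precise threshold $a$.
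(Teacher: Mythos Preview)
Your proposal is correct and follows essentially the same rule-by-rule verification as the paper. Your use of the Heyting identities $x\GenImp(x\land y)=x\GenImp y$ and $y\GenImp(x\land y)=y\GenImp x$ to collapse $\RuleA^{\land}_{5}$, $\RuleA^{\land}_{9}$, $\RuleA^{\land}_{14}$, $\RuleA^{\land}_{15}$ is a cleaner treatment than the paper's more ad hoc case splits for these rules, but otherwise the arguments coincide.
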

\begin{proof}
We proceed rule by rule:

    \begin{description}
        \item[$\RuleA^{\land}_{1}$:]
        If $v(\cons p), v(\cons q) \in D$,
        then $v(p),v(q) \in \{ \efvalue, \etvalue \}$, thus
        $v(p \land q) \in \{ \efvalue, \etvalue \}$, and $v(\cons(p \land q)) = \etvalue$.
        \item[$\RuleA^{\land}_{2}$:]
        Soundness follows from the fact that
        principal filters are
        closed under meets.
        \item[$\RuleA^{\land}_{3}$,
        $\RuleA^{\land}_{7}$:]
         If $v(p\land q)\geq a$, then from $v(p) \geq  v(p\land q)$ and $v(q)\geq v(p\land q)$
    we have that $v(p),v(q)\in D$.
        \item[$\RuleA^{\land}_{4}$, $\RuleA^{\land}_{8}$:]
        If  $v(\cons q) \not\in D$,
        then $v(q) \in \{ \fvalue, \both, \neither, \tvalue \}$.
        If $v(\cons(p \land q)) \in D$,
        then $v(p \land q) \in \{ \efvalue, \etvalue \}$.
        Thus if $v(p \land q) = \etvalue$,
        then $v(p) = v(q) = \etvalue$,
        absurd.
        Otherwise, we must have
        $v(p) = \efvalue$.
        The proof is analogous for the other rule.
        \item[$\RuleA^{\land}_{5}$,$\RuleA^{\land}_{9}$:]
        If $v(\cons p) \in D$,
        then $v(p) \in 
        \{ \efvalue, \etvalue \}$.
        Then either $v(q \GenImp (p \land q)) = v(q \GenImp q) = \etvalue$
        or $v(q \GenImp (p \land q)) \in \{ \efvalue,\etvalue \}$.
        In both cases we are done.
        The proof is analogous for the other rule.
        \item[$\RuleA^{\land}_{6}$, $\RuleA^{\land}_{10}$:]
        We have that $v(p\land q)\leq v(p)$ and $v(p\land q)\leq v(q)$
        hence $v((p\land q)\GenImp q)=v(\cons((p\land q)\GenImp q))=v((p\land q)\GenImp p)=v(\cons((p\land q)\GenImp p))=\etvalue\in D$.
        \item[$\RuleA^{\land}_{11}$, $\RuleA^{\land}_{12}$:]
         If $v(\cons p)\in D$ and $v(p)\notin D$, or $v(\cons q)\in D$ and $v(q)\notin D$, 
         then either  
        $v(p)=\efvalue$ or $v(q)=\efvalue$.
        In any case we have that $v(p\land q)=\efvalue$ and $v(\cons(p\land q))=\etvalue\in D$.
        \item[$\RuleA^{\land}_{13}$:]
        If $v(\cons(p \land q)) \in D$,
        then $v(p \land q) \in \{ \efvalue, \etvalue \}$.
        If $v(p) = v(q) = \etvalue$,
        clearly $v(\cons p) = v(\cons q) = \etvalue$.
        If $v(p) = \efvalue$, then
        $v(\cons p) = \etvalue$.
        Analogously if $v(q) = \efvalue$.
        \item[$\RuleA^{\land}_{14}$, $\RuleA^{\land}_{15}$:] 
        Suppose $\cons(p\GenImp q)\in D$.
    Then either
    $v(q)=\efvalue$, in 
    which case 
    $v((p\lor q)\GenImp q),v((q\lor p)\GenImp q)\in \{\efvalue,\etvalue\}$ 
    and thus $v(\cons((p\lor q)\GenImp q))=v(\cons((q\lor p)\GenImp q))=\etvalue\in D$;
    or $v(p)\leq v(q)$, and so $v(p\lor q)=v(q\lor p)=v(q)$, hence $v(\cons((p\lor q)\GenImp q))=v(\cons((q\lor p)\GenImp q))=\etvalue\in D$.
        \item[$\RuleA^{\land}_{16}$:]
        If $v(\cons p) \not\in D$
        and $v(\down p) \in D$,
        then $v(p) \in \{ \fvalue, \both, \neither \}$.
        If $v(\up(p \land q)) \in D$,
        then $v(p \land q) \neq \fvalue$.
        We may safely focus on cases in which
        $v(p) \neq v(q)$.
        If $v(p) = \fvalue$,
        then $v(p \GenImp q) \in \{ \efvalue,\etvalue \}$.
        If $v(p) = \both$,
        we have $v(q) \not\in \{ \fvalue, \both, \neither\}$ from the above assumptions,
        and this gives $v(p \GenImp q) \in \{ \efvalue,\etvalue \}$.
        Similarly if $v(p) = \neither$.
        \qedhere
    \end{description}
\end{proof}

\begin{definition}
Let $\CalcVar_{\lor}$
be the \SetSet{}
calculus given by
the following inference rules:
$$
    \inferx[\RuleA_1^{\lor}]{\cons p,\cons q}{\cons (p \lor q)}
    \qquad
    \inferx[\RuleA_2^{\lor}]{\cons p, p \lor q}{p, q}
    \qquad
    \inferx[\RuleA_3^{\lor}]{q}{p \lor q}
    \qquad
    \inferx[\RuleA_4^{\lor}]{\cons(p \lor q)}{p, \cons q}
    \qquad
    \inferx[\RuleA_5^{\lor}]{}{\cons(q \GenImp (p \lor q))}
$$
$$
    \inferx[\RuleA_6^{\lor}]{\cons p}{p, \cons((p \lor q) \GenImp q)}
    \qquad
    \inferx[\RuleA_7^{\lor}]{p}{p \lor q}
    \qquad
    \inferx[\RuleA_8^{\lor}]{\cons(p \lor q)}{q,\cons p}
    \qquad 
    \inferx[\RuleA_9^{\lor}]{}{\cons(p \GenImp (p \lor q))}
$$
$$
\inferx[\RuleA_{10}^{\lor}]{\cons q}{q, \cons((p \lor q) \GenImp p)}
\qquad
\inferx[\RuleA_{11}^{\lor}]{p,\cons p}{\cons(p\lor q)}
\qquad
\inferx[\RuleA_{12}^{\lor}]{q,\cons q}{\cons(p\lor q)}\qquad
\inferx[\RuleA_{13}^{\lor}]{\cons(p \lor q)}{\cons p, \cons q}
$$
$$
\infer[\RuleA_{14}^{\lor}]{\cons((p \lor q) \GenImp q)}{\cons(p \GenImp q)}\qquad
\inferx[\RuleA_{15}^{\lor}]{\cons(q \GenImp p)}{\cons((p \lor q) \GenImp p)}
\qquad
\inferx[\RuleA_{16}^{\lor}]
        {\up q, \down(p \lor q)}
        {\cons p,\cons(p \GenImp q)}
$$

\end{definition}

\begin{proposition}
  The rules of $\CalcVar_\lor$ are sound for any matrix $\langle \PPSixResImp, D \rangle$ with $D = \Upset a$ for some
  $a>\efvalue$.
\end{proposition}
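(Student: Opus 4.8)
The plan is to verify soundness rule by rule, in the same style as the soundness proofs for $\CalcVar_{\land}$, $\CalcVar_{\GenImp}$ and $\CalcVar_{\DMNeg}$ given above. Throughout, fix a matrix $\langle \PPSixResImp, D \rangle$ with $D = \Upset a$ and $a > \efvalue$; the only properties of $D$ used are that $\etvalue \in D$, $\efvalue \notin D$, that $D$ is upward closed in the lattice order of $\PPSixResImp$, and that $D$ is closed under $\land$. Recall that soundness of a \SetSet{} rule $\frac{\Gamma}{\Delta}$ means: for every valuation $v$, if $v[\Gamma] \subseteq D$ then $v(\FmC) \in D$ for some $\FmC \in \Delta$; we argue this directly or by contraposition. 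The bookkeeping facts we rely on, read off from the interpretations of $\cons$ and $\DMNeg$ in $\PPSixResImp$, from the displayed table of $\ResImp$, and from Table~\ref{tab:up-down-tables}, are: (i) $v(\cons\FmA) \notin D$ iff $v(\FmA) \in \{\fvalue,\both,\neither,\tvalue\}$; (ii) $v(\FmA \GenImp \FmB) = \etvalue$ iff $v(\FmA) \leq v(\FmB)$, and $v(\FmA \GenImp \FmB) = \efvalue$ iff $v(\FmB) = \efvalue \neq v(\FmA)$, hence $v(\cons(\FmA \GenImp \FmB)) \in D$ iff $v(\FmA) \leq v(\FmB)$ or $v(\FmB) = \efvalue$; (iii) $v(\up\FmA) \notin D$ iff $v(\FmA) = \fvalue$, and $v(\down\FmA) \notin D$ iff $v(\FmA) = \tvalue$; (iv) $\FmB \leq \FmA \lor \FmB$ pointwise, $x \lor y = \etvalue$ forces $x = \etvalue$ or $y = \etvalue$, $x \lor y = \efvalue$ forces $x = y = \efvalue$, and the joins reaching $\tvalue$ are exactly $\both \lor \neither$, $\both \lor \tvalue$, $\neither \lor \tvalue$, $\tvalue \lor \tvalue$ (together with their joins with $\efvalue$ or $\fvalue$).

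With these in hand, the rules $\RuleA_1^{\lor}$ through $\RuleA_{15}^{\lor}$ reduce to short case splits strictly parallel to those already carried out for $\CalcVar_{\land}$, with meets replaced by joins. Concretely: $\RuleA_1^{\lor}$, $\RuleA_{11}^{\lor}$, $\RuleA_{12}^{\lor}$, $\RuleA_{13}^{\lor}$ follow from closure of $\{\efvalue,\etvalue\}$ under joins together with (i) and the behaviour of joins at the top; $\RuleA_3^{\lor}$ and $\RuleA_7^{\lor}$ follow from $\FmB, \FmA \leq \FmA \lor \FmB$ and upward closure of $D$; $\RuleA_2^{\lor}$, $\RuleA_4^{\lor}$, $\RuleA_8^{\lor}$ split on whether the $\cons$-premise forces the relevant value to $\efvalue$ or to $\etvalue$, using the two join facts in (iv) together with $\efvalue \lor y = y$; and $\RuleA_5^{\lor}$, $\RuleA_6^{\lor}$, $\RuleA_9^{\lor}$, $\RuleA_{10}^{\lor}$, $\RuleA_{14}^{\lor}$, $\RuleA_{15}^{\lor}$ use the pointwise inequality $\FmB \leq \FmA \lor \FmB$ — so $\FmB \GenImp (\FmA \lor \FmB)$ is always $\etvalue$ — and, for those of the form $\cons((\FmA \lor \FmB) \GenImp \FmC)$, the fact that $\FmA \lor \FmB$ collapses to the value of a disjunct in the relevant cases, so the residual implication evaluates to $\etvalue$ and (ii) applies; the arguments for $\RuleA_{14}^{\lor}$, $\RuleA_{15}^{\lor}$ are the ones already displayed for $\RuleA^{\land}_{14}$, $\RuleA^{\land}_{15}$. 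Several of these rules come in symmetric pairs under swapping $\DefProp$ and $\DeffProp$ (and using $\DefProp \lor \DeffProp = \DeffProp \lor \DefProp$), which halves the work.

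The one rule demanding a genuinely finer analysis, and hence the step I expect to be the main obstacle, is $\RuleA_{16}^{\lor}$, which — like $\RuleA^{\land}_{16}$ — encodes a near-classicality constraint. Arguing by contraposition on the succedent formula $\cons\DefProp$, one assumes $v(\cons\DefProp) \notin D$, so $v(\DefProp) \in \{\fvalue,\both,\neither,\tvalue\}$ by (i); together with $v(\up\DeffProp) \in D$ (so $v(\DeffProp) \neq \fvalue$) and $v(\down(\DefProp \lor \DeffProp)) \in D$ (so $v(\DefProp \lor \DeffProp) \neq \tvalue$) by (iii), one must conclude $v(\cons(\DefProp \GenImp \DeffProp)) \in D$, that is, by (ii), $v(\DefProp) \leq v(\DeffProp)$ or $v(\DeffProp) = \efvalue$. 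One splits on $v(\DefProp)$: if $v(\DefProp) = \fvalue$ then $v(\DefProp \GenImp \DeffProp) \in \{\efvalue,\etvalue\}$ outright; if $v(\DefProp) = \tvalue$ then $v(\DefProp \lor \DeffProp) \neq \tvalue$ forces $v(\DeffProp) = \etvalue$, whence $v(\DefProp) \leq v(\DeffProp)$; and if $v(\DefProp) \in \{\both,\neither\}$ then the constraint that the join avoids $\tvalue$ together with $v(\DeffProp) \neq \fvalue$ pins $v(\DeffProp)$ down to $\{\efvalue, v(\DefProp), \etvalue\}$, in every case giving $v(\DefProp) \leq v(\DeffProp)$ or $v(\DeffProp) = \efvalue$. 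Collecting these cases finishes $\RuleA_{16}^{\lor}$ and completes the proof; beyond careful handling of the join table and the table of $\ResImp$, no real difficulty is anticipated.
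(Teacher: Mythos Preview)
Your proposal is correct and follows essentially the same approach as the paper: a rule-by-rule verification using the basic facts (i)--(iv) you list, with the same case split for $\RuleA_{16}^{\lor}$ on the value of $v(\DefProp)$. The only difference is presentational---you compress the routine rules into grouped sketches whereas the paper spells each out---but the substantive arguments coincide.
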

\begin{proof}
We proceed rule by rule.
    \begin{description}
        \item[$\RuleA_1^\lor$:]
        If $v(\cons p), v(\cons q) \in D$, then $v(p), v(q) \in \{ \efvalue, \etvalue \}$,
        thus
        $v(p \lor q) \in \{ \efvalue, \etvalue \}$, and we are done.
        \item[$\RuleA_2^\lor$:]
        If $v(\cons p) \in D$
        and $v(p) \not\in D$,
        $v(p) = \efvalue$.
        Then $v(p \lor q) = v(q)$
        and we are done.
        \item[$\RuleA_3^\lor$, $\RuleA_7^\lor$:]
        As $v(q) \leq v(p \lor q)$,
        if $v(q) \in D$, then
        $v(p \lor q) \in D$.
        Similarly for the other rule.
        \item[$\RuleA_4^\lor$, $\RuleA_8^\lor$:] 
        If $v(\cons(p \lor q)) \in D$,
        then $v(p \lor q) \in \{ \efvalue,\etvalue \}$.
        If $v(\cons q) \not\in D$,
        $v(q) \in \{ \fvalue,\both,\neither,\tvalue \}$.
        Then $v(p) = \etvalue$.
        Similarly for the other rule.
        \item[$\RuleA_5^\lor$, $\RuleA_9^\lor$:]
        We have that $v(p\lor q)\geq v(p)$ and $v(p\lor q)\geq v(q)$,
    hence $v(q\GenImp (p\lor q))=
    v(\cons(q\GenImp (p\lor q)))=v(p\GenImp (p\lor q))=
    v(\cons(p\GenImp (p\lor q)))=\etvalue \in D$.
     \item[$\RuleA_{6}^\lor$, $\RuleA_{10}^\lor$:]
     If $v(\cons p)\in D$
        and $v(p)\notin D$
        then $v(p)=\efvalue$ and
        $v(p\lor q)=v(q)$, so $v((p\lor q)\GenImp q) = \etvalue$.
        Similarly if $v(\cons q)\in D$
        and $v(q)\notin D$.
        \item[$\RuleA_{11}^\lor$, $\RuleA_{12}^\lor$:]
        If $v(p),v(\cons p)\in D$, or $v(q),v(\cons q)\in D$, 
     then either  
    $v(p)=\etvalue$ or $v(q)=\etvalue$.
    In any case we have that $v(p\lor q)=v(\cons(p\lor q))=\etvalue\in D$.
    \item[$\RuleA_{13}^\lor$:]
    If $v(\cons p),v(\cons q) \not\in D$, we have $v(p),v(q) \in \{ \fvalue,\both,\neither,\tvalue \}$,
    thus $v(p \lor q) \in \{ \fvalue,\both,\neither,\tvalue \}$,
    and $v(\cons(p \lor q)) \not\in D$.
    \item[$\RuleA_{14}^\lor$, $\RuleA_{15}^\lor$:]
    Suppose $\cons(p\GenImp q)\in D$.
Then either $v(q)=\efvalue$, in 
which case 
$v((p\lor q)\GenImp q),v((q\lor p)\GenImp q)\in \{\efvalue,\etvalue\}$ 
and thus $v(\cons((p\lor q)\GenImp q))=v(\cons((q\lor p)\GenImp q))=\etvalue\in D$;
or $v(p)\leq v(q)$, so $v(p\lor q)=v(q\lor p)=v(q)$, hence $v(\cons((p\lor q)\GenImp q))=v(\cons((q\lor p)\GenImp q))=\etvalue\in D$.
    \item[$\RuleA_{16}^\lor$:]
    If $v(\cons p) \not\in D$
    and $v(\up q) \in D$,
    $v(p) \in \{ \fvalue,\both,\neither,\tvalue \}$
    and $v(q) \neq \fvalue$.
    If $v(p) = v(q)$, we 
    are done, so suppose
    $v(p) \neq v(q)$.
    If $v(\down(p \lor q)) \in D$,
    then $v(p \lor q) \neq \tvalue$.
    If $v(p) = \fvalue$,
    then $v(p \GenImp q) \in \{ \efvalue,\etvalue \}$, and we are done.
    If $v(p) = \both$,
    the only non-obvious case
    is $v(q) = \neither$,
    but this is impossible
    as $v(p \lor q) \neq \tvalue$.
    If $v(p) = \neither$,
    the proof is similar to
    the previous case.
    If $v(p) = \tvalue$,
    the only non-obvious
    cases are those in which $v(q) \in \{ \both,\neither \}$; yet 
    they are impossible, again,
    because $v(p \lor q) \neq \tvalue$.
    \qedhere
    \end{description}
\end{proof}

\begin{definition}
    Let $\CalcVar_{\bot\top}$
be the \SetSet{}
calculus given by
the following inference rules:
$$
\inferx[\RuleA_1^\top]{}{\top}
\qquad 
\inferx[\RuleA_2^\top]{}{\cons\top}
\qquad 
\inferx[\RuleA_1^\bot]{\bot}{}
\qquad
\inferx[\RuleA_2^\bot]{}{\cons\bot}
$$
\end{definition}

\begin{proposition}
  The rules of $\CalcVar_{\top\bot}$ are sound for any matrix $\langle \PPSixResImp, D \rangle$ with $D = \Upset a$ for some
  $a>\efvalue$.
\end{proposition}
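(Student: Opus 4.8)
The plan is to verify soundness one rule at a time, in exactly the style of the preceding soundness propositions of this section; each case collapses to a single computation in $\PPSixResImp$ together with the two basic observations that, since $D = \Upset a$ with $a > \efvalue$, the principal filter $D$ contains the top element $\etvalue$ of the lattice and excludes its bottom element $\efvalue$.

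First I would record the relevant interpretations in $\PPSixResImp$ (cf.\ Example~\ref{ex:ppidefs} and the definition of $\PPSixResImp$): $\top^{\PPSixResImp} = \etvalue$, $\bot^{\PPSixResImp} = \efvalue$, and $\cons^{\PPSixResImp}\etvalue = \cons^{\PPSixResImp}\efvalue = \etvalue$. The rules $\RuleA_1^\top$, $\RuleA_2^\top$ and $\RuleA_2^\bot$ all have empty antecedent, so recalling that a rule $\inferx[]{\FmSetA}{\FmSetB}$ is sound in $\DefMat$ precisely when $\FmSetA \sequent_{\DefMat} \FmSetB$, their soundness amounts to showing that their unique succedent formula is designated under every valuation $v$; this is immediate, as $v(\top) = \etvalue \in D$, $v(\cons\top) = \etvalue \in D$ and $v(\cons\bot) = \cons^{\PPSixResImp}\efvalue = \etvalue \in D$. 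For $\RuleA_1^\bot$ the succedent is empty, so soundness means that the antecedent is never designated; and indeed $v(\bot) = \efvalue \notin \Upset a$ precisely because $a > \efvalue$.

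I do not anticipate any obstacle here: the statement concerns only the deterministic connectives $\top$, $\bot$ and $\cons$ of $\PPSixResImp$ — there is no non-determinism in play, and the Heyting implication $\ResImp$ is not even used — so the argument is a handful of one-line checks and is arguably the lightest soundness verification in the section. If anything, the only point worth stating explicitly is the hypothesis $a > \efvalue$, which is exactly what is needed to keep $\efvalue$ out of $D$ for rule $\RuleA_1^\bot$ (the analogous conclusion for $\RuleA_2^\bot$, by contrast, holds for any $D$ with $\etvalue \in D$, as noted in the statement).
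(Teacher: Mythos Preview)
Your proposal is correct and is exactly the approach the paper takes; the paper's own proof is simply the word ``Obvious'', and your write-up just unpacks that word into the four one-line checks. The only (harmless) slip is the parenthetical ``as noted in the statement'': the present proposition assumes $D=\Upset a$ with $a>\efvalue$ and does not single out the weaker hypothesis $\etvalue\in D$ for $\RuleA_2^\bot$.
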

\begin{proof}
Obvious.
\end{proof}

The final rules we introduce encode
the differences between the logics
$\PPResImplogicOrderMC$ and $\PPResImplogicPrimeMC$.

\begin{definition}
Consider the following inference rules:
$$
\inferx[\RuleA_{D_\land}]{p, \down p, q}{\cons p,\cons(p \GenImp q),\cons r, r}
\qquad 
\inferx[\RuleA_{D_\leq}]{p,\cons(p \GenImp q)}{\cons q, q}
\qquad
\inferx[\RuleA_{D\neq \up \tvalue}]{r,\up q}{\down r, \cons (p\GenImp q),p,q}
$$
\end{definition}

\begin{proposition}\label{lem:Drules}
%    The rules
%    $\RuleA_{D_\land}$ and
%    $\RuleA_{D_\leq}$ are sound in 
%    the matrix
%    $\langle \PPSixResImp, D \rangle$ when $D=\up a$ for $a>\efvalue$, and 
%    $\RuleA_{D\neq \up \tvalue}$ is sound
%    in this matrix
%    when $D$ is a prime filter (and fails
%    when $D = \up \tvalue$).
    In 
    the matrix
    $\langle \PPSixResImp, D \rangle$:
    (i) the rules
    $\RuleA_{D_\land}$ and
    $\RuleA_{D_\leq}$ are sound when $D=\up a$ for $a>\efvalue$; (ii) the rule 
    $\RuleA_{D\neq \up \tvalue}$ is sound
    when $D$ is a prime filter; (iii) the rule 
    $\RuleA_{D\neq \up \tvalue}$ is not sound
    when $D = \up \tvalue$.
\end{proposition}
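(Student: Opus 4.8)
The plan is to verify the three soundness claims by a direct, finite case analysis over $\PPSixResImp$, using nothing more than the truth-table of $\ResImp$ given in Section~\ref{sec:intuitionistic-implication} together with two elementary observations. First, for any $D\subseteq\SixSet$ with $\etvalue\in D$ and $\efvalue\notin D$ --- in particular every principal filter $\up a$ with $a>\efvalue$ and every prime filter --- one has, by the characterisation recorded after Table~\ref{tab:up-down-tables}, that $v(\up p)\notin D$ iff $v(p)=\fvalue$, that $v(\down p)\notin D$ iff $v(p)=\tvalue$, and that $v(\cons\FmA)\in D$ iff $v(\FmA)\in\{\efvalue,\etvalue\}$ (since $\cons$ outputs only $\efvalue$ or $\etvalue$). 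Second, I would record at the outset that the nontrivial principal filters of $\PPSixResImp$ are $\up\fvalue,\up\both,\up\neither,\up\tvalue,\up\etvalue$, whereas its prime filters are exactly $\up\fvalue,\up\both,\up\neither,\up\etvalue$ (those generated by the join-irreducibles), $\up\tvalue$ failing primality because $\tvalue=\both\lor\neither$.

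For part (i) I would treat the two rules separately. For $\RuleA_{D_\leq}$: assuming $v(p)\in D$ and $v(\cons(p\GenImp q))\in D$, we get $v(p\GenImp q)\in\{\efvalue,\etvalue\}$; if this value is $\etvalue$ then $v(p)\leq v(q)$, so $v(q)\in D$, and if it is $\efvalue$ then the table forces $v(q)=\efvalue$, so $v(\cons q)=\etvalue\in D$. For $\RuleA_{D_\land}$: from $v(p),v(\down p),v(q)\in D=\up a$ ($a>\efvalue$) we read off $v(p)\neq\tvalue$ and $v(p)\neq\efvalue$; if $v(p)=\etvalue$ the succedent formula $\cons p$ is designated, so assume $v(p)\in\{\fvalue,\both,\neither\}$ and split on $a$. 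For $a=\fvalue$ the rule is sound for a trivial reason ($\SixSet\setminus\up\fvalue=\{\efvalue\}$, hence $v(r)\in D$ or $v(\cons r)=\etvalue\in D$ for every $v$). For $a\in\{\both,\neither\}$, the constraint $a\leq v(p)\in\{\fvalue,\both,\neither\}$ forces $v(p)=a$, while $a\leq v(q)$ gives $v(p)=a\leq v(q)$, so $v(p\GenImp q)=\etvalue$ and $v(\cons(p\GenImp q))\in D$. The remaining principal filters $\up\tvalue,\up\etvalue$ are covered by the first subcase ($v(p)=\etvalue$).

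For part (ii) I would argue by contradiction: suppose $v(r),v(\up q)\in D$ for a prime filter $D$, while $v(\down r),v(\cons(p\GenImp q)),v(p),v(q)\notin D$. From $v(\down r)\notin D$ we get $v(r)=\tvalue$, so $\tvalue\in D$, forcing $D\in\{\up\fvalue,\up\both,\up\neither\}$; from $v(\up q)\in D$ we get $v(q)\neq\fvalue$; from $v(\cons(p\GenImp q))\notin D$ we get $v(p\GenImp q)\in\{\fvalue,\both,\neither,\tvalue\}$, and inspecting the table this already forces $v(q)\in\{\fvalue,\both,\neither,\tvalue\}$, hence $v(q)\in\{\both,\neither,\tvalue\}\subseteq\up\fvalue$. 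This excludes $D=\up\fvalue$; if $D=\up\both$ then $v(q)\notin D$ leaves $v(q)=\neither$, and the table shows $v(p\GenImp q)\in\{\fvalue,\both,\neither,\tvalue\}$ only when $v(p)\in\{\both,\tvalue,\etvalue\}=\up\both=D$, contradicting $v(p)\notin D$; the case $D=\up\neither$ is symmetric. For part (iii) I would simply exhibit a refuting valuation over $\langle\PPSixResImp,\up\tvalue\rangle$, namely $v(p):=\neither$, $v(q):=\both$, $v(r):=\tvalue$: both premises are designated ($v(r)=\tvalue\in\up\tvalue$ and $v(q)=\both\neq\fvalue$, so $v(\up q)=\etvalue$), yet $v(\down r)=\efvalue$, $v(\cons(p\GenImp q))=\cons^{\PPSixResImp}(\neither\ResImp\both)=\cons^{\PPSixResImp}\both=\efvalue$, and $v(p)=\neither$, $v(q)=\both$ are all undesignated. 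The only mildly delicate point in the whole argument is organising the nested case split in $\RuleA_{D_\land}$ of (i) and in (ii): one must keep track of which filter $D$ is while simultaneously exploiting the constraints the $\up$, $\down$ and $\cons$ premises place on $v(p),v(q),v(r)$ --- but the recurring phenomenon that makes it work is that those constraints force $v(p)\leq v(q)$ (equivalently $v(p\GenImp q)=\etvalue$) in precisely the cases where the other succedent disjuncts fail.
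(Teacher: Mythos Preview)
Your proof is correct and follows essentially the same approach as the paper: a direct case analysis over $\PPSixResImp$ using the truth-table of $\ResImp$ together with the characterisations of $\up$, $\down$, and $\cons$. The organisation differs only cosmetically --- you split primarily on the filter $D$ (listing the possible principal or prime filters and dispatching each), whereas the paper splits primarily on the values $v(p),v(q)$ and then identifies which filter is forced --- but the underlying verification is the same, and your counterexample for part~(iii) is identical to the paper's.
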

 \begin{proof}
 We first check items (i) and (ii):
 \begin{description}
     \item[$\RuleA_{D_\land}$:]
     If $v(\cons(p \GenImp q)) \not\in D$,
     then $v(p \GenImp q) \in \{ \fvalue,\both,\neither,\tvalue \}$.
     From that and the
     assumption that $v(\cons p) \not\in D$
     and $v(\down p) \in D$,
     we have $v(p) \in \{\both,\neither \}$.
     Also, we obtain $v(q) \in \{ \fvalue,\neither,\both \}$.
     Suppose $v(p),v(q) \in D$.
     If $v(q) = \fvalue$,
     then $D = \up \fvalue$,
     and supposing
     $v(\cons r) \not\in D$,
     we have $v(r) \in D$.
     If $v(q) = \both$,
     we must have $v(p) = \neither$.
     Then again $D = \up \fvalue$,
     for the same reason as before.
     The case of $v(q) = \neither$
     is analogous.
     \item[$\RuleA_{D_\leq}$:]
     If $v(\cons(p\GenImp q)) \in D, v(p)\geq a$ and $v(\circ q) < a$,
     then $v(\cons(p\GenImp q))=\etvalue$.
     \item[$\RuleA_{D\neq \up \tvalue}$:]
     If $v(\down r) \not\in D$,
     we have $v(r) = \tvalue$.
     If $v(\up q) \in D$,
     we have $v(q) \neq \fvalue$.
     If $v(\cons(p \GenImp q)) \not\in D$,
     we have $v(p \GenImp q) \in \{ \fvalue,\both,\neither,\tvalue \}$.
     This gives $v(p) \in \{ \both,\neither,\tvalue \}$ 
     and $v(q) \in \{ \both,\neither,\tvalue \}$.
     We only consider the cases
     $v(p) \neq v(q)$.
     If $v(p) = \both$,
     then $v(q) = \neither$,
     and supposing $v(p),v(q) \not\in D$, given that $D$
     is prime we must have
     $D = \up \etvalue$,
     and $v(r) \not\in D$.
     If $v(p) = \neither$,
     the proof is similar to the previous case.
     If $v(p) = \tvalue$,
     again we have
     $D = \up \etvalue$, and we are done.
     %$v(q) \in \{ \both,\neither \}$.
 \end{description}
 For item (iii), a valuation $v$
 such that $v(r) = \tvalue$,
 $v(p) = \neither$ and
 $v(q) = \both$
 shows that $\RuleA_{D\neq \up \tvalue}$
 is not sound when $D = \up \tvalue$.
 \qedhere
 \end{proof}

We are ready to define
our \SetSet{}
axiomatizations
and prove the desired completeness results.

\begin{definition}
Using the previous definitions,
we set the two following collection of rules: 
\begin{enumerate}
    \item $\CalcVar_{\mathsf{up}} \SymbDef\CalcVar_\Diamond\cup \CalcVar_D\cup \CalcVar_\GenImp  \cup \CalcVar_\cons\cup \CalcVar_\DMNeg \cup \CalcVar_\land \cup \CalcVar_\lor \cup \CalcVar_{\top\bot} \cup \{\RuleA_{D_\land},\RuleA_{D_\leq}\}$; 
    \item $\CalcVar_{\leq}\SymbDef\CalcVar_{\mathsf{up}} \cup \{\RuleA_{D\neq \up \tvalue}\}$.
\end{enumerate}
\end{definition}

\begin{therm}
\label{the:axiomatizations-leq-up}
$\CalcVar_{\mathsf{up}}$
axiomatizes
    $\PPResImplogicOrderMC$%\sequent_{\mathsf{up}}$
    and
    $\CalcVar_{\leq}$
    axiomatizes
    $\PPResImplogicPrimeMC$.
    %=\sequent_{\CalcVar_{\leq}}$.
    Moreover,
   both $\CalcVar_\mathsf{up}$ and $\CalcVar_{\leq}$ are $\Theta$-analytic calculi,
   for
$\Theta \SymbDef
\{p,\cons p,\cons(p\GenImp q),\up p,\down p\}$.
\end{therm}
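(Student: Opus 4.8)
The plan is to establish both claims at once, by the now-standard route for analytic $\SetSet{}$ Hilbert calculi (cf.\ \cite{marcelinowollic,marcelino:2019}): prove soundness directly, and for the converse direction prove the stronger statement that semantic consequence is already witnessed by $\Lambda$-restricted derivations, where $\Lambda$ is the set of $\Theta$-generalized subformulas of $\FmSetA\cup\FmSetB$. Soundness is in hand: assembling the per-group soundness propositions proved above, every rule of $\CalcVar_{\mathsf{up}}$ is sound in each matrix $\langle\PPSixResImp,\up a\rangle$ with $a>\efvalue$, and additionally $\RuleA_{D\neq\up\tvalue}$ is sound whenever the designated set is a prime filter (Proposition~\ref{lem:Drules}). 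Since $\PPResImplogicOrderMC$ is the logic of the class of principal-filter matrices on $\PPSixResImp$ and $\PPResImplogicPrimeMC$ that of the prime-filter matrices (Proposition~\ref{prop:redmatrdeflog} and the discussion following it), this gives $\sequent_{\CalcVar_{\mathsf{up}}}\subseteq\PPResImplogicOrderMC$ and $\sequent_{\CalcVar_{\leq}}\subseteq\PPResImplogicPrimeMC$. It then suffices to show, for each of the two pairs, that $\FmSetA\sequent_{\mathcal{M}}\FmSetB$ implies $\FmSetA\sequent_{\CalcVar}^{\Lambda}\FmSetB$; combined with soundness and the trivial $\sequent_{\CalcVar}^{\Lambda}\subseteq\sequent_{\CalcVar}$, this yields simultaneously the two axiomatization statements and $\Theta$-analyticity.

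So assume $\FmSetA\not\sequent_{\CalcVar}^{\Lambda}\FmSetB$ (reducing first, by finitarity, to $\FmSetA,\FmSetB$ finite, so that $\Lambda$ is finite). First I would apply the usual saturation lemma (Shoesmith--Smiley; cf.\ \cite{marcelino:2019}): the absence of a $\Lambda$-restricted $\FmSetB$-closed derivation rooted at $\FmSetA$ produces a set $\Gamma$ with $\FmSetA\subseteq\Gamma\subseteq\Lambda$, $\Gamma\cap\FmSetB=\varnothing$, which is \emph{closed under the rules relative to $\Lambda$}: for every rule $\inferx[]{\FmSetC}{\FmSetD}$ of the calculus and every substitution $\sigma$ with $\sigma(\FmSetC)\subseteq\Gamma$ and all of $\sigma(\FmSetC)\cup\sigma(\FmSetD)$ in $\Lambda$, we have $\FmSetD\neq\varnothing$ and $\sigma(\FmSetD)\cap\Gamma\neq\varnothing$ (the inclusions make sense since $\FmSetA,\FmSetB\subseteq\mathsf{sub}(\FmSetA\cup\FmSetB)\subseteq\Lambda$). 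The core of the proof is to manufacture from $\Gamma$ a valuation $v$ on $\PPSixResImp$ and a designated set $D=\up a_0$ (a \emph{prime} filter in the $\CalcVar_{\leq}$ case) such that, for all $\FmA\in\Lambda$, $v(\FmA)\in D$ iff $\FmA\in\Gamma$. Granting this, $v[\FmSetA]\subseteq D$ and $v[\FmSetB]\cap D=\varnothing$, so $\FmSetA\notsequent_{\langle\PPSixResImp,D\rangle}\FmSetB$ with this matrix belonging to the class defining $\PPResImplogicOrderMC$ (resp.\ $\PPResImplogicPrimeMC$), contradicting $\FmSetA\sequent_{\mathcal{M}}\FmSetB$.

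To construct $v$, I would first read off for each $\FmA\in\Lambda$ its \emph{profile}, i.e.\ the $\Gamma$-membership of the five $\Theta$-tests $\FmA,\cons\FmA,\DMNeg\FmA$ (when in $\Lambda$), $\up\FmA$, $\down\FmA$; using $\RuleA^\DMNeg_1,\dots,\RuleA^\DMNeg_4$, $\RuleA_{\cons}$, $\RuleA^\Diamond_{\up\mathsf{or}\down}$, $\RuleA^\GenImp_{10},\RuleA^\GenImp_{11}$ and the $\top\bot$-rules one checks that each profile is one of six admissible patterns, five pinning down a unique element of $\SixSet$ (via Table~\ref{tab:up-down-tables} and Example~\ref{ex:ppidefs}), the sixth being the pattern shared by $\both$ and $\neither$. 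On the set $N$ of formulas with this ambiguous profile, set $\FmA\sim\FmB$ iff $\cons(\FmA\GenImp\FmB)\in\Gamma$: reflexivity from $\RuleA^\Diamond_{\mathsf{id}}$, transitivity from $\RuleA^\Diamond_{\mathsf{trans}}$ (using $\cons\FmB\notin\Gamma$ for $\FmB\in N$), symmetry from $\RuleA^\Diamond_{\mathsf{incclass}_3}$ and the $\DMNeg$-rules; and $\RuleA^\Diamond_{\mathsf{just}_2}$ together with $\RuleA^\GenImp_9,\RuleA^\GenImp_{14},\RuleA^\GenImp_{18},\RuleA^\GenImp_{19}$ forces $N/{\sim}$ to have at most two classes, so $N$ admits a $2$-colouring by $\{\both,\neither\}$. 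Fixing one such colouring and defining $v$ on the variables accordingly ($\efvalue$ on variables not occurring in $\FmSetA\cup\FmSetB$), let $v$ be the induced homomorphism. The main obstacle — and the reason the off-the-shelf machinery of \cite{marcelino:2019} does not apply, since $\PartialMat_\leq$ and $\PartialMat_{\mathsf{up}}$ are not monadic — is the inductive verification that $v$ realises $\Gamma$ on all of $\Lambda$: for each connective $\DefCon\in\{\DMNeg,\cons,\land,\lor,\GenImp\}$ one must show, using exactly the group $\CalcVar_{\DefCon}$ (plus $\CalcVar_\Diamond$ for the $\up/\down/{\sim}$-bookkeeping), that the profile and $\sim$-class of $\DefCon(\chi_1,\dots,\chi_k)$ forced by $\Gamma$ agree with $\DefCon^{\PPSixResImp}(v(\chi_1),\dots,v(\chi_k))$; the $\GenImp$-case, with its irregular truth-table and its interaction with the $\both/\neither$-colouring, is the delicate one. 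Finally, $\RuleA_1^{\top}$ and $\RuleA_1^{\bot}$ give $\top\in\Gamma$, $\bot\notin\Gamma$, so $D:=\{\,v(\FmA):\FmA\in\Gamma\,\}\cap\SixSet$ satisfies $\etvalue\in D\not\ni\efvalue$; the rules $\RuleA_{D_\land}$ and $\RuleA_{D_\leq}$ make $D$ meet-closed and upward closed, hence $D=\up a_0$ with $a_0>\efvalue$, and in the $\CalcVar_{\leq}$ case $\RuleA_{D\neq\up\tvalue}$ (sound precisely for prime filters, Proposition~\ref{lem:Drules}) excludes $a_0=\tvalue$, so $\up a_0$ is prime. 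This completes the countermodel, proving completeness; and since the entire argument used only $\Lambda$-restricted derivability, $\Theta$-analyticity follows at once.
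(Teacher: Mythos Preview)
Your proposal is correct and follows essentially the same strategy as the paper's proof: saturate to a bipartition of $\Theta(\sub(\FmSetA\cup\FmSetB))$ closed under the rules, read off a value assignment on subformulas via the $\Theta$-tests (resolving the $\both/\neither$ ambiguity through the equivalence $\cons(\FmA\GenImp\FmB)\in\Omega$, with $\RuleA^\Diamond_{\mathsf{just}_2}$ bounding the number of classes), verify the partial-homomorphism property connective-by-connective (the paper does this as a sequence of lemmas, one per connective), and finally use $\RuleA_{D_\land},\RuleA_{D_\leq}$ (plus $\RuleA_{D\neq\up\tvalue}$ in the $\leq$-case) to show the image of the designated part is the trace of a principal (resp.\ prime) filter. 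Two small imprecisions worth noting: $\DMNeg\FmA$ is not actually a $\Theta$-test (the paper's profile uses only $\FmA,\cons\FmA,\up\FmA,\down\FmA$), and your $D$ is really $\up a_0\cap v[\Lambda]$ rather than the full $\up a_0$---the paper states it as $f[\Omega\cap\Lambda]=\up a\cap f[\Lambda]$---but neither affects the argument, since the countermodel uses $\up a_0$ as designated set and the equality on the image is exactly what is needed.
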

\begin{proof}
    
{%
From the previous propositions it easily follows that  $\sequent_{\CalcVar_{\mathsf{up}}}\subseteq \PPResImplogicOrderMC$
and $\sequent_{\CalcVar_{\leq}}\subseteq \PPResImplogicPrimeMC$, thus guaranteeing soundness of the proposed axiomatizations.
To check completeness for $\CalcVar\in \{\CalcVar_{\mathsf{up}},\CalcVar_{\leq}\}$ we need to show that each consequence statement of the form
   $\FmSetA\not\sequent_\CalcVar \FmSetB$ is witnessed by some valuation and some principal filter over the algebra $\PPSixResImp$.
}%
   Let  $\Lambda \SymbDef \sub(\FmSetA \cup \FmSetB)$.
Recall that
 from $\FmSetA\not\sequent_\CalcVar\FmSetB$, by cut for sets, there is a partition $(\Omega,\overline{\Omega})$ of $\Theta(\Lambda)$
such that $\Omega\not\sequent_\CalcVar \,\overline{\Omega}$.
   {%
   Lemmas \ref{diamondrules} to \ref{specificrules}, below, will give us material to show how a partial valuation $f$ on
    $\Lambda$
    may be extended to
    the whole language 
    to such an effect that we will have
    $f(\Lambda\cap \Omega)\subseteq D$
    and
    $f(\Lambda\cap \overline{\Omega})\subseteq \SixSet{\setminus}D$
    for some suitable~$D$ (either principal or prime, depending on the case).
    }
    Further, analyticity will follow from the fact that
    we will be using only instances of the rules in $\CalcVar$ with formulas in $\Lambda$, thus only formulas in    
 $\Theta(\Lambda) \SymbDef \{\psi^\sigma:\psi\in \Theta\text{ and }\sigma:\{p,q\}\to \Lambda\}$ will appear along this proof.
 
The following abbreviations will be helpful in what follows.  
Let
\begin{align*}
\Lambda_{\efvalue}& \SymbDef\{\varphi\in \Lambda: \varphi\in \overline{\Omega}\text{ and }\cons \varphi\in \Omega\} \\
\Lambda_{\etvalue}&\SymbDef\{\varphi\in \Lambda: \varphi,\cons \varphi\in \Omega\} \\
\Lambda_{\Diamond}&\SymbDef\{\varphi\in \Lambda: \cons \varphi
%,{\color{red}\varphi\GenImp \bot }
\in \overline{\Omega} \}\\
\Lambda_\tvalue&\SymbDef\{\varphi\in \Lambda_{\Diamond}:  \down \varphi\in \overline{\Omega} \}\\
\Lambda_\fvalue&\SymbDef \{\varphi\in \Lambda_{\Diamond}: \up \varphi\in \overline{\Omega} \}\\
   \Lambda_{\mathsf{mid}}&\SymbDef\{\varphi\in \Lambda_{\Diamond}: \up\varphi,\down \varphi\in \Omega %{\color{red}\text{ and }\varphi\GenImp \bot \in \overline{\Omega} 
   \} %} 
\end{align*}

\noindent 
We will also
consider the relation $\equiv_\Diamond \,\subseteq\Lambda_\Diamond \times \Lambda_\Diamond$ given by
$\varphi\equiv_\Diamond \psi$ if, and only if, $\cons(\varphi\GenImp \psi),\cons(\psi\GenImp \varphi)\in \Omega$.

% Comment on the structure of the present proof, which includes many internal lemmas!

\begin{lemma}\label{diamondrules}
From $\CalcVar^\Diamond\subseteq \CalcVar$, 
we obtain that 
\begin{enumerate}
    \item $\Lambda_\Diamond=\Lambda_\fvalue\cup \Lambda_\tvalue\cup\Lambda_{\mathsf{mid}}$\,;
    \item the relation
 $\equiv_\Diamond$ is an equivalence relation that partitions $\Lambda_\Diamond$ into at most 
four equivalence classes.
In particular, each set $\Lambda_\fvalue$ and $\Lambda_\tvalue$ consists of formulas belonging to a single $\equiv_\Diamond$-class,
    and if $\Lambda_{\mathsf{mid}}\neq\varnothing$ then this set is partitioned into exactly two $\equiv_\Diamond$-classes, which we name
     $\Lambda_{\both}$ and $\Lambda_\neither$,
     such that,
     for 
     $\varphi,\psi \in \Lambda_{\mathsf{mid}}$
     and
     $a \in \{ \both, \neither \}$,
     we have
     $\varphi,\psi \in \Lambda_{a}$
     if, and only if,
     $\cons(\varphi \GenImp \psi) \in \Omega$; 
    \item for $\varphi\in \Lambda_a$ and $\psi\in \Lambda_b$ with $a,b\in \{\fvalue,\neither,\both,\tvalue\}$,
we have
$a \leq b$ 
if, and only if, $\cons(\varphi\GenImp \psi)\in \Omega$.
%\item 
\end{enumerate}
\end{lemma}
\begin{proof}
%{\color{red}Do we implicitly use 
%$$\frac{}{\phi,\neg \phi}\qquad \frac{\phi,\neg \phi}{}$$ for
%$ \phi\in \{\circ(p\to q),\up p,\down p\}$
%
%it might be also important to have
%%
%$\frac{\cons(p\to q)}{\cons(\neg q\to \neg p)}$? or some other rule playing with negation inside $\cons$...
%
%}
Note that since $\Omega\cap \overline{\Omega}=\varnothing$ 
we have that $\Lambda_a\cap \Lambda_b=\varnothing$ for $a\neq b$ and 
$a,b\in \{\fvalue,\tvalue,\mathsf{mid},\etvalue,\efvalue\}$.
   By $\RuleA^\DMNeg_{\up\mathsf{or}\down}$
    we know that $\Lambda_\Diamond= \Lambda_\fvalue\cup \Lambda_\tvalue\cup \Lambda_{\mathsf{mid}}$, and this takes care of item (1).
%$$
%\inferx[\RuleA^\Diamond_{\up\mathsf{or}\down}]{}{\up p,\down p}
%$$

For all that follows, recall that $\cons \varphi \in \overline\Omega$ for any $\varphi \in\Lambda_\Diamond$,
by definition of $\Lambda_\Diamond$.
That $\equiv_\Diamond$ is an equivalence relation follows by the fact that the definition is symmetric and by the presence of the rules
$\RuleA_\mathsf{id}$ and $\RuleA_\mathsf{trans}$.
%$$
%\inferx[{\RuleA_\mathsf{id}}]{}{\cons(p\GenImp p)}%$$
%\qquad
%\inferx[\RuleA_\mathsf{trans}]{\cons(p\GenImp q),\cons(q\GenImp r)}{\cons %q,\cons(p\GenImp r)}
%$$ 

We show now that formulas in
$\Lambda_\tvalue$ and $\Lambda_\fvalue$
correspond to the same
$\equiv_\Diamond$-class.
If
$\varphi,\psi\in \Lambda_\tvalue$, then
$\down \varphi,\down \psi\in \overline{\Omega}$ and  
%because
%$\inferx[]{}{\cons\varphi, \cons\psi, \down \varphi, \down \psi, \cons(\varphi \GenImp \psi)}$
%and
%$\inferx[]{}{\cons\varphi, \cons\psi, \down \varphi, \down \psi, \cons(\psi \GenImp \varphi)}$
by
$\RuleA^\Diamond_{\leq\tvalue}$
we obtain that $\varphi \equiv_\Diamond \psi$.
Similarly, if
$\varphi,\psi\in \Lambda_\fvalue$, then
$\up \varphi,\up \psi\in \overline{\Omega}$ and by $\RuleA^\Diamond_{\geq\fvalue}$
we obtain that $\varphi \equiv_\Diamond \psi$.

We will show that there are
subsets of $\Lambda_{\mathsf{mid}}$
corresponding to a partition of this set.
Our candidates are precisely the classes
$[\varphi]_{\equiv_{\Diamond}}$.
They are clearly disjoint
and their union yields $\Lambda_{\mathsf{mid}}$,
so it is enough to show that
they are all subsets of $\Lambda_{\mathsf{mid}}$.
In fact, if $\psi \in [\varphi]_{\equiv_{\mathsf{\Diamond}}}$,
we have 
$\cons\psi \in \overline{\Omega}$
and
$\cons(\varphi \GenImp \psi),\cons(\psi \GenImp \varphi) \in \Omega$.
%
% {\color{purple}
% By up or down we are in one of three cases: (i) $\up \varphi\in \overline{\Omega}$ or
% (ii) $\up \varphi\in \overline{\Omega}$ or 
% (iii) $\up \varphi,\down \varphi\in \Omega$.
% In case (i) we have that 
% $\up \psi\in \overline{\Omega}$ in view 
% of the derivability of
% $\frac
% {\cons(\varphi \GenImp \psi),\cons(\psi \GenImp \varphi),\up\psi}
% {\cons \varphi,\up\varphi}$, so $\varphi,\psi\in \Lambda_$.
% In case (ii) we have that 
% $\down \psi\in \overline{\Omega}$ in view 
% of the derivability of
% $\frac
% {\cons(\varphi \GenImp \psi),\cons(\psi \GenImp \varphi),\down\psi}
% {\down\varphi}$.
% In case (iii) we have that 
% $\up\varphi,\down \varphi\in \Omega$ and $\cons \psi\in \overline{\Omega}$ in view 
% of the derivability of
% $\frac
% {\cons(\varphi \GenImp \psi),\cons(\psi \GenImp \varphi),\down\psi}
% {\cons \varphi,\up\down}$.
% }
%
Then we obtain $\cons\psi %, {\color{purple}\psi\GenImp \bot} 
\in \overline{\Omega}$
and
$\up\psi, \down\psi \in \Omega$ 
%of
%$\inferx[]{\cons(\varphi \GenImp \psi),\cons(\psi \GenImp \varphi),\up\varphi,\down\varphi}{\cons\varphi%,
%, \up\psi}$,
%$\inferx[]{\cons(\varphi \GenImp \psi),\cons(\psi \GenImp %\varphi),\up\varphi,\down\varphi}{\cons\varphi,%%{\color{purple}\psi\GenImp \bot},
%\down\psi}$
by the rules $\RuleA^\Diamond_{\mathsf{incclass}_1}$
and $\RuleA^\Diamond_{\mathsf{incclass}_2}$.
%$$
%\inferx[\RuleA^\Diamond_{\mathsf{incclass}_1}]{\up p,\cons(p\GenImp q)}{\cons p,\up q}
%\qquad 
%\inferx[\RuleA^\Diamond_{\mathsf{incclass}_2}]{\down q,\cons(p\GenImp q)}{\cons q,\down p}
%$$
Moreover, for $\varphi,\psi\in \Lambda_{\mathsf{mid}}$, we have that
$\cons(\varphi\GenImp \psi)\in \Omega$ if, and only if, $\varphi\equiv_\Diamond \psi$.
The harder direction is
left-to-right,
and it follows 
%in view
%of the derivability 
%of
%$\inferx[]{\up\varphi,\down\varphi,\up\psi,\down\psi,\cons(\varphi \GenImp \psi)}{\cons\varphi,\cons\psi,\cons(\psi \GenImp \varphi)}$
by $\RuleA^\Diamond_{\mathsf{incclass}_3}$.
%{\color{gray}
%$$
%\inferx[\RuleA^\Diamond_{\mathsf{incclass}_3}]{\up p,\down p,\up q,\down q,\cons (p\GenImp q)}{\cons p,\cons q,\cons (q\GenImp p)}
%$$
%}
%$$
%\inferx[\RuleA^\Diamond_{\mathsf{incclass}_3}]{\up p,\down q,\cons (p\GenImp q)}{\cons q,\cons (q\GenImp p)}
%$$

Still, $\Lambda_{\mathsf{mid}}$ might be partitioned into more than two
%be further partitioned into 
%$\leq$-%incomparable 
$\equiv_\Diamond$-equivalence classes.
We avoid this with the
rule $\RuleA^\DMNeg_{\mathsf{just}_2}$,
which, together with the fact proved in the previous paragraph, prevents the existence of more than two $\equiv_\Diamond$-equivalence classes in $\Lambda_{\mathsf{mid}}$.
This concludes the proof of item (2).
%$$\inferx[\RuleA^\Diamond_{\mathsf{just}_2}]{\down p,\up r}{\cons p,\cons(p\GenImp  q),\cons( p\GenImp r),\cons(q\GenImp  r)}
%$$

Finally, for item (3), 
let  $\varphi\in \Lambda_a$ and $\psi\in \Lambda_b$ with $a,b\in \{\fvalue,\neither,\both,\tvalue\}$.
From left-to-right,
suppose that
$a \leq b$. We want to prove
$\cons(\varphi \GenImp \psi) \in \Omega$.
Note that we cannot have
$a = \both$ and $b = \neither$,
nor
$a = \neither$ and $b = \both$. The only cases we need to
consider, then, are:
\begin{enumerate}
    \item If $a = b$,
    we already have that
    each $\Lambda_c$, with
    $c \in \{ \fvalue,\both,\neither,\tvalue \}$
    forms an $\equiv_\Diamond$-class,
    so $\cons(\varphi \GenImp \psi) \in \Omega$
    by the definition of $\equiv_\Diamond$.
    \item If $a = \fvalue$,
    we obtain $\cons(\varphi \GenImp \psi) \in \Omega$ %because
    %$\inferx[]{}{\cons\psi,\up\varphi,\cons(\varphi \GenImp \psi)}$ is derivable
    by $\RuleA^\Diamond_{\geq\fvalue}$.
    \item If $b = \tvalue$,
    we obtain $\cons(\varphi \GenImp \psi) \in \Omega$ 
    %because
    %$\inferx[]{}{\cons\varphi,\down\psi,\cons(\varphi \GenImp \psi)}$ is derivable
    by $\RuleA^\Diamond_{\leq\tvalue}$.
\end{enumerate}

From right-to-left, we reason contrapositively.
Suppose that $a \not\leq b$.
We want to conclude that
$\cons(\varphi \GenImp \psi) \in \overline{\Omega}$.
We only need to consider the following cases:

\begin{enumerate}
    \item If $a = \tvalue$,
    we have $b < \tvalue$.
    Thus $\down\psi \in \Omega$.
    By $\RuleA^\Diamond_{\mathsf{incclass}_2}$,
    %we derive
    %$\inferx[]{\down\psi, \cons(\varphi \GenImp \psi)}{\cons\varphi,\cons\psi,\down\varphi}$,
    we have $\cons(\varphi \GenImp \psi) \in \overline\Omega$.
    \item If $a = \both$ or $a = \neither$,
    and $b = \fvalue$,
    we obtain $\cons(\varphi \GenImp \psi) \in \overline\Omega$ 
    %in view of the derivability
    %of the rule
    %$\inferx[]{\up\varphi,\down\varphi,\cons(\varphi\GenImp\psi)}{\cons\varphi,\cons\psi,\up\psi}$
    by $\RuleA^\Diamond_{\mathsf{incclass}_1}$.
    \item If $\{ a, b \} = \{ \both,\neither \}$, we have
that 
$\cons(\varphi\GenImp \psi),\cons(\psi\GenImp \varphi)\in \overline{\Omega}$ by the definition of the sets
$\Lambda_\both$ and $\Lambda_\neither$.
\qedhere
\end{enumerate}

\end{proof}

Our candidate for
partial valuation is $f:\Lambda\to \SixSet$
given by 
$f(\varphi) \SymbDef a$ if $\varphi\in \Lambda_a$. One can check without
difficulty that this function
is well defined (as a matter of fact, the sets $\Lambda_a$ defined
above are pairwise disjoint).
We will prove that $f$ is indeed
the desired valuation using a
succession of lemmas.
Let us begin with proving that $f$
is a partial homomorphism.

\begin{lemma}\label{lem:imp}

%%If $\CalcVar_\GenImp, \CalcVar_\Diamond \subseteq \CalcVar$, then $f(\varphi\GenImp \psi)=f(\varphi)\ResImp f(\psi)$, for all $\varphi,\psi,\varphi \GenImp \psi \in \Lambda$.
%%%%If $\{\varphi,\psi,\varphi \GenImp \psi\} \subseteq \Lambda$ and $\CalcVar_\GenImp, \CalcVar_\Diamond \subseteq \CalcVar$, then $f(\varphi\GenImp \psi)=f(\varphi)\ResImp f(\psi)$.
%%If $\{\varphi,\psi,\varphi \GenImp \psi\} \subseteq \Lambda$, then $\CalcVar_\GenImp, \CalcVar_\Diamond \subseteq \CalcVar$ implies $f(\varphi\GenImp \psi)=f(\varphi)\ResImp f(\psi)$.
Suppose $\CalcVar_\GenImp, \CalcVar_\Diamond \subseteq \CalcVar$. If $\{\varphi,\psi,\varphi \GenImp \psi\} \subseteq \Lambda$, then $f(\varphi\GenImp \psi)=f(\varphi)\ResImp f(\psi)$.
%{\color{red}
%Also, for $\varphi\in \Lambda_a$ and $\psi\in \Lambda_b$ 
%for $a,b\in \{\efvalue,\fvalue,\neither,\both,\tvalue,\etvalue\}$
%$a \leq b$ 
%if and only if $\varphi\GenImp \psi,\cons(\varphi\GenImp \psi)\in \Omega$}
\end{lemma}
\begin{proof}

By cases on the values of
$f(\varphi)$ and $f(\psi)$:

\begin{enumerate}
    \item If $f(\psi) = \etvalue$,
    we have $\cons\psi, \psi \in \Omega$.
    We want $f(\varphi \GenImp \psi) = \etvalue$,
    which follows from $\RuleA_1^\GenImp$
    and $\RuleA_2^\GenImp$.
    %the rules below.
    %$$
    %\inferx[1]{\cons q}{\cons(p \GenImp q)}
    %\quad
    %\inferx[2]{q}{p \GenImp q}
    %$$
    \item If $f(\varphi) = \etvalue$,
    we have $\cons\varphi, \varphi \in \Omega$
    and proceed by cases on the value of
    $f(\psi)$.
    Note that we want to show
    that $f(\varphi \GenImp \psi) = f(\psi)$:
    \begin{enumerate}
        \item The case $f(\psi) = \etvalue$
        was already covered.
        \item If $f(\psi) = \efvalue$,
        it follows 
        from $\RuleA_1^\GenImp$ and $\RuleA_3^\GenImp$.
        %due to
        %the derivability of
        %$
        %\inferx[]{\cons \varphi, \varphi, \cons \psi}{\psi, \cons(\varphi \GenImp \psi)}
        %$ 
        %and
        %$
        %\inferx[]{\cons \varphi, \varphi, \cons \psi, \varphi \GenImp \psi}{\psi}
        %$
        %from rules
         %   $$
    %\inferx[1]{\cons q}{\cons(p \GenImp q)}
    %\quad
    %\inferx[3]{p, p \GenImp q}{q}
    %$$
        \item If $f(\psi) = \tvalue$,
        it follows from 
        $\RuleA_4^\GenImp$ and $\RuleA_5^\GenImp$.
        %the derivability of
        %$
        %\inferx[]{\cons\varphi,\varphi,\cons(\varphi %\GenImp \psi)}{\cons \psi, \down \psi}
        %$
        %and
        %$
        %\inferx[]{\cons\varphi,\varphi,\down(\varphi %\GenImp \psi)}{\cons \psi, \down \psi}
        %$ from the rules below.
        %$$
        %\quad
        %\inferx[4]{\cons p, p,\cons(p \GenImp q)}{\cons q}
        %\quad
        %\quad
        %\inferx[5]{\cons p, p,\down(p \GenImp q)}{\down q}
        %$$
        \item If $f(\psi) = \fvalue$,
        it follows from 
        $\RuleA_4^\GenImp$ and $\RuleA_6^\GenImp$.
        %the derivability of
        %$
        %\inferx[]{\cons\varphi,\varphi,\cons(\varphi \GenImp \psi)}{\cons \psi, \up \psi}
        %$
        %and
        %$
        %\inferx[]{\cons\varphi,\varphi,\up(\varphi \GenImp \psi)}{\cons \psi, \up \psi}
        %$ from the rules below.
        %$$
        %{
        %\color{gray}
        %\inferx[6]{\cons p, p,\cons(p \GenImp q)}{\cons q, \up q}
        %}
        %\quad
        %\inferx[6]{\cons p, p,\cons(p \GenImp q)}{\cons q}
        %\quad
        %{\color{gray}\inferx[7]{\cons p, p,\up(p \GenImp q)}{\cons q, \up q}}
        %\quad 
        %\inferx[7]{\cons p, p,\up(p \GenImp q)}{\up q}
        %$$
        \item If $f(\psi) \in \{ \both, \neither \}$,
        it follows 
        by $\RuleA_4^\GenImp$, $\RuleA_7^\GenImp$
        and $\RuleA_8^\GenImp$
        %from the derivability of
        %$\inferx[]{\cons\varphi,\varphi,\up\psi,\down\psi,\cons(\varphi \GenImp \psi)}{\cons\psi}$,
        %$\inferx[]{\cons\varphi,\varphi,\up\psi,\down\psi}{\cons\psi,\up(\varphi \GenImp \psi)}$,
        %$\inferx[]{\cons\varphi,\varphi,\up\psi,\down\psi}{\cons\psi,\down(\varphi \GenImp \psi)}$
        that
        $f(\varphi \GenImp \psi) \in \{ \both, \neither \}$.
        Then, by 
        $\RuleA_9^\GenImp$
        %the derivability of
        %$\inferx[]{\cons\varphi,\varphi,\up\psi,\down\psi}{\cons\psi,\cons(\psi \GenImp (\varphi \GenImp \psi))}$
        and Lemma~\ref{diamondrules}
        we have
        $f(\varphi \GenImp \psi) = f(\psi)$.
        %The rules we used here are displayed below.
        %{\color{gray}
        %$$
        %\inferx[8]{\cons p, p, \up q,\down q,\cons(p \GenImp q)}{\cons q}
        %\quad
        %\inferx[9]{\cons p, p,\up q,\down q}{\cons q,\up(p \GenImp q)}
        %\quad
        %\inferx[10]{\cons p, p,\up q,\down q}{\cons q,\down(p \GenImp q)}
        %\quad
        %\inferx[11]{\cons p, p}{\cons(q \GenImp (p \GenImp q))}
        %$$
        %}
        %$$
        %\inferx[8]{\cons p, p, \cons(p \GenImp q)}{\cons q}
        %\quad
        %\inferx[9]{\up q}{\up(p \GenImp q)}
        %\quad
        %\inferx[10]{\down q}{\down(p \GenImp q)}
        %\quad
        %\inferx[11]{}{\cons(q \GenImp (p \GenImp q))}
        %$$
    \end{enumerate}
    \item If $f(\varphi) = \efvalue$,
    we have $\cons\varphi \in \Omega$
    and $\varphi \in \overline\Omega$.
    We obtain that
    $f(\varphi \GenImp \psi) = \etvalue$
    by $\RuleA_{10}^\GenImp$ and $\RuleA_{11}^\GenImp$.
    %in view of the derivability
    %of
    %$\inferx[]{\cons\varphi}{\varphi,\cons(\varphi \GenImp \psi)}$
    %and
    %$\inferx[]{\cons\varphi}{\varphi,\varphi \GenImp \psi}$
    %from the rules below.
    %$$
    %\inferx[12]{\cons p}{p,\cons(p \GenImp q)}
    %\quad
    %\inferx[13]{\cons p}{p, p \GenImp q}
    %$$
    \item If $f(\psi) = \efvalue$,
    the case $f(\varphi) \in \{\efvalue,\etvalue\}$
    was already treated, so
    we consider $f(\varphi) \in \{ \fvalue,\both,\neither,\tvalue \}$
    and want to prove
    $f(\varphi \GenImp \psi) = \efvalue$.
    From the available information,
    we have that
    $\cons\psi \in \Omega$,
    $\psi \in \overline{\Omega}$,
    $\cons \varphi \in \overline\Omega$.
    %and work by cases on $f(\varphi)$.
    Then for any of the possibilities for $f(\varphi)$, the
    result follows
             by $\RuleA_1^\GenImp$
            and $\RuleA_{12}^\GenImp$.
    \item In case $f(\varphi),f(\psi) \in \{ \fvalue, \both, \neither, \tvalue \}$,
    we make intensive use of Lemma~\ref{diamondrules}.
    \begin{enumerate}
        \item If $f(\varphi) \leq f(\psi)$,
        we want to show $f(\varphi \GenImp \psi) = \etvalue$.
        By Lemma~\ref{diamondrules}, we already have $\cons(\varphi \GenImp \psi) \in \Omega$.
        The result then follows 
        by $\RuleA^\GenImp_{13}$.
        %because
        %$\inferx[]{\cons(\varphi \GenImp \psi)}{\cons\varphi,\cons\psi,\varphi\GenImp\psi}$
        %is derivable, in view of the
        %below rule.
        %$$
        %{\color{gray}\inferx[17]{\cons(p \GenImp q)}{\cons p,\cons q, p\GenImp q}}
        %\quad
        %\inferx[17]{\cons(p \GenImp q)}{\cons q, p\GenImp q}
        %$$
        \item If $f(\psi) \leq f(\varphi)$
        and $f(\varphi) \not\leq f(\psi)$,
        we have 
        $\cons(\varphi \GenImp \psi) \in \overline\Omega$ and
        $\cons(\psi \GenImp \varphi) \in \Omega$
        by Lemma~\ref{diamondrules},
        and consider the following subcases:
            \begin{enumerate}
                \item If $f(\varphi) = \tvalue$,
                we want $f(\varphi \GenImp \psi) = f(\psi)$, which follows
                by $\RuleA_{9}^\GenImp$
                and $\RuleA_{14}^\GenImp$.
                %from the derivability
                %from the below rules
                %of
                %$\inferx[]{\cons(\psi \GenImp \varphi)}{\cons\varphi,\cons\psi,\down\varphi,
                %\cons(\varphi \GenImp \psi),
                %\cons(\psi \GenImp (\varphi \GenImp \psi))}$ 
                %and 
                %$\inferx[]{\cons(\psi \GenImp \varphi)}{\cons\varphi,\cons\psi,\down\varphi,
                %\cons(\varphi \GenImp \psi),
                %\cons((\varphi \GenImp \psi) \GenImp \psi)}$,
                %which force
                %$\cons(\psi \GenImp (\varphi \GenImp \psi)),\cons((\varphi \GenImp \psi) \GenImp \psi) \in \Omega$, and
                %thus $f(\varphi \GenImp \psi) = f(\psi)$.
                %{\color{gray}
                %$$\inferx[18]{\cons(q \GenImp p)}{\cons p,\cons q,\down p,
                %\cons(p \GenImp q),
                %\cons(q \GenImp (p \GenImp q))} 
                %\quad
                %\inferx[19]{\cons(q \GenImp p)}{\cons p,\cons q,\down p,
                %\cons(p \GenImp q),
                %\cons((p \GenImp q) \GenImp q)}$$
                %}
                %$$\inferx[18]{}{
                %\cons(q \GenImp (p \GenImp q))} 
                %\quad
                %\inferx[19]{}{\down p,
                %\cons(p \GenImp q),
                %\cons((p \GenImp q) \GenImp q)}$$
                \item If $f(\varphi) \in \{ \neither,\both \}$
                and $f(\psi) = \fvalue$,
                we want $f(\varphi \GenImp \psi) = \both$.
                This follows from
                by $\RuleA^\GenImp_{15}$,
                $\RuleA^\GenImp_{16}$,
                $\RuleA^\GenImp_{17}$ and
                $\RuleA^\GenImp_{18}$.
                %the derivability
                %of
                %$\inferx[]{\up\varphi,\down\varphi,\cons(\varphi\GenImp\psi)}{\cons\varphi,\cons\psi,\up\psi}$,
                %$\inferx[]{\up\varphi,\down\varphi}{\cons\varphi,\cons\psi,\up\psi,\up(\varphi\GenImp\psi)}$,
                %$\inferx[]{\up\varphi,\down\varphi}{\cons\varphi,\cons\psi,\up\psi,\down(\varphi\GenImp\psi)}$
                %and
                %$\inferx[]{\up\varphi,\down\varphi,
                %\cons(\varphi \GenImp (\varphi \GenImp \psi))}{\cons\varphi,\cons\psi,\up\psi}$.
                The first two force
                $f(\varphi \GenImp \psi) \in \{ \both,\neither \}$,
                while the third
                forces $f(\varphi) \neq f(\varphi \GenImp \psi)$.
                %The employed rules were
                %respectively
                %$\RuleA^\GenImp_{15}$,
                %$\RuleA^\GenImp_{16}$,
                %$\RuleA^\GenImp_{17}$,
                %$\RuleA^\GenImp_{18}$,
                %and the mentioned derived rules
                %are obtained by dilution.
                %{\color{gray}
                %$$
                %\inferx[20]{\up p,\down p,\cons(p\GenImp q)}{\cons p,\cons q,\up q}
                %\quad
                %\inferx[21]{\up p,\down p}{\cons p,\cons q,\up q,\up(p\GenImp q)}
                %$$
                %$$
                %\inferx[22]{\up p,\down p}{\cons p,\cons q,\up q,\down(p\GenImp q)}
                %\quad
                %\inferx[23]{\up p,\down p,
                %\cons(p \GenImp (p \GenImp q))}{\cons p,\cons q,\up q}
                %$$
                %}
                %$$
                %\inferx[20]{\up p,\cons(p\GenImp q)}{\cons p,\up q}
                %\quad
                %\inferx[21]{\down p}{\cons p,\up(p\GenImp q)}
                %$$
                %$$
                %\inferx[22]{}{\cons p,\down(p\GenImp q)}
                %\quad
                %\inferx[23]{\up p,
                %\cons(p \GenImp (p \GenImp q))}{\cons p,\up q}
                %$$
            \end{enumerate}
        \item Otherwise, $\{f(\varphi), f(\psi)\} = \{ \both, \neither\}$, thus
        we want $f(\varphi \GenImp \psi) = f(\psi)$,
        which is achieved by 
        $\RuleA^\GenImp_{9}$ and
                $\RuleA^\GenImp_{19}$
        %the derivability
        %of
        %$\inferx[]{\up\varphi,\down\varphi,\up\psi,\down\psi}{\cons\varphi,\cons\psi,\cons(\varphi\GenImp\psi),\cons(\psi \GenImp (\varphi\GenImp\psi))}$ and
        %$\inferx[]{\up\varphi,\down\varphi,\up\psi,\down\psi}{\cons\varphi,\cons\psi,\cons(\varphi\GenImp\psi),\cons((\varphi \GenImp \psi) \GenImp \psi)}$
        together with Lemma~\ref{diamondrules}.\qedhere
        %The employed rules were
        %$\RuleA^\GenImp_{9}$ and
        %        $\RuleA^\GenImp_{19}$.
        %{\color{gray}
        %$$
        %\inferx[24]{\up p,\down p,\up q,\down q}{\cons p,\cons q,\cons(p \to q),\cons(q \to (p\to q))}
        %\quad
        %\inferx[25]{\up p,\down p,\up q,\down q}{\cons p,\cons q,\cons(p\to q),\cons((p \to q) \to q)}
        %$$   
        %}
        %$$
        %\inferx[24]{}{\cons(q \to (p\to q))}
        %\quad
        %\inferx[25]{\up q}{\cons(p\to q),\cons((p \to q) \to q)}
        %$$   
    \end{enumerate}
\end{enumerate}
\end{proof}

\begin{lemma}
%%If $\CalcVar_\DMNeg, \CalcVar_\Diamond\subseteq \CalcVar$, then $f(\DMNeg\varphi)=\DMNeg^{\PPSixResImp}(f(\varphi))$, for all $\varphi,\DMNeg\varphi \in \Lambda$.
%%%%If $\{\varphi,\DMNeg\varphi\} \subseteq \Lambda$ and $\CalcVar_\DMNeg, \CalcVar_\Diamond\subseteq \CalcVar$, then $f(\DMNeg\varphi)=\DMNeg^{\PPSixResImp}(f(\varphi))$.
Suppose $\CalcVar_\DMNeg, \CalcVar_\Diamond\subseteq \CalcVar$. If $\{\varphi,\DMNeg\varphi\} \subseteq \Lambda$, then $f(\DMNeg\varphi)=\DMNeg^{\PPSixResImp}(f(\varphi))$.
\end{lemma}
\begin{proof}
We reason by cases on the value of
$f(\varphi)$:
%to show that $f(\DMNeg\varphi)=\DMNeg^{\PPSixResImp}(f(\varphi))$:

\begin{enumerate}
    \item If $f(\varphi)=\etvalue$, then $\varphi,\cons \varphi\in \Omega$, so $\DMNeg \varphi\in \overline{\Omega}$ and $\cons\DMNeg \varphi\in \Omega$ by $\RuleA^\DMNeg_2$ and $\RuleA^\DMNeg_3$. %the rules below.
    %$$
    %\inferx[2]{\cons p, \DMNeg p, p}{}
    %\quad
    %\inferx[3]{\cons p}{\cons\DMNeg p}
    %$$
    \item If $f(\varphi)=\efvalue$, then $\cons \varphi\in \Omega$ and $\varphi\in \overline{\Omega}$, 
so $\DMNeg \varphi\in \Omega$
and $\cons\DMNeg \varphi\in \Omega$ by $\RuleA_1^\DMNeg$
and $\RuleA_3^\DMNeg$.
%the rules
%below.
%    $$
%    \inferx[1]{\cons p}{\DMNeg p, p}
%    \quad
%    \inferx[3]{\cons p}{\cons\DMNeg p}
%    $$
    \item If $f(\varphi)=\tvalue$, then 
    $\cons\varphi \in \overline\Omega$ and
 $\down \varphi\in \overline{\Omega}$.
 Thus, by $\RuleA^\DMNeg_4$ and $\RuleA^\DMNeg_5$ 
 we have that 
 $\cons\DMNeg\varphi \in \overline\Omega$ and
 $\up \DMNeg \varphi\in \overline{\Omega}$.
 %$$
 %\inferx[4]{\cons\DMNeg p}{\cons p}
 %\quad
 %\inferx[5]{\up\DMNeg p}{\down p}
 %$$
    \item If $f(\varphi)=\fvalue$, then 
    $\cons\DMNeg\varphi \in \overline\Omega$
    and
 $\up \varphi\in \overline{\Omega}$.
 Thus by $\RuleA^\DMNeg_4$ and $\RuleA^\DMNeg_6$ we have 
 $\cons\DMNeg\varphi \in \overline\Omega$
 and
 $\down \DMNeg \varphi\in \overline{\Omega}$.
  %$$
 %\inferx[4]{\cons\DMNeg p}{\cons p}
 %\quad
 %\inferx[6]{\down\DMNeg p}{\up p}
 %$$
 \item 
  If $f(\varphi) \in \{ \both, \neither \}$, then 
  $\cons \varphi \in \overline\Omega$
  and
 $\up \varphi,\down \varphi\in \Omega$.
 By $\RuleA^\DMNeg_4$, $\RuleA^\DMNeg_7$
 and $\RuleA^\DMNeg_8$, we have
 $\cons\DMNeg\varphi \in \overline\Omega$
 and
 $\up\DMNeg\varphi, \down\DMNeg\varphi \in\Omega$.
 %since 
 %$ \inferx[]{\up \varphi, \down \varphi}{\cons \varphi, \up \DMNeg \varphi}$
 %and
 %$\inferx[]{\up \varphi, \down \varphi}{\cons \varphi, \down \DMNeg \varphi}$
 %are derivable.
 This gives us that
 $f(\DMNeg \varphi) \in \{ \both, \neither \}$.
 Also, since we have $\cons(\varphi \GenImp \DMNeg\varphi) = \down \varphi \in \Omega$,
 we must have $f(\DMNeg \varphi) = f(\varphi)$ by Lemma~\ref{diamondrules}.\qedhere
 % $$
 %\inferx[4]{\cons\DMNeg p}{\cons p}
 %\quad
 %inferx[7]{\down p}{\up \DMNeg p}
 % \quad
 %\inferx[8]{\up p}{\down \DMNeg p}
 %$$
\end{enumerate}

%Note that in this proof the mentioned rules were
%instantiated only with formulas
%in $\Lambda$ and all such instances
%have their formulas in
%$\Theta(\Lambda)$.

%{\color{gray}and $\cons\varphi\in \overline{\Omega}$.
% By ${\RuleA^{\DMNeg}_{6}}$ and ${\RuleA^{\DMNeg}_{7}}$ we have that $\up \DMNeg \varphi,\down \DMNeg \varphi\in \Omega$.
%  Also,   
%%${r^\DMNeg_8}$ we have that
%from $\down \varphi=\cons(\varphi\to \DMNeg \varphi)\in \Omega$ and by
% $\RuleA^\DMNeg_{\mathsf{incclass}_3}$ also that $\cons(\DMNeg \varphi\to \varphi)\in \Omega$.
%{\color{purple} Further, from $\RuleA^\DMNeg_?$ we also have that $\circ\sim \varphi \in \overline{\Omega}$.}
% Thus $f(\DMNeg \varphi)=\both$.
% Similarly for $\neither$.}

\end{proof}

\begin{lemma}
%%If $\CalcVar_\cons \subseteq \CalcVar$, then $f(\cons\varphi)=\cons^{\PPSixResImp}(f(\varphi))$, for all $\FmA,\cons\FmA \in \Lambda$.
%%%%If $\{\FmA,\cons\FmA\} \subseteq \Lambda$ and $\CalcVar_\cons \subseteq \CalcVar$, then $f(\cons\varphi)=\cons^{\PPSixResImp}(f(\varphi))$.
Suppose $\CalcVar_\cons \subseteq \CalcVar$. If $\{\FmA,\cons\FmA\} \subseteq \Lambda$, then $f(\cons\varphi)=\cons^{\PPSixResImp}(f(\varphi))$.
\end{lemma}

\begin{proof}
Note that by $\RuleA_{1}^\cons$ we always have
$\cons\cons\varphi\in \Omega$.
Hence, if $f(\varphi)\in \{\efvalue,\etvalue\}$, then
$\cons\varphi\in \Omega$ and thus 
$f(\cons\varphi)=\etvalue=\cons^{\PPSixResImp}(f(\varphi))$.
If $f(\varphi)\notin \{\efvalue,\etvalue\}$, then
$\cons\varphi\in \overline{\Omega}$,
thus
$f(\cons\varphi)=\efvalue=\cons^{\PPSixResImp}(f(\varphi))$.
\end{proof}

\begin{lemma}
%%If $\CalcVar_\land,\CalcVar_\Diamond \subseteq \CalcVar$, then $f(\varphi\land \psi)=f(\varphi)\land^{\PPSixResImp} f(\psi)$, for all $\varphi,\psi,\varphi\land\psi \in \Lambda$.
%%%%If $\{\varphi,\psi,\varphi\land\psi\} \subseteq \Lambda$ and $\CalcVar_\land,\CalcVar_\Diamond \subseteq \CalcVar$, then $f(\varphi\land \psi)=f(\varphi)\land^{\PPSixResImp}\!\! f(\psi)$.
Suppose $\CalcVar_\land,\CalcVar_\Diamond \subseteq \CalcVar$. If $\{\varphi,\psi,\varphi\land\psi\} \subseteq \Lambda$, then $f(\varphi\land \psi)=f(\varphi)\land^{\PPSixResImp}\!\! f(\psi)$.
\end{lemma}
\begin{proof}
We proceed by cases on the values of
$f(\varphi)$ and $f(\psi)$:

\begin{enumerate}
    \item If $f(\varphi) = \etvalue$,
    we have $\cons\varphi,\varphi \in \Omega$.
    We want $f(\varphi \land \psi) = f(\psi)$.
        \begin{enumerate}
            \item If $f(\psi) = \etvalue$,
            we have $\cons\psi, \psi \in \Omega$.
            Then we have $\cons(\varphi \land \psi),\varphi \land \psi \in \Omega$
            by 
            $\RuleA_1^\land$ and $\RuleA_2^\land$.
            %the derivability of
            %$\inferx[]{\cons\varphi, \varphi, \cons\psi, \psi}{\cons(\varphi \land \psi)}$
            %and
            %$\inferx[]{\cons\varphi, \varphi, \cons\psi, \psi}{\varphi\land\psi}$. The employed rules were
            %$$
            %\inferx[1]{\cons p,\cons q}{\cons(p \land q)}
            %\quad
            %\inferx[2]{p,q}{p\land q}
            %$$
            \item If $f(\psi) = \efvalue$,
            we have $\cons\psi \in \Omega$
            and $\psi \in \overline{\Omega}$.
            Then we have $\cons(\varphi \land \psi) \in \Omega$
            and $\varphi \land \psi \in \overline{\Omega}$
            by 
            $\RuleA_1^\land$ and $\RuleA_3^\land$.
            %the derivability of
            %$\inferx[]{\cons\varphi,\varphi,\cons\psi}{\psi, \cons(\varphi \land \psi)}$
            %and
            %$\inferx[]{\cons\varphi,\varphi,\cons\psi, \varphi\land\psi}{\psi}$,
            %in view of the rules
            %$$
            %\inferx[1]{\cons p,\cons q}{\cons(p \land q)}
            %\quad 
            %\inferx[3]{p \land q}{q}
            %$$
            \item If $f(\psi) \in \{ \fvalue,\both,\neither,\tvalue \}$,
            we have $\cons\psi \in \overline\Omega$
            and
            that
            $f(\varphi \land \psi) \in \{ \fvalue,\both,\neither,\tvalue \}$
            by $\RuleA_4^\land$.
            %in view of the derivability
            %of
            %$\inferx[]{\cons\varphi, \varphi, \cons(\varphi \land \psi)}{\cons\psi}$.
            Then we have
            that $\cons(\psi \GenImp (\varphi \land \psi)),\cons((\varphi \land \psi) \GenImp \psi) \in \Omega$
            by $\RuleA_5^\land$
            and $\RuleA_6^\land$.
            %because of the derivability
            %of
            %$\inferx[]{\cons\varphi, \varphi}{\cons\psi, \cons(\psi \GenImp (\varphi \land \psi))}$
            %and
            %$\inferx[]{\cons\varphi, \varphi}{\cons\psi, \cons((\varphi \land \psi) \GenImp \psi)}$, 
            This, together with Lemma~\ref{diamondrules},
            gives the desired result.
            %The employed rules were
            %$$
            %{\color{gray}\inferx[4]{\cons p, p, \cons(p \land q)}{\cons q}}
            %\quad 
            %\inferx[4]{p, \cons(p \land q)}{\cons q}
            %\quad
            %{\color{gray}\inferx[5]{\cons p, p}{\cons q, \cons(q \GenImp (p \land q))}}
            %\quad
            %\inferx[5]{\cons p}{\cons(q \GenImp (p \land q))}
            %\quad
            %\inferx[6]{}{\cons((p \land q) \GenImp q)}
            %$$
        \end{enumerate}
    \item If $f(\psi) = \etvalue$,
    the argument is very similar to the previous
    one.
    The case $f(\varphi) = \etvalue$ was
    already covered.
        \begin{enumerate}
            \item If $f(\varphi) = \efvalue$,
            we have $\cons\varphi \in \Omega$
            and $\varphi \in \overline{\Omega}$.
            Then we have $\cons(\varphi \land \psi) \in \Omega$
            and $\varphi \land \psi \in \overline{\Omega}$
            by 
            $\RuleA_1^\land$ and $\RuleA_7^\land$.
            %the derivability of
            %$\inferx[]{\cons\psi,\psi,\cons\varphi}{\varphi, \cons(\varphi \land \psi)}$
            %and
            %$\inferx[]{\cons\psi,\psi,\cons\varphi, \varphi\land\psi}{\varphi}$,
            %in view of the rules
            %$$
            %\inferx[1]{\cons p,\cons q}{\cons(p \land q)}
            %\quad 
            %\inferx[7]{p \land q}{p}
            %$$
            \item If $f(\varphi) \in \{ \fvalue,\both,\neither,\tvalue \}$,
            we have $\cons\varphi \in \overline\Omega$
            and also
            that
            $f(\varphi \land \psi) \in \{ \fvalue,\both,\neither,\tvalue \}$,
            in view of $\RuleA_8^\land$.
            %of the derivability
            %of
            %$\inferx[]{\cons\psi, \psi, \cons(\varphi \land \psi)}{\cons\varphi}$.
            Then we have
            that $\cons(\varphi \GenImp (\varphi \land \psi)),\cons((\varphi \land \psi) \GenImp \varphi) \in \Omega$
            by $\RuleA_9^\land$ and
            $\RuleA_{10}^\land$.
            %because of the derivability
            %of
            %$\inferx[]{\cons\psi, \psi}{\cons\varphi, \cons(\varphi \GenImp (\varphi \land \psi))}$
            %and
            %$\inferx[]{\cons\psi, \psi}{\cons\varphi, \cons((\varphi \land \psi) \GenImp \varphi)}$, what, 
            Then, by Lemma~\ref{diamondrules},
            we obtain the desired result.
            %The employed rules were
            %$$
            %{\color{gray}\inferx[8]{\cons q, q, \cons(p \land q)}{\cons p}}
            %\quad
            %\inferx[8]{q, \cons(p \land q)}{\cons p}
            %\quad
            %{\color{gray}\inferx[9]{\cons q, q}{\cons p, \cons(p \GenImp (p \land q))}}
            %\quad
            %\inferx[9]{\cons q}{\cons(p \GenImp (p \land q))}
            %\quad
            %\inferx[10]{}{\cons((p \land q) \GenImp p)}
            %$$
        \end{enumerate}
    \item If $f(\varphi) = \efvalue$,
    we have $\cons\varphi \in \Omega$
    and $\varphi \in \overline{\Omega}$.
    By $\RuleA_{11}^\land$
    and $\RuleA_{7}^\land$,
    %the derivability of
    %$\inferx[]{\cons\varphi}{\varphi,\cons(\varphi %\land \psi)}$
    %and
    %$\inferx[]{\cons\varphi,\varphi \land %\psi}{\varphi}$,
    we have $\cons(\varphi \land \psi) \in \Omega$
    and $\varphi \land \psi \in \overline{\Omega}$,
    as desired. 
    %The employed rules were
    %$$
    %\inferx[11]{\cons p}{p,\cons(p \land q)}
    %\quad
    %\inferx[7]{p \land q}{p}
    %$$
    \item If $f(\psi) = \efvalue$,
    the argument is similar to the one above,
    and follows by $\RuleA_{12}^\land$
    and $\RuleA_{3}^\land$.
    %from the usage of rules
    %$$
    %\inferx[12]{\cons q}{q,\cons(p \land q)}
    %\quad
    %\inferx[3]{p \land q}{q}
    %$$
    \item If $f(\varphi),f(\psi) \in \{ \fvalue, \both, \neither, \tvalue \}$,
    we have $\cons\varphi, \cons\psi \in \overline{\Omega}$.
    Thus we have $\cons(\varphi \land \psi) \in \overline{\Omega}$
    by $\RuleA_{13}^\land$
    %the presence of rule
    %$$
    %\inferx[13]{\cons(p \land q)}{\cons p, \cons q}
    %$$
    and thus
    $f(\varphi \land \psi) \in \{ \fvalue, \both, \neither, \tvalue \}$,
    as desired.
    Consider then the following cases:
    \begin{enumerate}
        \item If $f(\varphi) \leq f(\psi)$,
        we want to obtain 
        $f(\varphi \land \psi) = f(\varphi)$.
        By Lemma~\ref{diamondrules},
        it is enough to conclude
        $\cons(\varphi \GenImp (\varphi \land \psi)),
        \cons((\varphi \land \psi) \GenImp \varphi) \in \Omega$,
        which follows
        by $\RuleA_{10}^\land$
    and $\RuleA_{14}^\land$.
        %from the presence of
        %rules
        %$$
        %\inferx[10]{}{\cons((p \land q) \GenImp p)}
        %\quad 
        %\inferx[14]{\cons(p \GenImp q)}{\cons(p \GenImp (p \land q))}
        %$$
        \item If $f(\psi) \leq f(\varphi)$,
        we want to obtain 
        $f(\varphi \land \psi) = f(\psi)$.
        Analogously to the previous item,
        by Lemma~\ref{diamondrules},
        it is enough to conclude
        $\cons(\psi \GenImp (\varphi \land \psi)),
        \cons((\varphi \land \psi) \GenImp \psi) \in \Omega$,
        which follows 
        by $\RuleA_{15}^\land$
    and $\RuleA_{6}^\land$.
        %from the presence of
        %rules
        %$$
        %\inferx[6]{}{\cons((p \land q) \GenImp q)}
        %\quad 
        %\inferx[15]{\cons(q \GenImp p)}{\cons(q \GenImp (p \land q))}
        %$$
        \item Otherwise, we have
        $f(\varphi),f(\psi) \in \{ \both, \neither \}$
        and $f(\varphi) \neq f(\psi)$.
        Note that by Lemma~\ref{diamondrules}
        we obtain $\cons(\varphi \GenImp \psi) \in \overline\Omega$.
        We want to obtain here
        $f(\varphi \land \psi) = \fvalue$,
        that is, we want
        to have $\up(\varphi \lor \psi) \in \overline\Omega$, which is achieved 
        by $\RuleA_{16}^\land$.\qedhere
        %in view of
        %the rule
        %$$
        %{\color{gray}
        %\inferx[16]
        %{\down p,\up p,\down q,\up q, \up(p \land q)}
        %{\cons p,\cons q,\cons(p \GenImp q)}
        %}
        %\quad 
        %\inferx[16]
        %{\down p,\up(p \land q)}
        %{\cons p,\cons(p \GenImp q)}
        %$$
    \end{enumerate}
\end{enumerate}
\end{proof}

\begin{lemma}
%%If $\CalcVar_\lor,\CalcVar_\Diamond \subseteq \CalcVar$, then $f(\varphi\lor \psi)=f(\varphi)\lor^{\PPSixResImp}\!\! f(\psi)$, for all $\varphi,\psi,\varphi\lor\psi \in \Lambda$.
%%%%If $\{\varphi,\psi,\varphi\lor\psi\} \subseteq \Lambda$ and  $\CalcVar_\lor,\CalcVar_\Diamond \subseteq \CalcVar$, then $f(\varphi\lor \psi)=f(\varphi)\lor^{\PPSixResImp}\!\! f(\psi)$.
Suppose $\CalcVar_\lor,\CalcVar_\Diamond \subseteq \CalcVar$. If $\{\varphi,\psi,\varphi\lor\psi\} \subseteq \Lambda$, then $f(\varphi\lor \psi)=f(\varphi)\lor^{\PPSixResImp}\!\! f(\psi)$.
\end{lemma}
\begin{proof}
We proceed by cases on the values of 
$f(\varphi)$ and $f(\psi)$:
\begin{enumerate}
    \item If $f(\varphi)=\efvalue$, we know that
    $\varphi\in \overline\Omega$ and $\cons \varphi\in {\Omega}$.
    We want to obtain that 
    $f(\varphi \lor \psi)=f(\psi)$,
    since $f(\psi) = \efvalue \lor^{\PPSixResImp}\!\! f(\psi)$.
        \begin{enumerate}
            \item If $f(\psi)=\efvalue$,
            the result follows 
            by $\RuleA_{1}^\lor$
            and $\RuleA_{2}^\lor$.
            %because
            %$\inferx[]{\cons \varphi, \cons \psi}{\varphi, \psi, \cons(\varphi \lor \psi)}$
            %and
            %$\inferx[]{\cons \varphi, \cons \psi,\varphi \lor \psi}{\varphi, \psi}$
            %are derivable,
            %in view
            %of rules
            %$$
            %\inferx[1]{\cons p,\cons q}{\cons (p \lor q)}
            %\qquad
            %\inferx[2]{\cons p, p \lor q}{p, q}
            %$$
            \item If $f(\psi)=\etvalue$,
            the result follows by
            $\RuleA_{1}^\lor$
            and $\RuleA_{3}^\lor$.
            %because
            %$\inferx[]{\cons \varphi, \cons \psi,\psi}{\varphi, \cons(\varphi \lor \psi)}$
            %and
            %$\inferx[]{\cons \varphi, \cons \psi,\psi}{\varphi, \varphi \lor \psi}$
            %are derivable, in view of rules
            %$$
            %\inferx[1]{\cons p,\cons q}{\cons (p \lor q)}
            %\qquad
            %\inferx[3]{q}{p \lor q}
            %$$
            \item If $f(\psi) \in \{ \fvalue,\both,\neither,\tvalue \}$,
            we have $\cons\psi \in \overline\Omega$
            and
            we know
            that
            $f(\varphi \lor \psi) \in \{ \fvalue,\both,\neither,\tvalue \}$
            by $\RuleA_4^\lor$.
            %in view of the derivability
            %of
            %$\inferx[]{\cons\varphi, \cons(\varphi \lor \psi)}{\varphi,\cons\psi}$.
            Then we have
            that $\cons(\psi \GenImp (\varphi \lor \psi)),\cons((\varphi \lor \psi) \GenImp \psi) \in \Omega$
            by $\RuleA_5^\lor$ and $\RuleA_6^\lor$,
            %because of the derivability
            %of
            %$\inferx[]{\cons\varphi}{\varphi,\cons\psi, \cons(\psi \GenImp (\varphi \lor \psi))}$
            %and
            %$\inferx[]{\cons\varphi}{\varphi,\cons\psi, \cons((\varphi \lor \psi) \GenImp \psi)}$, 
            which, together with Lemma~\ref{diamondrules},
            gives the desired result.
        \end{enumerate}
    \item If $f(\psi)=\efvalue$, we know that
    $\psi\in \overline\Omega$ and $\cons \psi\in {\Omega}$.
    We want to obtain that 
    $f(\varphi \lor \psi)=f(\varphi)$,
    since $f(\varphi) = f(\varphi) \lor^{\PPSixResImp} \efvalue$.
        \begin{enumerate}
            \item The case $f(\varphi)=\efvalue$
            was already covered.
            \item If $f(\varphi)=\etvalue$,
            the result follows
            by $\RuleA_1^\lor$
            and $\RuleA_7^\lor$.
            %follows because
            %$\inferx[]{\cons \psi, \cons \varphi,\varphi}{\psi, \cons(\varphi \lor \psi)}$
            %and
            %$\inferx[]{\cons \psi, \cons \varphi,\varphi}{\psi, \varphi \lor \psi}$
            %are derivable.
            %$$
            %\inferx[1]{\cons p,\cons q}{\cons (p \lor q)}
            %\qquad
            %\inferx[7]{p}{p \lor q}
            %$$
            \item If $f(\varphi) \in \{ \fvalue,\both,\neither,\tvalue \}$,
            we have $\cons\varphi \in \overline\Omega$
            and
            that
            $f(\varphi \lor \psi) \in \{ \fvalue,\both,\neither,\tvalue \}$
            in view of 
            $\RuleA_8^\lor$.
            %the derivability
            %of
            %$\inferx[]{\cons\psi, \cons(\varphi \lor \psi)}{\psi, \cons\varphi}$.
            So we have
            that $\cons(\varphi \GenImp (\varphi \lor \psi)),\cons((\varphi \lor \psi) \GenImp \varphi) \in \Omega$
            by $\RuleA_9^\lor$ and $\RuleA_{10}^\lor$.
            %because of the derivability
            %of
            %$\inferx[]{\cons\psi}{\psi,\cons\varphi, \cons(\varphi \GenImp (\varphi \lor \psi))}$
            %and
            %$\inferx[]{\cons\psi}{\psi,\cons\varphi, \cons((\varphi \lor \psi) \GenImp \varphi)}$, 
            By Lemma~\ref{diamondrules},
            then,
            we
            obtain the desired result.
        \end{enumerate}
    \item If either $f(\varphi) = \etvalue$ or $f(\psi) = \etvalue$,
    we know that either
    $\cons\varphi,\varphi\in \Omega$
    or $\cons\psi,\psi\in \Omega$.
    In any case, we obtain
    $\cons(\varphi \lor \psi), \varphi \lor \psi \in \Omega$
    by
    $\RuleA_{11}^\lor$,
    $\RuleA_{7}^\lor$,
    $\RuleA_{12}^\lor$ and
    $\RuleA_{3}^\lor$, and therefore
    $f(\varphi \lor \psi) = \etvalue$, as desired.
    %$$\inferx[11]{p,\cons p}{\cons(p\lor q)}
    %\qquad
    %\inferx[7]{p}{p\lor q}\qquad
    %\inferx[12]{q,\cons q}{\cons(p\lor q)}\qquad
    %\inferx[3]{q}{p\lor q}
    %$$
    \item If $f(\varphi),f(\psi) \in \{ \fvalue, \both, \neither, \tvalue \}$,
    we know that
    $\cons \varphi, \cons \psi \in \overline\Omega$, and so
    $\cons(\varphi \lor \psi) \in \overline\Omega$, by 
    $\RuleA_{13}^\lor$,
    %the derivability of
    %$\inferx[]{\cons(\varphi \lor \psi)}{\cons\varphi, \cons\psi}$ from the rule
    %$$
    %\inferx[13]{\cons(p \lor q)}{\cons p, \cons q}
    %$$
    thus $f(\varphi \lor \psi) \in \{ \fvalue, \both, \neither, \tvalue \}$.
    Consider now the following cases:
    \begin{enumerate}
        \item If $f(\varphi) \leq f(\psi)$,
        we want to obtain
        $f(\varphi \lor \psi) = f(\psi)$.
        By Lemma~\ref{diamondrules},
        we know that
        $\cons(\varphi \GenImp \psi) \in \Omega$.
        But then
        $\cons(\psi \GenImp (\varphi \lor \psi))$,
        $\cons((\varphi \lor \psi) \GenImp \psi) \in \Omega$,
        by $\RuleA_5^\lor$ and $\RuleA_{14}^\lor$,
        which gives us the desired result
        by invoking Lemma~\ref{diamondrules} again.
        %$$
        %\infer[5]{\cons(q \GenImp (p \lor q))}{}
        %\quad
        %\infer[14]{\cons((p \lor q) \GenImp q)}{\cons(p \GenImp q)}
        %$$
        \item If $f(\psi) \leq f(\varphi)$,
        we reason analogously to the previous item, but now using $\RuleA_9^\lor$ and $\RuleA_{15}^\lor$.
%        we want to obtain
%        $f(\varphi \lor \psi) = f(\varphi)$.
%        By Lemma~\ref{diamondrules},
%        we know that
%        $\cons(\psi \GenImp \varphi) \in \Omega$.
%        But then
%        $\cons(\varphi \GenImp (\varphi \lor \psi))$,
%        $\cons((\varphi \lor \psi) \GenImp \varphi) \in \Omega$,
%        by $\RuleA_9^\lor$ and $\RuleA_{15}^\lor$,
%        which gives us the desired result
%        by Lemma~\ref{diamondrules} again.
        %$$
        %\inferx[9]{}{\cons(p \GenImp (p \lor q))}
        %\quad
        %\inferx[15]{\cons(q \GenImp p)}{\cons((p \lor q) \GenImp p)}
        %$$
        \item Otherwise, we have
        $f(\varphi),f(\psi) \in \{ \both, \neither \}$
        and $f(\varphi) \neq f(\psi)$.
        Note that by Lemma~\ref{diamondrules}
        we obtain $\cons(\varphi \GenImp \psi) \in \overline\Omega$.
        We want to obtain here
        $f(\varphi \lor \psi) = \tvalue$,
        that is, we want
        to have $\down(\varphi \lor \psi) \in \overline\Omega$, and this is achieved 
        by $\RuleA_{16}^\lor$.\qedhere
        %in view of
        %the rule
        %$$
        %{\color{gray}\inferx[16]
        %{\down p,\up p,\down q,\up q, \down(p \lor q)}
        %{\cons p,\cons q,\cons(p \GenImp q)}}
        %\quad
        %\inferx[16]
        %{\up q, \down(p \lor q)}
        %{\cons p,\cons(p \GenImp q)}
        %$$
    \end{enumerate}
\end{enumerate}
\end{proof}

\begin{lemma}
%%If $\CalcVar_{\top\bot} \subseteq \CalcVar$, then $f(\top)=\etvalue$ if $\top \in \Lambda$ and $f(\bot) = \efvalue$ if $\bot \in \Lambda$.
Suppose $\CalcVar_{\top\bot} \subseteq \CalcVar$. If $\top \in \Lambda$, then $f(\top)=\etvalue$; if $\bot \in \Lambda$, then $f(\bot) = \efvalue$.
\end{lemma}
\begin{proof}
    Obvious 
    from the rules of
    $\CalcVar_{\top\bot}$
    and the definition of $f$.
\end{proof}

\begin{lemma}\label{specificrules}
If $\{ \RuleA_{D_\land},\RuleA_{D_\leq}\},\CalcVar_\Diamond \subseteq \CalcVar$, we have $f[\Omega\cap \Lambda]=\up a\cap f[\Lambda]$
 for some $a>\efvalue$;
 and
 if we also have $\RuleA_{D\neq \up \tvalue}\in \CalcVar$, 
 then
 $f[\Omega\cap \Lambda]=\up a\cap f[\Lambda]$
 for some $\efvalue<a\neq \tvalue$.

% {\color{red} we could add rules forbidding 
% the other filters, thus having more completeness result...perhaps just comment?
% 
% forbidding $\up \fvalue$ means adding $\frac{p}{\up p}$,
% and forbidding
% $\up \both$ and $\up \neither$,
% means adding $\frac{p,\down p,\up p}{ \cons p}$

% MISSING? 

%  $\frac{p,q,\sim p,\sim q,\up p,\up q,\down p,\down q,\down(p\lor q)}{\cons(p\GenImp q)}$

\end{lemma}
\begin{proof}
First of all, we show that
(I):
for $\varphi_1, \ldots, \varphi_n \in \Omega \cap \Lambda$,
if $\bigwedge_i f(\varphi_i) \in f[\Lambda]$,
then
$\bigwedge_i f(\varphi_i) \in f[\Omega \cap \Lambda]$.
By induction on $n$,
consider the base case $n = 2$.
The cases
$f(\varphi) \leq f(\psi)$
and $f(\psi) \leq f(\varphi)$
are obvious, as $f(\varphi) \land f(\psi)$
will coincide either with $f(\varphi)$
or with $f(\psi)$
and they are in $f[\Omega \cap \Lambda]$
by assumption.
The tricky case thus is when
$\{ f(\varphi),f(\psi) \} = \{ \both, \neither \}$.
Suppose that
for some $\theta \in \Lambda$
we have
$f(\theta) = f(\varphi) \land f(\psi) = \fvalue$.
Then, 
by $\RuleA_{D_\land}$,
%because of the derivability of
%$\inferx[]{\varphi,\psi,\up\varphi,\down\varphi,\up\psi,\down\psi}{\cons\varphi,\cons\psi,\cons(\varphi \GenImp \psi),\cons\theta, \up\theta, \theta}$ from the below rule,
we must have $\theta \in \Omega$,
and so $f(\theta) = \fvalue \in f[\Omega\cap\Lambda]$.
%$$
%{\color{gray}
%\inferx[\RuleA_{D_\land}]{p, q,\up p,\down p,\up q,\down q}{\cons p,\cons q,\cons(p \GenImp q),\cons r, r}}
%\quad 
%\inferx[\RuleA_{D_\land}]{p, \down p, q}{\cons p,\cons(p \GenImp q),\cons r, r}
%$$
In the inductive step,
suppose that
$b \SymbDef (f(\varphi_1) \land \ldots \land f(\varphi_n)) \land f(\varphi_{n+1})
\in f[\Lambda]$,
for $\varphi_1,\ldots,\varphi_{n+1} \in \Omega \cap \Lambda$.
Then either
(i) $b = (f(\varphi_1) \land \ldots \land f(\varphi_n))$,
or
(ii) $b = f(\varphi_{n+1})$,
or 
(iii) $b = \fvalue$ and
$\{f(\varphi_1) \land \ldots \land f(\varphi_n),f(\varphi_{n+1})\} = \{ \both,\neither \}$.
In case (i), we use the induction hypothesis.
Case (ii) is obvious.
As for case (iii), we use $\RuleA_{D_\land}$ as we did in
the base case.

Second, we show that (II): for $\varphi \in \Omega \cap \Lambda$ and $\psi \in \Lambda$, if
$f(\varphi) \leq f(\psi)$,
then $\psi \in \Omega$.
By cases on the value of 
$f(\varphi)$ (note that
$f(\varphi) \neq \fvalue$ as $\varphi \in \Omega$):
\begin{enumerate}
    \item If $f(\varphi) = \etvalue$,
    we must have $f(\psi) = \etvalue$,
    thus $\psi \in \Omega$.
    \item If $f(\varphi) \in \{ \fvalue, \both, \neither, \tvalue \}$,
    we have 
    either $f(\psi) = \etvalue$,
    and thus $\psi \in \Omega$,
    or $f(\psi) \in \{ \fvalue, \both, \neither, \tvalue \}$.
    In that case,
    by Lemma~\ref{diamondrules},
    we have $\cons(\varphi \GenImp \psi) \in \Omega$.
    Then $\psi \in \Omega$
    follows 
    by $\RuleA_{D_\leq}$.
    %from the derivability of
    %$\inferx[]{\varphi,\cons(\varphi \GenImp \psi)}{\cons\varphi,\cons\psi,\psi}$
    %in view of the rule below.
    %$$
    %\inferx[\RuleA_{D_\leq}]{p,\cons(p \GenImp q)}{\cons q, q}
    %$$
\end{enumerate}

Let $a \SymbDef \bigwedge f[\Omega \cap \Lambda]$. 
Clearly, $a \neq \efvalue$.
Since $a \leq b$ for each
$b \in f[\Omega \cap \Lambda]$,
we must have
$f[\Omega \cap \Lambda] \subseteq \up a \cap f[\Lambda]$.
It remains to show that
$\up a \cap f[\Lambda] \subseteq f[\Omega \cap \Lambda]$.
Suppose
that there is $\theta \in \overline\Omega \cap \Lambda$ such that
(a): $a \leq f(\theta)$. By cases:
\begin{enumerate}
    \item If $f(\theta) \leq f(\varphi)$
    for all $\varphi \in \Omega \cap \Lambda$, then 
    $f(\theta) = a$,
    and thus $\theta \in \Omega$
    by (I).
    \item If $f(\theta) \not\leq f(\psi)$
    for some $\psi \in \Omega$,
    we have the following subcases:
        \begin{enumerate}
            \item If $f(\psi) < f(\theta)$,
            by (II)
            we must
            have $\theta \in \Omega \cap \Lambda$.
            \item If $f(\psi) \not\leq f(\theta)$,
            we have $f(\psi) = \both$
            and $f(\theta) = \neither$
            or vice-versa.
            By (a), we have
            $a \in \{ \fvalue, \neither \}$.
            The case $a = \neither$
            was treated in (1),
            so we only consider
            $a = \fvalue$.
            This means that either
            $\both,\neither \in f[\Omega \cap \Lambda]$
            or
            $\fvalue \in f[\Omega \cap \Lambda]$,
            and the result follows from
            (II).
        \end{enumerate}
\end{enumerate}

Now, in case the rule
$\RuleA_{D\neq \up \tvalue}$ is present
in the calculus,
we are able to show that 
if $\varphi, \psi \in \Omega \cap \Lambda$
and $f(\varphi) \lor f(\psi) \in f[\Omega \cap \Lambda]$,
then either $f(\varphi) \in f[\Omega \cap \Lambda]$ or $f(\psi) \in f[\Omega \cap \Lambda]$, which essentially excludes
the possibility of $a = \tvalue$.
Suppose that for $\theta \in \Lambda$,
$f(\theta) = f(\varphi) \lor f(\psi) \in f[\Omega \cap \Lambda]$.
Then either $f(\theta) = f(\varphi)$,
or
$f(\theta) = f(\psi)$
or $f(\theta) = \tvalue$
and $\{f(\varphi),f(\psi)\} = \{ \both,\neither \}$. The first two cases are obvious.
The third one follows
because 
the rule $\RuleA_{D\neq \up \tvalue}$
%the derivability of
%$\inferx[]{\up \varphi, \down \varphi, \up \psi, \down \psi, \theta}{\cons \varphi, \cons \psi, \cons(\varphi \GenImp \psi), \cons\theta, \down \theta, \varphi, \psi}$
%in view of the rule below
forces $\varphi \in \Omega$ in this
situation.\qedhere

\end{proof}

{%
We are now finally ready to go back to 
% Theorem~\ref{the:axiomatizations-leq-up}
%
the completeness proof we were working on before going through the above series of auxiliary results.
}
Note that the above lemmas show that $f$
is a partial homomorphism over $\PPSixResImp$
 (which can of course be extended to a full homomorphism)
and, in view of the above lemmas, 
by considering a set of designated values of the form $\up a$ for appropriate $a$ we obtain a countermodel for $\FmSetA \sequent_{\PPResImplogicOrderMC} \FmSetB$
or for
$\FmSetA\sequent_{\PPResImplogicPrimeMC}\FmSetB$
as desired.
This finishes
the proof of
Theorem~\ref{the:axiomatizations-leq-up}.
\qedhere
\end{proof}

\subsection{\SetFmla{} axiomatization for $\PPResImplogicSingle$}
\label{sec:set-fmla-axiomat}

The \SetSet{} calculus developed in the preceding
subsection for the logic $\PPResImplogicPrimeMC$
induces a \SetFmla{} logic for its \SetFmla{}
companion $\PPResImplogicSingle$.
We begin by defining this calculus, 
then  indicate why it is complete for
$\PPResImplogicSingle$.
In what follows, given a set of formulas $\FmSetA$, let $\FmSetA \ArbDisj \FmB \SymbDef \{ \FmA \ArbDisj \FmB : \FmA \in \FmSetA \}$
and $\bigvee \{ \FmA_1,\ldots,\FmA_m \} \SymbDef \FmA_1 \ArbDisj (\FmA_2 \ArbDisj \ldots (\ldots \ArbDisj \FmA_n)\ldots)$.

\begin{definition}
Let $\mathsf{R}$ be
a \SetSet{} calculus.
We define $\mathsf{R}^{\lor}$ as the 
\SetFmla{} calculus
\[
\left\{\frac{p}{q \lor q},\frac{p}{p \lor q}, \frac{p \lor q}{q \lor p}, \frac{p \lor (q \lor r)}{(p \lor q) \lor r}\right\} \cup
\left\{\mathsf{r}^\lor : \mathsf{r} \in \mathsf{R}\right\},\text{ where, for each given 
rule $\mathsf{r}=\frac{\FmSetA}{\FmSetB}$ in~$\mathsf{R}$,}\] the rule $\mathsf{r}^\lor$ is set as:
\begin{tabular}{ll}
  $\mathsf{r}$, & in case $\FmSetA$ is empty and $\FmSetB$ is a singleton \\
  $\frac{\FmSetA \lor s}{s}$, & in case $\FmSetB$ is empty \\
  $\frac{\FmSetA \lor s}{(\bigvee \FmSetB) \lor s}$, & otherwise
\end{tabular}. \\
In all cases, $s$ is chosen to be a propositional variable not occurring in the rules that belong to $\mathsf{R}$.
\end{definition}

Using the above recipe is straightforward,
and this gives the reason why we decided to not spell out
the whole axiomatization here.
Before introducing the completeness result,
we define what it means for a \SetFmla{} logic
to have a disjunction.
A \SetFmla{} logic~$\vdash$~over $\Sigma$ 
\emph{has a disjunction} provided that 
%a disjunction
%$\lor$ is available
%satisfying 
$\FmSetA, \FmA\lor\FmB \;\vdash\; \FmC$ {if, and only if,} $\FmSetA,\FmA \;\vdash\; \FmC$ and $\FmSetA,\FmB\;\vdash\;\FmC$ (for $\lor$ a binary connective in $\Sigma$).
The completeness result is immediate
from the fact that  $\PPResImplogicSingle$
has a disjunction, 
in view of the following result:

\begin{lemma}[{\citet*[Thm. 5.37]{shoesmith_smiley:1978}}]
    Let $\CalcVar$ be a \SetSet{} calculus
    over a signature containing a binary
    connective $\lor$.
    If $\vdash_{\CalcVar}$
    has a disjunction,
    then $\vdash_{\CalcVar^\lor} \;=\; \vdash_{\CalcVar}$.
\end{lemma}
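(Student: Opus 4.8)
The statement to be proved is the lemma attributed to Shoesmith and Smiley, asserting that if $\vdash_{\CalcVar}$ has a disjunction then $\vdash_{\CalcVar^\lor} = \vdash_{\CalcVar}$. The plan is to establish the two inclusions separately. For $\vdash_{\CalcVar} \subseteq \vdash_{\CalcVar^\lor}$, I would show that every $\CalcVar$-rule is a derived rule of $\CalcVar^\lor$ in the \SetFmla{} setting; since $\CalcVar^\lor$ is designed so that $\mathsf{r}^\lor$ together with the four explicit disjunction rules can recover $\mathsf{r}$, this direction should be a matter of simulating a \SetSet{} derivation tree by repeatedly ``carrying along'' the extra disjunct $s$. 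Concretely, given a $\FmSetB$-closed $\CalcVar$-derivation witnessing $\FmSetA \sequent_{\CalcVar} \FmSetB$, I would instantiate $s$ with $\bigvee \FmSetB$ and show inductively (on the height of the tree) that the \SetFmla{} calculus $\CalcVar^\lor$ proves $\bigvee\FmSetB$ from $\FmSetA$: each branching application of a \SetSet{} rule becomes a use of $\mathsf{r}^\lor$ feeding into case analysis handled by the disjunction-elimination capacity that ``has a disjunction'' supplies at the level of $\vdash_{\CalcVar^\lor}$ itself.

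The converse inclusion $\vdash_{\CalcVar^\lor} \subseteq \vdash_{\CalcVar}$ is where the hypothesis that $\vdash_{\CalcVar}$ has a disjunction does the real work. Here I would check that each rule of $\CalcVar^\lor$ is sound for $\vdash_{\CalcVar}$, i.e.\ that $\FmSetA \lor s \vdash_{\CalcVar} (\bigvee\FmSetB)\lor s$ whenever $\frac{\FmSetA}{\FmSetB}$ is a rule of $\CalcVar$, and similarly for the degenerate cases; the four structural disjunction rules $\frac{p}{q\lor q}$, $\frac{p}{p\lor q}$, $\frac{p\lor q}{q\lor p}$, $\frac{p\lor(q\lor r)}{(p\lor q)\lor r}$ must also be shown to hold in $\vdash_{\CalcVar}$. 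The key observation is that from $\FmSetA \sequent_{\CalcVar} \FmSetB$ one gets, using that $p \sequent_\CalcVar p\lor q$ and $q\sequent_\CalcVar p\lor q$ and $p\lor q\sequent_\CalcVar p,q$ hold in $\vdash_\CalcVar$ (which follow from ``has a disjunction''), the \SetFmla{} consequence $\FmSetA \lor s \vdash_\CalcVar (\bigvee\FmSetB)\lor s$: one reasons by the disjunction property that either all antecedent formulas hold without $s$, in which case $\FmSetB$ gives a disjunct, or $s$ holds, which is also a disjunct. Since $\vdash_{\CalcVar}$ is closed under all rules of $\CalcVar^\lor$, and $\vdash_{\CalcVar^\lor}$ is the least \SetFmla{} consequence relation closed under those rules, we get $\vdash_{\CalcVar^\lor} \subseteq \vdash_{\CalcVar}$.

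The main obstacle I anticipate is making the first inclusion genuinely rigorous: translating a \emph{branching} \SetSet{} proof (whose semantics is essentially ``one of the succedent formulas holds'') into a \emph{linear} \SetFmla{} proof of a single disjunction requires carefully propagating the parameter $s$ through nested applications and invoking the disjunction-elimination behaviour of $\vdash_{\CalcVar^\lor}$ at each branch point — and to invoke that behaviour one must first know $\vdash_{\CalcVar^\lor}$ itself has a disjunction, which should be checked directly from the four structural rules plus $\mathsf{r}^\lor$ for the relevant $\CalcVar$-rules, or bootstrapped from the equality once one inclusion is in hand. Since the excerpt permits citing this as a known result of Shoesmith and Smiley, in the paper I would simply invoke \citet*[Thm. 5.37]{shoesmith_smiley:1978} and note that the hypothesis ``$\PPResImplogicSingle$ has a disjunction'' is met because $\lor$ in $\PPResImplogicPrimeMC$ (hence in its \SetFmla{} companion) satisfies the required introduction and elimination behaviour, which is immediate from the matrix characterization in Proposition~\ref{prop:redmatrdeflog}; thus $\vdash_{\CalcVar_\leq^\lor}$ axiomatizes $\PPResImplogicSingle$, completing the \SetFmla{} part of the program announced at the start of Section~\ref{sec:int-hilbert-axiomat}.
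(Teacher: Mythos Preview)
The paper does not prove this lemma at all: it is stated purely as a citation of \citet*[Thm.~5.37]{shoesmith_smiley:1978}, with no accompanying argument, and is then immediately applied to conclude that $\vdash_{\CalcVar_{\leq}^\lor} = \PPResImplogicSingle$. Your final paragraph already anticipates this --- you correctly note that one may simply invoke the Shoesmith--Smiley reference --- and that is precisely, and only, what the paper does. The two-inclusion sketch you give beforehand is therefore extra relative to the paper; it is a reasonable outline of the standard argument, but the paper treats the result as a black box.
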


\noindent 
From this it immediately follows that: 

\begin{therm}
    $\vdash_{\CalcVar_{\leq}^\lor}
    =
    \PPResImplogicSingle$.
\end{therm}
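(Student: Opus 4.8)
The plan is to invoke the cited Shoesmith--Smiley result (Theorem 5.37, quoted just above as a Lemma), so the entire argument reduces to two observations: that $\CalcVar_{\leq}$ axiomatizes $\PPResImplogicPrimeMC$, which is already established in Theorem~\ref{the:axiomatizations-leq-up}, and that the \SetFmla{} companion $\PPResImplogicSingle$ has a disjunction in the technical sense defined immediately before the statement. Granting those two facts, the Lemma gives $\vdash_{\CalcVar_{\leq}^\lor} = \vdash_{\CalcVar_{\leq}}$, and since $\vdash_{\CalcVar_{\leq}}$ is the \SetFmla{} companion of $\sequent_{\CalcVar_{\leq}} = \PPResImplogicPrimeMC$, which is by definition $\PPResImplogicSingle$, we are done.

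So the one real content to supply is that $\PPResImplogicSingle$ has a disjunction, i.e. that $\FmSetA, \FmA \lor \FmB \vdash_{\PPResImplogicSingle} \FmC$ if, and only if, $\FmSetA, \FmA \vdash_{\PPResImplogicSingle} \FmC$ and $\FmSetA, \FmB \vdash_{\PPResImplogicSingle} \FmC$. I would prove this directly from the definition of the \SetFmla{} order-preserving logic $\vdash^{\leq}_{\PPImpResVar}$ (Definition~\ref{def:order-pres-set-fmla-logic}): the right-to-left direction follows because $\FmA \leq \FmA \lor \FmB$ and $\FmB \leq \FmA \lor \FmB$ hold in every distributive lattice, so any valuation bounded below by $\bigwedge \FmSetA' \land (\FmA \lor \FmB)$ is in particular bounded below by $\bigwedge \FmSetA' \land \FmA$ or by $\bigwedge \FmSetA' \land \FmB$ once we case-split; more carefully, one uses the lattice identity $x \land (y \lor z) = (x \land y) \lor (x \land z)$ together with the fact that, over the prime-filter matrices of Proposition~\ref{prop:single-gen-variety-matrices}, $v(\bigwedge \FmSetA' \land (\FmA \lor \FmB)) \in D$ forces $v(\bigwedge \FmSetA' \land \FmA) \in D$ or $v(\bigwedge \FmSetA' \land \FmB) \in D$. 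The left-to-right direction is the standard ``reasoning by cases'': if $\bigwedge \FmSetA' \land \FmA \leq \FmC$ and $\bigwedge \FmSetA' \land \FmB \leq \FmC$ are valid in $\PPImpResVar$, then by distributivity $\bigwedge \FmSetA' \land (\FmA \lor \FmB) = (\bigwedge \FmSetA' \land \FmA) \lor (\bigwedge \FmSetA' \land \FmB) \leq \FmC$ is valid too. This is all routine distributive-lattice manipulation.

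I would therefore write the proof as: ``By Theorem~\ref{the:axiomatizations-leq-up}, $\CalcVar_{\leq}$ axiomatizes $\PPResImplogicPrimeMC$, hence $\vdash_{\CalcVar_{\leq}}$ is the \SetFmla{} companion $\PPResImplogicSingle$. Since $\PPResImplogicSingle = \vdash^{\leq}_{\PPImpResVar}$ has a disjunction (by distributivity of the lattice reducts of the algebras in $\PPImpResVar$, as in the argument above), the cited Lemma yields $\vdash_{\CalcVar_{\leq}^\lor} = \vdash_{\CalcVar_{\leq}} = \PPResImplogicSingle$.'' The only mild obstacle is making sure the ``has a disjunction'' verification is stated at the right level of generality — it should be phrased in terms of validity of inequalities in $\PPImpResVar$ so that it plugs cleanly into Definition~\ref{def:order-pres-set-fmla-logic} — but there is no genuine difficulty here; the heavy lifting was already done in proving Theorem~\ref{the:axiomatizations-leq-up}, and this final theorem is essentially a corollary.
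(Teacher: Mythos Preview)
Your proposal is correct and follows exactly the paper's approach: apply the Shoesmith--Smiley lemma to $\CalcVar_{\leq}$, using Theorem~\ref{the:axiomatizations-leq-up} to identify $\vdash_{\CalcVar_{\leq}}$ with $\PPResImplogicSingle$ and the fact that $\PPResImplogicSingle$ has a disjunction. The paper simply asserts the disjunction property without spelling out the distributive-lattice argument you supply, so your version is if anything slightly more detailed.
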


\section{Algebraic study of $\PPResImplogic$ and $\PPResAsslogic$}
\label{sec:moi}

In this section we look at the class of algebras
that corresponds to $\PPResImplogic$
and  $\PPResAsslogic$
according to  
the general theory of algebraization of logics (\cite{font:2016}).  
In order to facilitate the exposition, here we will write a
$\Sigma$-algebra 
$\AlgA \SymbDef \langle A, \cdot^{\AlgA} \rangle$
as $\langle A; \DefCon_1^\AlgA,\ldots,\DefCon_n^{\AlgA} \rangle$, with $\DefCon_i \in \Sigma$
for each $1 \leq i \leq n$.
We will further omit the superscript 
$\AlgA$ 
from the interpretations of the connectives whenever there is no risk of confusion.

We begin by recalling that, as observed earlier,
the algebra $\PPSixResImp$ is
a {symmetric Heyting algebra} %(SHA) 
in Monteiro's sense.

%(Definition~\ref{d:symhey} below). 

\begin{definition}[{\citet*[Def.~1.2,~p.~61]{Monteiro1980}}]
\label{d:symhey}
A \emph{symmetric Heyting algebra (SHA)}
is a $\DMSigImp$-algebra $\la A; \land, \lor, \GenImp, \DMNeg, \bot, \top \ra$ such that: 
\begin{enumerate}[(i)]
\item $\la A; \land, \lor, \GenImp, \bot, \top \ra$ is a Heyting algebra.
\item $\la A; \land, \lor, \DMNeg, \bot, \top \ra$ is a De Morgan algebra. 
%\item $\la A; \land, \lor, \neg, 0, 1 \ra$ is a Stone algebra.
\end{enumerate}
\end{definition}

As mentioned earlier, SHAs are alternatively known as \emph{De Morgan-Heyting algebras} in the terminology introduced by~\cite{Sankappanavar1987}. The logical counterpart of SHAs is Moisil's
`symmetric modal logic', 
%[1942]. Logique modale. Disquisitiones Mathematicae et Physica., Buc. 2 (1942), 3-98
which is the expansion of the Hilbert-Bernays
positive logic (the conjunction-disjunction-implication fragment of intuitionistic logic) by the addition of a De Morgan negation. 
One might expect Moisil's logic to be closely related to 
$\PPResImplogic$. In fact, as we shall see, 
the logic $\PPResAsslogic$ considered earlier
(the $\top$-assertional 
companion of $\PPResImplogic$) 
may be viewed as an axiomatic extension of Moisil's logic; 
whereas we may obtain $\PPResImplogic$ from Moisil's logic 
provided we extend it by appropriate axioms but also drop the \emph{contraposition}
 rule schema ($\MoisilRule{12}$ below). 
 The following is
a Hilbert-style calculus
for Moisil's logic~(see \citet[p.~60]{Monteiro1980}):
$$
\inferx[\MoisilRule{1}]
{}{p \Imp (q \Imp p)}
\quad
\inferx[\MoisilRule{2}]
{}{(p \Imp (q \Imp r)) \Imp ((p \Imp q) \Imp (p \Imp r))}
$$
$$
\inferx[\MoisilRule{3}]
{}{(p \land q) \Imp p}
\quad 
\inferx[\MoisilRule{4}]
{}{(p \land q) \Imp q}
\quad
\inferx[\MoisilRule{5}]
{}{
(p \Imp q) \Imp ((p \Imp r) \Imp (p \Imp (q \land r)))
}
$$
$$
\inferx[\MoisilRule{6}]
{}{p \Imp (p \lor q)}
\quad 
\inferx[\MoisilRule{7}]
{}{q \Imp (p \lor q)}
\quad
\inferx[\MoisilRule{8}]
{}
{(p \Imp r) \Imp ((q \Imp r) \Imp ((p \lor q) \Imp r))}
$$
$$
\inferx[\MoisilRule{9}]
{}{p \Imp \DMNeg\DMNeg p}
\quad 
\inferx[\MoisilRule{10}]
{}{\DMNeg\DMNeg p \Imp p}
$$
$$
\inferx[\MoisilRule{11}]{p, p\Imp q}{q}
\quad 
\inferx[\MoisilRule{12}]{p \Imp q}{\DMNeg q \Imp \DMNeg p}
$$

%\begin{enumerate}[(M1)]
%\item \label{Itm:ML1} $\phi \Imp (\psi \Imp \phi)$
%\item \label{Itm:ML2} $(\phi \Imp (\psi \Imp \gamma)) \Imp ((\phi \Imp \psi) \Imp (\phi \Imp \gamma))$
%\item \label{Itm:ML3} $(\phi \land \psi) \Imp \phi$
%\item \label{Itm:ML4} $(\phi \land \psi) \Imp \psi$
%\item \label{Itm:ML5} $(\phi \Imp \psi) \Imp ((\phi \Imp \gamma) \Imp (\phi \Imp (\psi \land\gamma)))$
%\item \label{Itm:ML6} $ \phi \Imp (\phi \lor \psi)$
%\item \label{Itm:ML7} $ \psi \Imp (\phi \lor \psi)$
%\item \label{Itm:ML8} $(\phi \Imp \gamma) \Imp ((\psi \Imp \gamma) \Imp ((\phi \lor \psi) \Imp \gamma))$
%\item \label{Itm:ML9} $ \phi \Imp \DMNeg \DMNeg \phi$
%\item \label{Itm:ML10} $\DMNeg \DMNeg \phi \Imp \phi$.
%\end{enumerate}
%Its rules are: 
%\begin{enumerate}[(MR1)]
%\item \label{Itm:MLR1} \emph{modus ponens}: from $\phi$ and $\phi \Imp \psi$ derive $\psi$.
%\item \label{Itm:MLR2} \emph{contraposition}: from $\phi \Imp \psi$  derive $\DMNeg \psi \Imp \DMNeg \phi$.
%\end{enumerate}

%\noindent Henceforth, we will refer to $\MoisilRule{11}$ as the rule of \emph{modus ponens} and to $\MoisilRule{12}$ as the rule of \emph{contraposition}.

The Lindenbaum-Tarski algebras of Moisil's logic are precisely the symmetric Heyting algebras~(see \citet[Thm.~2.3,~p.~62]{Monteiro1980}). Using this result,
it is easy to obtain the following:

\begin{proposition}
\label{prop:moisalg}
Moisil's logic is algebraizable  (in the sense of~\cite{BP89})
with the same translations as positive logic (namely, equivalence formulas $\{ x \Imp y, y \Imp x \}$ and defining equation $x \approx \top$).  Its equivalent algebraic semantics is the variety of symmetric Heyting algebras.
\end{proposition}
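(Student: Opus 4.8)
\textbf{Proof plan for Proposition~\ref{prop:moisalg}.}

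The plan is to invoke the standard characterization of algebraizability via the Blok--Pigozzi theory, using the already-quoted fact (\citet[Thm.~2.3,~p.~62]{Monteiro1980}) that the Lindenbaum--Tarski algebras of Moisil's logic are exactly the symmetric Heyting algebras. Concretely, I would take $\De(x,y) \SymbDef \{ x \Imp y, y \Imp x \}$ as the set of equivalence formulas and $x \approx \top$ (equivalently $\{ x \approx x \land \top \}$, or just the single equation $x \approx \top$ recalling that $\top$ is term-definable as $p \Imp p$) as the defining equation, and verify the four defining conditions of an algebraizable logic: (R) $\vdash_{\mathsf{Moisil}} \De(x,x)$; (Sym) and (Trans) for $\De$; (Cong) replacement of equivalents for every connective in the signature $\DMSigImp$; and the bridge condition $x \dashv\vdash_{\mathsf{Moisil}} \De(x,\top)$ linking the two translations.

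First I would check (R), (Sym), (Trans): these are immediate from $\MoisilRule{1}$--$\MoisilRule{2}$ and $\MoisilRule{11}$, since the $\{\land,\lor,\Imp\}$-fragment of Moisil's logic \emph{is} Hilbert--Bernays positive logic, for which these are textbook derivations; in particular $p \Imp p$ is derivable, giving (R). Next, for (Cong), the cases of $\land$, $\lor$ and $\Imp$ are again inherited verbatim from positive logic, so the only genuinely new case is the De Morgan negation $\DMNeg$: from $p \Imp q$ and $q \Imp p$ I must derive $\DMNeg p \Imp \DMNeg q$ and $\DMNeg q \Imp \DMNeg p$, and this is exactly one application of the contraposition rule $\MoisilRule{12}$ to each direction. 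The bridge condition $x \dashv\vdash \De(x,\top)$ reduces to deriving $x \Imp \top$ and $\top \Imp x$ from $x$ (trivial: $x \Imp \top$ is a theorem of positive logic and $\top \Imp x$ follows from $x$ by $\MoisilRule{1}$) and, conversely, deriving $x$ from $\{x \Imp \top, \top \Imp x\}$ (modus ponens $\MoisilRule{11}$ on $\top$ and $\top \Imp x$). Having verified all four conditions, algebraizability follows by the Blok--Pigozzi criterion (\cite{BP89}; see also \cite{font:2016}).

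It then remains to identify the equivalent algebraic semantics. By the general theory, the equivalent algebraic semantics of an algebraizable logic is the quasivariety axiomatized by the equations obtained by translating the axioms and rules of the calculus via $\delta(\varphi) \SymbDef \varphi \approx \top$, together with the equations expressing that $\De$ and $x\approx\top$ are mutually inverse translations. Here I would argue that this quasivariety coincides with the variety of symmetric Heyting algebras: the translations of $\MoisilRule{1}$--$\MoisilRule{8}$ together with the $\De$/$\delta$ interdefinability equations present exactly the standard equational basis of Heyting algebras, the translations of $\MoisilRule{9}$--$\MoisilRule{10}$ give $\DMNeg\DMNeg x \approx x$, and the translation of $\MoisilRule{12}$ yields the De Morgan law $\DMNeg(x \land y) \approx \DMNeg x \lor \DMNeg y$ (using $\MoisilRule{11}$ and the Heyting structure, antitonicity of $\DMNeg$ plus involutivity is equivalent to the De Morgan equations on a bounded distributive lattice). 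Alternatively---and this is the cleaner route---I would simply note that $\citet[Thm.~2.3]{Monteiro1980}$ already tells us that the Lindenbaum--Tarski algebras of Moisil's logic are the SHAs; since for an algebraizable logic the equivalent algebraic semantics is generated as a (quasi)variety by the Lindenbaum--Tarski algebras, and since the class of SHAs is already a variety (being equationally defined), the equivalent algebraic semantics is precisely $\mathsf{SHA}$.

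The main obstacle, such as it is, is purely bookkeeping: making sure the congruence condition (Cong) is checked for \emph{every} connective, the only non-routine instance being $\DMNeg$, which is handled by $\MoisilRule{12}$; and being careful, in the identification of the algebraic counterpart, that the contraposition rule really does deliver the full De Morgan law rather than merely monotonicity---this is where one uses involutivity $\MoisilRule{9}$--$\MoisilRule{10}$ together with distributivity of the Heyting lattice reduct. Everything else is a direct transcription of the well-known algebraizability of positive/intuitionistic logic, which is why the proposition can be stated as an easy consequence of Monteiro's Lindenbaum--Tarski theorem.
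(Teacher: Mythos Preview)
Your proposal is correct and follows essentially the same route the paper indicates: the paper's proof is a one-line appeal to Monteiro's Lindenbaum--Tarski theorem (\citet[Thm.~2.3]{Monteiro1980}), and you have spelled out the Blok--Pigozzi verification that makes this appeal go through, correctly isolating the contraposition rule $\MoisilRule{12}$ as the only ingredient beyond the positive-logic case. Your treatment is more detailed than the paper's, but the underlying argument is the same.
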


Having verified that $\PPResAsslogic$ is an axiomatic extension of 
Moisil's logic (see our axiomatization below), 
%(\todo{this depends on the exact definition of $\PPResAsslogic$}),
we will immediately obtain that $\PPResAsslogic$ is algebraizable 
with the  translations mentioned in the preceding proposition; %as Moisil's logic; 
the equivalent algebraic semantics
of $\PPResAsslogic$ is then bound to be a sub(quasi)variety of SHAs.
As we will see, 
one may obtain $\PPResAsslogic$ from the above axiomatizations of Moisil's logic by adding
%(\todo{conjecture}) 
the following axioms
(we %$\bot := \DMNeg (\phi \Imp \phi) $ and 
use $\neg p$ as an abbreviation of $p \Imp \DMNeg (p \Imp p) $
and 
%$\cons \phi : = \neg (\phi \land \DMNeg \phi)$):
$\cons p$ as an abbreviation of $\neg p \lor \neg  \DMNeg p)$:
$$
\inferx[\PPTopRule{1}]{}{\neg p \Imp \DMNeg \neg \neg p}
\qquad
\inferx[\PPTopRule{2}]{}{\DMNeg \neg \neg p \Imp \neg p}
$$
$$
\inferx[\PPTopRule{3}]{}{({\cons(p_1 \Imp p_2) \land \cons(p_2\Imp p_3) }) 
\Imp  
({\cons p_1 \lor \cons p_4 \lor \cons(p_4\Imp p_3)}  \lor  \cons(p_3\Imp p_2) \lor \cons(p_2\Imp p_1))}
$$

This axiomatization should be compared with the equational presentation given in Definition~\ref{d:PPImpResVar}, and the claimed completeness of the axiomatization will follow
from Theorems~\ref{thm:vargen}  and~\ref{thm:algppass}. Definition~\ref{d:PPImpResVar}(2)
matches $\PPTopRule{3}$. Definition~\ref{d:PPImpResVar}(1) says that the algebra has a PP-algebra reduct:
as observed in~\citet[p.~3150]{marcelino2021}, for an algebra that has a pseudo-complement negation (as all symmetric Heyting algebras do), it is sufficient to impose the equation $\DMNeg \neg \neg x \approx \neg x$ to obtain an involutive Stone
algebra (i.e., modulo the language, a PP-algebra); clearly the equation $\DMNeg \neg \neg x \approx \neg x$
corresponds, via algebraizability, to
$\PPTopRule{1}$ and $\PPTopRule{2}$.

%\begin{enumerate}[(M11)]
%\item \label{Itm:ML11} $\neg \phi \Imp \DMNeg \neg \neg \phi$
%\item \label{Itm:ML12} $\DMNeg \neg \neg \phi \Imp \neg \phi$
%\item \label{Itm:ML13} $({\cons(\phi_1 \Imp \phi_2) \land \cons(\phi_2\Imp \phi_3) }) 
%\Imp  
%({\cons \phi_1 \lor \cons \phi_4 \lor \cons(\phi_4\Imp \phi_3)}  \lor  \cons(\phi_3\Imp \phi_2) \lor \cons(\phi_2\Imp \phi_1))$.
%\end{enumerate}

One can also show that $\PPResImplogic$ may be  obtained from the preceding axiomatization by 
taking as axioms all the valid formulas while
dropping
the contraposition rule of Moisil's logic (see, e.g., \citet[p.11]{bou11}).

We proceed to obtain further information on the subclass of symmetric Heyting
algebras that are models of $\PPResAsslogic$ and $\PPResImplogic$. 
%in this way we shall also obtain an alternative axiomatization for $\PPResImplogic$ that is a direct extension of the above calculus for Moisil's logic. 
%
\cite{Monteiro1980} carried out an extensive study of symmetric Heyting (and related) algebras;
independently,
%Sankappanavar rediscovered 
some of Monteiro's results 
were rediscovered
and  a number of new ones obtained in~\cite{Sankappanavar1987}. From these works we shall recall only a few results needed for our purposes. 

The following example is of special relevance to us because, as we shall
see, the symmetric Heyting algebras we are mostly interested in have the shape
described therein:

\begin{example}[\cite{Sankappanavar1987},~p.~568]
\label{ex:chsum}
Let $\A[D]$ be a finite De Morgan algebra and let $\A[C]^+_n$ and  $\A[C]^-_n$ be two  $n$-element chains, which we view as lattices.
Denoting  $C^+_n \SymbDef \{ c_1, \ldots, c_n \}$, with $c_1 < \ldots < c_n$, let
 $C^-_n \SymbDef \{ \DMNeg c_1, \ldots, \DMNeg c_n \}$, with $\DMNeg c_n < \ldots < \DMNeg c_1$.
Consider the {ordinal sum} of these lattices,  $\A[D]^{\A[C]} : = \A[C]^-_n \oplus \A[D] \oplus \A[C]^+_n$, in which the order and the De Morgan negation are
defined as follows:  for all $d \in D$ and $c_i \in C^+_n$, we let 
$\DMNeg c_i<d<c_i$ and
$\DMNeg^{\A[D]^{\A[C]}} d = \DMNeg^{\A[D]} d$. Then, $\A[D]^{\A[C]} $ is a finite De Morgan algebra, and can therefore be endowed
with the Heyting implication determined by the order,  turning it into a symmetric Heyting algebra. 
%Indeed, it is  easy to check that $\A[D]^{\A[C]} \in \SH_1$.
\end{example}

{
We now proceed to the axiomatization of $\PPImpResVar$. First, we define an equational class we shall call $\PPImpResTempVar$,
then we show that it coincides with
$\PPImpResVar$.

\begin{definition}
\label{d:PPImpResVar}
Let $\A \SymbDef \la A; \land, \lor, \Imp, \DMNeg, \cons, \bot, \top \ra$
be a symmetric Heyting algebra expanded
with an operation $\cons$.
We say that $\A$
is in the class $\PPImpResTempVar$ if the following conditions are satisfied:
\begin{enumerate}[(i)]
\item 
%Upon letting $\neg x := x \Imp \DMNeg (x \Imp x)$ and 
%$\cons x : = \neg x \lor \neg \DMNeg x$,
The reduct $\la A; \land, \lor, \cons, \DMNeg, \bot, \top \ra$ is a PP-algebra.
\item $\A$ satisfies the following equation:
$$
{\cons(x_1 \Imp x_2) \land \cons(x_2\Imp x_3) } 
\leq 
{\cons x_1 \lor \cons x_4 \lor \cons(x_4\Imp x_3)}  \lor  \cons(x_3\Imp x_2) \lor \cons(x_2\Imp x_1).
$$
\end{enumerate}
\end{definition}
}

Our next aim is to check the following results:
$\PPImpResTempVar$ is the variety (and the quasi-variety) generated by $\PPSixResImp$
(Theorem~\ref{thm:vargen}),
and
$\PPImpResTempVar$ is the equivalent algebraic semantics of the algebraizable logic $\PPResAsslogic$
(Theorem~\ref{thm:algppass}).
%\begin{therm}
%    \label{thm:vargen}
%
%\end{therm}
%
%\begin{therm}
%\label{thm:algppass0}
%
%\end{therm}
We shall prove the above by relying on a few lemmas that  also have an independent interest, 
in that they shed some light on the structures of
symmetric Heyting algebras in general and of algebras in $\PPImpResVar$ in particular.

Adopting Monteiro's notation, we shall use the abbreviations $\neg x := x \Imp \DMNeg (x \Imp x)$ and $\De x : = \neg \DMNeg x$
%is as given in Definition~\ref{d:PPImpResVar}
(recall that, on every algebra having a %note also --  this will be used in a later proof -- that on every 
PP-algebra reduct (\cite{Gomes2022}), one may take $\cons x \SymbDef \neg x \lor \De x$
and $\De x \SymbDef x \land \cons x$).
In $\PPSixResImp$, we have:
\[
\neg^{\PPSixResImp}(a)
=
\begin{cases}
    \etvalue & \text{if } a = \efvalue,\\
    \efvalue & \text{otherwise.}
\end{cases}
\quad
\text{and}
\quad
\Delta^{\PPSixResImp}(a)
=
\begin{cases}
    \etvalue & \text{ if } a = \etvalue,\\
    \efvalue & \text{ otherwise.}
\end{cases}
\]

Given a SHA $\A$ and a subset $F \subseteq A$, we shall say that $F$ is a
\emph{filter} if it is a non-empty lattice filter of the bounded lattice reduct of
$\A$. A filter $F$ will be called \emph{regular} if $\De a \in F$ whenever $a \in F$, for all
$a \in A$. For example, the algebra $\PPSixResImp$ has only one proper regular filter,
namely the singleton $\{ \etvalue \}$.
On every SHA, the regular filters  form a closure system (hence, a complete lattice) 
and 
%Regular filters 
may be characterized in a number of alternative ways (see e.g.~\citet[Thm.~4.3,~p.~75, and Thm.~4.11,~p.~80]{Monteiro1980}). In particular, it can be shown that regular filters coincide 
with the~lattice filters~$F$ that further satisfy the contraposition rule, that is: $\DMNeg b \Imp \DMNeg a \in F$ whenever $a \Imp b \in F$. By the algebraizability of
Moisil's logic, this observation yields the following result,
which Sankappanavar proved in a more general (and purely algebraic) context:

\begin{lemma}[\cite{Sankappanavar1987},~Thm.~3.3]
\label{lem:filcon}
    The lattice of congruences of each SHA $\A$ is isomorphic to the lattice of 
    regular filters on $\A$.
\end{lemma}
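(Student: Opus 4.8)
The plan is to reduce the statement to the algebraizability of Moisil's logic (Proposition~\ref{prop:moisalg}) together with the characterization of regular filters already quoted, namely that a lattice filter $F$ on a SHA $\A$ is regular precisely when it is closed under the contraposition rule ($\DMNeg b \Imp \DMNeg a \in F$ whenever $a \Imp b \in F$). First I would recall the general fact from abstract algebraic logic: for an algebraizable logic $\vdash$ with equivalent algebraic semantics a (quasi)variety $\mathsf{K}$, and for every $\A \in \mathsf{K}$, the lattice $\ConSet\A$ is isomorphic to the lattice of $\vdash$-filters on $\A$, the isomorphism sending a congruence $\theta$ to the $\theta$-class of the designated constant (here $\top$), with inverse sending a $\vdash$-filter $F$ to the Leibniz congruence $\LeibOp{\A}F$ (see e.g.~\citet[Ch.~3]{font:2016}). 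Applying this to Moisil's logic, whose equivalent algebraic semantics is the variety of SHAs by Proposition~\ref{prop:moisalg}, we get $\ConSet\A \cong \mathrm{Fi}_{\mathsf{Moi}}(\A)$ for every SHA $\A$.

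The next step is to identify $\mathrm{Fi}_{\mathsf{Moi}}(\A)$ with the regular filters of $\A$. On the one hand, every Moisil filter is a lattice filter of the bounded lattice reduct (since the positive-logic fragment already forces upward closure and closure under $\land$, using $\MoisilRule{1}$--$\MoisilRule{8}$ and modus ponens $\MoisilRule{11}$), and it is closed under the contraposition rule $\MoisilRule{12}$: if $a \Imp b \in F$ then, because $F$ is closed under the consequences of Moisil's logic and $p \Imp q \vdash \DMNeg q \Imp \DMNeg p$, we get $\DMNeg b \Imp \DMNeg a \in F$. By the quoted equivalence (Monteiro, Thm.~4.3/4.11, as recalled in the paragraph preceding the lemma) a lattice filter closed under contraposition is exactly a regular filter, so $\mathrm{Fi}_{\mathsf{Moi}}(\A) \subseteq \mathrm{RegFi}(\A)$. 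Conversely, any regular filter is a lattice filter closed under contraposition, hence closed under all the rule schemata $\MoisilRule{1}$--$\MoisilRule{12}$ (the positive schemata being automatic for lattice filters containing $\top$, and $\MoisilRule{9},\MoisilRule{10}$ holding since $a \leq \DMNeg\DMNeg a$ and $\DMNeg\DMNeg a \leq a$ in a De Morgan algebra), so it is a Moisil filter. Thus $\mathrm{Fi}_{\mathsf{Moi}}(\A) = \mathrm{RegFi}(\A)$, and composing the two identifications yields $\ConSet\A \cong \mathrm{RegFi}(\A)$, which is the claim.

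The main obstacle — or rather the only real subtlety — is making sure the passage ``Moisil filter $\Leftrightarrow$ lattice filter closed under contraposition'' is rigorously justified rather than merely plausible: one must check that a lattice filter of the bounded lattice reduct automatically absorbs the purely implicational/positive rules of Moisil's calculus. This is where the Hilbert--Bernays positive fragment does the work, and it is essentially the observation that in a Heyting algebra the deductively closed sets of the positive fragment are exactly the lattice filters (an upward-closed, $\land$-closed, $\top$-containing subset is automatically closed under modus ponens since $a,\, a\Imp b \in F$ with $a \wedge (a \Imp b) \leq b$ forces $b \in F$). Granting that, only the De Morgan rules $\MoisilRule{9},\MoisilRule{10},\MoisilRule{12}$ remain, and $\MoisilRule{9},\MoisilRule{10}$ are trivial while $\MoisilRule{12}$ is precisely regularity. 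An alternative, even shorter route would be to cite Lemma~\ref{lem:filcon}'s source (\citet[Thm.~3.3]{Sankappanavar1987}) essentially verbatim, but since the excerpt already frames the result as following ``by the algebraizability of Moisil's logic,'' I would present the argument as above, i.e.\ deduce the congruence/filter correspondence from Proposition~\ref{prop:moisalg} and then transport it across the regular-filter characterization.
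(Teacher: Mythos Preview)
Your proposal is correct and follows essentially the same approach as the paper: the paper does not give a formal proof of this lemma but, in the paragraph immediately preceding it, sketches exactly the argument you spell out---namely, that regular filters coincide with lattice filters closed under the contraposition rule, and that this identification, together with the algebraizability of Moisil's logic (Proposition~\ref{prop:moisalg}), yields the congruence/regular-filter correspondence. Your write-up simply fills in the details that the paper leaves implicit.
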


We now focus on a subvariety of SHAs (to which
$\PPSixResImp$ obviously belongs) where regular filter generation admits a particularly simple description. 
As we will see, the following equation will
play a key role:
\begin{equation}
\label{eq:idempeqname}
\De \De x \approx \De x
\tag{\IdempEqName}
\end{equation}

\begin{lemma}[{\citet*[Thm.~4.17,~p.~82]{Monteiro1980}}]
\label{lem:filgen}
Let $\A$ be a SHA that satisfies 
{\normalfont(\ref{eq:idempeqname})}
and let $B \subseteq A$.
%(in particular, this is satisfied in $\PPImpResVar$).
The regular filter $F(B)$ generated by $B$ %, denoted by $F(B)$, 
is given by:
$$
F(B) \SymbDef \{ a \in A : \De (a_1 \land \ldots \land a_n) \leq a \text{ for some } a_1, \ldots, a_n \in B \}.
$$    
\end{lemma}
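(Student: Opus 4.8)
The statement to prove is Lemma~\ref{lem:filgen}: in a symmetric Heyting algebra $\A$ satisfying $\De\De x \approx \De x$, the regular filter generated by a set $B \subseteq A$ is exactly $F(B) = \{ a \in A : \De(a_1 \land \ldots \land a_n) \leq a \text{ for some } a_1,\ldots,a_n \in B\}$. Since this is cited from Monteiro, I would reconstruct the standard two-way argument: show that the displayed set $F(B)$ is itself a regular filter containing $B$, and then show that it is contained in every regular filter containing $B$, so it is the least one.

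First I would record the elementary facts about $\De$ that the argument needs, all available from the algebraization of Moisil's logic and the PP-algebra structure: $\De$ is order-preserving and meet-preserving (i.e.\ $\De(a \land b) = \De a \land \De b$), $\De a \leq a$, $\De\top = \top$, and --- crucially, using the hypothesis $\De\De x \approx \De x$ --- $\De$ is idempotent. I would also note that a filter $F$ is regular iff $a \in F \implies \De a \in F$. The key closure-system facts about regular filters (Lemma~\ref{lem:filcon} and the preceding discussion) I would simply invoke.

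Next, the containment-and-filter part. That $B \subseteq F(B)$ is immediate by taking $n=1$ and using $\De a \leq a$. Upward closure of $F(B)$ is trivial by transitivity of $\leq$. Closure under $\land$: if $\De(a_1 \land \ldots \land a_n) \leq a$ and $\De(b_1 \land \ldots \land b_m) \leq b$, then since $\De$ preserves meets, $\De(a_1 \land \ldots \land a_n \land b_1 \land \ldots \land b_m) = \De(a_1 \land \ldots \land a_n) \land \De(b_1 \land \ldots \land b_m) \leq a \land b$, so $a \land b \in F(B)$; and $F(B)$ is nonempty since $\top \in F(B)$. Regularity: if $a \in F(B)$, witnessed by $\De(a_1 \land \ldots \land a_n) \leq a$, then applying $\De$ and using monotonicity together with idempotency, $\De(a_1 \land \ldots \land a_n) = \De\De(a_1 \land \ldots \land a_n) \leq \De a$, so $\De a \in F(B)$. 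This is exactly where the hypothesis $\De\De x \approx \De x$ is used and is the one genuinely load-bearing step.

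Finally, minimality: let $G$ be any regular filter with $B \subseteq G$, and take $a \in F(B)$ with $\De(a_1 \land \ldots \land a_n) \leq a$ for some $a_i \in B \subseteq G$. Since $G$ is a filter, $a_1 \land \ldots \land a_n \in G$; since $G$ is regular, $\De(a_1 \land \ldots \land a_n) \in G$; since $G$ is upward closed, $a \in G$. Hence $F(B) \subseteq G$, and combined with the previous paragraph $F(B)$ is the least regular filter containing $B$, i.e.\ $F(B) = F(B)$ as generated. The whole argument is routine lattice-theoretic bookkeeping; the only place demanding care --- and the ``main obstacle'' in the sense of being the non-formal step --- is verifying that $\De(a_1\wedge\cdots\wedge a_n)\le \De a$ follows from $\De(a_1\wedge\cdots\wedge a_n)\le a$, which rests precisely on idempotency of $\De$, and hence on the standing equation $\De\De x \approx \De x$; without it $F(B)$ need not be regular.
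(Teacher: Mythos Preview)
The paper does not supply its own proof here; it simply cites Monteiro. Your reconstruction is the standard argument and is correct.

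One point to tighten: you list $\De a \leq a$ among the facts ``available from the algebraization of Moisil's logic and the PP-algebra structure,'' as though it were independent of the idempotency hypothesis. It is not a general SHA identity --- in the SHA obtained by endowing $\FDEAlg$ with its Heyting implication one has $\De\both = \neg\DMNeg\both = \neg\both = \neither \not\leq \both$ --- and the lemma does not assume a PP-algebra reduct. The inequality \emph{does} follow from $\De\De x \approx \De x$, however (briefly: from the instance $\De\neg y = \neg y$ one gets $\DMNeg\neg y \land \neg y = \bot$, hence $\neg y \lor \DMNeg\neg y = \top$ by applying $\DMNeg$, hence $y \leq \DMNeg\neg y$, i.e.\ $\neg y \leq \DMNeg y$; now put $y := \DMNeg x$). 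So your step $B \subseteq F(B)$ is fine, but this is a second place --- besides the regularity step you single out --- where the hypothesis is genuinely load-bearing.
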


\begin{lemma}[{\citet*[Cor.~4.8]{Sankappanavar1987}}]
\label{lem:simpsi}
Let $\A$ be a SHA that satisfies
{\normalfont(\ref{eq:idempeqname})}.
The following are equivalent:
\begin{enumerate}[(i)]
\item $\A$ is directly indecomposable.
\item $\A$ is subdirectly irreducible.
\item $\A$ is simple.
\end{enumerate}
\end{lemma}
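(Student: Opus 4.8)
The statement to prove is Lemma~\ref{lem:simpsi}, cited from Sankappanavar, asserting that for a symmetric Heyting algebra $\A$ satisfying \eqref{eq:idempeqname}, the three notions \emph{directly indecomposable}, \emph{subdirectly irreducible}, and \emph{simple} coincide. I will proceed via the characterization of congruences by regular filters (Lemma~\ref{lem:filcon}) and the explicit description of regular filter generation (Lemma~\ref{lem:filgen}) available for algebras satisfying \eqref{eq:idempeqname}.

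\medskip
\noindent\textbf{Plan.} The implications (iii)$\Rightarrow$(ii)$\Rightarrow$(i) are standard universal algebra and hold for any algebra with no hypotheses, so the content is entirely in (i)$\Rightarrow$(iii), i.e. that direct indecomposability forces simplicity. I would argue contrapositively: suppose $\A$ is not simple, so there is a proper nontrivial congruence; via Lemma~\ref{lem:filcon} this corresponds to a proper regular filter $F$ with $F \neq \{\top\}$, say with some $a \in F$, $a \neq \top$. The key idempotency-driven fact is that, by Lemma~\ref{lem:filgen}, the regular filter generated by a single element $a$ is $\{ b \in A : \De a \leq b \} = \Upset(\De a)$, a \emph{principal} filter whose generator is the \emph{idempotent} element $\De a$ (idempotent in the sense that $\De\De a = \De a$, precisely where \eqref{eq:idempeqname} is used, together with $\De\top=\top$ and $\De$ being order-preserving). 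So without loss of generality we may replace $F$ by the regular filter generated by $\De a$, which is $\Upset e$ for $e := \De a$, an idempotent element with $\efvalue = \bot \neq e \neq \top$ (if $e=\bot$ then $F=A$, contradicting properness; note $\De a \leq a \leq \top$ and $\De a = \bot$ would give $F(a)=A$).

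\medskip
\noindent\textbf{Key step: producing a direct decomposition.} From an idempotent element $e$ with $\bot < e < \top$ generating a regular filter, I would show $\A \cong \Upset e \times \Upset(\text{``}\neg\text{-complement of }e\text{''})$, exhibiting a nontrivial direct product decomposition, hence $\A$ is directly decomposable. Concretely: since $\De\De e = \De e$ and $e$ is already of the form $\De a$, the element $e$ behaves like a central/Boolean-type element relative to the $\De$ operation; one expects $\De$-idempotent elements to be exactly the ``central'' elements in the congruence lattice, i.e. those $e$ for which the pair of regular filters $\Upset e$ and (the regular filter complementary to it) give a pair of factor congruences $\theta_e, \theta_e^*$ with $\theta_e \cap \theta_e^* = \Delta$ and $\theta_e \vee \theta_e^* = \nabla$. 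The congruence $\theta_e$ is the one associated to $\Upset e$ via Lemma~\ref{lem:filcon}; its complement corresponds to the regular filter $F(\{x : \De x \land e = \bot\})$ or, more cleanly, to the regular filter generated by an appropriate complement of $e$ in the (Stone) pseudocomplemented reduct — here one exploits that $\PPImpResTempVar$-type algebras, and more generally the SHAs in play, have enough negation structure ($\neg$, $\De$, $\cons$) to form such complements. Verifying $\theta_e \cap \theta_e^* = \Delta$ reduces, through Lemma~\ref{lem:filgen}, to a filter-theoretic computation: $a \mathrel{\theta_e} b$ and $a \mathrel{\theta_e^*} b$ together force $\De(a \leftrightarrow b) \geq e$ and $\De(a\leftrightarrow b) \geq$ complement of $e$, hence $\De(a\leftrightarrow b) = \top$, hence $a = b$; and $\theta_e \vee \theta_e^* = \nabla$ follows because the two regular filters jointly generate $A$.

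\medskip
\noindent\textbf{Expected main obstacle.} The routine directions are immediate. The real work — and the place I expect to spend effort — is the congruence-theoretic bookkeeping in the (i)$\Rightarrow$(iii) step: precisely identifying, in a SHA satisfying \eqref{eq:idempeqname}, the ``central'' (factor-congruence-inducing) elements as exactly the $\De$-idempotent elements, and checking that a proper nontrivial regular filter can always be replaced by one of the form $\Upset e$ with $e$ such an element and $\bot < e < \top$. This is where Monteiro's Lemma~\ref{lem:filgen} is indispensable (it makes regular filter generation principal and pins down the generator as $\De$ of a meet), and where the interplay between the Heyting structure, the De Morgan negation, and the derived operations $\neg, \De, \cons$ must be handled carefully to build the complementary factor. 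Once the factor congruences are in hand, concluding ``not simple $\Rightarrow$ not directly indecomposable'' is a one-line application of the factor-congruence criterion, closing the cycle of implications. (Alternatively, I note one could cite Sankappanavar's proof directly, since the lemma is attributed; but the above sketch is the self-contained route using only the results already recalled in this section.)
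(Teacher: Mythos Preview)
The paper does not give its own proof of this lemma: it is stated with attribution to Sankappanavar and used as a black box. Your closing parenthetical --- that one could simply cite Sankappanavar --- is precisely what the paper does.

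Your sketch is a reasonable outline of a direct argument, and you correctly isolate the only nontrivial direction (i)$\Rightarrow$(iii) and the role of \eqref{eq:idempeqname} in making every principal regular filter equal to $\Upset(\De a)$ for a $\De$-fixed generator. The point you flag as the ``main obstacle'' is genuinely the crux, and it is not quite closed in your sketch: to get factor congruences you need that a $\De$-fixed element $e$ with $\bot < e < \top$ is \emph{complemented} (so that $e \lor e' = \top$ and $e \land e' = \bot$ for some $e'$), and this does not follow from $e = \De e = \neg\DMNeg e$ alone --- in a Heyting algebra $x \lor \neg x = \top$ can fail, and you have not argued why it holds here for $\De$-fixed $x$. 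That step requires a bit more of the De~Morgan/Heyting interaction than you have written down.

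A shorter route, which the paper in fact signals immediately after the lemma, is to observe that the subvariety of SHAs satisfying \eqref{eq:idempeqname} is a discriminator variety (the discriminator term is displayed in the paper). It is a standard theorem of universal algebra that in any discriminator variety the three conditions directly indecomposable, subdirectly irreducible, and simple coincide for every member; this gives the lemma in one line once the discriminator term is exhibited, without the factor-congruence bookkeeping.
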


%\todo{
%Let 
%$\varphi \ig \psi:= \De (\varphi\to \psi)\land \De (\psi\to \varphi) $.
%Note that
%on SIX this term does the following: 
%$x \ig y = 1 $ if $x = y$ and
%$x \ig y = 0 $ if $x \neq y$. In consequence,
%$$t(p,q,r):=((p\ig q)\land r)\lor (\sim(p\ig q)\land p)$$
%is a discriminator term on SIX. 

%Let 
%$\varphi\ig \psi:=(\varphi\to \psi)\land (\psi\to \varphi)\land \cons(\varphi\to \psi)\land \cons(\psi\to \varphi)$

%the discriminator term is given by
%$$t(p,q,r):=( \De (\p \to q)\land \De (q\to p) \land r)\lor (\sim ( \De (p \to q)\land \De (q\to p) )\land p)
%$$
%}

The subvariety of SHAs defined by the equation
{\normalfont(\ref{eq:idempeqname})}
is dubbed $\SDH$ by~\cite{Sankappanavar1987},
who observes that it is a discriminator\footnote{{Note that this notion of `discriminator variety' is %, in principle,
unrelated to the above-mentioned `discriminator for a
monadic matrix' (see Theorem~\ref{fact:classical-imp-axiomat}).}}
variety~(see \citet[Def.~IV.9.3]{burris2011}).
This entails that $\PPImpResTempVar$ is also a discriminator variety. The discriminator term 
for $\PPImpResTempVar$
is the following:
$$
t(x,y,z):=( \De (x \Leftrightarrow  y) %\land \De (y\to x) 
\land z )\lor (\DMNeg %(
\De (x \Leftrightarrow y) %\land \De (y \to x) ) 
\land x 
)
$$
where  $x \Leftrightarrow y \SymbDef (x \Imp y) \land (y \Imp x)$.
To see this, consider Theorem~\ref{thm:vargen} and note that on $\PPSixResImp$
we have,
for all $c,d  \in \SixSet$,
\[
\De^{\PPSixResImp} (c \Leftrightarrow d) %\land \De^{\PPSixResImp} (d \to c)   
=
\begin{cases}
    \etvalue & \text{ if } c = d,\\
    \efvalue & \text{ otherwise.}
\end{cases}
\]
%$\De (c \to d)\land \De (d \to c)  = 1 $ if $c = d$ and
%\De (c \to d)\land \De (d \to c) = 0 $ otherwise).

%
Recall that every variety of algebras is generated by its subdirectly irreducible members, which in our case (since $\PPImpResTempVar \subseteq \SDH$)
are simple, by Lemma~\ref{lem:simpsi}. By Lemma~\ref{lem:filcon}, this means that every such algebra $\A$
with a top element $\top$ has (as we have seen of $\PPSixResImp$)
a single non-trivial regular filter, namely the singleton $\{ \top \}$.
The following is now the main lemma we need:

\begin{lemma}
\label{lem:varax}
 Every simple algebra $\A \in \PPImpResTempVar$  is (isomorphic to)
 a subalgebra of $\PPSixResImp$.
\end{lemma}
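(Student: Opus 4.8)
The plan is to exploit the fact that $\A$ is simple together with the PP-algebra reduct condition, and to reduce the problem to the classification of subdirectly irreducible IS-algebras from~\cite{cignoli1983} (Figure~\ref{fig:pp_algebra}). First I would observe that since $\A \in \PPImpResTempVar$, its $\DMoSig$-reduct $\widehat{\A}$ is a PP-algebra, hence (by term-equivalence with IS-algebras, \cite[Thm.~3.6]{Gomes2022}) the corresponding IS-algebra is a member of $\PPVar = \Variety(\{\PPSix\})$. I would then argue that $\widehat{\A}$ must itself be subdirectly irreducible: any non-trivial congruence of $\widehat{\A}$ would, via the regular-filter correspondence of Lemma~\ref{lem:filcon} restricted appropriately (or directly, since the Heyting implication is term-definable from the lattice order and hence any lattice-and-negation congruence that is a PP-congruence extends to a SHA-congruence), contradict simplicity of $\A$. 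More carefully: a congruence of the full algebra $\A$ restricts to a congruence of $\widehat{\A}$; conversely I need that a PP-algebra congruence $\theta$ of $\widehat{\A}$ is compatible with $\Imp$. This holds because $\Imp$ is determined by the lattice order, and a PP-congruence is in particular a lattice congruence on a finite distributive lattice — but the subtlety is that $\theta$-classes being intervals is needed for $\Imp$-compatibility, which is automatic on distributive lattices only for certain congruences; I'd rather route through Lemma~\ref{lem:filcon} and the fact that $\{\top^\A\}$ is the only proper regular filter.

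The cleanest route, which I would take, is: since $\A \in \PPImpResTempVar \subseteq \SDH$, Lemma~\ref{lem:simpsi} tells us simple $=$ subdirectly irreducible for $\A$. By Lemma~\ref{lem:filcon}, $\A$ having only the trivial congruences means $\A$ has exactly one proper regular filter, $\{\top\}$. Now I look at the PP-reduct (equivalently IS-reduct) $\widehat{\A}$: it lies in $\Variety(\{\PPSix\})$, so it is a subdirect product of copies of the subdirectly irreducible IS-algebras $\ISAlg_2,\dots,\ISAlg_6$ of Figure~\ref{fig:pp_algebra}. I claim $\widehat{\A}$ is in fact subdirectly irreducible, i.e.\ isomorphic to one of $\PPAlg_2,\dots,\PPAlg_6$. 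Indeed, a non-trivial congruence $\theta$ of $\widehat{\A}$ corresponds (in the variety of PP-algebras, which is term-equivalent to IS-algebras and whose congruences are governed by Boolean-style/Stone congruences — see the monograph structure cited) to a congruence that is also compatible with the Heyting implication: on a finite distributive lattice every lattice congruence has interval classes, and one checks directly that such a congruence preserves the residuum $\Imp$. Hence $\theta$ would be a non-trivial congruence of $\A$, contradicting simplicity. Therefore $\widehat{\A} \cong \PPAlg_i$ for some $2 \le i \le 6$.

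Finally I would upgrade this isomorphism of PP-reducts to an isomorphism of full $\DMSigImp$-algebras. Each $\PPAlg_i$ has an underlying finite distributive lattice, so its relative pseudo-complement (Heyting implication) is uniquely determined by the order; since $\A$'s implication is a residuum of $\land^\A$ and $\widehat{\A} \cong \PPAlg_i$ as lattices, the implication of $\A$ must match the Heyting implication of $\PPAlg_i$, i.e.\ of $\PPSixResImp$ restricted to that subalgebra. Thus $\A$ is isomorphic to the expansion of $\PPAlg_i$ by its canonical Heyting implication, which is precisely a subalgebra of $\PPSixResImp$ (the subalgebras $\PPAlg_i$ of $\PPSix$ are closed under the relative pseudo-complement because they are sublattices containing $\bot,\top$ — one checks $\efvalue,\etvalue$ and the relevant chain elements give $a \ResImp b$ back inside, which is immediate from the truth-table of $\ResImp$). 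This completes the proof.

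The main obstacle I anticipate is the step asserting that a PP-algebra congruence of $\widehat{\A}$ is automatically compatible with $\Imp$ (equivalently, that subdirect irreducibility transfers from $\A$ to $\widehat{\A}$). One must be careful: it is \emph{not} true that every lattice congruence on a finite distributive lattice is compatible with the residuum in general — the correct statement needs the congruence to be a congruence of the Heyting reduct, and what we are given is a congruence of the \emph{PP}-reduct. The resolution is that on these specific small algebras the PP-congruences happen to be exactly the Heyting-compatible ones (this can be verified by inspecting the congruence lattice of $\PPSix$, which has a single non-trivial congruence identifying $\{\tvalue,\fvalue,\both,\neither\}$, and checking it fails to respect $\ResImp$ — exactly the computation already done in the Remark after the definition of $\PPSixResImp$), but stating this cleanly for an arbitrary $\A \in \PPImpResTempVar$ requires either invoking that $\widehat{\A}$ is a subdirect product of the $\ISAlg_i$ and arguing factor-wise, or using Lemma~\ref{lem:filgen} to compute regular-filter generation explicitly and show the only proper one is $\{\top\}$ forces $\widehat{\A}$ simple as an IS-algebra too. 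I would go with the subdirect-product/factor-wise argument as the most robust.
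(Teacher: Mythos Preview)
Your argument has a fatal gap: you never invoke condition~(ii) of Definition~\ref{d:PPImpResVar}, and without it the conclusion is simply false. Consider the six-element chain $\bot < a_1 < a_2 < a_3 < a_4 < \top$ with $\DMNeg a_i = a_{5-i}$ and $\cons a = \top$ iff $a \in \{\bot,\top\}$ (else $\cons a = \bot$). One checks directly that this is a simple SHA satisfying condition~(i): the IS-axioms hold, and $\De a = \bot$ for every $a \neq \top$, so $\{\top\}$ is the unique proper regular filter. Yet it fails~(ii) (take $x_i = a_i$) and is not isomorphic to any subalgebra of $\PPSixResImp$. Its PP-reduct is \emph{not} subdirectly irreducible: the congruence with blocks $\{a_1,a_2\},\{a_3,a_4\}$ and the congruence with block $\{a_2,a_3\}$ intersect in the identity, exhibiting it as a subdirect product of $\ISAlg_4$ and $\ISAlg_5$. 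So your key step --- that simplicity of $\A$ forces $\widehat{\A}$ to be SI --- fails outright. The mechanism you propose for it (that PP-congruences of $\widehat{\A}$ lift to congruences of $\A$) is in fact backwards: the very Remark you cite shows that $\PPSix$ carries a PP-congruence \emph{not} compatible with $\ResImp$, so the PP-reduct can have strictly more congruences than $\A$, not fewer. A secondary slip: not every $\PPAlg_i$ underlies a subalgebra of $\PPSixResImp$; the five-element chain $\PPAlg_5$ does not, since $\neither \ResImp \fvalue = \both$ (the paper records this right after the lemma).

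The paper's proof proceeds structurally and uses condition~(ii) in an essential way. From simplicity it derives $\De a = \bot$ for $a \neq \top$, whence $\{\top\}$ is a prime filter; dually $\{\bot\}$ is a prime ideal, so $\A$ has a unique atom and coatom and the ordinal-sum shape of Example~\ref{ex:chsum}. Then~(ii) is applied head-on: a four-element chain $a_1 < a_2 < a_3 < a_4$ in $A \setminus \{\bot,\top\}$ would make the left-hand side of~(ii) equal to $\top$ and the right-hand side equal to $\bot$. This bounds all chains in the middle to length at most three, forcing $\A$ to be finite and, by inspection, isomorphic to one of $\PPTwoResImp$, $\PPThreeResImp$, $\PPFourResImp$, $\PPSixResImp$.
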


\begin{proof}
In the proof we shall use the equations
$\De x \leq x$ and $\De (x \lor y) = \De x \lor \De y$, which may be easily verified in $\PPSixResImp$.
If $\A$ is simple, then $\A$ has only one non-trivial congruence, corresponding to 
the regular filter $\{ \top \}$ --- which in this case must be prime, as we now argue.
%Observe that, in a simple algebra $\A \in \SH$, the set $\{ \top \}$ is the only
%proper deductive filter.
In fact, since $\A$ satisfies $\De \De x \approx \De x$, we have $\De a = \bot$ whenever $a \neq \top$; 
otherwise, 
indeed,
the regular filter 
$F(a) = \{ b \in A : \De a \leq b \}
$ would be proper (Lemma~\ref{lem:filgen}). Now, assume $a_1 \lor a_2 = \top$ for some $a_1, a_2 \in A$. 
%Since $\A$ has a PP-algebra reduct, 
We have $\De (a_1 \lor a_2) = \De a_1 \lor \De a_2 = \top = \De \top$. Thus, if $a_1 \neq \top$, then $\De a_1 = \bot$ and $a_2 \geq \De a_2 =   \bot \lor  \De a_2 = \De a_1 \lor \De a_2 = \top$,
so $a_2 = \top$. Thus $\{\top \}$ is a prime filter (entailing, since $\A$ has a De Morgan algebra reduct, that $\{ \bot\}$ is a prime ideal); hence the last elements of $\A$ form a chain $C^+$, and the first elements of
$\A$ form a chain $C^-$. This means that $\A$ has the shape described in Example~\ref{ex:chsum}, except that it need not be finite. 

%In fact, we claim that $C^+ = \{ \tvalue, \top \}$ where $\tvalue \neq 1$ is a coatom. For, assuming otherwise (i.e.~that there
%is $a \in A$ such that $\tvalue < a < \top$) and letting $\fvalue := \DMNeg \tvalue $, we would have $\bot < \DMNeg a < \fvalue$ and 
%%$\Al[C]^+_n$
%$
%a \lor (a \Imp    (\tvalue \lor (\tvalue \Imp   (\fvalue \lor \neg \fvalue  ) ))  )
%= a \lor (a \Imp    (\tvalue \lor (\tvalue \Imp   (\fvalue \lor \bot  ) ))  )
%= a \lor (a \Imp    (\tvalue \lor (\tvalue \Imp  \fvalue   ))  )
%= a \lor (a \Imp    (\tvalue \lor   \fvalue  )  )
%= a \lor (a \Imp    \tvalue     )
%= a \lor \tvalue = a \neq \top
%$,
%against~\eqref{Id:pi3}. 

Let $D : = A - \{ \bot, %\fvalue, \tvalue, 
\top \}$. We claim that any chain of elements in $D$ must have length at most $3$.
To see this, notice that, for all $a \in A$, we have
$a \in D$ if, and only if, 
 $\cons a = \bot$.  In fact, for $a \in D$, we have $\neg a = \bot = \De a$, so
 $\cons a = \neg a \lor \De a = \bot$. Notice also that, for $a_1, a_2 \in D$,
 we have $\cons (a_1 \Imp a_2) = \bot$ if (and only if) $a_1 \not \leq a_2$.

Now assume, by way of contradiction, that there are elements $a_1, a_2, a_3, a_4 \in D$ such that $a_1 < a_2 < a_3 < a_4$, forming a four-element chain.
Then the inequality in the second item of Definition~\ref{d:PPImpResVar} would fail, for we would have:
%$$
%{\cons(x_1 \Imp x_2) \land \cons(x_2\Imp x_3) } 
%\leq 
%{\cons x_1 \lor \cons x_4 \lor \cons(x_4\Imp x_3)}  \lor  \cons(x_3\Imp x_2) \lor %\cons(x_2\Imp x_1),
%$$
%let us compute: 
$$
{\cons(a_1 \Imp a_2) \land \cons(a_2\Imp a_3) } = \cons \top \land \cons \top= \top \not \leq 
\bot = \cons a_1 \lor \cons a_4 \lor \cons(a_4 \Imp a_3)  \lor  \cons( a_3\Imp a_2) \lor \cons( a_2\Imp a_1).
$$
Thus, all chains in $D$ have at most three elements. Hence, as $\A$ is a distributive lattice, it
is easy to verify that
$\A$ must be finite, with a coatom (call it $\tvalue$) and an atom $\fvalue$.
This easily entails that $\A$ must be isomorphic to one of the subalgebras of $\PPSixResImp$.
\end{proof}

Theorem~\ref{thm:vargen} below, the first of the announced goals, is then essentially an immediate corollary
of Lemma~\ref{lem:varax}:
%: $\PPImpResTempVar$ is both the variety and the quasivariety generated by $\PPSixResImp$ (for the latter statement, see e.g.~\citet*[Thm.~1.3.6.ii]{ClDa98}). 

\begin{therm}
    \label{thm:vargen}
$\PPImpResTempVar$ is the variety (and the quasi-variety) generated by $\PPSixResImp$.
\end{therm}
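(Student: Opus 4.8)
\textbf{Proof proposal for Theorem~\ref{thm:vargen}.}

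The plan is to deduce both claims (that $\PPImpResTempVar$ equals the variety generated by $\PPSixResImp$ and that it equals the quasi-variety generated by $\PPSixResImp$) from the structural description of simple algebras established in Lemma~\ref{lem:varax}. First I would observe the easy inclusion: $\PPSixResImp$ itself lies in $\PPImpResTempVar$, since its $\{\land,\lor,\cons,\DMNeg,\bot,\top\}$-reduct is a PP-algebra (Example~\ref{ex:ppidefs}), its $\DMSigImp$-reduct is a symmetric Heyting algebra (it is a finite distributive lattice equipped with the order-induced Heyting implication and a De Morgan negation), and the inequality in Definition~\ref{d:PPImpResVar}(ii) can be checked directly on the six elements using the truth-table of $\ResImp$ and the fact that $\cons^{\PPSixResImp}$ sends everything outside $\{\efvalue,\etvalue\}$ to $\efvalue$ (indeed no four-element chain exists in $\SixSet \setminus \{\efvalue,\etvalue\}$). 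Since $\PPImpResTempVar$ is manifestly an equational class (all defining conditions, including ``has a PP-algebra reduct'', are equational), it is closed under $\HOp$, $\SOp$, $\POp$, and therefore $\Variety(\PPSixResImp) \subseteq \PPImpResTempVar$; and a fortiori the quasi-variety generated by $\PPSixResImp$ is contained in $\PPImpResTempVar$ as well.

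For the reverse inclusion, I would invoke the fact that every variety is generated (as a variety) by its subdirectly irreducible members. Let $\A \in \PPImpResTempVar$ be subdirectly irreducible. Since $\PPImpResTempVar \subseteq \SDH$ (because every algebra in $\PPImpResTempVar$ satisfies $\De\De x \approx \De x$ — this holds in $\PPSixResImp$, hence in $\PPImpResTempVar$ as it is equational, or can be read off directly from the PP-algebra reduct), Lemma~\ref{lem:simpsi} tells us that $\A$ is simple. By Lemma~\ref{lem:varax}, $\A$ is (isomorphic to) a subalgebra of $\PPSixResImp$. Hence every subdirectly irreducible member of $\PPImpResTempVar$ lies in $\SOp(\PPSixResImp)$, and therefore $\PPImpResTempVar \subseteq \HOp\SOp\POp(\PPSixResImp) = \Variety(\PPSixResImp)$. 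Combined with the previous paragraph, this gives $\PPImpResTempVar = \Variety(\PPSixResImp)$.

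It remains to upgrade this to the quasi-variety statement. The point is that $\PPSixResImp$ generates a \emph{discriminator variety}: as noted in the excerpt, $t(x,y,z) := (\De(x\Leftrightarrow y)\land z)\lor(\DMNeg\De(x\Leftrightarrow y)\land x)$ realizes the ternary discriminator on $\PPSixResImp$ (using $\De^{\PPSixResImp}(c\Leftrightarrow d)=\etvalue$ iff $c=d$, and $\efvalue$ otherwise), and hence on every subalgebra of $\PPSixResImp$, i.e.\ on every subdirectly irreducible member of $\PPImpResTempVar$. For a discriminator variety generated by a single finite algebra, the variety and the quasi-variety generated by that algebra coincide (this is standard; see e.g.~\citet[Ch.~IV]{burris2011}) — intuitively because in a discriminator variety every algebra is a subdirect product of simple algebras, so no new quasi-equations are needed beyond those forcing the simple factors into $\SOp(\PPSixResImp)$. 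Thus the quasi-variety generated by $\PPSixResImp$ equals $\Variety(\PPSixResImp) = \PPImpResTempVar$, completing the proof.

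The main obstacle is really concentrated in Lemma~\ref{lem:varax}, which has already been established; modulo that, the remaining work here is the bookkeeping of the two inclusions plus the invocation of the discriminator-variety principle for the quasi-variety claim. The one place that deserves a careful word rather than a wave of the hand is the verification that the discriminator term $t$ genuinely behaves as the ternary discriminator on all of $\SOp(\PPSixResImp)$ and not merely on $\PPSixResImp$; but this is immediate since $t$ is a term and subalgebras inherit term-operations, and the characterization of $\De(x\Leftrightarrow y)$ is preserved in subalgebras.
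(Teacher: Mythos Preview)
Your argument follows the paper's route for the variety claim: both hinge on Lemma~\ref{lem:varax}, and you simply spell out the bookkeeping the paper leaves implicit. For the quasi-variety claim the paper just cites \citet[Thm.~1.3.6.ii]{ClDa98}, whereas you invoke a general discriminator-variety principle. Be careful here: the principle as you phrase it --- ``for a discriminator variety generated by a single finite algebra, the variety and the quasi-variety generated by that algebra coincide'' --- is false in general. For a counterexample, take a signature with a constant $c$, a unary $f$, and a basic ternary discriminator; let $\A_1$ be two-element with $f$ a transposition and $\A_2$ three-element with $f$ a $3$-cycle, each with $c$ naming some element. Then $\A_1 \times \A_2$ generates a discriminator variety containing $\A_1$, yet $\A_1 \notin \SOp\POp(\A_1 \times \A_2)$, since any homomorphism out of $\A_1$ must send $c$ to an element of $f$-period dividing $2$, and the $\A_2$-coordinates admit none. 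What makes your argument go through is that $\PPSixResImp$ is itself \emph{simple}: J\'onsson's lemma (discriminator varieties are congruence-distributive) puts every subdirectly irreducible of $\Variety(\PPSixResImp)$ into $\HOp\SOp(\PPSixResImp)$, and simplicity collapses this to $\SOp(\PPSixResImp)$ --- or, as your own ``intuitive'' parenthetical already says, Lemma~\ref{lem:varax} hands you this directly, whence every algebra in $\PPImpResTempVar$, being a subdirect product of such simples, lies in $\SOp\POp(\PPSixResImp)$. So the substance of your proof is fine; only the stated generality of the discriminator claim needs tightening (add the hypothesis that the generator is simple, or bypass it and appeal to Lemma~\ref{lem:varax} outright).
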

\begin{proof}
    From Lemma~\ref{lem:varax}
    and, for the quasi-variety result,
    see e.g.~\citet*[Thm.~1.3.6.ii]{ClDa98}.
\end{proof}

%book by Clark and Davey
We are also ready to prove the second of the goals:

\begin{therm}
\label{thm:algppass}
$\PPImpResTempVar$ is the equivalent algebraic semantics of the algebraizable logic $\PPResAsslogic$.
\end{therm}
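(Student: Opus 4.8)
The plan is to deduce this theorem from the pieces already assembled rather than to argue from scratch. By Proposition~\ref{prop:moisalg}, Moisil's logic is algebraizable with equivalence formulas $\{x\Imp y, y\Imp x\}$ and defining equation $x\approx\top$, and its equivalent algebraic semantics is the variety of symmetric Heyting algebras. The first step is to verify that $\PPResAsslogic$ is the axiomatic extension of Moisil's logic obtained by adding the schemata $\PPTopRule{1}$, $\PPTopRule{2}$, $\PPTopRule{3}$ (in the signature expanded by $\cons$, which is term-definable via $\cons p := \neg p \lor \neg\DMNeg p$, $\neg p := p\Imp\DMNeg(p\Imp p)$). Concretely, I would check that each of these axioms is sound in every matrix $\la\A,\{\top^\A\}\ra$ with $\A\in\PPImpResTempVar$ — this is immediate from Definition~\ref{d:PPImpResVar} together with the observation (attributed to~\cite{marcelino2021}) that $\DMNeg\neg\neg x\approx\neg x$ forces a PP-algebra reduct — and, conversely, that the assertional consequence of $\Variety(\PPSixResImp)$ is captured by these axioms on top of Moisil's calculus, using Theorem~\ref{thm:vargen} to identify $\PPImpResTempVar$ with $\Variety(\PPSixResImp)$.

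Once $\PPResAsslogic$ is exhibited as an axiomatic extension of an algebraizable logic, it is itself algebraizable with the \emph{same} translations: this is the standard fact that axiomatic extensions of algebraizable logics are algebraizable and that their equivalent algebraic semantics is the subvariety (= relative subquasivariety, but here it is a variety) of the original equivalent algebraic semantics defined by the equational translations of the new axioms (see, e.g., \cite{BP89} or \cite{font:2016}). So the equivalent algebraic semantics of $\PPResAsslogic$ is the subvariety of symmetric Heyting algebras axiomatized, relative to SHAs, by the equations obtained by translating $\PPTopRule{1}$–$\PPTopRule{3}$ via $\varphi\mapsto(\varphi\approx\top)$. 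The second step is then purely equational bookkeeping: the translation of $\PPTopRule{1}$ and $\PPTopRule{2}$ is (equivalent over SHAs to) $\DMNeg\neg\neg x\approx\neg x$, which by the cited remark is exactly what upgrades a symmetric Heyting algebra to one with a PP-algebra reduct, i.e.\ Definition~\ref{d:PPImpResVar}(i); and the translation of $\PPTopRule{3}$ is precisely the inequation in Definition~\ref{d:PPImpResVar}(ii) (an inequation $s\leq t$ being the equation $s\approx s\land t$, which translates to $s\Imp t\approx\top$, i.e.\ the axiom as written). Hence the equivalent algebraic semantics of $\PPResAsslogic$ is exactly $\PPImpResTempVar$.

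Finally, I would cross-check consistency with the matrix description: the reduced matrix models of an algebraizable logic with defining equation $x\approx\top$ are exactly the matrices $\la\A,\{\top^\A\}\ra$ for $\A$ in the equivalent algebraic semantics, so this gives the $\la\A,\{\top\}\ra$ description of reduced models of $\PPResAsslogic$ promised in the introduction, and it matches the fact that $\PPResAsslogic$ is by definition the $\top$-assertional logic of $\Variety(\PPSixResImp)=\PPImpResTempVar$ (Theorem~\ref{thm:vargen}). I expect the only genuine obstacle to be the bookkeeping in the first step — namely pinning down precisely that the set of axioms $\PPTopRule{1}$–$\PPTopRule{3}$ added to Moisil's calculus yields exactly $\PPResAsslogic$ and not something weaker or stronger; this is where one must invoke Theorem~\ref{thm:vargen} (that $\PPImpResTempVar$ is the variety generated by $\PPSixResImp$, via Lemma~\ref{lem:varax}) to close the loop between the syntactic presentation and the semantic definition of $\PPResAsslogic$. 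The rest is an application of the transfer theorem for axiomatic extensions plus the elementary translation of inequations into $\Imp$-equalities.
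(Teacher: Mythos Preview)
Your proposal is correct and follows essentially the same route as the paper's proof: both start from the algebraizability of Moisil's logic (Proposition~\ref{prop:moisalg}), invoke the transfer of algebraizability to axiomatic extensions, and then verify that the equational translations of $\PPTopRule{1}$--$\PPTopRule{3}$ coincide with the conditions in Definition~\ref{d:PPImpResVar} (using the observation that $\DMNeg\neg\neg x \approx \neg x$ forces a PP-algebra reduct). You are more explicit than the paper in flagging that Theorem~\ref{thm:vargen} is what closes the loop between the syntactic presentation and the semantic definition of $\PPResAsslogic$; the paper leaves this implicit in the surrounding discussion.
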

\begin{proof}
    We know from the algebraizability of Moisil's logic (with respect to all SHAs) that 
    $\PPResAsslogic$ is algebraizable, with the same translations, with respect to a sub(quasi)variety of SHAs which is axiomatized by  the equations that translate the new
    axioms. Recalling that the equation
    $\neg x \approx \DMNeg \neg \neg x$ guarantees that a Heyting algebra has an involutive Stone algebra reduct~(see \citet[Remark~2.2]{cignoli1983}),
    is easy to verify that the translations of the new axioms  are indeed equivalent to the equations introduced in Definition~\ref{d:PPImpResVar}. 
\end{proof}

Another consequence of Lemma~\ref{lem:varax}
that has a logical impact is the following.
Recall that the universes  of subalgebras of $\PPSixResImp$
are the same as those of $\PPSix$, minus the five-element chain. Let us denote the corresponding 
algebras by $ \PPFourResImp, \PPThreeResImp$ and $\PPTwoResImp$.

\begin{corollary}
\label{cor:subvar}
 The (proper, non-trivial) subvarieties of $\PPImpResTempVar$  are precisely the following (the axiomatizations are 
 obtained by adding the mentioned equations to
 the axiomatization of $\PPImpResTempVar$):
 %relative to $\PPImpResTempVar$):
 \begin{enumerate}
 \item $\Variety (\PPFourResImp)$, axiomatized %(relative to $ \PPImpResTempVar$) 
 by 
 the equation 
 $ x \land \DMNeg x \leq y \lor \DMNeg y$.
 %or $(x \Imp y) \lor (y \Imp x) = \top $.

 \item $\Variety (\PPThreeResImp)$, axiomatized 
 %(relative to $ \PPImpResTempVar$) 
 by %the equation 
 $ x \lor (x  \Imp   (y \lor \neg y  ) )  \approx \top $.

 \item $\Variety (\PPTwoResImp)$, axiomatized %(relative to $ \PPImpResTempVar$) 
 by %the equation
 $x \lor \DMNeg x \approx \top$.
 
 \end{enumerate}
 
\noindent Note that $\Variety (\PPFourResImp)$ 
and  $\Variety (\PPThreeResImp)$ are incomparable, while
$\Variety (\PPTwoResImp)$, which is (up a choice of language) just the variety of Boolean algebras,
is included in both of them. 
\end{corollary}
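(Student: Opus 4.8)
The plan is to determine the subvariety lattice of $\PPImpResTempVar$ by exploiting Theorem~\ref{thm:vargen}, which tells us that $\PPImpResTempVar = \Variety(\PPSixResImp)$, together with Lemma~\ref{lem:varax}, which says that the simple members of $\PPImpResTempVar$ are exactly the subalgebras of $\PPSixResImp$. Since every variety is generated by its subdirectly irreducible members, and in $\PPImpResTempVar \subseteq \SDH$ the subdirectly irreducibles coincide with the simples (Lemma~\ref{lem:simpsi}), a subvariety is completely determined by which subalgebras of $\PPSixResImp$ it contains. First I would enumerate the subalgebras of $\PPSixResImp$: recalling the remark before the corollary, their universes are those of the subalgebras of $\PPSix$ shown in Figure~\ref{fig:pp_algebra} except the five-element chain $\ISAlg_5$ (which is not closed under the Heyting implication once we add it — indeed $\ResImp$ applied within that chain escapes it, exactly as in the Remark after the definition of $\PPSixResImp$ where $\both \ResImp \neither$ witnesses failure of a congruence). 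This leaves $\PPSixResImp$ itself together with $\PPFourResImp$, $\PPThreeResImp$, $\PPTwoResImp$, and the trivial algebra, hence exactly three proper non-trivial subvarieties: $\Variety(\PPFourResImp)$, $\Variety(\PPThreeResImp)$ and $\Variety(\PPTwoResImp)$, with $\PPFourResImp$ and $\PPThreeResImp$ being incomparable (neither embeds into the other: $\PPFourResImp$ has a ``both/neither'' incomparable pair while $\PPThreeResImp$ is a chain with a non-classical middle element) and $\PPTwoResImp \cong \BoolAlg$ (up to language) embedding into both.

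Next I would verify that each displayed equation axiomatizes the claimed subvariety \emph{relative to} $\PPImpResTempVar$. For each candidate equation $\varepsilon$, the strategy is the standard one: (a) check $\PPSixResImp \not\models \varepsilon$ but the relevant subalgebra(s) do satisfy it, so that $\Variety(\A) \subseteq \Variety(\PPImpResTempVar + \varepsilon) \subsetneq \PPImpResTempVar$; (b) check that the only simple algebras in $\PPImpResTempVar$ satisfying $\varepsilon$ are subalgebras of $\A$, so that $\Variety(\PPImpResTempVar + \varepsilon) \subseteq \Variety(\A)$ by the subdirect decomposition argument. Concretely: $x \land \DMNeg x \leq y \lor \DMNeg y$ fails at $\PPSixResImp$ (take $x = \both$, $y = \efvalue$, giving $\both \not\leq \efvalue$) and also fails at $\PPThreeResImp$ (its middle element $\neither$ is a fixpoint of $\DMNeg$ lying strictly below $\etvalue$ and above $\efvalue$, so with $x = \neither$, $y = \efvalue$ we get $\neither \land \neither = \neither \not\leq \efvalue$) but holds at $\PPFourResImp$ (there $x \land \DMNeg x \in \{\efvalue\}$ for every $x$, since the only elements equal to their own negation are $\efvalue,\etvalue$ and... actually one must check the De Morgan structure of $\PPFourResImp$, whose underlying lattice is a chain $\efvalue < \fvalue < \tvalue < \etvalue$ with $\DMNeg \fvalue = \tvalue$, so $x \land \DMNeg x = \efvalue$ always); similarly $x \lor \DMNeg x \approx \top$ is exactly the Boolean law, failing at any non-Boolean simple algebra and characterizing $\PPTwoResImp$; and $x \lor (x \Imp (y \lor \neg y)) \approx \top$ must be checked to fail at $\PPSixResImp$ and $\PPFourResImp$ but hold at $\PPThreeResImp$ (this is the intuitionistic-flavoured Kreisel–Putnam-like schema that is valid precisely on the three-element chain).

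The main obstacle I expect is step (b) for the $\PPThreeResImp$ case: showing that the equation $x \lor (x \Imp (y \lor \neg y)) \approx \top$, when added to $\PPImpResTempVar$, forces every simple model to be a subalgebra of $\PPThreeResImp$ — equivalently, rules out $\PPFourResImp$ and $\PPSixResImp$. Since by Lemma~\ref{lem:varax} every simple model is already one of these five algebras, this reduces to a finite check: evaluate the term at the finitely many tuples of $\PPFourResImp$ and of $\PPSixResImp$ and exhibit a falsifying assignment in each. For $\PPSixResImp$ one expects $x = \both$, $y = \efvalue$ (so $y \lor \neg y = \efvalue \lor \etvalue = \etvalue$... so that fails; one needs instead $y$ with $y \lor \neg y \neq \etvalue$, e.g. $y = \neither$ giving $\neither \lor \efvalue = \neither$, then $\both \Imp \neither = \neither$ and $\both \lor \neither = \tvalue \neq \etvalue$), and a parallel choice inside $\PPFourResImp$ — but in $\PPFourResImp$, which is a chain, $y \lor \neg y$ ranges over $\{\fvalue \lor \efvalue, \ldots\}$ and one must find $y$ with $y \lor \neg y < \etvalue$ and $x$ incomparable enough; since $\PPFourResImp$ is totally ordered this may actually \emph{satisfy} the equation, which would be the desired outcome, so the genuine work is confirming $\PPFourHe\models$ it while $\PPThreeHe$ does too and $\PPSixHe$ does not. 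The remaining parts — the incomparability claim, the Boolean identification, and the $\PPFourResImp$ equation — are routine finite verifications. I would close by assembling these into the statement, noting that completeness of the list follows from the subalgebra enumeration plus Lemma~\ref{lem:simpsi} and the HSP characterization of the generated variety.
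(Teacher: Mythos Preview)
Your strategy is essentially the paper's: reduce everything to the finite list of simple members of $\PPImpResTempVar$ (the subalgebras of $\PPSixResImp$, by Lemma~\ref{lem:varax}) and test each candidate equation against that list. The paper packages the key step via J\'onsson's Lemma---the subdirectly irreducibles of each $\Variety(\A)$ lie in $\HOp\SOp(\A)=\SOp(\A)$---whereas you reach the same point from Lemma~\ref{lem:varax} together with Lemma~\ref{lem:simpsi}; either route works, and the substantive content is exactly the finite case-analysis you outline.

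One step deserves more care. You pass from ``three proper non-trivial subalgebras of $\PPSixResImp$'' to ``exactly three proper non-trivial subvarieties,'' but in a congruence-distributive variety generated by finitely many finite simples, the subvarieties correspond to \emph{down-closed sets} of simples (under embedding), not to individual simples. Since you yourself note that $\PPFourResImp$ and $\PPThreeResImp$ are incomparable, the downset $\{\PPTwoResImp,\PPFourResImp,\PPThreeResImp\}$ produces a further candidate $\Variety(\PPFourResImp)\vee\Variety(\PPThreeResImp)$ that your ``subalgebra enumeration plus HSP'' argument does not dispose of. The paper's one-line proof is equally silent on this point, so you are not diverging from it---but your closing sentence overstates what the enumeration actually yields.
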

\begin{proof}
    All claims are established by easy computations. The main observation we need is that, by J\'onsson's Lemma (see \citet[Cor.~IV.6.10]{burris2011}),
    for $\A \in \{  \PPFourResImp, \PPThreeResImp, \PPTwoResImp\}$,
    the subdirectly irreducible (here meaning simple) algebras in each 
    $\Variety (\A)$  are in $ \HOp \SOp (\A)$, and $\HOp\SOp (\A) = \SOp (\A)$.
\end{proof}

By Theorem~\ref{thm:algppass}, the logic $\PPResAsslogic$ is complete with respect to 
the class of all matrices $\langle \A, \{ \top^{\AlgA} \} \rangle$
such that $\A \in \PPImpResVar$.
But, by Theorem~\ref{thm:vargen}, 
we know that the single matrix
$\langle \PPSixResImp, \{ \etvalue \} \rangle$ suffices. Thus:

\begin{proposition}
    \label{prop:1assert-single-mat}
    $\PPResAsslogic$ is determined
    by $\langle \PPSixResImp, \{ \etvalue \} \rangle$.
    %$\PPResAsslogic = \ \vdash_{\langle \PPSixResImp, \{ \etvalue \} \rangle}$.
\end{proposition}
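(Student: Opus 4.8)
The plan is to derive Proposition~\ref{prop:1assert-single-mat} directly from the two theorems just established, namely Theorem~\ref{thm:algppass} (which identifies $\PPImpResTempVar$ as the equivalent algebraic semantics of $\PPResAsslogic$) and Theorem~\ref{thm:vargen} (which shows $\PPImpResTempVar = \Variety(\PPSixResImp) = \mathbb{Q}(\PPSixResImp)$, the quasi-variety generated by $\PPSixResImp$). The key conceptual fact is that for an algebraizable logic, the logic is determined by the class of matrices $\{\langle \A, \{\top^\A\}\rangle : \A \in \mathsf{K}\}$ where $\mathsf{K}$ is its equivalent algebraic semantics, and — crucially — this class may be replaced by any subclass whose algebras generate $\mathsf{K}$ as a quasi-variety without changing the associated consequence relation.

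First I would recall from Theorem~\ref{thm:algppass} that $\PPResAsslogic = \OneAssert_{\PPImpResVar}$, i.e.\ it is the $\top$-assertional logic of $\PPImpResVar$, hence determined by $\{\langle \A, \{\top^\A\}\rangle : \A \in \PPImpResVar\}$. Next I would invoke Theorem~\ref{thm:vargen}, which states that $\PPImpResVar$ is the quasi-variety generated by $\PPSixResImp$; thus every $\A \in \PPImpResVar$ belongs to $\SOp\POp(\PPSixResImp)$ (operators under which quasi-varieties are closed). The final step is a standard preservation argument: the validity of an assertional consequence $\FmSetA \OneAssert \FmA$ in a matrix $\langle \A, \{\top^\A\}\rangle$ is preserved under subalgebras and direct products of the underlying algebra (designated-value-preserving homomorphisms, and the fact that $\top$ in a product is the tuple of $\top$'s), so $\langle \PPSixResImp, \{\etvalue\}\rangle$ validates exactly the same $\SetFmla$ consequences as the whole class $\{\langle \A, \{\top^\A\}\rangle : \A \in \SOp\POp(\PPSixResImp)\} \supseteq \{\langle \A, \{\top^\A\}\rangle : \A \in \PPImpResVar\}$. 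Combining these gives $\PPResAsslogic = \OneAssert_{\{\PPSixResImp\}} = \vdash_{\langle \PPSixResImp, \{\etvalue\}\rangle}$.

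Alternatively, and perhaps more cleanly, I would argue via the reduced matrix models: since $\PPResAsslogic$ is algebraizable with equivalent algebraic semantics $\PPImpResVar$, its reduced matrix models are exactly the $\langle \A, \{\top^\A\}\rangle$ with $\A \in \PPImpResVar$, and a logic is always determined by its reduced models; one then observes that to determine a finitary logic it suffices to keep the reduced models on the \emph{generators} of the quasi-variety, which here is the single finite algebra $\PPSixResImp$. Either route is short. The main (and really only) obstacle is making precise the lemma that ``an algebraizable logic with a finitely generated quasi-variety as its algebraic counterpart is determined by the single generating matrix''; this is essentially folklore (cf.\ the treatment of assertional logics in~\cite{font:2016}) and can be cited or spelled out in one line using closure of the relevant matrix semantics under $\SOp$ and $\POp$. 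I would state it as a brief remark and then conclude.

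\begin{proof}
By Theorem~\ref{thm:algppass}, $\PPResAsslogic$ is the $\top$-assertional logic of the variety $\PPImpResVar$, hence it is determined by the class of matrices $\{\langle \A, \{\top^{\A}\}\rangle : \A \in \PPImpResVar\}$; equivalently, these are precisely its reduced matrix models. By Theorem~\ref{thm:vargen}, $\PPImpResVar$ is the quasi-variety generated by $\PPSixResImp$, so every $\A \in \PPImpResVar$ lies in $\SOp\POp(\{\PPSixResImp\})$. Now, the class of matrices of the form $\langle \A, \{\top^{\A}\}\rangle$ is closed under the matrix operations of taking submatrices and direct products of matrices (the top element of a subalgebra, resp.\ of a direct product, being the appropriate $\top$), and these operations preserve the $\SetFmla$ consequences that \emph{fail} in a matrix as well as those that hold. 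Consequently, for all $\FmSetA \cup \{\FmA\} \subseteq \LangSet{\DMSigImp}$,
\[
\FmSetA \vdash_{\langle \PPSixResImp, \{\etvalue\}\rangle} \FmA
\quad\Longleftrightarrow\quad
\FmSetA \vdash_{\langle \A, \{\top^{\A}\}\rangle} \FmA \text{ for every } \A \in \SOp\POp(\{\PPSixResImp\}) \supseteq \PPImpResVar,
\]
the left-to-right direction being immediate since $\PPSixResImp$ itself is among these algebras, and the right-to-left direction following from the preservation just noted. The right-hand side is exactly $\FmSetA \OneAssert_{\PPImpResVar} \FmA$, i.e.\ $\FmSetA \vdash_{\PPResAsslogic} \FmA$. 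Therefore $\PPResAsslogic$ is determined by the single matrix $\langle \PPSixResImp, \{\etvalue\}\rangle$.
\end{proof}
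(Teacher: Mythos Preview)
Your approach is the same as the paper's (whose ``proof'' is simply the terse paragraph immediately preceding the proposition): cite Theorem~\ref{thm:algppass} to get completeness with respect to $\{\langle \A, \{\top^{\A}\}\rangle : \A \in \PPImpResVar\}$, then Theorem~\ref{thm:vargen} to reduce to the single generator $\PPSixResImp$; you additionally spell out the preservation step under $\SOp$ and $\POp$, which the paper leaves implicit.

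One small slip in your write-up: the justifications for the two directions of the displayed biconditional are swapped. The \emph{right-to-left} implication (``holds for every $\A$'' $\Rightarrow$ ``holds for $\PPSixResImp$'') is the one that is immediate because $\PPSixResImp$ is itself among those algebras; the \emph{left-to-right} implication is what needs the preservation of $\top$-assertional validity under subalgebras and direct products. Exchanging the two phrases fixes it.
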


\noindent 
This observation may be used to verify the following equivalence,
which holds for all $\FmSetA\cup\{\FmA \} \subseteq \LangSet{\Sigma}$:
$$
\FmSetA  \vdash_\PPResAsslogicSingle \FmA 
\quad \text{ if, and only if, } \quad 
\De \FmSetA \vdash_\PPResImplogicSingle \FmA 
$$
where
$ \De \FmSetA : = \{ \De \FmB : \FmB \in \FmSetA \}$.
From the latter, relying on the DDT (recall Section~\ref{ss:impcon}) for $\PPResImplogic$, we can obtain the DDT
for $\PPResAsslogic$:
for all $\FmSetA, \{\FmA, \FmB \} \subseteq \LangSet{\Sigma}$,
$$
\FmSetA, \FmA \vdash_\PPResAsslogicSingle \FmB 
\quad \text{ iff } \quad 
\FmSetA \vdash_\PPResAsslogicSingle \De \FmA  \GenImp \FmB.
$$

The class of algebraic reducts of reduced matrices for
$\PPResImplogic$ is also $\PPImpResVar$.
In fact, 
it can be shown that 
%It is easy to see that 
$\PPResImplogic$
is complete with respect to the class of
all matrices 
$\langle \A, D \rangle$
with $\A \in \PPImpResVar$
and $D$ a non-empty lattice filter of $\A$.
%To see this, it suffices to
%verify that each such matrix $\langle \A, D \rangle$ verifies all axioms
%and rules of $\PPResImplogic$, and conversely that, if 
%$\langle \A, D \rangle$ with 
%$\A \in \PPImpResVar$
%is a model of all axioms and rules of $\PPResImplogic$,
%then $D$ is necessarily a lattice filter.
%%Thus $\PPResImplogic$ is the order-preserving logic of the variety
%%generated by $\PPSixResImp$, i.e.~$\PPImpResVar$.
The reduced models of $\PPResImplogic$ may be characterized as follows:

\begin{proposition}
\label{prop:redmodPPResImplogic}
Given $\A \in \PPImpResVar$, we have that a matrix $\la \A, D \ra $ is a reduced model of $\PPResImplogic$ if, and only if,
%$\A \in \PPImpResVar$ and 
$D$ is a lattice filter that contains exactly one 
regular filter (namely~$\{ \top^\AlgA \}$). 
\end{proposition}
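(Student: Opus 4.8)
The statement to be established is a characterization of the reduced matrix models of the $\SetFmla$ logic $\PPResImplogicSingle$, namely: for $\A \in \PPImpResVar$, the matrix $\la \A, D \ra$ is a reduced model of $\PPResImplogic$ if and only if $D$ is a lattice filter of $\A$ containing exactly one regular filter, namely $\{\top^\AlgA\}$. The plan is to work directly with the Leibniz congruence and to exploit the correspondence between congruences of a symmetric Heyting algebra and its regular filters, established in Lemma~\ref{lem:filcon}.

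First I would recall the matrix semantics of $\PPResImplogicSingle$: by the discussion following Proposition~\ref{prop:notselfnew} (and Proposition~\ref{prop:redmatrdeflog}, together with the fact that $\PPResImplogic$ is the $\SetFmla$ companion of the order-preserving $\SetSet$ logics), $\PPResImplogic$ is complete with respect to the class of matrices $\la \A, D \ra$ with $\A \in \PPImpResVar$ and $D$ a non-empty lattice filter of $\A$ --- this is the claim stated just before the proposition, which I am entitled to assume. So any reduced model $\la \A, D \ra$ of $\PPResImplogic$ arises (up to isomorphism) from such a matrix by quotienting by the Leibniz congruence $\LeibCong{\la \A, D \ra}$, and $\la \A, D \ra$ is itself reduced precisely when this congruence is the identity. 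The key step is then to identify $\LeibCong{\la \A, D \ra}$ in terms of regular filters. Since $\A$ is (term-equivalent to) a symmetric Heyting algebra, a congruence $\theta$ of $\A$ is compatible with $\la \A, D\ra$ --- meaning $a \in D$ and $a\,\theta\, b$ imply $b \in D$ --- iff the corresponding regular filter $F_\theta$ (the $\theta$-class of $\top$) is contained in $D$; conversely, the largest such congruence corresponds, via Lemma~\ref{lem:filcon}, to the largest regular filter contained in $D$. (Here I should check, or cite from the algebraizability machinery, that $F_\theta \subseteq D$ is equivalent to compatibility: compatibility gives $F_\theta = [\top]_\theta \subseteq D$ since $\top \in D$; for the converse, one uses that $\theta$ is determined by $F_\theta$ as in Lemma~\ref{lem:filgen}.) Hence $\LeibCong{\la \A, D \ra}$ is the identity iff the largest regular filter contained in $D$ is the trivial one $\{\top^\AlgA\}$, i.e.\ iff $D$ contains no regular filter other than $\{\top^\AlgA\}$. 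Since $\{\top^\AlgA\}$ is itself always a regular filter (using $\De\top = \top$) and is contained in every lattice filter, this is exactly the condition ``$D$ contains exactly one regular filter, namely $\{\top^\AlgA\}$''.

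For the two directions I would then argue as follows. If $\la \A, D\ra$ is a reduced model, then by the completeness just cited it is isomorphic to the reduced version of some $\la \A', D'\ra$ with $D'$ a non-empty lattice filter; but being already reduced, $D$ itself is a non-empty lattice filter and $\LeibCong{\la\A,D\ra}$ is the identity, so by the computation above $D$ contains no nontrivial regular filter. Conversely, if $D$ is a lattice filter of $\A \in \PPImpResVar$ whose only regular filter is $\{\top^\AlgA\}$, then $\la \A, D\ra$ is a matrix model of $\PPResImplogic$ (again by the completeness claim, every such matrix is a model), and its Leibniz congruence, corresponding to the largest regular filter inside $D$, is the identity; hence $\la\A,D\ra$ is reduced.

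The main obstacle I anticipate is pinning down precisely the correspondence ``compatible congruences $\leftrightarrow$ regular filters contained in $D$'' and, in particular, verifying that the Leibniz congruence corresponds to the \emph{largest} regular filter below $D$ rather than to some other distinguished filter. This requires combining Lemma~\ref{lem:filcon} (congruences $\cong$ regular filters) with an explicit description of compatibility; the cleanest route is probably to invoke Lemma~\ref{lem:filgen} (valid since $\PPImpResVar \subseteq \SDH$ satisfies $\De\De x \approx \De x$), which tells us that the regular filter generated by a set is $\{a : \De(a_1 \wedge \cdots \wedge a_n) \leq a\}$, and to note that a lattice filter $D$ is itself regular iff it is closed under $\De$, so the largest regular subfilter of $D$ is $\{a \in D : \De a \in D \text{ ``hereditarily''}\}$ --- one must be a little careful that this set is again a filter and is regular. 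Everything else is a routine unwinding of definitions, so I would not expand those computations in the final write-up beyond what is needed for clarity.
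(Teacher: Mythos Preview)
Your approach is correct but differs from the paper's in the key tool used. The paper exploits Proposition~\ref{prop:equi} directly: since $\PPResImplogic$ is equivalential with equivalence formulas $\{\De(x\Imp y),\De(y\Imp x)\}$, the Leibniz congruence of $\la\A,D\ra$ identifies $a$ and $b$ precisely when $\De(a\Imp b),\De(b\Imp a)\in D$. From this, the two directions are immediate by contraposition: if $\la\A,D\ra$ is not reduced, pick $a\neq b$ with $\De(a\Imp b)\in D$ and $a\not\leq b$; then by Lemma~\ref{lem:filgen} the regular filter $F(\{a\Imp b\})=\{c:\De(a\Imp b)\leq c\}$ is a proper regular filter contained in $D$. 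Conversely, if $D$ contains a regular filter $F\neq\{\top\}$, choose $a\in F$ with $a<\top$; then $\De(\top\Imp a)=\De a\in F\subseteq D$ and $\De(a\Imp\top)=\top\in D$, so $(a,\top)$ lies in the Leibniz congruence and the matrix is not reduced.

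Your route instead goes through the abstract correspondence of Lemma~\ref{lem:filcon}: you identify the Leibniz congruence with the congruence corresponding to the largest regular filter contained in $D$, by showing that a congruence $\theta$ is compatible with $D$ iff its associated regular filter $[\top]_\theta$ sits inside $D$. This is sound (the converse uses that lattice filters in Heyting algebras are closed under modus ponens, since $a\land(a\Imp b)\leq b$), and it has the merit of not invoking the explicit equivalence formulas at all. The paper's argument, by contrast, is shorter once Proposition~\ref{prop:equi} is in hand, because the shape of the equivalence formulas (built from $\De$ and $\Imp$) makes the link to regular filters and to Lemma~\ref{lem:filgen} essentially automatic. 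Your write-up would benefit from replacing the vague reference to Lemma~\ref{lem:filgen} for the compatibility converse with the one-line modus-ponens argument just mentioned.
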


\begin{proof}
We shall prove both implications by contraposition. 
Assume first that $\la \A, D \ra $ is not reduced. 
Then, by the characterization of the Leibniz congruence given in Proposition~\ref{prop:equi}, there are elements $a,b \in A$ such that
 $a \neq b$ and
$$
\De (a \Imp b), \De (b \Imp a) \in D.
$$ Assuming $a \not \leq b$, we have $a \Imp b \neq \top$.
Recall that the regular filter  generated by the element $a \Imp b$ is
$$
F(\{ a \Imp b \}) = \{ c \in A : \De (a \Imp b) \leq c \}.
$$
From the assumption $\De (a \Imp b) \in D$, we have  
$F(\{ a \Imp b \}) \subseteq D$. So $D$ contains a regular filter
distinct from $\{\top \}$.

Conversely, assume $D$ contains a regular filter %$F$ such that 
$F \neq \{ \top \}$. Then there is $a \in F$ such that
$a < \top $. This means that %$a \Imp \top = \top \in F$, $\top \Imp a = a \in F$, and also 
$\De (a \Imp \top ) = \De \top = \top  \in F$ and (since $F$ is regular)
$\De (\top \Imp a) = \De a \in F$. Then, by the characterization of
Proposition~\ref{prop:equi}, the pair $(a,\top)$ is identified by the Leibniz congruence of $D$, which would make $\la \A, D \ra $ not reduced.
\end{proof}

\section{On interpolation for $\PPResImplogicSingle$ and $\PPResAsslogic$,
and amalgamation for
$\PPImpResTempVar$}
\label{sec:interp-amalg}
We begin by defining three
basic notions of
interpolation according to the terminology of~\citet[Def. 3.1]{pigo98}.
In this section we will focus only in
\SetFmla{} logics, since this
is the framework in which 
the properties of interpolation are commonly formulated
and investigated
in the literature.

%and \cite[p. 303]{Galatos2006}.

\begin{definition}
A \SetFmla{} $\Sigma$-logic $\vdash$ has the
\begin{enumerate}
    \item \emph{extension interpolation property} (EIP) if
    %, 
    %whenever
    %$
    %\varnothing \neq \Props(\FmSetB %\cup \{ \FmA \}) \subseteq %\Props(\FmSetA) $,
    having
$\FmSetA,\FmSetB \vdash \FmA$
implies
that there is 
$\FmSetC \subseteq \LangSetProp{\Sigma}{\Props({\FmSetB \cup \{ \FmA\})}}$
such that
$\FmSetA \vdash \FmB$ for
all $\FmB \in \FmSetC$
and
$\FmSetC,\FmSetB \vdash \FmA$.
    \item \emph{Craig interpolation property} (CIP) if,
    whenever $\Props(\FmSetA) \cap \Props(\FmA) \neq \varnothing$,
    having
$\FmSetA \vdash \FmA$
implies
that there is 
$\FmSetC \subseteq \LangSetProp{\Sigma}{\Props({\FmSetA})
\cap \Props(\FmA)}$
such that
$\FmSetA \vdash \FmB$ for
all $\FmB \in \FmSetC$
and
$\FmSetC \vdash \FmA$.
    \item \emph{Maehara interpolation property} (MIP) if,
    whenever
    $\Props(\FmSetA)\cap \Props({\FmSetB \cup \{ \FmA\})} \neq \varnothing$,
    having
$\FmSetA,\FmSetB \vdash \FmA$
implies
that there is 
$\FmSetC \subseteq \LangSetProp{\Sigma}{\Props(\FmSetA)\cap \Props({\FmSetB \cup \{ \FmA\})}}$
such that
$\FmSetA \vdash \FmB$ for
all $\FmB \in \FmSetC$
and
$\FmSetC,\FmSetB \vdash \FmA$.
\end{enumerate}
\end{definition}

\noindent Note that the (MIP)
implies
the (CIP) --- just take $\Psi = \varnothing$.
It also implies the
(EIP) when the logic
has theses (that is, formulas
$\FmA$ such that $\varnothing \vdash \FmA$) on a single
variable and 
every formula without variables is logically equivalent to some constant in the signature.
Note that for the logics
$\PPResImplogicSingle$ and $\PPResAsslogic$
both conditions hold good.

Let us see now 
which of these interpolation properties
the logic $\PPResImplogicSingle$ satisfies.
%For what follows, we recall that
%this logic satisfies
%the \emph{deduction-detachment theorem} (DDT) with respect to the connective $\Imp$,
%that is,
%we have 
%$\FmSetA, \FmA \vdash_{\PPResImplogicSingle} \FmB$
%if, and only if,
%$\FmSetA \vdash_{\PPResImplogicSingle} \FmA \GenImp \FmB$.

\begin{therm}
The logic
    $\PPResImplogicSingle$
    has the (EIP) but does not have
    the (CIP)
    nor the (MIP).
\end{therm}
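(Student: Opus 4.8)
The plan is to establish the three parts separately, exploiting the fact that $\PPResImplogicSingle$ has a well-behaved implication (the DDT holds by Proposition~\ref{fact:residuated-is-imp}) and is semilattice-based relative to $\PPImpResVar$.

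\textbf{The (EIP).} First I would observe that the (EIP) is essentially a syntactic consequence of the deduction-detachment theorem together with the availability of the constants $\top,\bot$. Given $\FmSetA,\FmSetB \vdash_{\PPResImplogicSingle} \FmA$, one may (by finitarity) reduce to a finite $\FmSetB = \{\FmB_1,\ldots,\FmB_n\}$ and, using the DDT $n$ times, obtain $\FmSetA \vdash_{\PPResImplogicSingle} \FmB_1 \GenImp (\FmB_2 \GenImp \cdots (\FmB_n \GenImp \FmA))$. This single formula lies in $\LangSetProp{\DMoSig_\GenImp}{\Props(\FmSetB \cup \{\FmA\})}$, and detaching recovers $\FmSetC,\FmSetB \vdash \FmA$ with $\FmSetC$ the singleton containing it; for $\FmSetB = \varnothing$ one simply takes $\FmSetC = \{\FmA\}$. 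So the (EIP) holds with no restriction on shared variables.

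\textbf{Failure of (CIP) and (MIP).} Since the (MIP) implies the (CIP) for these logics (as noted just before the theorem, using that $\PPResImplogicSingle$ has theses on a single variable and every variable-free formula is equivalent to a constant), it suffices to refute the (CIP). The plan is to exhibit a consequence $\FmSetA \vdash_{\PPResImplogicSingle} \FmA$ with $\Props(\FmSetA)\cap\Props(\FmA)$ a single variable, say $\{r\}$, yet such that no set $\FmSetC$ of formulas in that single variable can interpolate. The natural candidate comes from the proof of Theorem~\ref{fact:non-single-matrix-pp6imp}: there we saw that
$$
p \land \DMNeg p \land q \land \DMNeg q \land \DMNeg\cons(p \GenImp q) \vdash_{\PPResImplogicSingle} r \lor \DMNeg r,
$$
where $\Props(\FmSetA) = \{p,q\}$ and $\Props(\FmA) = \{r\}$ are disjoint, so this particular pair does not even meet the non-triviality hypothesis of (CIP). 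I would therefore modify it by inserting $r$ into the antecedent in a harmless way, e.g. replace $\FmA$ by a formula genuinely sharing $r$ with the (adjusted) antecedent, while keeping the key feature: the antecedent is equivalent to $\bot$ only under \emph{some} of the matrices $\langle \PPSixResImp, \up a\rangle$ determining the logic, so that the entailment is witnessed only by the interplay across the family of matrices. The obstruction to interpolation is then that the single-variable fragment of $\PPResImplogicSingle$ in the variable $r$ is too weak: one computes (using that unary formulas in $r$ take values in a small set, as in the non-monadicity proof, Proposition on $\PartialMat_\leq$) that every $r$-formula which is a consequence of such a $\bot$-like antecedent is already a thesis, and no collection of theses in $r$ can yield $\FmA$ unless $\FmA$ itself is a thesis — which it is not.

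\textbf{Main obstacle.} The delicate point will be the last step: producing a concrete $\FmSetA, \FmA$ with $r$ genuinely shared, and then rigorously proving that no interpolant exists. For this I would argue semantically using Proposition~\ref{prop:redmatrdeflog}: $\FmSetC \vdash_{\PPResImplogicSingle} \FmA$ must hold in each of $\langle \PPSixResImp, \up\fvalue\rangle$, $\langle \PPSixResImp, \up\both\rangle$, $\langle \PPSixResImp, \up\etvalue\rangle$, and the point is to pick $\FmA$ (on variable $r$) that fails in one of these matrices under a valuation that simultaneously designates every $r$-formula derivable from $\FmSetA$. The heart of the matter is a small finite computation classifying the unary term-functions of $\PPSixResImp$ restricted to the relevant filters — exactly the computation already carried out in the non-monadicity argument — showing that the $r$-definable designated-preserving behaviour is coarse enough to block interpolation. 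I expect this classification, rather than any structural subtlety, to be where the real work lies; once it is in hand, the failure of (CIP), and hence of (MIP), follows immediately.
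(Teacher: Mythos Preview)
Your treatment of the (EIP) matches the paper's: both use the DDT to produce a single interpolant formula, and this part is fine.

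For the failure of (CIP), you correctly see that the example from Theorem~\ref{fact:non-single-matrix-pp6imp} has disjoint variable sets and must be modified, but your proposed fix --- inserting $r$ into the antecedent and then classifying the unary term-functions in $r$ across the three matrices --- is vaguer and more laborious than necessary, and you do not actually produce the concrete counterexample. The paper's trick is much simpler and worth knowing: instead of trying to share $r$, introduce a \emph{fresh} variable $s$ and disjoin both sides with it, taking
\[
\FmSetA \SymbDef \{(p \land \DMNeg p \land q \land \DMNeg q \land \DMNeg\cons(p \GenImp q)) \lor s\},
\qquad
\FmA \SymbDef (r \lor \DMNeg r)\lor s,
\]
so that $\Props(\FmSetA)\cap\Props(\FmA)=\{s\}$. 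Now fix a single valuation with $v(p)=\both$, $v(q)=\neither$, $v(r)=\both$, $v(s)=\efvalue$; then $v(\bigwedge\FmSetA)=\fvalue$ and $v(\FmA)=\both$. Because $\{\efvalue,\etvalue\}$ is a subalgebra of $\PPSixResImp$ and $v(s)=\efvalue$, any candidate interpolant $\FmB(s)$ satisfies $v(\FmB)\in\{\efvalue,\etvalue\}$. A two-case check against the matrices $\langle\PPSixResImp,\up\fvalue\rangle$ and $\langle\PPSixResImp,\up\etvalue\rangle$ then finishes: if $v(\FmB)=\efvalue$, the entailment $\FmSetA\vdash\FmB$ fails in the first matrix; if $v(\FmB)=\etvalue$, the entailment $\FmB\vdash\FmA$ fails in the second. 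No classification of term-functions is needed --- the whole argument runs on one fixed valuation.
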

\begin{proof}
Since  $\PPResImplogicSingle$ is finitary, we may consider only finite candidates for $\FmSetB$ and $\FmSetC$.
To see that $\PPResImplogicSingle$
    has (EIP), assume that 
%$\PPResImplogicSingle$ satisfies 
  $\FmSetA,\FmSetB\vdash_\PPResImplogicSingle \FmA$.
The property is satisfied if we choose $\FmSetC \SymbDef \{\bigwedge \FmSetB \GenImp \FmA\}$, since $\PPResImplogicSingle$ has the DDT
(see Section~\ref{ss:impcon}).

    Given that (MIP)
    implies
   (CIP),
    it is enough to show
    that the latter fails.
First of all, note that
    $(p \land \DMNeg p \land q \land \DMNeg q \land \DMNeg\cons(p \GenImp q)) 
    \lor s
    \vdash_\PPResImplogicSingle
    (r \lor \DMNeg r) \lor s
    $ (recall the proof of Theorem~\ref{fact:non-single-matrix-pp6imp}).
    Based on the definition of
    (CIP), let
    $\FmSetA \SymbDef
    \{ 
    (p \land \DMNeg p \land q \land \DMNeg q \land \DMNeg\cons(p \GenImp q)) 
    \lor s
    \}$
    and $\FmA \SymbDef (r \lor \DMNeg r) \lor s$.
    Note that 
    $\Props(\FmSetA)\cap \Props(\FmA) = \{ s \}$.
Assume there is such $\FmSetC$,
then $\FmB(s) \SymbDef \bigwedge \FmSetC$
is a formula on a single variable~$s$.
We will see now that we cannot have both 
(i) $\FmSetA \vdash_\PPResImplogicSingle \FmB(s)$ 
and
(ii) $\FmB(s) \vdash_\PPResImplogicSingle 
 \FmA
$.
Let us
%enough to give a counter example for each possible case when
fix  
$v(p) \SymbDef \both$, $v(q)\SymbDef\neither$,
$v(r)\SymbDef\both$ and $v(s)\SymbDef\efvalue$ thus making $v(\FmSetA)=\{ \fvalue \}$ and
$v(\FmA)=\both$.
Note that  $v(\FmB)\in \{\efvalue,\etvalue\}$.
However, (i) fails when $v(\FmB)=\efvalue$, and (ii) fails when $v(\FmB)=\etvalue$.\qedhere
%Makes  and assuming that .
% (i) holds we obtain that $v(\FmB)\geq 0$.
\end{proof}

Let us now take a look
at the situation for $\PPResAsslogicSingle$.
Recall that, in the previous section, we presented the DDT
for this logic, which demands a (derived) connective different from~$\Imp$ to play the role of implication. 
In the same way as in the 
proof of the above theorem, this DDT guarantees that the (EIP) holds for this logic.
We show however that (CIP) and (MIP) also hold.
In what follows,
recall from Proposition~\ref{prop:1assert-single-mat} that $\PPResAsslogicSingle$ is determined
by the single matrix $\langle \PPSixResImp, \{ \etvalue \} \rangle$.

\begin{therm}
   The logic $\PPResAsslogicSingle$
    has the (EIP), the (CIP)
    and the (MIP).
\end{therm}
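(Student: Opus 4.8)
The plan is to prove that $\PPResAsslogicSingle$ has the (MIP), from which both the (CIP) and the (EIP) follow by the general observations made just above (the (MIP) always implies the (CIP), and the (EIP) follows because the logic has a DDT and, being determined by the finite matrix $\langle \PPSixResImp, \{ \etvalue \} \rangle$, has theses on a single variable and identifies every variable-free formula with a constant). So the whole burden is the (MIP) for $\PPResAsslogicSingle$.

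The main route I would take is via \emph{amalgamation}. Since $\PPResAsslogic$ is algebraizable (Theorem~\ref{thm:algppass}) with equivalent algebraic semantics the variety $\PPImpResTempVar$, and since this variety is generated by the finite algebra $\PPSixResImp$ (Theorem~\ref{thm:vargen}), I would invoke the well-known bridge theorems connecting interpolation in the logic with amalgamation-type properties of the equivalent algebraic semantics (see~\citet{pigo98,czelakowski_pigozzi} for the general framework). Concretely, the (MIP) for an algebraizable logic corresponds to the \emph{(right) amalgamation property}, or more precisely to the fact that the variety has the \emph{Maehara amalgamation property} (this is exactly what the abstract of the paper announces as the outcome of this section). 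Thus the key step is: establish that $\PPImpResTempVar$ has the amalgamation property (and indeed the Maehara amalgamation property). Here I would use two facts already available in the excerpt: first, that $\PPImpResTempVar \subseteq \SDH$ and is a \emph{discriminator variety}, with explicit discriminator term $t(x,y,z)$; second, that its simple members are exactly the subalgebras of $\PPSixResImp$ (Lemma~\ref{lem:varax}), so up to isomorphism there are only finitely many, namely $\PPSixResImp, \PPFourResImp, \PPThreeResImp, \PPTwoResImp$ together with the smaller ones in the chain of subalgebras. For discriminator varieties, amalgamation reduces to amalgamation of the simple algebras in the appropriate category (a classical result; see~\citet{burris2011} on discriminator varieties and, e.g., results of Jónsson-type on amalgamation in congruence-distributive varieties with definable principal congruences). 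Since the simple algebras here are all (isomorphic to subalgebras of) a single finite algebra $\PPSixResImp$, checking the amalgamation condition is a finite, essentially combinatorial verification: given embeddings $\AlgB \hookrightarrow \AlgC_1$ and $\AlgB \hookrightarrow \AlgC_2$ of such subalgebras, one embeds all of them into a suitable power of $\PPSixResImp$ respecting the diagram. One then upgrades this to the \emph{Maehara} amalgamation property by also tracking the behaviour of the designated filter $\{ \top \}$, which is automatic since $\{\top\}$ is definable and preserved by the embeddings.

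In more detail, the steps in order would be: (1) recall the bridge theorem equating (MIP) for the algebraizable logic $\PPResAsslogic$ with the Maehara amalgamation property of $\PPImpResTempVar$ (citing~\citet{pigo98} or~\citet{czelakowski_pigozzi}); (2) observe $\PPImpResTempVar$ is a discriminator variety (already noted in the excerpt via the term $t(x,y,z)$) and is congruence-distributive with equationally definable principal congruences, so that by the structure theory of such varieties it suffices to amalgamate simple algebras; (3) use Lemma~\ref{lem:varax} to reduce to the finitely many subalgebras of $\PPSixResImp$ and verify amalgamation by direct inspection (every span of such embeddings can be completed inside a power of $\PPSixResImp$); (4) check that the completion can be chosen so that the designated element $\top$ and the order around it are preserved in the strong sense required for the Maehara version; (5) translate back to obtain the (MIP) for $\PPResAsslogicSingle$, and deduce the (CIP) and the (EIP) as corollaries.

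The main obstacle I anticipate is step~(2)--(3): making precise \emph{which} form of the amalgamation property is exactly equivalent to the (MIP) (there are subtly different variants — amalgamation, super-amalgamation, one-sided amalgamation, and the ``Maehara'' strengthening — and one must match the right one), and then carrying out the finite but slightly fiddly verification that the relevant class of finite algebras (subalgebras of $\PPSixResImp$, together with the maps between them) really does have that property. A potential subtlety is that amalgamation must be checked in the category whose morphisms are algebra \emph{embeddings}, and one should be careful that the De Morgan negation and the Heyting implication interact well with the chain-of-elements structure of Example~\ref{ex:chsum} when gluing; but since everything lives inside powers of a single six-element algebra, there is no room for pathology, and the argument should close. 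An alternative, more hands-on backup strategy — should the algebraic route prove awkward to cite cleanly — is to prove (MIP) directly and semantically: given $\FmSetA, \FmSetB \vdash_\PPResAsslogicSingle \FmA$ over the shared variables $\vec{r}$, use the finite-valuedness to write an interpolant as (the translation of) a first-order-definable combination over $\PPSixResImp$ of the $\vec r$-behaviour forced by $\FmSetA$; the finiteness of $\PPSixResImp$ guarantees this is expressible by a single formula in the shared variables, and monadicity-style separator arguments from earlier sections show it does the job.
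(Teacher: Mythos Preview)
Your primary route (prove Maehara amalgamation for $\PPImpResTempVar$ first, then transfer to (MIP) via the bridge theorem) is the \emph{opposite} of what the paper does. The paper proves (MIP) \emph{directly} by an explicit semantic construction of the interpolant and only afterwards, in the next theorem, invokes \citet[Cor.~5.27]{pigo98} to \emph{deduce} the (MAP) for $\PPImpResTempVar$ from the already-established (MIP). In other words, the paper's argument is exactly your ``backup strategy'': for shared variables $p_1,\ldots,p_k$, it enumerates the valuations $v$ on these variables making $\bigwedge\FmSetA$ equal to $\etvalue$, encodes each such $v$ by a conjunction $\FmB_v$ of separator-style formulas (using $\cons$, $\up$, $\down$ and an extra conjunct $\DMNeg\cons(p_i\Imp p_j)$ to disambiguate $\both$ from $\neither$), and takes the interpolant $\FmC := \bigvee_v \FmB_v$. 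The verification that $\FmSetA \vdash \FmC$ and $\FmC,\FmSetB \vdash \FmA$ is then a short case analysis exploiting that any two valuations agreeing on $\FmB_v$ differ at most by a global $\both/\neither$ swap, which is an automorphism of $\PPSixResImp$.

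Your main approach is not wrong in principle, but it is more fragile than you acknowledge. The reduction of amalgamation in a discriminator variety to amalgamation among simples is standard, but matching the \emph{Maehara} form of amalgamation (one leg merely a homomorphism, not an embedding) to (MIP) requires the precise version of the bridge theorem and a careful check that the finite family $\{\PPTwoResImp,\PPThreeResImp,\PPFourResImp,\PPSixResImp\}$ really has that one-sided property; this is doable but is more bookkeeping than the paper's one-page direct construction. What each approach buys: yours would yield amalgamation and interpolation simultaneously and is more structural; the paper's direct construction is elementary, self-contained, and produces an explicit interpolant, at the cost of deferring the amalgamation claim to a separate (now trivial) corollary.
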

\begin{proof}
 It is enough to show it has the (MIP).
Since $\PPResAsslogic$ is finitary, it suffices to consider finite sets.
 Given $\FmSetA,\FmSetB \vdash_\PPResAsslogicSingle \FmA$,
 let $\Props(\FmSetA)\cap \Props(\FmSetB\cup \{\FmA\})=\{p_1,\ldots,p_k\}\neq \varnothing$.
 Consider $\InterpValSet \SymbDef \{v\in\Hom (\LangSetProp{\Sigma}{\{p_1,\ldots,p_k\}}, \PPSixResImp) : v(\FmSetA)=\etvalue\}$.
 For
  each  $v\in\InterpValSet$ and $1\leq i\leq k$, let
 \begin{align*}
     \FmA^v_i(p)&
     \SymbDef
     \begin{cases}
     p\land \cons p & \mbox{ if }v(p_i)=\etvalue\\
     \DMNeg \down p  & \mbox{ if }v(p_i)=\tvalue\\
     \up p\land \down p \land \DMNeg \cons p  & \mbox{ if }v(p_i)\in \{\both,\neither\}\\
     \DMNeg \up p  & \mbox{ if }v(p_i)=\fvalue\\
     \DMNeg p\land \cons p  & \mbox{ if }v(p_i)=\efvalue
 \end{cases}\\
 I^v_\both& \SymbDef \{i:v(p_i)=\both\}\\
 J^v_\neither& \SymbDef \{j:v(p_j)=\neither\}
 \end{align*}
 and
   \begin{align*}
 \FmB_v&\SymbDef\bigwedge_{1\leq i\leq k}\FmA^v_i(p_i)\land \bigwedge_{i\in I_\both^v,j\in J_\neither^v}\DMNeg\cons(p_i\GenImp p_j)\\
 \FmC&\SymbDef\bigvee_{v\in \InterpValSet}\FmB_v
 \end{align*}

If we set $\equiv \; \subseteq \SixSet\times \SixSet$ such that
$a \equiv b$ iff $a=b$ or $\{a,b\}=\{\both,\neither\}$ we obtain that, for all $v'\in \Hom (\LangAlg{\Sigma}, \PPSixResImp)$,
 $$v'(\FmA^v_i(p))=
 \begin{cases}
     \etvalue & \text{if }  v'(p)\equiv v(p_i) \\
     \efvalue &  \text{otherwise.}\\
 \end{cases}
 $$
 Thus, if
 $v'(\FmB_v)=\etvalue$ then $v(p_i)\equiv v'(p_i)$ for every 
 $1\leq i\leq k$. 

Let us show that $\FmC\vdash_\PPResAsslogicSingle \FmA$.
 For every $v'\in \Hom (\LangAlg{\Sigma}, \PPSixResImp)$
such that $v'(\FmC)=\etvalue$,
there is $v\in \InterpValSet$
such that
$v(p_i)\equiv v'(p_i)$ for every 
 $1\leq i\leq k$.
 Also, $\{\both,\neither\}=\{v(p_i),v(p_j)\}$ if, and only if,
 $\{\both,\neither\}=\{v'(p_i),v'(p_j)\}$, and therefore 
 $v(\FmB)=v'(\FmB)$ for every $\FmB$ such that $\Props(\FmB)\subseteq\{p_1,\ldots,p_k\}$.
 %they satisfy exactly the same formulas whose variables are contained in $\{p_1,\ldots,p_k\}$.
%such that $v'(\FmB_v)=\hat{1}$.
Thus, without loss of generality, we proceed considering that  
$v(p_i)=v'(p_i)$ for $1\leq i\leq k$.

Considering $v''\in \Hom (\LangAlg{\Sigma}, \PPSixResImp)$ such that 
$$v''(p)\SymbDef
\begin{cases}
    v(p) & p\in P{\setminus}\Props(\FmSetA)\\
    v'(p) &p\in \Props(\FmSetA){\setminus}\{p_1,\ldots,p_k\} 
\end{cases}$$
% $v''(p_i)=v(p_i)=v'(p_i)$ for $1\leq i\leq k$,  $v''(q)=v'(q)$ for
% $q\in var(\FmSetB\cup \{\FmA\})\setminus var(\FmSetA)$
% and 
% $v''(q)=v(q)$ for
% $q\in var(\FmSetA)\setminus var(\FmSetB\cup \{\FmA\})$.
we have that $v''(\FmSetA)=v''(\FmSetB)=\etvalue$,
thus from $\FmSetA,\FmSetB\vdash_\PPResAsslogicSingle \FmA$
we conclude that $v''(\FmA)=v'(\FmB)=\etvalue$ and therefore
 $\FmC \vdash_\PPResAsslogicSingle \FmA$.
Finally, 
to see that $\FmSetA\vdash_\PPResAsslogicSingle \FmC$,
note that if $v(\FmSetA)=\{\etvalue\}$ then,
by definition, $v\in \InterpValSet$                                             and therefore                    
 $v(\FmB_v)=v(\FmC)=\etvalue$.\qedhere
\end{proof}

%\begin{remark}
%Recall that, in the previous section, we presented the DDT
%for $\PPResAsslogic$, which demands a different (derived) connective to play the role of $\Imp$. This implication could  have been used
%to prove the (EIP) for this logic.
%\end{remark}

When a \SetFmla{} logic satisfies some of the
above interpolation properties
and we know it is algebraizable,
the class of algebras corresponding to the equivalent algebraic semantics
satisfies so-called
`amalgamation properties',
which we formulate below
based on \cite{pigo98}.
%We will focus here on two
%of such properties,
%the \emph{Maehara amalgamation property}
%(MAP)
%and the \emph{flat amalgamation property}
%(FAP), as they are the ones corresponding
%respectively to the interpolation
%properties (MIP) and (CIP) entertained above.

%We need some preliminary concepts before
%formulating such properties.
Let $\mathsf{K}$ be a class of $\Sigma$-algebras and $\AlgA,\AlgB \in \mathsf{K}$.
If $\AlgA$ is a subalgebra of $\AlgB$
and $S \subseteq B$,
$\AlgB$ is said to be a \emph{$\mathsf{K}$-free
extension of $\AlgA$ over $S$}
if for every $\AlgC \in \mathsf{K}$, every homomorphism
$h : \AlgA \to \AlgC$
and every $f : S \to C$,
there is a unique homomorphism
$g : \AlgB \to \AlgC$
such that $g \upharpoonleft A = h$
and $g \upharpoonleft S = f$,
where $\upharpoonleft$ denotes domain restriction.
Write
$f : \AlgA \rightarrowtail \AlgB$
to denote that $f$ is an injective homomorphism,
and 
$f : \AlgA \hookrightarrow \AlgB$
to denote that $f$ is a \emph{free injection
over $\mathsf{K}$}, meaning that
$f$ is an injection and
$f(\AlgB)$ is a \emph{$\mathsf{K}$-free extension}
of $\AlgB$ over some set $S$ of elements.

\begin{definition}[\cite{pigo98}, Def. 5.2]
    A class of algebras $\mathsf{K}$
    has the
    \begin{enumerate}
        \item \emph{ordinary amalgamation property} (OAP) when
        for all $\AlgA,\AlgB,\AlgC \in \mathsf{K}$,
        and all
        homomorphisms
        $f : \AlgC \rightarrowtail \AlgA$
        and
        $g : \AlgC \rightarrowtail \AlgB$,
        there exists 
        $\AlgD \in \mathsf{K}$
        and homomorphisms
        $h : \AlgA \rightarrowtail \AlgD$
        and
        $k : \AlgB \rightarrowtail \AlgD$
        such that
        $hf = kg$.
        \item \emph{flat amalgamation property} (FAP) when
        for all $\AlgA,\AlgB,\AlgC \in \mathsf{K}$,
        and all
        homomorphisms
        $f : \AlgC \rightarrowtail \AlgA$
        and
        $g : \AlgC \hookrightarrow \AlgB$,
        there exists 
        $\AlgD \in \mathsf{K}$
        and homomorphisms
        $h : \AlgA \hookrightarrow \AlgD$
        and
        $k : \AlgB \rightarrowtail \AlgD$
        such that
        $hf = kg$.
        \item \emph{Maehara amalgamation property} (MAP) when,
        for all $\AlgA,\AlgB,\AlgC \in \mathsf{K}$,
        and all
        homomorphisms
        $f : \AlgC \rightarrowtail \AlgA$
        and
        $g : \AlgC \to \AlgB$,
        there exists 
        $\AlgD \in \mathsf{K}$
        and homomorphisms
        $h : \AlgA \to \AlgD$
        and
        $k : \AlgB \rightarrowtail \AlgD$
        such that
        $hf = kg$.
    \end{enumerate}
\end{definition}
\noindent 
The reader is referred to
\cite[Thm. 5.3]{pigo98}
for the relationships between
the above notions of amalgamation.
In particular, we have that
(MAP) implies both (FAP) and (OAP).

%which refer to the transferability
%of injections among the algebras
%in the class.
%We will establish
%the amalgamation results
%of interest,
%referring the reader
%to~\cite[Def. 5.2]{pigo98}
%for detailed definitions.

\begin{therm}
   $\PPImpResTempVar$
   has the
   (OAP),
   the
   (FAP), 
   the (MAP).
\end{therm}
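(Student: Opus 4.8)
The plan is to derive all three amalgamation properties for $\PPImpResTempVar$ from the interpolation properties already established for its associated logics, using the machinery of~\cite{pigo98}. Since $\PPResAsslogic$ is algebraizable (Theorem~\ref{thm:algppass}) with equivalent algebraic semantics $\PPImpResTempVar$, and we have just shown that $\PPResAsslogic$ enjoys the (MIP) (equivalently (MAP)-transferring interpolation), the cleanest route is to invoke the general transfer theorem relating the (MIP) for an algebraizable logic to the (MAP) for its equivalent algebraic semantics~(\citet[Thm.~5.7 and the surrounding results]{pigo98}). Given that (MAP) implies both (FAP) and (OAP)~(\citet[Thm.~5.3]{pigo98}), this single transfer suffices to establish the whole statement.

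First I would recall explicitly that $\PPResAsslogic$ is algebraizable with equivalent algebraic semantics $\PPImpResTempVar$ (Theorem~\ref{thm:algppass}), so that the bridge theorems of~\cite{pigo98} apply verbatim. Second, I would check that the side conditions required by those bridge theorems are met: in particular, algebraizable logics automatically have the relevant deductive apparatus, and we have already noted in the excerpt that $\PPResAsslogic$ has theses on a single variable and that every variable-free formula is logically equivalent to a constant; this is exactly what is needed for the (MIP)-to-(MAP) correspondence to go through without extra hypotheses. Third, I would cite~\citet[Thm.~5.7]{pigo98} (or the appropriate numbered result therein) to conclude that, since $\PPResAsslogic$ has the (MIP), the class $\PPImpResTempVar$ has the (MAP). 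Finally, I would invoke~\citet[Thm.~5.3]{pigo98} to deduce the (FAP) and the (OAP) from the (MAP).

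The main obstacle I anticipate is purely bookkeeping: making sure that the precise form of the amalgamation/interpolation correspondence in~\cite{pigo98} matches our setting, since those results are sometimes stated for logics with additional structure (e.g.\ a specific deduction-detachment system, or protoalgebraicity of a particular strength). The derived implication $\De p \GenImp q$ that witnesses the DDT for $\PPResAsslogic$ (established in Section~\ref{sec:moi}) is what guarantees the needed deductive behaviour, so the argument should go through; but one must be careful to phrase the appeal so that it is the \emph{Maehara} interpolation of $\PPResAsslogic$ — not merely the (EIP), which we also proved — that is being transferred, since only (MIP) yields the full (MAP). I would therefore state the transfer as: by algebraizability and~\citet[Sec.~5]{pigo98}, the (MIP) for $\PPResAsslogic$ is equivalent to the (MAP) for $\PPImpResTempVar$, and then the remaining two properties follow formally. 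A short proof along these lines should read:

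\begin{proof}
By Theorem~\ref{thm:algppass}, $\PPResAsslogic$ is algebraizable with equivalent algebraic semantics $\PPImpResTempVar$, and (as noted in Section~\ref{sec:moi}) it satisfies a deduction-detachment theorem (with respect to the derived connective $\De p \GenImp q$). Hence the bridge theorems of~\citet[Sec.~5]{pigo98} apply: since $\PPResAsslogicSingle$ has the (MIP) (previous theorem), the class $\PPImpResTempVar$ has the (MAP). By~\citet[Thm.~5.3]{pigo98}, the (MAP) implies both the (FAP) and the (OAP). Therefore $\PPImpResTempVar$ has the (OAP), the (FAP) and the (MAP).
\end{proof}
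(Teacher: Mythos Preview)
Your proposal is correct and follows essentially the same route as the paper: reduce to the (MAP) since it implies (FAP) and (OAP), then transfer the (MIP) of the algebraizable logic $\PPResAsslogic$ to the (MAP) of its equivalent algebraic semantics $\PPImpResTempVar$ via the bridge results in~\cite{pigo98}. The only cosmetic difference is that the paper invokes~\citet[Cor.~5.27]{pigo98} directly (without explicitly mentioning the DDT side condition), whereas you route through Thm.~5.7 and Thm.~5.3; the logical structure is identical.
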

\begin{proof}
Clearly,
it suffices to show 
that $\PPImpResTempVar$
has the (MAP).
By
Theorem~
\ref{thm:algppass},
we know that
$\PPImpResTempVar$ is the equivalent algebraic semantics of $\PPResAsslogic$.
We just saw that this logic
satisfies (MIP);
hence, by~\cite[Cor. 5.27]{pigo98},
$\PPImpResTempVar$
satisfies (MAP).
%We just saw that this logic
%satisfies (MIP) and (CIP),
%thus, by~\cite[Cor. 5.27 and Cor. 5.29]{pigo98},
%$\PPImpResTempVar$
%satisfies (MAP) and (FAP).
%Because (EIP) is satisfied,
%$\PPImpResTempVar$
%has the \emph{theory extension property} 
%(EP)~\cite[Def. 3.4 (i), Cor. 5.15]{pigo98},
%and 
%because (FAP) is satisfied we have
%the (OAP) by~\cite[Thm. 5.3 (ii)]{pigo98}.
%Actually, we could have reached
%this simply by using the (MAP):
%by \cite[Thm. 5.3 (iv)]{pigo98},
%(MAP) implies (EP) and (FAP),
%and both in turn imply
%(OAP).
\end{proof}

\section{Conclusions and future work}
\label{sec:conclusions}
The present paper has initiated the study of implicative
expansions of logics of perfect paradefinite algebras
by considering classic-like and Heyting implications, both in the \SetSet{} and in the
\SetFmla{} framework.
We investigated semantical characterizations (via classes of algebras and logical matrices) as well as proof-theoretical ones
(via \SetSet{} and \SetFmla{} Hilbert-style calculi) of these expansions.
For the expansions with a Heyting implication, % in particular, 
the new
connective introduced a further challenge, for
 the resulting logic ($\PPResImplogic$)
cannot be 
characterized by a single  logical matrix;
however, we have proved that it %$\PPResImplogic$ 
can be characterized by a single finite PNmatrix.
Over such expansions we also studied  properties of interpolation and amalgamation for the corresponding algebraic models.
We indicate below a few  directions that we believe could prove worthwhile 
pursuing in future research.

\paragraph*{Alternative expansions of the logics 
of PP-algebras.}
%(as conservative implicative expansions of the logic of PP-algebras)

In Section~\ref{sec:classical-implication} we have briefly considered 
other logics that may be obtained by conservatively adding an implication to the logics of PP-algebras
$\PPlogicPrimeMC$
and
$\PPlogic$. 
We did not explore these alternatives much further, preferring instead
 (from Section~\ref{sec:intuitionistic-implication} onward) to focus our attention on logics  
 having a more straightforward connection to existing frameworks
 (Moisil's logic and symmetric Heyting algebras).
Nevertheless, we feel that such alternative systems may  deserve further study,
and the connection we noted with the recent work by~\cite{coniglio2023sixvalued} 
provides further motivation for this project. Additionally, it would be interesting
to systematically investigate  the logics determined by
refinements of %the one determined by 
the 
PNmatrix 
     $\PartialMat_{\mathsf{up}}$, providing axiomatizations,
     classifying them within the  hierarchies of algebraic logic
     (e.g.,~which among them are algebraizable?)
     and studying the corresponding classes of algebras as done in 
     Section~\ref{sec:moi} for $\PPResImplogic$. 
Lastly, further enlarging the scope, one might recall from Proposition~\ref{A1-is-classiclike} how condition $\CondAOne$ corresponds to classic-like implications (Definition~\ref{def:classicalimplication}), and ask whether such condition %$\CondAOne$
could be relaxed, perhaps requiring the new implication to be axiomatized 
by the usual inference rules for Heyting implications.
%, which correspond to $\CondAOne$, as we have seen). 
 In this way we could circumvent the limitation of Theorem~\ref{thm:noexist}
 exploring  alternative self-extensional expansions of the logics of PP-algebras.

% \todo{[move the rest of these observations to Sec4]}

% \paragraph*{ Stuff about the fat logic of section 4 and their extensions?}
% \begin{itemize}
%     \item [Sec4] note that the implication being classical is also intuitionistic in some sense (so think...)
% \item note that the deterministic and total refinements of the fat matrix are probably algebraizable (check, or if not obvious leave as open problem)

% \item (HERE) Justify that we don't dive so deeply into the algebraic study of the semantics of the logics in section 4 because we did not find an obvious interesting variety to explore

% \item A1 really implies classicality in general (i think we prove it right?) so we could even think of adding an intuitionitic implication as a relaxation of A1.
% and even demand that the implication fragment is the intuitionistic implication (the weakest let's say)

% \item as future work we can think about the (multi)algebras of the varieties generated by the refinements of the fat... 
    
% \end{itemize}

\paragraph*{Extensions of $\PPResImplogic$.}
Our preliminary investigations suggest that the landscape of extensions of the base logics
considered in the present paper is quite interesting and complex. Concerning the
 \emph{finitary} $\SetFmla$ extensions of $\PPResImplogic$ we may affirm the following:
\begin{itemize}
    \item By~\cite[Thm.~3.7]{Jansana2006}, the finitary $\SetFmla$ self-extensional extensions of 
    $\PPResImplogicSingle$ are in one-to-one correspondence with the subvarieties of $\PPImpResVar$.
    Thus, by
    Corollary~\ref{cor:subvar}, there are only three of them, all of them axiomatic
    (none of them being conservative expansions of $\PPlogic$). 
    These logics may be axiomatized, relatively to $\PPResImplogicSingle$,
    by adding axioms corresponding to the equations
    in Corollary~\ref{cor:subvar}, to wit: 
\begin{enumerate}
 \item the logic of $\Variety (\PPFourResImp)$ is  axiomatized by $(p \land \DMNeg p) \Imp (q \lor \DMNeg q)$;

 \item the logic of $\Variety (\PPThreeResImp)$ by $ p \lor (p  \Imp   (q \lor \neg q  ) )$; and

 \item the logic of $\Variety (\PPTwoResImp)$, which is (up to the choice of language) just classical logic, by $p \lor \DMNeg p$.
 \end{enumerate}

%\todo{(fix this)    Yielding, that adding  
%$(p \land \DMNeg p) \Imp (q \lor \DMNeg q)$
%results in $\PPFourResImp$, 
% $ p \lor (p  \Imp   (q \lor \neg q  ) )$ results in $\PPThreeResImp$, and  $p \lor \DMNeg p$ results in $\PPTwoResImp$. All of them, non-conservative extensions of the $\PPResAsslogic$.}
 
% \item  {\color{gray}the axioms come from  the equations in corolary 71
%  $ x \land \DMNeg x \leq y \lor \DMNeg y$, 
%  $ x \lor (x  \Imp   (y \lor \neg y  ) )  \approx \top $,  $x \lor \DMNeg x \approx \top$. yielding the axioms
%  $(p \land \DMNeg p) \Imp (q \lor \DMNeg q)$, 
%  $ p \lor (p  \Imp   (q \lor \neg q  ) )$,  $p \lor \DMNeg p$. this logic is characterized by the submatrices of the PNmatrix of order that are sound for the axioms! check which ones are they and try to axiomatize analytically...comment that they are non conservative extensions of the logic without implication...up b is not sound!}

    \item The number of {axiomatic} $\SetFmla$ extensions of $\PPResImplogic$ 
    (all logics which are obviously finitary)
    is larger but also finite,
    for  each axiomatic extension may be be characterized by 
    the submatrices of the original matrices that satisfy the axioms.\footnote{
    This result can be easily established  using
     \cite[Prop.~3.1]{OnAxRexp} together with the observation that the total components of the PNmatrix 
     $\PartialMat_{\mathsf{up}}$ % OR THIS ONE? $\PartialMat_{\land\lor}$,
     are deterministic;  thus,  the submatrices of $\PartialMat_{\mathsf{up}}$ which are images of valuations that satisfy every instance of an axiom are sound with respect to that particular axiom.}

    %This is the case because the consequence relation obtained by adding axioms is the same as adding every instance of the axioms to the premises~\cite{OnAxiomsandRexp}.

  %  : this follows from the result \todo{about the PNmatrix}
  %  which we are going to discuss presently. 
    %for it is bounded by the number of subPNmatrices of the PNmatrix ... {\color{purple}(can we say this here already using known results results for single conclusion or should we move this item as a conclusion of the multiple conclusion result?)}

    \item As we have seen (cf.~Proposition~\ref{prop:1assert-single-mat}), the assertional logic $\PPResAsslogic$ %considered above 
    %, the assertional companion of $\PPResImplogic$, 
    is itself a (non-axiomatic, non-self-extensional) extension of $\PPResImplogic$. 
    Algebraizability of $\PPResAsslogic$ (Theorem~\ref{thm:algppass}) entails that its axiomatic $\SetFmla$ extensions are in one-to-one correspondence with the subvarieties of $\PPImpResVar$,
    so again we can conclude that there are only three of them.
    
    %By the same token as the first item,
    %we know that $\PPResAsslogic$ only admits three \emph{axiomatic} extensions,
    %corresponding to
    %the subvarieties of $\PPImpResTempVar$.

    \item In contrast to the preceding results, we conjecture that 
    it may be possible to construct 
    countably many distinct (non-axiomatic) $\SetFmla$ %, and \emph{a fortiori} $\SetSet$) 
    finitary 
    extensions of $\PPResAsslogic$. 
    The theory of algebraizability would then tell us that 
    the variety $\PPImpResTempVar$ has at least countably many sub-quasivarieties.
    %$(Subsection~\ref{ss:infinitext}).
    \end{itemize}

Now considering the 
$\SetSet$ extensions of %\todo{[$\PPResImplogicPrimeMC$ or
    $\PPResImplogicOrderMC$,
    %$?] 
    we can say the following:

\begin{itemize}
    \item Similarly to
    the $\SetFmla$ setting, all \emph{axiomatic} $\SetSet$ extensions of %\todo{$\PPResImplogicPrimeMC$/
    $\PPResImplogicOrderMC$ %}
    are characterized by the sets  of (sub)matrices in $ \{ \langle \PPSixResImp, \up a \rangle : a \in  \mathcal{V} \}$ that satisfy the corresponding axioms, hence their cardinality is also finite.
    Their $\SetFmla$ companions
    are obtained 
     by adding the same axioms to $\PPResImplogic$.
    %(adding axioms commutes with taking the single conclusion fragment)

\item In consequence, there are \emph{at least} 
    three self-extensional $\SetSet$  extensions of %\todo{$\PPResImplogicPrimeMC$/
    $\PPResImplogicOrderMC$, %}, 
    namely,
    the ones determined by the matrices which satisfy the corresponding axioms.
    (There may be more,
    %other self-extensional extensions of  $\PPResImplogicOrderMC$, 
    but each of them will 
    have as $\SetFmla$ companion 
    one of the three self-extensional $\SetFmla$ 
    extensions of $\PPResImplogic$ mentioned earlier.)
    %(the ones characterized by the sound subset of matrices of the axioms)
%    Axiomatizations for them may be obtained as follows... \todo{in case we want to include such axiomatizations}.
%Analytic axiomatizations for these logics may be obtained by the same methods 
%employed in the previous sections.

    \item The methods  used in the previous sections may be used to obtain analytic $\SetSet$ axiomatizations  for all the %($\SetSet$ as well as $\SetFmla$) 
    logics determined by
    sets of (sub)matrices in $ \{ \langle \PPSixResImp, \up a \rangle : a \in  \mathcal{V} \}$.

    %perhaps interesting to axiomatize (some of) the self extensional extensions... connects with previous itemize
\end{itemize}

The preceding considerations suggest that 
a complete description of the lattice of all extensions of 
our base logics
    is well beyond the scope of the present work, and will have to be pursued in future investigations. 
    We summarize such a research programme
    in terms of the following couple of problems:

\begin{problem}
Describe the lattice of all ($\SetFmla$) extensions of $\PPResImplogic$
and the lattice of all ($\SetSet$) extensions of %\todo{$\PPResImplogicPrimeMC$/
$\PPResImplogicOrderMC$. %}.
\end{problem}

\begin{problem}
Look at the same problem again, but now restricting one's attention, on the one hand, to the 
sublattice consisting of all the 
$\SetFmla$ extensions of $\PPResAsslogic$, and on the other hand,
to the 
sublattice consisting of all the 
$\SetSet$ extensions of 
%(of its 
the $\SetSet$ 
logic 
%companion of $\PPResAsslogic$ 
determined by the matrix $\langle \PPSixResImp, \Upset{\etvalue} \rangle$.
Due to the algebraizability of 
$\PPResAsslogic$, for finitary $\SetFmla$ logics the problem may be rephrased as: describe
the lattice of all subquasivarieties of $\PPImpResVar$.
\end{problem}

% \paragraph*{Problem 2.} 
% Same as above, but restricting one's attention to the 
% sublattices consisting of all the 
% $\SetFmla$ ($\SetSet$) extensions of $\PPResAsslogic$ (of its $\SetSet$ companion determined by the matrix $\langle \PPSixResImp, \Upset{\etvalue} \rangle$). Due to the algebraizability of 
% $\PPResAsslogic$, for finitary $\SetFmla$ logics the problem may be rephrased as: describe
% the lattice of all subquasivarieties of $\PPImpResVar$.
% %\todo{is there a $\SetSet$ counterpart to this problem?}

\paragraph*{Fragments of the language of $\PPResImplogic$.}

A close inspection of the methods  employed in the previous sections to axiomatize 
$\PPResImplogic$ and its $\SetSet$ companions suggests that
these may also be applied so as to obtain analytic axiomatizations 
for the logics corresponding to those fragments of the language over the connectives in
$\{ \land, \lor, \Imp, \DMNeg, \cons, \bot, \top \}$ that
are sufficiently rich to express an appropriate set of separators.
Some of these, we believe, have intrinsic logical and algebraic interest, and may deserve further study. Let us single out, for instance, the fragments corresponding to 
the connectives $\{ \Imp, \cons \}$, 
$\{ \Imp, \De \}$ (recall that $\De x : = \DMNeg x \Imp \DMNeg (x \Imp x)$) and $\{ \Imp, \DMNeg \}$. The first of them may be of interest
in the study of implicative fragments of Logics of Formal Inconsistency, while the second 
could be studied in the setting of implicative fragments of algebras with modal operators. The study of the third could lead to an interesting generalization of Monteiro's results on symmetric Heyting
algebras and their logic.

\section*{Acknowledgments}
Vitor Greati acknowledges support from the FWF project P33548.
Sérgio Marcelino's research was done under the scope of project FCT/MCTES through national funds and when applicable co-funded by EU under the project UIDB/50008/2020.
João Marcos acknowledges research support received from CNPq.
Umberto Rivieccio acknowledges support from the 2023-PUNED-0052
grant ``Investigadores tempranos UNED-SANTANDER''
and  from the I+D+i research project PID2022-142378NB-I00 ``PHIDELO'', funded by the Ministry of Science and Innovation of Spain.

% \begin{itemize}
%     %\item \todo{Study algebraically the class $\PPImpResVar$ (which has proven to be harder than I imagined)}
%     %\item Define a more general  (non-finite-valued) logic for all symmetric Heyting algebras, of which
%     %$\PPResImplogic$ would be an extension; this logic shouldn't be hard to axiomatize, through its algebraizable companion.
%     %\item study all extensions of  $\PPSixResImp$..(subquasi varieties...) 
%     %\item study the logic/algebras in the language $\cons, \imp$.

%     \item is there a logic satisfying a weakning of A1 to a class of PNmatrices and A2?
% \begin{itemize}
%     \item it is still open if there is a logic satisfying A1 and A2 in a larger sense,
%     A1 is equivalent to adding a classical implication, which is equivalent to having a logic that is given by a set of PNmatrices where each satisfies A1 regarding its own filter (actually enough one since there is a conective with arity $2>1$ and so we can join them all in one that satisfies A1 iff all of them do) 
%     \item we have only shown that it doesn't exist one if we force the reduct without the implication to be the algebra of 6 elements
%     \item it might even be possible that the logic of order can be characterized by some PNmatrix satisfying A1? {\color{purple}NO! because the implication of the logic of order is not classical right?}
% \end{itemize}
   
% \end{itemize}

\bibliographystyle{abbrvnat}
\bibliography{refs}

\end{document}